\documentclass[a4paper,
               DIV=15,
               11pt,
               titlepage=off]{scrartcl}

\usepackage[utf8]{inputenc}
\usepackage[english]{babel}
\usepackage[sc,osf]{mathpazo}
\usepackage{amsthm,amsmath,amsfonts,amssymb}
\usepackage{braket}

\usepackage[dvipsnames]{xcolor}

\usepackage{dsfont,xspace,xparse}
\usepackage{newtxtt,microtype} 
\usepackage{tabularx,xifthen,afterpage,booktabs}
\usepackage{algpseudocode,algorithm}
\usepackage{newunicodechar}
\newunicodechar{−}{-}

\usepackage{subcaption}
\usepackage{graphicx} 
\graphicspath{{./Figures/}}
\usepackage{tikz}
\usetikzlibrary{positioning, arrows.meta, shapes, fit, calc}
\usetikzlibrary{matrix,arrows.meta}
\usetikzlibrary{quantikz2}

\usepackage[backend=biber,
            style=trad-alpha,
            sorting=nty,
            doi=false,
            isbn=false,
            url=false,   
            backref,
            backrefstyle=three,
            maxbibnames=20,
            maxcitenames=2]{biblatex}
\usepackage{csquotes}
\bibliography{refs}
\newbibmacro{string+doi}[1]{\iffieldundef{doi}{#1}{\href{https://dx.doi.org/\thefield{doi}}{#1}}}
\DeclareFieldFormat{title}{\usebibmacro{string+doi}{\mkbibemph{#1}}}
\DeclareFieldFormat[article]{title}{\usebibmacro{string+doi}{\mkbibquote{#1}}}
\DeclareFieldFormat[incollection]{title}{\usebibmacro{string+doi}{\mkbibquote{#1}}}
\DeclareFieldFormat[inproceedings]{title}{\usebibmacro{string+doi}{\mkbibquote{#1}}}
\renewbibmacro{in:}{}
\DefineBibliographyStrings{english}{%
  backrefpage = {page},
  backrefpages = {pages},
}

\usepackage{thmtools}
\usepackage[colorlinks=true,linkcolor=blue,urlcolor=blue,citecolor=blue,anchorcolor=green,pdfusetitle]{hyperref}

\usepackage[capitalise,nameinlink,noabbrev]{cleveref}
\newtheorem{theorem}{Theorem}[section]
\numberwithin{equation}{section}

\newtheorem{definition}[theorem]{Definition}
\newtheorem{proposition}[theorem]{Proposition}
\newtheorem{lemma}[theorem]{Lemma}

\newtheorem{remark}[theorem]{Remark}
\newtheorem*{remark*}{Remark}

\crefname{lemma}{lemma}{lemmas}
\crefname{proposition}{proposition}{propositions}
\crefname{definition}{definition}{definitions}
\crefname{theorem}{theorem}{theorems}
\crefname{example}{example}{examples}
\crefname{section}{section}{sections}
\crefname{appendix}{appendix}{appendices}
\crefname{figure}{figure}{figures}
\crefname{equation}{eq.}{eqs.}
\crefname{table}{table}{tables}
\crefname{algorithm}{protocol}{protocols}

\definecolor{cmykRed}{cmyk}{0, 0.9, 0.9, 0.1}   
\definecolor{cmykBlue}{cmyk}{0.8, 0.5, 0, 0.1}  
\definecolor{cmykGreen}{cmyk}{0.7, 0, 0.8, 0.2} 

\numberwithin{equation}{section}
\numberwithin{theorem}{section}

\DeclareMathOperator{\spec}{Spec}

\DeclareMathOperator{\diag}{diag}
\DeclareMathOperator{\poly}{poly}

\DeclareMathOperator{\tr}{Tr}

\DeclareMathOperator*{\E}{\mathbb{E}}

\newcommand{\Tr}[1]{\tr\left({#1}\right)}

\newcommand{\ketbra}[2]{\ket{#1}\!\!\bra{#2}}

\newcommand*{\NC}{\mathsf{NC}}

\newcommand*{\AC}{\mathsf{AC}}

\newcommand*{\QNC}{\mathsf{QNC}}

\newcommand*{\QAC}{\mathsf{QAC}}

\newcommand{\vecs}{\mathrm{\mathbf{s}}}
\newcommand{\veci}{\mathbf{i}}
\newcommand{\veca}{\mathbf{a}}
\newcommand{\vecb}{\mathbf{b}}
\newcommand{\vecj}{\mathbf{j}}
\newcommand{\vect}{\mathbf{t}}
\newcommand{\veck}{\mathbf{k}}
\newcommand{\vecu}{\mathbf{u}}
\newcommand{\vecv}{\mathbf{v}}
\newcommand{\vecz}{\mathbf{z}}

\title{Worst-case depth hierarchy for shallow quantum circuits}

\author{
    \large{Min-Hsiu Hsieh} \thanks{Hon Hai (Foxconn) Quantum Computing Research Center. Email: \texttt{min-hsiu.hsieh@foxconn.com}.}
    \and
    \large{Michael de Oliveira} \thanks{Hon Hai (Foxconn) Quantum Computing Research Center. Email: \texttt{michael.de.oliveira@foxconn.com}.}
    \and
    \large{Sathyawageeswar Subramanian} \thanks{University of Oxford. Email: \texttt{sathya.subramanian@cs.ox.ac.uk}.}
    \and 
    \large{Xingjian Zhang} \thanks{The University of Hong Kong, University of Technology Sydney. Email: \texttt{Xingjian.Zhang@uts.edu.au}.}
}
\date{}

\begin{document}
\maketitle

\begin{abstract}
Circuit depth is a central resource in complexity theory. In the classical setting, bounded-depth circuits admit well-understood hierarchy theorems. In contrast, constant-depth quantum computation has primarily been studied via separations from classical models, leaving its internal structure comparatively unexplored.

We prove an explicit depth hierarchy theorem for $\QNC^0$. For each $d \ge 12$, we construct a family of two-round interactive problems on which no depth-$(d-1)$ quantum circuit can achieve near-perfect success, regardless of gate set, circuit size, or ancillary qubits. In contrast, we prove that our construction allows honest strategies implementable by simple bounded fan-in quantum circuits of depth larger than $d$ by a small constant factor. Moreover, all bounded fan-in classical circuits of sublogarithmic depth (in the input size) fail to achieve perfect success on these tasks for every $d$, yielding a hierarchy of problems that show unconditional quantum advantage of $\QNC^0$ over $\NC^0$.

A key obstacle is the scarcity of lower bound techniques for quantum circuits. To address this, we develop methods to analyze how depth affects a circuit's ability to realize nonlocal correlations amongst its output qubits in a fine-grained manner. Our approach exploits the correspondence between constraint systems and nonlocal games, translating group-theoretic constructions into rigid operator-valued constraint systems and subsequently into non-local games. In particular, we construct constraint systems whose unique faithful (i.e.,\ injective) operator-valued solutions require every perfect strategy, and every near-perfect strategy up to a fixed precision, to implement multi-controlled phase operations. This reduces to a nonlocal unitary-synthesis problem, yielding depth lower bounds for both shallow quantum and classical circuits.

Our results show that increasing depth strictly increases computational power, exhibiting a robust internal hierarchy within $\QNC^0$ that is genuinely quantum.
\end{abstract}
\clearpage
\tableofcontents
\clearpage

\section{Introduction}
\label{sec:intro}
Circuit depth is a fundamental computational resource. For parallel computation, it serves as a proxy for time, corresponding to the number of sequential and synchronous time steps required to complete the computation. In the classical theory of circuit complexity, depth induces a rich stratification of expressivity and computational power. A long line of work proves explicit worst-case and average-case depth hierarchy theorems for constant-depth circuits and formulas \cite{Sipser83BorelCircuitComplexity,Hastad86STOC,Yao1989,Hastad16FOCS,Chillara2018,Hoza24}, showing that decreasing depth by even one can cause dramatic loss of computational power. These results firmly established shallow circuits as a tractable avenue for fine-grained quantification of how resources shape computational capability.

Shallow quantum circuits have recently emerged as an equally active area of research, shaping our understanding of the power and limitations of quantum computation. On the one hand, constant-depth quantum circuits formalize highly parallel quantum processing, and their limitations are determined by locality and information propagation constraints under bounded fan-in gates. On the other hand, even though the class of decision problems solvable by constant-depth quantum circuits of bounded fan-in ($\QNC^0$) and constant-depth classical circuits of bounded fan-in ($\NC^0$) is exactly the same, shallow quantum circuits can exhibit striking non-classical correlations. Consequently, they can achieve unconditional advantages over their classical counterparts for certain relation and sampling problems.

Depth is also an important practical concern for quantum circuits. In physical implementations of noisy intermediate-scale quantum (NISQ) devices, it captures the number of synchronous ``timesteps'' available before noise and decoherence dominate. In near-term quantum algorithms and quantum machine learning, including variational ans\"atze such as QAOA and quantum neural networks, depth qualitatively affects the class of realizable correlations, the accumulation of noise, and the optimization landscape. Circuit depth is thus a bottleneck for both expressivity and implementability.

Most prior work on the complexity of shallow quantum circuits focuses on comparisons across models, primarily quantum versus classical separations \cite{BGK18ShallowAdvantage,Watts19,le2019average,bravyi2020quantum,Grier20,grier2021interactivenoisy,Briet24decoding,deOliveira2025,Watts23,Grier26sampling}, and on constant-depth models with additional primitives such as unbounded fan-out \cite{Spalek05Fanout,Hoyer2005,TakahashiTani16CollapseQNC0,Grier25threshold,parham2025quantum}. This leaves open a basic structural question about the power of increasing quantum depth:

\begin{quote}
\emph{Does each additional layer of gates make quantum circuits strictly more powerful?}
\end{quote}

Simple counting arguments suggest a superficial answer to this question, in that for any fixed \emph{finite} gate set and for each finite input size, there are strictly more depth-$(d+1)$ circuits than depth-$d$ circuits, so there must exist functions computable in depth $d+1$ that cannot be computed in depth $d$. However, such a statement does not identify an explicit family of functions, does not yield a robust quantitative separation (e.g.,\ a circuit size lower bound), and is sensitive to the choice of gate set in an unstructured manner. The goal of a depth hierarchy theorem is to make this evident structural barrier explicit, quantitative, and precise.

\paragraph{Our contributions.}
We prove an explicit depth hierarchy within $\QNC^0$ by constructing, for each integer $d$, a family of problems that cannot be solved with perfect success probability at depth $d-1$, but become solvable at a slightly larger depth. We then translate these problems into classical-input, classical-output interactive tasks whose required correlations can only be generated by quantum circuits of sufficient depth, and are beyond the reach of shallow-depth classical circuits. Taken together, this shows that even within constant-depth quantum computation, increasing depth strictly increases computational power, while also yielding an unconditional quantum advantage over classical analogs; see \Cref{fig:depth-hierarchy-schematic}.

\begin{figure}[h]
\centering
\resizebox{0.85\textwidth}{!}{%
\begin{tikzpicture}[
    font=\small,
    laptop/.style={draw, very thick, rounded corners=2pt, fill=white},
    screenouter/.style={draw, thick, rounded corners=2pt},
    gate/.style={draw, thick, fill=white},
    depthtag/.style={font=\large\itshape, inner sep=1pt},
]

\newcommand{\laptopdraw}[2]{
\begin{scope}[shift={(#1,0)}]
    \draw[screenouter] (-1.03,-0.08) rectangle (1.03,1.23);

    \draw[laptop] (-0.95,0) rectangle (0.95,1.15);

    \draw[thick, rounded corners=1pt]
        (-1.02,-0.16) -- (1.02,-0.16) --
        (1.18,-0.52) -- (-1.18,-0.52) -- cycle;

    \draw[thick] (-0.92,-0.43) -- (0.92,-0.43);

    \node at (0,-0.92) {#2};
\end{scope}
}

\newcommand{\nandgate}[3]{
\begin{scope}[shift={(#1,#2)}, scale=#3]
    \draw[thick] (-0.25,-0.22) -- (0.00,-0.22)
        arc[start angle=-90,end angle=90,radius=0.22]
        -- (-0.25,0.22) -- cycle;

    \draw[thick] (0.27,0) circle (0.045);

    \draw[thick] (-0.55,0.12) -- (-0.25,0.12);
    \draw[thick] (-0.55,-0.12) -- (-0.25,-0.12);

    \draw[thick] (0.30,0) -- (0.42,0);
\end{scope}
}

\newcommand{\nandinside}[1]{
\begin{scope}[shift={(#1,0)}]

    \draw[thick] (-0.73,0.24) -- (0.1,0.24) -- (0.1,0.16) -- (0.7,0.16);

    \nandgate{-0.28}{0.75}{0.68}
    \nandgate{0.42}{0.48}{0.68}

    \draw[thick] (-0.73,0.832) -- (-0.654,0.832);
    \draw[thick] (-0.73,0.668) -- (-0.654,0.668);

    \draw[thick] (0.0,0.75) -- (0.07,0.75) -- (0.07,0.56);

    \draw[thick] (-0.73,0.398) -- (0.048,0.398);

    \draw[thick] (0.80,0.48) -- (0.7,0.48);

    \draw[thick] (-0.1,0.93) -- (0.30,0.93) -- (0.30,0.73) -- (0.7,0.73);

\end{scope}
}

\newcommand{\atomsymbol}[3]{
\begin{scope}[shift={(#1,#2)}, scale=#3]
    \filldraw[black] (0,0) circle (0.025);

    \draw[thick] (0,0) ellipse (0.18 and 0.06);
    \draw[thick, rotate=60] (0,0) ellipse (0.18 and 0.06);
    \draw[thick, rotate=-60] (0,0) ellipse (0.18 and 0.06);
\end{scope}
}

\newcommand{\qcircuitinside}[2]{
\begin{scope}[shift={(#1,0)}]
    \atomsymbol{-0.70}{0.92}{0.8}

    \foreach \y in {0.25,0.45,0.65,0.8} {
        \draw[thick] (-0.42,\y) -- (0.84,\y);
    }

    \ifnum#2=2
        \draw[gate] (-0.18,0.58) rectangle (0.02,0.92);
        \draw[gate] (0.35,0.18) rectangle (0.55,0.52);
    \fi

    \ifnum#2=3
        \draw[gate] (-0.28,0.58) rectangle (-0.08,0.92);
        \draw[gate] (0.12,0.18) rectangle (0.32,0.52);
        \draw[gate] (0.50,0.38) rectangle (0.70,0.72);
    \fi

    \ifnum#2=4
        \draw[gate] (-0.34,0.58) rectangle (-0.14,0.92);
        \draw[gate] (-0.02,0.18) rectangle (0.18,0.52);
        \draw[gate] (0.30,0.38) rectangle (0.50,0.72);
        \draw[gate] (0.58,0.58) rectangle (0.78,0.92);
    \fi
\end{scope}
}


\laptopdraw{0}{$o(\log n)$}
\nandinside{0}

\node[depthtag, anchor=east] at (-1.4,-0.92) {Depth:};

\laptopdraw{4.25}{$d=O(1)$}
\qcircuitinside{4.25}{2}

\laptopdraw{7.85}{$d+\Delta_1$}
\qcircuitinside{7.85}{3}

\laptopdraw{11.45}{$d+\Delta_1+\Delta_2$}
\qcircuitinside{11.45}{4}

\draw[gray!45, densely dashed, line width=0.5pt, rounded corners=3pt]
    (2.45,-1.38) rectangle (14.85,1.58);

\node[font=\Large\bfseries] at (2.08,0.55) {$\boldsymbol{\not\supset}$};
\node[font=\Large] at (6.05,0.55) {$\subsetneq$};
\node[font=\Large] at (9.65,0.55) {$\subsetneq$};
\node[font=\Large] at (13.95,0.55) {$\cdots$};

\end{tikzpicture}}
\caption{Schematic representation of the quantum depth hierarchies established in this work. The dashed box encloses the hierarchy obtained from the quantum-input classical-output relation problems. The interactive version preserves the same quantum hierarchy using only classical transcripts, while remaining unsolvable by sublogarithmic-depth classical circuits in input size $n$. Up to the constant-factor gap between our lower and upper depth thresholds ($\Delta_1,\Delta_2,...$), this yields an infinite discrete hierarchy.}
\label{fig:depth-hierarchy-schematic}
\end{figure}

\subsection{Results}
\label{sec:mainresults}

Our first result is a robust worst-case depth hierarchy theorem for $\QNC^0$ in the plain non-interactive model, realized by an explicit relation problem with quantum inputs and classical outputs.

\begin{theorem}[See \Cref{thm:QPlain}]\label{thminf:qplain}
For every integer $d \geq 1$, we construct an explicit family of relation problems $\mathcal{R}^n_d \subseteq \mathcal{H}_d^n \times \{0,1\}^n$ whose valid inputs are stabilizer states in the Hilbert space $\mathcal{H}_d^n$, such that for all large enough $n \in \mathbb{N}$ the following properties hold.
\begin{itemize}
\item \textbf{Perfect completeness:}
There exists a family of quantum circuits $\{C_n\}_n$, over the $2$-qubit Clifford+$\mathsf{T}$ gate set with depth $c\cdot d$ and a constant number of ancillary qubits, that solves $\mathcal{R}^n_d$ with probability $1$ on every valid input $\ket\psi\in\mathcal{H}_d^n$. Here $c>1$ is a universal constant.

\item \textbf{Soundness against shallow circuits:}
For every $d'\leq d-1$, there exists $\varepsilon(d)>0$ such that every family of bounded fan-in circuits solving $\mathcal{R}^n_d$ with depth at most $d'$ can achieve success probability at most $1-\varepsilon(d)$, regardless of the gate set, circuit size, number of ancillary qubits, or access to quantum advice.
\end{itemize}
\end{theorem}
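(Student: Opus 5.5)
We plan to instantiate the abstract's pipeline in the non-interactive setting. We realize $\mathcal{R}^n_d$ as a \emph{quantum-input version of a nonlocal game}. The input $\ket\psi$ encodes a question tuple in a stabilizer register: we start from computational-basis question strings drawn from the game's distribution and apply a fixed layer of Hadamards/phases, so the valid inputs are stabilizer states. A valid output is an answer string satisfying the constraints of a rigid operator-valued constraint system $\mathcal{S}_d$. We choose $\mathcal{S}_d$, via the group-theoretic construction, so that in its unique faithful operator solution some generator must act as a multi-controlled phase $C^{k}Z$ (or a $\mathsf{T}$-type phase), with $k=k(d)$ growing exponentially in $d$. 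The output qubits are partitioned into $n$ blocks far apart, so light cones are essentially disjoint.

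\textbf{Completeness.} For completeness we give an explicit circuit. Each block measures the observables of the faithful solution in the specified context. Realizing a $C^{k}Z$ with $k\approx 2^{d}$ controls over Clifford+$\mathsf{T}$ with a constant number of ancillas takes depth $O(\log k)=O(d)$, using a balanced tree of Toffoli computations and uncomputations, each Toffoli of constant depth. Conjugating into the measurement basis is Clifford and costs constant depth. Together these give depth $c\cdot d$ and success probability $1$, since the solution satisfies every constraint exactly.

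\textbf{Soundness.} Fix a depth-$d'\le d-1$ circuit, with arbitrary gates and advice. First we use the input/output structure to extract, for each constraint, observables $A_j$ (Heisenberg-evolved output measurements) that are near-perfect for $\mathcal{S}_d$ on the state given by the circuit and advice. Then we invoke rigidity: any $(1-\varepsilon)$-satisfying strategy is $\delta(\varepsilon)$-close, up to a local isometry, to the faithful solution. In particular some product of evolved observables must approximate the multi-controlled phase on $k$ input-dependent qubits. Finally we apply light-cone counting. A depth-$d'$ circuit with fan-in 2 gives each output a backward light cone of size at most $2^{d'}$. We argue that an operator which is $\delta$-close to $C^{k}Z$ with $k>2^{d'}$ has non-trivial dependence, i.e.\ nonvanishing influence, on all $k$ control qubits, a contradiction once $\delta$ is below a constant set by $d$. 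This yields $\varepsilon(d)>0$. Because the argument uses only causality, it is independent of the gate set, size, ancillas and advice.

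\textbf{Main obstacle.} The hardest step is the robust rigidity-to-synthesis transfer: showing that near-perfect success forces approximate implementation of $C^{k}Z$ with an error $\delta$ that depends only on $d$ and not on $n$. Here we must also handle the fact that rigidity holds up to isometry on auxiliary registers which the circuit may entangle with its advice. The first approach we will try is a stability argument for approximate representations of the finite group underlying $\mathcal{S}_d$, via Gowers--Hatami. This gives dimension-free closeness to a representation which, by faithfulness and uniqueness, contains the controlled-phase irrep. We then lower-bound the influence of each control qubit by a constant depending on $k$, which makes the light-cone contradiction quantitative.
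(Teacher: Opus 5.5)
\textbf{The gap: what the quantum input carries.} In the paper (\Cref{def:R_m}) the valid inputs are not question strings dressed with a Hadamard layer. The labels $\ket{q_A},\ket{q_B}$ sit in the computational basis, and the quantum content is the verifier's resource state: the two halves of $\ket{\Phi^+}^{\otimes m}$, placed at randomly chosen registers $i,j$ next to the labels. These halves carry a hidden Clifford frame $U_A\otimes U_B$ built from $\mathsf{CNOT}$, $\mathsf{Swap}$ and $X$. For parity--Pauli questions they also carry a hidden $Z(\vecs)$ pre-measurement (\Cref{def:QS_Z2OCS}).

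That is why the inputs are stabilizer states, and it is the mechanism your outline lacks. The affine-symmetry relations $U z_{\veci}U^\dagger=z_{\tau(\veci)}$ and the identifications $\prod_{\veci\cdot\vecs=1}z_{\veci}=Z(\vecs)$ involve fixed physical operators and conjugation relations. The observables in them are not jointly measurable, so a single round of classical questions and classical answers cannot test them. The paper can enforce \Cref{def:OVCS_Z2}, and hence invoke \Cref{lem:Z2_faithful_rep}, only because these symmetries are absorbed into the input state. Without that, your $\mathcal{S}_d$ is not enforced.

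Completeness breaks as well. If the answer blocks sit at input-dependent positions, a constant-depth circuit cannot place the EPR pairs your honest measurements need between them. If the blocks are fixed, ``far apart'' buys nothing under all-to-all connectivity. In the paper the two sides behave as non-communicating provers because labels and EPR halves are randomly placed and \Cref{lemma:lightcone} applies. Advice is handled through this randomization together with monogamy of entanglement, not by causality alone.

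\textbf{Second problem: the soundness runs on ``the state given by the circuit and advice.''} Rigidity then holds only up to an arbitrary local isometry on a circuit-generated state. The cost of $\mathsf{C}^{k}(\mathsf{Z})$ can be shifted into that isometry or into state preparation. The paper flags exactly this in \Cref{append:Toffoli} for the dequantized setting, and needs the two-round, three-prover machinery of \Cref{sec:Classical_int} to deal with it.

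Your influence/light-cone step says an operator close to $\mathsf{C}^{k}(\mathsf{Z})$ must depend on all $k$ controls. That is meaningful only once the certified observable is pinned to fixed physical qubits in a fixed state. The paper obtains this pinning from the verifier-supplied $\ket{\Phi^+}^{\otimes m}\otimes\ket{\mathbf 0}$ (\Cref{prop:ToffoliIsom}). It then runs a dimension-counting argument over the whole family $\{\tilde A_q\}$ (\Cref{prop:Toffoli_lower} and \Cref{thm:robustdepth}).

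Your Gowers--Hatami stability step (\Cref{lemma:Vidick}, \Cref{lem:RQP}) and your $O(\log k)$ upper bound are in line with the paper. For the upper bound the paper uses the one-ancilla construction of \cite{nie2024quantum} rather than a plain Toffoli tree.
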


The relation problems $\mathcal{R}_d^n$ are derived from the unitary-synthesis problem of implementing specific quantum Boolean functions, in particular the multi-controlled phase operator $\mathsf{C}^{2^d-1}(\mathsf{Z})$.
Consequently, for each depth parameter $d$, we obtain an explicit family of quantum-classical relation problems that can be solved perfectly by quantum circuits whose depth exceeds $d$ by at most a constant factor.\footnote{For the Clifford+$\mathsf{T}$ gate set, one may take $c \leq 2+2\gamma(4)+\gamma(3)$, where $\gamma(k)$ denotes the depth required to synthesize $\mathsf{C}^{k}(\mathsf{Z})$ (see \Cref{append:Toff_upper}).} Conversely, $\varepsilon(d)$ reflects the difficulty of approximating $\mathsf{C}^{2^d-1}(\mathsf{Z})$ using depth-($d-1$) $\QNC^0$ circuits. Thus, for each $d$ we obtain a task that becomes perfectly solvable only once sufficient depth is available.

Moreover, the depth threshold $d$ also captures the non-Clifford resources required by the task: increasing $d$ corresponds to enforcing observables with more intricate control structure, thereby inducing a hierarchy in quantum magic. In particular, near-perfect success requires implementing increasingly complex observables that approximate multi-controlled phase operators. In the exact setting, this entails a $\mathsf{T}$-gate complexity of $\Omega(\log d)$.

\paragraph{Depth certification.} A key consequence of our formulation is that it yields a simple, unconditional test of quantum depth (or coherence time) that does not rely on computational, cryptographic or oracle-based assumptions \cite{chia2022classical,arora2023quantum}, nor on a number of interaction rounds that scales with the depth being certified \cite{Broadbent09}. The quantum inputs are stabilizer states prepared by elementary Clifford circuits, making the task experimentally accessible, while still probing the ability of an (untrusted) device to sustain coherent quantum evolution over randomly selected subsets of qubits across spatially separated regions.

\paragraph{Comparison to state synthesis.} In the setting of classical-input quantum-output tasks such as state synthesis, $\QNC^0$ depth lower bounds arise from light cone arguments (e.g., GHZ state preparation \cite{Watts19}). However, this is not dual to our setting in any straightforward sense. Furthermore, verifying state preparation typically requires either expensive state tomography with trusted quantum measurements or more elaborate procedures, such as swap tests.

Conceptually, our problem resembles semi-quantum games, where quantum states serve as inputs (or ``questions'') to non-communicating provers \cite{buscemi2012all}. Such games can be used to distinguish entangled states that do not violate Bell inequalities. Here, we use quantum inputs to distinguish the provers' computational capabilities, specifically circuit depth.

\bigskip

Our main result then dequantizes the quantum inputs in \Cref{thminf:qplain}, introducing a fully classical two-round interactive task that preserves the depth hierarchy. Importantly, it removes the need for quantum communication entirely and shows that bounded-fan-in classical circuits of sublogarithmic depth also cannot solve the problem perfectly, regardless of their size.

\begin{theorem}[Main result; see {\Cref{thm:hierarchy-classical-verifier}}]
\label{thm:main-informal}
For every integer\footnote{The minimal depth corresponds to the shallowest $\QNC^0$ circuit that passes the first round of the protocol.} $d \geq 12$, we construct an explicit family of two-round interactive problems $\{\mathcal{IR}_d^n\}_{d}$ in which each round consists of a promise search task specified by a relation $R_d \subseteq \{0,1\}^n \times \{0,1\}^{n}$, such that for large enough $n\in\mathbb{N}$ the following properties hold.
\begin{itemize}
    \item \textbf{Perfect completeness:} There exists a family of quantum circuits $\{C_n\}_n$, over the 2-qubit Clifford$+\mathsf{T}$ gate set with depth $c\cdot d$ and a constant number of ancillary qubits, that solves $\{\mathcal{IR}_d^n\}_{n}$ with probability $1$ on every valid input.

    \item \textbf{Soundness against shallow circuits:} 
    For every $d'\leq d-1$, there exists $\varepsilon(d)>0$ such that every family of bounded fan-in circuits solving $\mathcal{IR}^n_d$ with depth at most $d'$ can achieve success probability at most $1-\varepsilon(d)$, regardless of the gate set, circuit size, number of ancillary qubits, or access to quantum advice.
\end{itemize}
Furthermore, every family of classical $\NC^0$ circuits achieves success probability at most 17/18, regardless of circuit depth or size. 
\end{theorem}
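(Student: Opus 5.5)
The plan is to convert the quantum-input relation of \Cref{thminf:qplain} into a two-round classical interaction. The conversion goes through a nonlocal game derived from a rigid operator-valued constraint system. Concretely, I would build a constraint system whose unique faithful operator solution encodes $\mathsf{C}^{2^d-1}(\mathsf{Z})$ in a group-theoretic way. A natural choice is a solution group whose relations force certain generator products to equal a multi-controlled phase on Pauli-type generators. I would then pass to the associated constraint-system (Magic-Square/Mermin-type) nonlocal game.

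The two rounds would be as follows. In the first round, the verifier sends classical random questions whose honest answers force the circuit to prepare, across spatially separated blocks of output qubits, a stabilizer-like entangled resource; the obvious candidate is many parallel copies of a GHZ/graph-state resource, verifiable via Mermin-type checks. The threshold $d\ge 12$ arises as the depth needed to pass this round. In the second round, the questions select variables and constraints of the game, played on randomly chosen subsets of the resource qubits.

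For completeness, I would exhibit the honest strategy: prepare the resource with Clifford gates in constant depth. For each second-round question, measure the operator assigned by the faithful solution, which amounts to conjugating by a synthesized $\mathsf{C}^{k}(\mathsf{Z})$ and then measuring Paulis. The depth accounting uses the bound $c\le 2+2\gamma(4)+\gamma(3)$ from \Cref{append:Toff_upper} together with a tree decomposition of $\mathsf{C}^{2^d-1}(\mathsf{Z})$ into depth $O(d)$ layers of small-arity controlled phases. Perfect success then follows from the solution satisfying every constraint.

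Soundness splits into two steps. First, I would prove a robust rigidity statement: any strategy succeeding with probability $1-\varepsilon$ is, up to local isometry, $\delta(\varepsilon)$-close to the faithful solution. The tool here is stability of the solution group, i.e.\ approximate representations being close to exact ones, with the constant fixed for each $d$. As a consequence, the joint action of the output light cones must approximate $\mathsf{C}^{2^d-1}(\mathsf{Z})$ nonlocally across qubits in distinct regions. Second, I would prove the unitary-synthesis lower bound. A depth-$d'$ bounded fan-in circuit has backward light cones of size at most $2^{d'}$, so some control qubit among the $2^d$ qubits lies outside the light cone of an output. I would then show that the implemented operator is far from $\mathsf{C}^{2^d-1}(\mathsf{Z})$ by a constant depending on $d$, e.g.\ via the distance between the operator and its partial trace. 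This argument holds regardless of gate set or advice.

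The classical $\NC^0$ claim uses the same light-cone idea. With bounded fan-in and constant depth, randomly chosen qubit blocks are independent with constant probability, so the outputs are a local deterministic assignment. Because the game has no classical perfect strategy, one constraint in the question distribution fails, which I would count to get the $1/18$ loss (presumably $1/9$ from a Magic-Square-type game times $1/2$ from the round choice). The main obstacle is the robust rigidity step: turning near-perfect success into closeness of the circuit's effective operator to a multi-controlled phase with a dimension-free $\varepsilon(d)$. I would first try a Gowers–Hatami-type stability argument for the constraint group.
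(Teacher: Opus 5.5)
Your plan has a genuine gap: it never enforces the non-linear relations that make the solution rigid. A standard Magic-Square/Mermin-type constraint game with classical questions certifies only relations of the form ``pairwise-commuting variables multiply to $\pm I$'', plus consistency. Such relations do not single out a multi-controlled phase.

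``Generator products equal a multi-controlled phase on Pauli-type generators'' is not a group relation. $\mathsf{C}^{k}(\mathsf{Z})$ is a linear combination of $Z$-strings, not a word in Paulis. If you instead characterize it by its conjugation action on Paulis, you get relations between non-commuting operators, and these cannot be posed as question contexts.

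Even the linear part of the paper's system in \Cref{def:OVCS_Z2} has a Pauli solution. Involutions, commutation, $\prod_{\veci\cdot\vecs=1}z_{\veci}=Z(\vecs)$ and $\prod_{\veci}z_{\veci}=-I$ are all satisfied by $z_{\mathbf{0}}=-Z^{\otimes m}$, $z_{\mathbf{e}_k}=Z_k$, and $z_{\veci}=I$ otherwise. What forces $z_{\veci}=I-2\ket{\veci}\bra{\veci}$ in \Cref{lem:Z2_faithful_rep} is the affine-symmetry family $Uz_{\veci}U^\dagger=z_{\tau(U,\veci)}$ for $U\in\{\mathsf{Swap}_{a,b},X_a,\mathsf{CNOT}_{a,b}\}$.

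The paper makes these relations testable by never asking about $U$. Instead, $U$ is physically applied to one half of the EPR pairs. With a classical verifier, this happens in round 1: a third region (Charlie) applies a verifier-chosen Clifford by gate teleportation, hidden from Alice and Bob. The verifier then relabels the round-2 questions by Charlie's Pauli frame (\Cref{def:Cliff_EPR_self}, \Cref{lemma:cliff_test_qnc}). A Clifford-rotated parallel Mermin--Peres test then certifies both the rotated EPR state and the Pauli sector identified with the $Z(\vecs)$. Your first round only certifies a fixed stabilizer resource and applies no hidden verifier-chosen operation. So your second round is still an ordinary game, and neither exact rigidity nor its Gowers--Hatami-robust version is available.

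There is a second, smaller gap in your lower bound. Self-testing fixes the strategy only up to local isometries, and here the certified EPR halves are produced by the circuit itself. So the $2^d$ logical qubits on which $\mathsf{C}^{2^d-1}(\mathsf{Z})$ acts need not be physical qubits. The claim that ``some control qubit lies outside the output's light cone'' is then not well defined, because an encoding chosen in round 1 can shrink the physical support of individual $z_{\veci}$ observables. You must argue about the whole family $\{\tilde A_{\veci}\}$ under one common isometry. The paper does this in three steps:
\begin{itemize}
\item It factors the isometry and a gauge unitary off the ancilla (\Cref{prop:ToffoliIsom}).
\item It applies a dimension-counting argument over light-cone supports (\Cref{prop:Toffoli_lower}).
\item Its robust version requires error below $2\sqrt{2}\cdot 2^{-m/2}$ (\Cref{thm:robustdepth}).
\end{itemize}
Your completeness accounting and your $17/18$ classical argument are in line with the paper.
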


Prior work has identified explicit relation problems that lie beyond constant-depth classical computation yet admit shallow quantum solutions. However, these problems are typically solvable by quantum circuits of fixed small depth (often $d<10$), and the number and variety of such examples remains limited. In the interactive setting, stronger quantum--classical separations have been obtained (e.g.,~\cite{Grier20,grier2021interactivenoisy}) under minimal computational assumptions. The closest related result to ours is that of~\cite{Zhang2024}, which established a separation between magic-free (Clifford) $\QNC^0$ circuits and general $\QNC^0$ circuits while preserving quantum advantage against $\NC^0$. However, these works are not sensitive to circuit depth or to the amount of non-Clifford resources, and current lower bound techniques appear insufficient to address such resource-dependent questions~\cite{slote24,parham2025quantum}.

In contrast, our construction yields an explicit infinite family of problems indexed by depth, where each level becomes solvable only once sufficient quantum depth is available, irrespective of other resources. This reveals a nontrivial internal structure within $\QNC^0$ and, to the best of our knowledge, establishes the first explicit unconditional depth hierarchy for the class. The hierarchy is genuinely quantum: every level exhibits an unconditional separation from shallow classical computation.

\paragraph{Fine-grained lower bounds.} A key ingredient underlying our results is the development of new fine-grained lower bounds for $\QNC^0$. A central challenge is to characterize the correlations that shallow quantum circuits can generate between distant regions and how these scale with circuit depth in multi-output settings. Our approach introduces an operator-valued constraint system obtained by embedding the group $\mathbb Z_2^n$ into a larger group combining Pauli and affine transformations. This construction gives rise to nonlocal games that force the realization of quantum Boolean functions with increasingly complex global structure, which we relate to multi-controlled phase operators. Combined with new unitary-synthesis lower bounds, this yields explicit depth lower bounds across space-like separated regions of the circuit. Our framework thus provides a complementary perspective on quantum depth separations \cite{parham2025quantum,slote24} and a more algebraic route to proving such lower bounds.

\paragraph{Self-testing non-Clifford resources.} As a notable ingredient, we develop a self-testing-type argument for approximately satisfying solutions of the underlying constraint system. We show that approximate success in the protocol implies that the provers' observables are close to the unique operator-valued assignment satisfying the defining relations of the constraint system. Consequently, the protocol certifies the intended algebraic structure, up to local isometries, in the standard sense of self-testing. A key feature of our construction is that it robustly self-tests genuinely non-Clifford resources. In particular, the certified observables include $\mathsf{C}^{m-1}(\mathsf{Z})$ operators, whose implementation requires increasing amounts of non-stabilizer resources (magic). To our knowledge, this is the first result of its kind. Prior works, including \cite{slofstra2020tsirelson,Zhang2024}, can classically witness the presence of non-Clifford resources but do not uniquely determine their realization.

\paragraph{Beyond interactions.} Finally, our construction can also be instantiated in the non-interactive (plain) setting by merging the two stages of the interactive protocol into a single relation problem. In this formulation, the verifier’s computation reduces to a layer of parity computations that can be absorbed into the provers’ quantum strategy, so that the resulting computation is implemented by circuits in the class $\QNC^0[\oplus]$ over a fixed finite universal gate set. This model is natural and practically relevant: it captures key features of current quantum hardware and is closely related to models arising in measurement-based quantum computation \cite{browne2010computational}.\footnote{We conjecture that our techniques may already suffice to obtain unconditional separations in the non-interactive model for circuits consisting of exactly three (total) interleaved $\QNC^0$ and parity layers, against their classical $\NC^0$ counterparts. More generally, possible extensions may obtain separations of the form $\QNC^0[\oplus] \subsetneq \NC^0[\oplus]$ \cite{Briet24decoding}.}

\subsection{Technical Overview}
\label{sec:tech-overview}
A key challenge in establishing a depth hierarchy within $\QNC^0$ is the scarcity of lower bound techniques for quantum circuits. For classical circuits, depth lower bounds in $\NC^0$ are obtained via locality (light cone) arguments, and similar ideas extend to the quantum setting for decision problems. Indeed, many problems that are hard for $\NC^0$ remain information-theoretically hard for $\QNC^0$, and thus cannot distinguish between quantum depths in a fine-grained manner. However, $\QNC^0$ circuits can leverage nonlocal correlations to solve multi-output (i.e. relation or sampling) problems that are hard for $\NC^0$ or even $\AC^0$ and stronger classical circuits. Consequently, separating depths within $\QNC^0$ requires techniques that quantify how such correlations scale with circuit depth.

\paragraph{\(\blacktriangleright\) Roadmap.}
To tackle this challenge, we build on the observation that correlations generated between spatially separated regions of a shallow quantum circuit can be analyzed as if these regions were non-communicating players in a nonlocal game.\footnote{ Slote~\cite{slote24} explores similar ideas in proving $\QNC^0\circ\AC^0$ lower bounds for decision problems, but does not provide the ingredients needed to derive a depth hierarchy from such correlations.} This viewpoint reduces the problem to designing suitable nonlocal games and analyzing their winning strategies. For this purpose, we leverage constraint systems and explore mappings from constraint systems to nonlocal games whose operator-valued solutions enforce computational restrictions on the players, such as bounds on circuit depth.

There are three phases to our construction. First, we define a rigid operator-valued constraint system whose perfect solutions enforce a specific global algebraic structure (see \Cref{sec:constraint_system}). We supplement this with a sharp depth barrier arising from that global structure, showing that shallow circuits below a minimal depth threshold cannot realize it (see \Cref{append:Toffoli}).

Second, we embed this constraint system via a nonlocal game into an interactive protocol involving a Clifford verifier and two provers, which enforces these constraints on the non-communicating provers (\Cref{subsec:Int_Clifford}). Furthermore, we translate this interactive protocol into a relation problem in the plain, non-interactive model with a quantum input provided to $\QNC^0$ circuits (\Cref{sub:quantum_plain}). 

Finally, we dequantize the verifier by replacing its quantum operations with a classical two-round interaction that enforces the necessary Clifford operations via a three-prover self-testing protocol (\Cref{subsec:delegated}). By composing our depth-enforcing constraint system with a constraint system used for self-testing the Clifford operations (\Cref{subsec:composing}), we obtain our main results (\Cref{subsec:final_depth}), yielding a fully classical protocol that preserves the quantum depth hierarchy while retaining quantum advantage over classical shallow depth circuits. We further show that these exact separations admit robust versions, using stability results for approximate representations and self-testing techniques (\Cref{App:Rigidity}).

\paragraph{Faithful realizations of constraint systems.}
The standard one-to-one correspondence between constraint systems and nonlocal games establishes that operator-valued solutions yield perfect strategies for a game whose questions test the defining constraints \cite{cleve2014characterization,cleve2017perfect,coladangelo2017robust,culf2024re}. When the constraint system encodes a group presentation, perfect strategies correspond to representations of the associated solution group. In standard examples, such as the Mermin-Peres magic square, these representations admit low-complexity realizations using Pauli observables and therefore do not distinguish quantum strategies by circuit depth. By contrast, representations whose implementation requires large depth typically arise from non-abelian groups with non-linear relations, which cannot be directly embedded into single-round nonlocal games because the corresponding observables are not jointly measurable.

Our construction circumvents this obstruction by exploiting a canonical faithful representation of the abelian group $\mathbb Z_2^m$ (the Boolean hypercube) in which each generator acts diagonally and nontrivially on exactly one computational basis state. In the associated nonlocal game, perfect strategies are therefore forced to realize observables equivalent to the multi-controlled phase (MCP) gate $\mathsf{C}^{m-1}(\mathsf{Z})$, whose implementation requires depth $\Omega(\log m)$ in bounded-fan-in circuit models over finite $2$-qubit gate sets (see \Cref{append:Toffoli}). Thus, varying $m$ yields a hierarchy of depth lower bounds.

To enforce this representation within a constraint system, we embed the abelian group into a larger algebraic structure that incorporates both Pauli operators and affine symmetries,
\begin{equation}
  G_{\mathbb Z_2^m}=  \bigl(\mathbb Z_2^m *_{\mathbb Z_2^m} \mathcal{P}_m \bigr)\rtimes_{(\tau,\alpha)} \mathrm{AGL}(m,2).
\end{equation}
Intuitively, the amalgamated product $\mathbb Z_2^m *_{\mathbb Z_2^m} \mathcal{P}_m$ constrains the representations of $\mathbb Z_2^m$ to be diagonal in a fixed computational basis via the inclusion of relators in the presentation that ties products of elements in $\mathbb Z_2^m$ with Pauli strings containing only $Z$ and $I$ factors.\footnote{See \Cref{subsec:irrep-to-faithful} for a formal definition of the amalgamated product.} The subsequent semidirect product with $\mathrm{AGL}(m,2)$ enforces faithfulness by requiring that distinct group elements act distinctly under affine relabelings, thereby yielding multi-controlled phase gates (see \Cref{lem:Z2_faithful_rep} for a formal statement).

\paragraph{From constraint systems to depth certification tasks.} The second step of our construction translates the faithful operator-valued constraint system from \Cref{def:OVCS_Z2} into an interactive protocol (\Cref{def:QI_LBCS}). The main obstacle is that the Pauli and affine symmetries appearing in the constraint system do not admit a direct realization as classical question labels in a single-round nonlocal game.

Our key idea is to absorb these symmetries into the verifier’s state preparation. The verifier distributes EPR pairs and, depending on the sampled relation, applies suitable Clifford operations ($\mathsf{CNOT}, \mathsf{Swap}$, and $\mathsf{X}$ gates) implementing the affine symmetries of $\mathrm{AGL}(m,2)$. For the parity–Pauli identification constraints, the verifier may additionally pre-measure a Pauli observable on one prover's side without revealing the outcome. Consequently, the provers are queried only about variables from the $\mathbb Z_2^m$ sector, while the distributed state encodes the remaining structure of the constraint system (see \Cref{def:QS_Z2OCS}). This realizes the constraint system as a natural semi-quantum game.

Although the verifier’s preparation depends on the sampled question, this dependence is hidden locally: each prover receives a maximally mixed reduced state corresponding to one half of an EPR pair in an unknown Clifford basis. Consequently, every perfect strategy induces a satisfying operator assignment of the underlying constraint system, extending the standard correspondence between operator-valued constraint systems and nonlocal games to this semi-quantum setting. Therefore, by the rigidity result of \Cref{lem:Z2_faithful_rep}, every perfect strategy realizes the unique faithful representation, forcing the implementation of multiple-controlled phase operators. The corresponding depth lower bound implies that provers of depth below $\lfloor \log m \rfloor$ cannot succeed perfectly (\Cref{lemma:cliff_verif}).

The completeness analysis shows that honest provers can implement a perfect strategy using shallow circuits. Each queried observable is measured via a compute–phase–uncompute procedure, where the question determines the basis change and the only non-Clifford operation is a generalized controlled-phase measurement (see \Cref{fig:full_framework} and \Cref{thm:upperToffoli}). A technical subtlety arises from the global and parity–Pauli identification constraints, which naively require linear depth in $m$. We address this using a constraint-compression procedure (see the completeness proof of \Cref{lemma:cliff_verif}).

\paragraph{Reduction to the non-interactive (plain) model.} We translate the previous interactive protocol into the plain (non-interactive) circuit model by encoding both the verifier's prepared state and the sampled questions into a single quantum input (\Cref{def:R_m}). A $\QNC^0$ circuit is then viewed as simulating both provers simultaneously. To recover the non-communicating structure in a circuit, we randomize the placement of the relevant registers in the input and apply a light cone argument: with non-negligible probability, the circuit decomposes into two regions whose forward light cones do not intersect, effectively behaving as independent provers. Soundness in the non-interactive model therefore reduces to the interactive case. Completeness follows by implementing, in parallel, the same perfect strategy as in the protocol on the embedded instance. This yields a quantum-input relation problem that inherits the depth hierarchy established in the interactive setting (\Cref{thm:QPlain}).

\paragraph{Dequantizing the Clifford verifier.}
To obtain our main theorem, we must enforce the depth-sensitive structure from \Cref{subsec:Quantum_depth} using only classical interaction, thereby removing the verifier’s quantum capabilities. The main difficulty is that the quantum-verifier protocol relies on two operations unavailable to a classical verifier: preparing Clifford-rotated EPR states and measuring Pauli observables.

We address this by replacing these operations with a two-round three-prover protocol for delegated state preparation and self-testing. Two provers (Alice and Bob) hold the systems on which the final measurements are performed, while a third prover (Charlie) implements the verifier’s hidden Clifford operations through gate teleportation (see \Cref{fig:delegated_state_preparation}). Crucially, only Charlie receives the description of the Clifford operation, while Alice and Bob remain ignorant of it when answering the measurement questions. This asymmetry removes a freedom present in standard two-prover rigidity tests, where local Clifford transformations can be absorbed into equivalent strategies. Charlie’s teleportation transcript then fixes the Pauli frame of the shared state, and a Clifford-rotated parallel Mermin–Peres test certifies that Alice and Bob hold the intended state and measure the corresponding Pauli observables (\Cref{lem:clifford_basis_rigidity}).

The second step is to transfer this protocol to a single prover setting (\Cref{def:Cliff_EPR_self}). We partition the registers of a single $\QNC^0$ circuit into groups corresponding to Alice, Bob, and Charlie. Standard light cone arguments then imply that, with high probability over the verifier’s choice of registers, these groups effectively behave as non-communicating parties. Operationally, the first round performs constant-depth entanglement swapping and gate teleportation to prepare Clifford-rotated Bell pairs between distant regions, while the verifier’s parity computation fixes the corresponding Pauli frame. This yields an unconditional state-commitment procedure for shallow quantum circuits (\Cref{lemma:cliff_test_qnc}).

\paragraph{Towards an interactive protocol with classical transcript.} Finally, we define a fully classical two-round interactive protocol (\Cref{def:Classical_Int}). Its underlying constraint system is obtained by composing the rotated parallel Mermin–Peres constraint system from \Cref{def:Cliff_EPR_self} with the faithful Boolean hypercube system from \Cref{sec:constraint_system}. Algebraically, this mirrors the amalgamated-product structure introduced earlier: the Pauli observables certified by the parallel Mermin–Peres test are identified with the Pauli strings appearing in the parity–Pauli constraints, while the Clifford operations required for the affine symmetry relations are enforced via delegated state preparation. The resulting system admits a unique operator-valued solution, simultaneously fixing the hypercube observables and the auxiliary Pauli sector (\Cref{lem:Extended_Z2_faithful_rep}). Accordingly, in the first round, the verifier classically commits the prover to the correct Clifford-rotated entangled resource, while in the second, it appends the Boolean-hypercube queries together with the Clifford-basis test, thereby enforcing the same faithful solution as in the quantum-verifier protocol.

Completeness and quantum soundness follow as before, while classical soundness follows from the absence of classical solutions to the underlying operator-valued constraint system. Quantitatively, in this setting, the bounds are governed by a parallel Mermin–Peres subtest.\footnote{Note that a separation could also be obtained from the subgame enforcing the multi-controlled phase (Toffoli-type) observables, although the resulting bounds are weaker.} Altogether, this yields a fully classical protocol that preserves the quantum depth hierarchy while establishing an unconditional advantage over shallow classical circuits.

\paragraph{Moving from a rigid to a robust setting.} So far, we established all ingredients required for rigidity of the winning strategy for the semi-quantum game introduced in \Cref{subsec:Quantum_depth}, as well as the one implicitly realized on Alice's and Bob’s side in the protocols of \Cref{sec:Classical_int}. Going further, for our final theorems, we extend this rigidity statement to the approximate regime, yielding a robust quantum depth hierarchy (up to a small depth-dependent error). The central technical challenge is to show that any near-perfect strategy remains close to a genuine operator-valued solution.

A key feature of our construction is that the game admits a natural group-theoretic formulation, allowing us to leverage a general technique based on stability results for approximate representations~\cite{gowers2017generalizations}. Roughly, these results show that operators approximately satisfying the defining relations of a group are close to an exact representation. We employ a state-dependent variant adapted to the nonlocal-game setting, where distances are measured relative to the shared state rather than in operator norm (\Cref{lemma:Vidick}). Consequently, if the game relations hold approximately on the verifier-prepared state, then the induced observables are close, up to isometry, to a genuine representation of the solution group.

Starting from an approximate strategy for \Cref{def:QI_LBCS}, induced by the operator-valued constraint system of \Cref{def:OVCS_Z2}, we show that Alice’s and Bob’s observables approximately satisfy the defining relations of the solution group (\Cref{lemma:set_of_dif}). A key difficulty is that our semi-quantum games yield state-dependent bounds for several different states (\Cref{lemma:prob_to_dist}), whereas the general stability framework requires a single, uniform notion of approximation. We overcome this by exploiting that the Clifford operations are implemented exactly and that all Pauli observables act on the same EPR pairs, allowing the bounds to be transferred and unified across the relevant reference states.

To use the stability result, however, we must extend the approximate operator assignment from generators and relators to arbitrary group elements. Since different decompositions of the same element need not agree exactly in the approximate setting, we fix a canonical form for each element (\Cref{lemma:can_form}), and define its observable as the corresponding ordered product of generator observables and relators (\Cref{lemma:can_form} and \Cref{def:apx_map}). Together with the consistency bounds from \Cref{lemma:set_of_dif}, this allows us to control state-dependent distances between arbitrary group elements, providing the key input to the stability argument and the proof of the main robustness lemma (\Cref{lem:RQP}).

Finally, the circuit depth lower bound for multi-controlled phase gate operators are robust to small approximation errors (\Cref{thm:robustdepth}). Combining the robust gate synthesis depth lower bound with the operator robustness of near-perfect strategies thus yields the robust version of depth hierarchy.

\paragraph{Robustness under approximate state preparation.} For the semi-quantum game in \Cref{sec:Classical_int}, we encounter an additional difficulty in extending rigidity to robustness because neither the state preparation nor the Clifford operations are implemented directly by the verifier. This means that the constraints tested by the game are evaluated on a collection of states, each carrying an additional error term. This obstructs the argument of \Cref{lem:RQP}, where the reduction to the stability lemma relied on exact Clifford relations to reduce arbitrary words in $\mathcal{P}_m \rtimes_{\alpha} \mathrm{GL}(m,2)$ to canonical form. 

To overcome this issue, we derive bounds relating operator distances evaluated on the approximate states to the corresponding distances evaluated on a fixed ideal reference state, together with trace-distance bounds between the approximate and ideal states (\Cref{lemma:sta_chang_D}). We then use the robustness of the Clifford-basis test (\Cref{def:Cliff_EPR_self}) to cover both the states and the observables. This allows us to control the distance between the state prepared through Charlie’s actions and the ideal Clifford-rotated EPR states appearing in \Cref{def:QI_LBCS}. Consequently, all state-dependent distances arising in \Cref{def:Classical_Int} can be transferred to the ideal reference states, at the cost of larger error bounds. This yields approximate relations for the generators and relators that are sufficient to repeat the stability-based argument from the semi-quantum setting, thereby proving the final robustness theorem for \Cref{sec:Classical_int}.
\begin{figure}[h]
\centering
\scriptsize
\begin{tikzpicture}[
    xscale=0.92,
    yscale=0.92,
    transform shape,
    item/.style={align=center, font=\scriptsize},
    arrow/.style={->, thick},
    panel/.style={font=\bfseries\small}
]

\node[panel] at (3.6,1.2) {Part I: Semi-quantum game and plain-model construction};

\node[item] (cs) at (-2,0) {
Faithful $\mathbb{Z}_2^m$\\
operator-valued BCS\\
(\Cref{def:OVCS_Z2})
};

\node[item] (unique) at (4.6,0) {
Unique operator-valued\\
solution\\
(\Cref{lem:Z2_faithful_rep})
};

\node[item] (game) at (-2,-2.0) {
Semi-quantum $\textsf{HypGame}$\\
(\Cref{def:QI_LBCS})
};

\node[item] (rigid) at (4.6,-2.0) {
Rigidity of MCP\\
gate observables\\
(\Cref{lemma:cliff_verif})
};

\node[item] (exactlb) at (9.2,-2.0) {
Multi-controlled phase (MCP)\\
gate synthesis lower bound\\
(\Cref{prop:Toffoli_lower})
};

\node[item] (robust) at (4.6,-4.0) {
Robustness of MCP\\
gate observables\\
(\Cref{lem:RQP})
};

\node[item] (robustlb) at (9.2,-4.0) {
Robust MCP synthesis\\
depth lower bound\\
(\Cref{thm:robustdepth})
};

\node[item] (exactup) at (9.2,-6.2) {
MCP synthesis\\
depth upper bound\\
(\Cref{thm:upperToffoli})
};

\node[item] (stability) at (0.6,-4.0) {
Stability of\\
approximate\\
representations\\
(\Cref{lemma:Vidick})
};

\node[item] (rel) at (-2,-6.3) {
Quantum Boolean hypercube\\
relation problem\\
(\Cref{def:R_m})
};

\node[
    item,
    draw,
    thick,
    rounded corners,
    inner sep=4pt
] (qplain) at (4.6,-6.3) {
Robust $\QNC^0$\\
depth hierarchy\\
(\Cref{thm:QPlain})
};

\draw[arrow] (cs) -- (unique);
\draw[arrow] (cs) -- (game);

\draw[arrow] (unique) -- (rigid);
\draw[arrow] (game) -- (rigid);

\draw[arrow] (rigid) -- (robust);
\draw[arrow] (game) -- (robust);

\draw[arrow] (stability) -- (robust);

\draw[arrow] (exactlb) -- (robustlb);
\draw[arrow] (robust) -- (qplain);

\draw[arrow] (game) -- (rel);

\draw[arrow] (exactup) -- (qplain);

\draw[arrow] (rel) -- (qplain);
\draw[arrow] (robustlb) -- (qplain);

\end{tikzpicture}
\caption{Logical dependencies among the main ingredients in our analysis of the quantum-input relation problem and the robust depth hierarchy of \Cref{thminf:qplain} (see \Cref{thm:QPlain} for its formal version).}
\label{fig:proof-dependency-architecture}
\end{figure}

\begin{figure}[H]
\centering
\scriptsize
\begin{tikzpicture}[
    xscale=0.95,
    yscale=0.95,
    transform shape,
    item/.style={align=center, font=\scriptsize},
    arrow/.style={->, thick},
    panel/.style={font=\bfseries\small}
]

\node[panel] at (5.2,3.0)
{Part II: Dequantization / Classical-verifier construction};

\node[item] (mpcs) at (0,0) {
Mermin--Peres extended faithful $\mathbb Z_2^m $\\
operator-valued BCS\\
(\Cref{def:MP_parallel} $*$ \Cref{def:OVCS_Z2})
};

\node[item] (interactive) at (0,-2.2) {
Interactive $\textsf{HypGame}$\\ with classical verifier\\
(\Cref{def:Classical_Int})
};

\node[item] (unique) at (4.6,0) {
Unique operator-valued\\
solution\\
(\Cref{lem:Extended_Z2_faithful_rep})
};

\node[item] (forced) at (4.6,-2.2) {
Robustness of Pauli and\\ MCP gate observables \\
(\Cref{lem:RQP})
};

\node[
    item,
    draw,
    thick,
    rounded corners,
    inner sep=4pt
] (final) at (10.9,-2.2) {
Robust $\QNC^0$
depth hierarchy \\
$\cup$ NC$^0$ impossibility\\
(\Cref{thm:hierarchy-classical-verifier})
};

\node[item] (stability) at (0.0,-4.0) {
Stability of\\
approximate\\
representations\\
(\Cref{lemma:Vidick})
};

\node[item] (protocol2) at (9.0,1.8) {Clifford-Rotated\\ EPR State-Commitment\\
(\Cref{def:Cliff_EPR_self})
};

\node[item] (figure2) at (13.0,1.8) {
Clifford-basis test\\(\Cref{fig:delegated_state_preparation})
};

\node[item] (lem52) at (9.0,0) {Robustness of the Clifford-Rotated\\ EPR State-Commitment\\
(\Cref{lemma:cliff_test_qnc})
};

\node[item] (lem51) at (13.0,0) {Robustness of the\\ Clifford-basis test\\
(\Cref{lem:clifford_basis_rigidity})
};

\draw[arrow] (mpcs) -- (unique);
\draw[arrow] (mpcs) -- (interactive);

\draw[arrow] (interactive) -- (forced);
\draw[arrow] (unique) -- (forced);

\draw[arrow] (forced) -- node[midway, below] {$*$} (final);

\draw[arrow] (stability) -- (forced);

\draw[arrow] (figure2) -- (protocol2);

\draw[arrow] (protocol2) -- (lem52);
\draw[arrow] (figure2) -- (lem51);

\draw[arrow] (lem51) -- (lem52);
\draw[arrow] (lem52) -- (forced);

\end{tikzpicture}
\caption{Logical dependencies among the main ingredients in our analysis of the classical-verifier construction underlying \Cref{thm:main-informal} (see \Cref{thm:hierarchy-classical-verifier} for its formal version). Note that the dependency indicated by the arrow $\underset{*}{\rightarrow}$ is not direct. Its use additionally requires combining \Cref{lem:RQP} with \Cref{def:Classical_Int}, \Cref{thm:robustdepth}, and \Cref{thm:upperToffoli}, as in the proof of \Cref{thminf:qplain}. }
\label{fig:classical-proof-dependency}
\end{figure}

\medskip
In summary, our framework provides a way to construct depth-sensitive tasks exhibiting quantum advantage, while offering an algebraic route to proving such separations through constraint systems and their representations. Beyond the specific constructions presented here, the mapping from operator-valued constraint systems to semi-quantum games and the associated protocols extends naturally to other amalgamations of $\mathbb Z_2^n$ with the Pauli group, to other abelian groups in place of $\mathbb Z_2^n$, and potentially to more general classes of constraint systems. We expect these ideas to provide new tools for studying the fine-grained structure of shallow quantum computation and its separation from shallow classical models.

\subsection{Related Work}
\label{sec:related-work}

We place our results in the context of four closely related directions: classical circuit depth hierarchies, separations between shallow quantum and classical circuits, the broader study of depth as a resource within quantum computation, and connections to self-testing protocols.

Depth is a fundamental resource in classical circuit complexity. It is well known that $\AC^0$ circuits cannot compute PARITY, with superpolynomial lower bounds established by Ajtai and by Furst, Saxe, and Sipser \autocite{Ajtai83Sigma11,FSS84ParityPH}, and sharpened via H{\aa}stad’s switching lemma \autocite{Hastad86STOC,Hastad87Book}. These techniques also yield explicit depth hierarchy theorems, showing that increasing depth strictly increases computational power \autocite{Sipser83BorelCircuitComplexity}. Stronger results in the average-case setting further show that shallow circuits cannot even approximate deeper ones without exponential size blowup \autocite{HRST17JACM,Hastad16FOCS,Hastad16ECCC,Hoza24}, building on a large body of work on correlation bounds, Fourier concentration, and influence \autocite{LMN93JACM,Boppana97AvgSensitivity,Tal17FourierTails,ODonnellWimmer07ICALP,Rossman2015}. Our results establish a depth hierarchy for $\QNC^0$, the natural quantum analogue of $\NC^0$, and identify it as the first shallow quantum circuit class for which such hierarchy theorems become genuinely nontrivial.

\paragraph{Shallow quantum circuits.}
A breakthrough result of Bravyi, Gosset, and K\"{o}nig exhibited an explicit relation problem solvable by constant-depth quantum circuits, but requiring logarithmic depth for classical bounded fan-in circuits \autocite{BGK18ShallowAdvantage}. This first unconditional separation between $\QNC^0$ and $\NC^0$ has since been extended in several directions, including separations against stronger classes such as $\AC^0$ and average-case hardness \cite{le2019average,Watts19,coudron2021,deOliveira2025}, interactive separations from $\NC^1$ \cite{Grier20}, robustness to noise and fault-tolerant settings \cite{bravyi2020quantum,grier2021interactivenoisy,hasegawa21,caha2023colossal}, and sampling-based separations \cite{Watts23,Grier26sampling}.

These works establish that shallow quantum circuits can outperform shallow classical circuits, but they do not address the internal structure of $\QNC^0$ itself. In particular, they identify problems solvable in constant depth without distinguishing between different constant-depth regimes or characterizing how computational power scales with depth. However, this question is practically relevant, as circuit depth controls both expressivity and noise in near-term variational algorithms, including QAOA, quantum neural networks, and other quantum machine learning models \cite{Niu2019qaoa,Hastings2019,PellowJarman2024,Ostrowski2020qaoa,Wu2021}. From this perspective, our results suggest that increasing circuit depth enables the realization of increasingly complex nonlocal correlations and computational advantages that remain inaccessible to shallower quantum circuits, thereby refining and extending known quantum advantages over classical shallow-depth models.

\paragraph{Quantum depth hierarchies.}
Understanding the role of circuit depth in quantum computation remains a central open question. On the one hand, key subroutines such as those underlying Shor’s algorithm can be implemented in polylogarithmic depth \cite{cleve2000fast,gossett1998quantum}, suggesting that surprisingly shallow quantum circuits may already capture much of the power of quantum computation. On the other hand, it is widely believed that increasing circuit depth enables strictly broader classes of computations to be performed. Evidence for the first viewpoint appears in conjectures such as that of Jozsa \cite{Jozsa2005}, and subsequent discussions by Aaronson \cite{aaronson2005qchallenge}, suggesting that interleaving polylogarithmic-depth quantum circuits with classical computation may suffice to capture $\mathsf{BQP}$.

Evidence for depth-based separations between polylogarithmic-depth quantum circuits and more powerful circuit classes has been obtained under computational assumptions \cite{chia2022classical} and in oracle models \cite{arora2023quantum}, but unconditional results remain scarce due to the lack of robust lower-bound techniques for quantum circuits. This mirrors longstanding open problems in classical complexity, such as separating $\mathsf{NC}$ from $\mathsf{P}$. The difficulty is further highlighted by recent works showing that certain circuit classes, previously considered promising candidates for separating $\mathsf{QNC}$ from $\mathsf{EQP}$ \cite{bernstein1993quantum}, admit unexpectedly strong parallelization properties \cite{watts2025quantum}. 

More recently, new approaches to lower bounds for shallow quantum circuits have emerged, including information-theoretic \cite{parham2025quantum} and Fourier-analytic \cite{slote24} techniques. While these results begin to illuminate the limitations of shallow quantum computation, a systematic understanding of how computational power scales with depth remains open.

Our results contribute to this direction by establishing an explicit, unconditional depth hierarchy, while introducing new fine-grained lower-bound techniques that enforce depth-sensitive quantum correlations.

\paragraph{Connections to self-testing.} Although our final protocol with classical inputs and classical outputs is not formulated explicitly as a robust self-test of nonlocal games, it admits a natural interpretation in this manner: approximate success enforces closeness to the unique operator assignment satisfying the constraint system, while simultaneously certifying the presence of maximally entangled resource states. Specifically, the protocol certifies the subgroup $\mathcal{P}_m \rtimes_{\alpha} \mathrm{GL}(m,2)$ via a self-test of EPR pairs, Pauli observables, and certain Clifford operators.

We remark that previous work has developed significantly more efficient and robust protocols than ours for self-testing many EPR pairs and Pauli observables \cite{Anand_Vidick18}, including extensions to single-qubit Clifford observables \cite{Coladangelo2024}. While these results are highly relevant, it is not immediately clear how to integrate them into our framework. In particular, our construction achieves weaker self-testing guarantees for the underlying resource states, but in exchange it enables the certification of genuinely non-Clifford resources through observables associated with $\mathsf{C}^{m-1}(\mathsf{Z})$ operations, thereby probing increasing levels of quantum magic. 

From a technical perspective, our robustness analysis follows the standard group-theoretic approach to self-testing \cite{cleve2017perfect}, leveraging stability results for approximate representations \cite{gowers2017generalizations,slofstra2019set,coladangelo2017robust}. While \cite{coladangelo2017robust} suggested that this framework should extend to more general settings including amenable groups, our construction already operates over a finite group. The main additional challenge in our case arises from the semi-quantum nature of the games, where the provers receive question-dependent states. In the quantum information community, self-testing-type robust certification of the quantum strategy in a semi-quantum game \cite{buscemi2012all} has been developed \cite{supic2020self}. Nevertheless, integrating these tools with the group-theoretic rigidity analyses for standard nonlocal games in a fully state-dependent setting would require substantial extensions.

\subsection{Outlook and Future Work}
\label{sec:discussion}

Our results reveal a stratification phenomenon within shallow quantum computation, formalizing the intuition that circuit depth is a genuine resource even in the constant-depth regime. They complement both classical bounded-depth hierarchy theorems and separations between shallow quantum and classical circuits by showing that, even within $\QNC^0$, increasing depth can strictly enhance expressivity and computational power. In particular, our work goes beyond demonstrating quantum advantage over classical circuits and instead reveals a nontrivial fine structure internal to $\QNC^0$ itself.

A useful perspective is that the hierarchy isolates structural barriers to general depth reduction methods. A natural question in circuit complexity is whether a circuit of depth $d$ can be simulated by a similar circuit of depth $d'<d$, with a potential increase in circuit size. Our hierarchy theorem for $\QNC^0$ indicates that general depth reduction may not be possible for this class of circuits. In contrast, depth reduction theorems are known for other classes such as Clifford circuits or when the connectivity of the underlying architecture is limited, where depth-width (space or number of ancillae) tradeoffs have been established \cite{Jiang2020,Yuan2024,watts2025quantum}.

The quantum strategies arising in our protocol with a classical verifier admit natural realizations within standard models of near-term quantum computation, including shallow circuits with adaptive mid-circuit measurements and classical feedforward. In particular, the first round of the protocol implements a qubit routing primitive, akin to those used in architectures with limited connectivity \cite{devulapalli2024quantum}, while the second round involves multi-qubit controlled phase gates, which are ubiquitous in quantum algorithms and circuit design. This correspondence suggests that the depth requirements identified in our work may also arise in practical implementations of quantum computation.

\medskip
While our construction introduces such a hierarchy, it also leaves several aspects only partially understood. We discuss several broader directions for future investigation below, extending beyond refinements of the current parameters. 

\paragraph{Robustness.}
Our construction enforces exact solutions to the constraints under perfect success in the corresponding nonlocal game. Moreover, approximate success still enforces an approximate realization of the underlying algebraic structure, yielding corresponding depth lower bounds  up to suitable error thresholds. However, our hierarchy relies on multi-controlled phase operations whose behavior depends on global input correlations, much like an AND function. If the approximation error is too large, a dishonest strategy could replace the intended operation with a trivial one while still achieving a high success probability, collapsing the hierarchy. Thus, the robustness threshold must be strong enough to exclude such degeneracies. At the same time, approximate implementations within these thresholds may require substantially fewer non-stabilizer resources \cite{gosset2025multi}.

One possible direction to obtain more robust hierarchies is to instantiate other quantum Boolean functions within our framework, particularly ones that are structurally distinct from Toffoli-type gates, such as parity (fan-out) operations. Such extensions may also lead to alternative magic hierarchies or refined separations under better approximate notions of success.

\paragraph{Average-case hierarchies.} Our construction identifies explicit problems witnessing depth separation in the worst-case setting, where shallow circuits cannot succeed perfectly on all valid inputs. A natural direction is to investigate analogous \emph{average-case} hierarchies via correlation bounds, asking whether depth-($d-1$) strategies can achieve non-trivial correlation with perfect quantum strategies over natural input distributions without substantial circuit-size overhead. Such results would bring the hierarchy closer to practical settings, where one might seek computational advantage on typical inputs rather than every instance, and may also relate to quantum pseudorandomness.

\paragraph{Stronger circuit classes.} A natural question is whether similar unconditional depth hierarchies can be established in richer quantum circuit classes, such as $\QAC^0$ and $\QNC^0[\oplus]$. These models admit more powerful parallel primitives, including certain nonlocal operations in constant depth, and remain poorly understood from a lower-bound perspective. Progress in this direction would clarify the role of depth in shallow quantum computation, as well as the power of extensions such as fan-out and adaptive mid-circuit measurements. It is also natural to ask whether the resulting quantum–classical separations can be strengthened against more powerful classical models, such as $\AC^0$.

\paragraph{Depth certification.}
Our work also provides a concrete route toward unconditional, classically verifiable delegated certification of the quantum circuit depth achievable on an untrusted device. In our setting, a classical verifier interacts with entangled provers to certify that any successful strategy must implement a circuit of at least a prescribed depth. However, the current protocols fall short of optimal iteration–depth tradeoffs, where the number of iterations measures the complexity of sampling inputs (questions). In the absence of a separation between $\mathsf{QNC}$ and $\mathsf{EQP}$, the best one might hope for is to certify depths up to $O(\log n)$ using $\poly n$ iterations, whereas our construction only reaches depth $O(\log\log n)$. Closing this gap for certification protocols remains an important problem.

\medskip

More broadly, our results point to a general methodology. Combining linear constraint systems with symmetry-enforcing non-linear relations enables us to enforce rigid global structure of observables, while preserving the joint measurability needed for embedding them into nonlocal games and interactive protocols. This approach suggests a new route toward fine-grained lower bounds in shallow quantum models, extending beyond depth to other resources such as non-stabilizerness or communication. We expect the further development of these ideas to lead to new hierarchy theorems, new certification protocols, and further connections between interactive proofs and quantum circuit complexity.

\subsection*{Organization} We introduce the necessary preliminaries in \Cref{sec:prelims}. The rest of the paper is organized as follows. In Section~\ref{sec:constraint_system}, we introduce our group-embedding technique and construct the constraint system for which we establish rigidity and uniqueness of solutions, properties that form the algebraic backbone of our construction. In Section~\ref{subsec:Quantum_depth}, we show how to embed this constraint system into a quantum interactive protocol with a Clifford verifier, and derive a quantum-input relation problem that exhibits a quantum depth hierarchy. In Section~\ref{sec:Classical_int}, we remove the quantum capabilities of the verifier by introducing a classical two-round interactive protocol, and prove our main depth hierarchy theorem together with quantum advantage over classical circuits.

The appendices contain additional technical developments and supporting proofs. In particular, they extend the rigidity statements to robust variants, relate these robustness guarantees to circuit lower bounds through multi-controlled phase gates, and provide the auxiliary constructions associated with these gates.

\section{Preliminaries}
\label{sec:prelims}
We first fix notation and recall the basic concepts we use:
(i) constant-depth quantum circuits, (ii) operator-valued (quantum) constraint satisfaction systems and (iii) two-prover nonlocal (interactive) games.

\paragraph{Notation.}
For $n\in\mathbb{N}$, let $[n]:=\{0,\dots,n-1\}$. We identify $\{0,1\}^n$ with the additive group $\mathbb{Z}_2^n$, with group operation bitwise XOR, denoted $\oplus$. For $a\in[n]$, let $\mathbf{e}_a\in\{0,1\}^n$ be the standard basis vector with a $1$ in coordinate $a$. We index coordinates from $0$ to $n-1$. We use $i\in[M]$ for integer indices and $\veci\in\mathbb{Z}_2^m$ for their binary representations. More generally, bold symbols $\veci,\vecj$ etc.\ will denote vectors, and the corresponding non-bold symbols will denote integer indices.

For $\veci,\vecs\in\{0,1\}^n$, let $|\veci|$ denote the Hamming weight of $\veci$, let $\veci[j]$ be the $j$th bit of $\veci$, and define $\veci\cdot\vecs \;:=\; \sum_{j=0}^{n-1} \veci[j]\vecs[j] \pmod 2$. We index families of operators by bit strings, e.g.\ $\{z_{\veci}\}_{\veci\in\{0,1\}^n}$. For $a,b\in[n]$, let $\sigma_{a,b}:\{0,1\}^n\to\{0,1\}^n$ be the permutation that swaps coordinates $a$ and $b$. 

We focus on relation problems, where outputs $y \in Y$ on inputs $x \in X$ must satisfy $(x,y) \in R$ for some relation $R \subseteq X \times Y$.

\paragraph{Hilbert space and operator notation.} Let $\mathcal{H}^m=\left(\mathbb C^2\right)^{\otimes m}$ denote the $m$-qubit Hilbert space of dimension $M=2^m$, and let $U(\mathcal{H}^m)$ denote the set of unitaries on $\mathcal{H}^m$. For an operator $A$, let $A^\dagger$ denote its adjoint, and $\|A\|_2$ its operator (spectral) norm. An observable is a Hermitian operator with $A=A^\dagger$. A binary observable, or $\pm1$-observable, satisfies $A^2=I$ (i.e., $\spec(A)\subseteq\{\pm1\}$).
A set of observables $\{A_i\}$ is simultaneously diagonalizable if its elements commute pairwise. 

We use the computational basis $\{\ket{\mathbf{x}}:\mathbf{x}\in\{0,1\}^m\}$ for $(\mathbb{C}^2)^{\otimes m}$. Let $X_i$ denote the single-qubit Pauli-$X$ acting on qubit $i$, and similarly for other standard gates. For $\vecs\in\{0,1\}^m$, define 
\[
Z(\vecs) := \bigotimes_{j=0}^{m-1} Z^{\vecs[j]}, \qquad
X(\vecs) := \bigotimes_{j=0}^{m-1} X^{\vecs[j]},
\]
so that $Z(\vecs)\ket{\mathbf{x}}=(-1)^{\vecs\cdot \mathbf{x}}\ket{\mathbf{x}}$ and $X(\vecs)\ket{\mathbf{x}}=\ket{\mathbf{x}\oplus \vecs}$. For $\vect= (\veca,\vecb) \in \mathbb{Z}_2^{2m}$, define the $m$-qubit Pauli operator
\[
P(\vect) = i^{\veca\cdot \vecb} X(\veca) Z(\vecb).
\] 

\paragraph{Constant-depth circuits.}
A \emph{quantum circuit} on $n$ qubits is a sequence of unitary gates from a fixed finite gate set, each acting on at most $K=O(1)$ qubits, i.e.\ having bounded fan-in. The \emph{depth} $d$ is the minimum number of layers obtained by parallelizing gates with disjoint supports. We write $\mathcal{C}_d$ for the set of circuits in a class $\mathcal{C}$ with depth at most $d$. Classical outputs are obtained from the circuit by measuring a designated subset of $k$ qubits in the computational basis, yielding a bit string $\mathbf{y} \in \{0,1\}^k$.
Circuits may use \emph{ancillary qubits} initialized to a fixed state, typically $\ket{0}$ (a ``clean'' ancilla). 

We use $\QNC^0$ to denote families of polynomial-size, constant-depth quantum circuits with bounded fan-in gates, and $\mathsf{NC}^0$ for the classical analogue.
A key structural constraint of such circuits is locality: in a depth-$d$ circuit with fan-in $K$, each output depends only on a backward light cone of inputs, of size at most $K^d$. For constant $d$, this is $O(1)$, independent of the number $n$ of inputs. Unless stated otherwise, we assume $K=2$.

We also consider parity extensions of these circuits. The class $\mathsf{NC}^0[\oplus]$ augments $\mathsf{NC}^0$ with unbounded fan-in parity gates, allowing a single gate to compute the $\mathsf{XOR}$ of an arbitrary number of input bits, bypassing the light cone restrictions of standard $\mathsf{NC}^0$. Its quantum analogue $\QNC^0[\oplus]$ allows parity-type operations on an unbounded number of qubits (e.g.\ via $\mathsf{CNOT}$ circuits), enabling global dependencies within constant depth.

\subsection{Operator-Valued Constraint Systems}
\label{sec:prelims-ovcs}
We will encode the intended prover behavior via operator-valued constraints.
We consider constraints to be binary-valued. An operator-valued constraint system $S$ of size $|S|$ consists of:
\begin{itemize}
  \item a set of formal variables $V=\{v_i\}_{i\in\mathcal{I}}$;
  \item a collection $\{\mathcal{S}_1,\ldots,\mathcal{S}_{|S|}~\mid~\mathcal{S}_{j}\subseteq\mathcal{I}\}$ of contexts, defining constraints $\{C_j=(\mathcal S_j,c_j)\}_j$ (word equations) 
 \[
\prod_{l\in\mathcal S_j} v_l =c_j I,
\qquad c_j\in\{\pm1\}.
\]
  \item (optional) conjugation constraints $U\, v_i\, U^\dagger \;=\; v_{f(i)}$ for fixed unitaries $U$ and $f:\mathcal{I}\to\mathcal{I}$.
\end{itemize}
Such a constraint system is also called a quantum binary constraint system (BCS). A \emph{quantum satisfying assignment} (or \emph{operator-valued solution}) assigns to each variable $v_l$ a $\pm1$-observable on some finite-dimensional Hilbert space $\mathcal{H}$ such that all constraints hold exactly.

In our results, we use the following two standard constraint systems.
The first, the Mermin--Peres system, is the basic parity obstruction behind the magic-square game: multiplying all six equations would force $1=-1$ for scalar $\{\pm1\}$ assignments, so it has no classical satisfying assignment.
Nevertheless, it admits the Pauli operator solution below, and its parallel version will later serve as a Pauli-certification/self-testing gadget in our construction.

\begin{definition}[Mermin--Peres constraint system]\label{def:MP}
    The Mermin--Peres (a.k.a. magic square) linear constraint system is given by the following constraints:
    \begin{equation}\label{eq:MerminPeres}
\begin{aligned}
& v_1v_2v_3=1, \quad  v_4v_5v_6=1, \quad v_7v_8v_9=1, \\
& v_1v_4v_7=1, \quad v_2v_5v_8=1, \quad v_3v_6v_9=-1, \\
\end{aligned}
\end{equation}
which is satisfied (up to a global isometry) by the two-qubit Pauli strings:
\begin{equation}\label{eq:MPqubits}
\begin{aligned}
& v_1=Z\otimes I,\quad v_2=I\otimes Z, \quad v_3=Z\otimes Z, \\
& v_4=I\otimes X, \quad v_5=X\otimes I, \quad v_6=X\otimes X, \\
& v_7=Z\otimes X, \quad  v_8=X\otimes Z, \quad v_9=Y\otimes Y.
\end{aligned}
\end{equation}
\end{definition}

\begin{definition}[Parallel Mermin--Peres constraint system]
\label{def:MP_parallel}
For $m \in \mathbb{N}$, the $m$-fold parallel Mermin--Peres constraint system consists of $m$ independent copies of the Mermin--Peres constraint system. For each copy $j \in [m]$ with variables $\{v_{j,1},\dots,v_{j,9}\}$, define derived variables
\begin{align}
a_{j,0} &= v_{j,1}v_{j,2}v_{j,3}, &
a_{j,1} &= -\,v_{j,1}v_{j,4}v_{j,7}, \\
a_{j,2} &= v_{j,4}v_{j,5}v_{j,6}, &
a_{j,3} &= -\,v_{j,2}v_{j,5}v_{j,8}, \\
a_{j,4} &= v_{j,7}v_{j,8}v_{j,9}, &
a_{j,5} &= v_{j,3}v_{j,6}v_{j,9}.
\end{align}

The parallel system enforces the following global constraints:
\begin{equation}\label{eq:parallelanti}
\prod_{j=0}^{m-1} a_{j,\veci[j]} = (-1)^{\sum_{j=1}^{m} \veci[j]},
\qquad \forall \veci \in \{0,1,\dots,5\}^{m}.
\end{equation}

In addition, variables corresponding to different copies commute. This constraint system is satisfied by $m$-fold parallel repetitions of the solutions to the Mermin--Peres system (\Cref{def:MP}).
\end{definition}

\subsection{Standard Constraint System Games}\label{sec:BCSgameDef}

\paragraph{Two-prover nonlocal games.}
We formulate our separations in terms of interactive protocols between a verifier and two provers. In the single-round setting, such protocols are equivalent to two-prover nonlocal games. A single-round game is specified by finite question sets $\mathcal X,\mathcal Y$, answer sets $\mathcal A,\mathcal B$, a distribution $\mu$ over pairs of questions, and a verification predicate $V:\mathcal X\times\mathcal Y\times\mathcal A\times\mathcal B\to\{0,1\}$. The verifier samples $(x,y)\sim\mu$, sends $x$ to Alice and $y$ to Bob, receives answers $a$ and $b$, and accepts iff $V(x,y,a,b)=1$. Multi-round protocols extend this model by allowing several rounds of interaction, where future messages may depend on the transcript accumulated during previous rounds.

For a single-round game, a \emph{quantum strategy} is a shared bipartite state $\ket{\psi}\in\mathcal{H}_A\otimes\mathcal{H}_B$ with question-dependent POVM elements $(M^{x}_a, N^{y}_b)$ (projective measurements when applicable).
The \emph{success probability} is
\begin{equation*}
\Pr[\text{win}]=\E_{(x,y)\sim\mu}\sum_{a,b} V(x,y,a,b) \bra{\psi}(M^{x}_a\otimes N^{y}_b)\ket{\psi}.
\end{equation*}

\emph{Constant-depth strategies} restrict each prover to a $\QNC^0$ circuit acting on their local workspace and ancillae, followed by computational-basis measurements.

\paragraph{From a constraint system to a game.}
To any constraint system, we can associate a one-round game with two nonlocal players, Alice (A) and Bob (B), and a verifier (V), in either of two ways:
\begin{enumerate}
    \item \textbf{Constraint-variable game}: The verifier randomly samples a constraint $C_j$ and variable $v_i \in \mathcal{S}_j$. Alice is asked for a satisfying assignment to $C_j$, and Bob for an assignment to $v_i$. They win iff Alice’s assignment satisfies the constraint and agrees with Bob’s assignment on $v_i$.

    \item \textbf{Constraint-constraint game}: The verifier randomly samples two constraints $C_i,C_j$ with contexts $\mathcal{S}_i,\mathcal{S}_j$. Alice and Bob are asked for satisfying assignments to $C_i$ and $C_j$, respectively. They win iff the assignments agree on all overlapping variables in $\mathcal{S}_i \cap \mathcal{S}_j$.
\end{enumerate}

For constraint systems in which every context contains exactly two variables, we also define the \textbf{2-CS game}. The verifier samples a constraint and sends one of its variables to each player. Alice and Bob win iff their assignments satisfy the constraint.

Importantly, a line of work \cite{cleve2014characterization,coladangelo2017robust,culf2024re} has established a close correspondence between perfect strategies for nonlocal games and operator-valued solutions to the underlying constraint systems. In the classical setting, perfect strategies correspond to scalar satisfying assignments (e.g.\ $\{\pm 1\}$).

In the quantum setting, variables are assigned Hermitian operators, introducing a nontrivial compatibility requirement: observables appearing in a common constraint must be jointly measurable, and hence commute. This is commonly formalized using \emph{linear constraint systems} (LCS), where constraints take the form
\[
c_j = \prod_{i\in\mathcal{S}_j} V_i,
\]
with all observables in a constraint required to commute. Under these conditions, operator-valued solutions are equivalent to perfect strategies for the associated nonlocal game.

Concretely, if an LCS admits a quantum solution $\{A_i\}_i$ on a $d$-dimensional Hilbert space, the provers may share the maximally entangled state
\[
\ket{\Phi^+}=\frac{1}{\sqrt d}\sum_{i=0}^{d-1}\ket{ii},
\]
with Alice measuring the observables $\{A_i\}_i$ and Bob measuring $\{A_i^{\mathrm T}\}_i$. The resulting outcomes are assigned to the corresponding variables $V_i$.

\section{Constraint Systems from Group Presentations}\label{sec:constraint_system}
As outlined in \Cref{sec:BCSgameDef}, binary constraint systems with operator-valued solutions are a standard tool for constructing nonlocal games that exhibit separations in winning probability between classical and quantum strategies.

Constraint systems and their associated nonlocal games admit a natural algebraic description in terms of group presentations and representations \cite{cleve2014characterization,cleve2017perfect,coladangelo2017robust}. A constraint system may be viewed as a group presentation, where the variables serve as generators, while each constraint determines a relation, equivalently a relator after moving all terms to one side. Operator-valued solutions then correspond to (projective) representations of the presented group. From this perspective, the verifier’s constraints can be designed to force the provers’ observables to realize representations of a prescribed group, often arising from presentations with linear relators whose associated solution group is non-abelian. Such games are typically constructed to exclude classical perfect strategies while admitting quantum ones.

Understanding the structure of perfect quantum strategies is closely related to self-testing~\cite{vsupic2020self}, where one seeks to infer the underlying observables and shared state from the observed correlations. In concrete examples, such as the Mermin--Peres magic square game, this is achieved by identifying the relevant solution group and analyzing its representations, thereby characterizing the unique admissible operator-valued solutions.

\paragraph{Beyond Quantum--Classical Separations.}
The existence of a perfect strategy arising from a non-trivial representation of a non-abelian solution group may rule out only trivial or one-dimensional (i.e.\ classical) strategies, while still permitting representations implementable by simple quantum circuits. In particular, linear constraint systems based on linearly presented groups may admit structurally simple realizations compatible with low-depth unitary computation \cite{Zhang2024,cleve2014characterization}.\footnote{Zhang, Pan, and Liu~\cite{Zhang2024} rule out strategies restricted to Clifford-type provers, but their methods do not appear to extend to general circuit-depth lower bounds.} Our goal is more demanding than simply separating quantum from classical strategies: we seek a constraint system whose perfect operator-valued solutions are ``rigid'' enough to reveal fine-grained computational features of quantum provers playing the corresponding nonlocal game.  

For a quantum depth-sensitive separation, one must force strategies to realize representations with provable complexity lower bounds. A natural idea is to use groups whose non-trivial representations inherently require substantial resources to implement. Classical examples of this include switching networks and permutation groups, where realizing arbitrary permutations via local generators requires logarithmic depth \cite{waksman1968permutation}. However, enforcing such structure typically requires nonlinear constraints, and there is currently no general reduction from such systems to single-round nonlocal games. Existing reductions rely crucially on joint measurability of the observables appearing in each question, which is difficult to preserve while imposing nonlinear constraints.

Our approach resolves this tension by starting from an abelian core, which preserves the joint measurability of the relevant observables, and embedding it into a larger group that enables us to carefully control (``rigidify'') its admissible representations and the corresponding perfect operator-valued solutions of the constraint system. While this does not eliminate the need for non-linear constraints, a careful choice of embedding group allows them to be enforced and tested within a natural extension of the linear BCS setting.

\subsection{From Irreducible to Faithful Representations}
\label{subsec:irrep-to-faithful}
We begin with the abelian group $\mathbb Z_2^m$. Over $\mathbb C$, its irreducible representations are all one-dimensional. Any family of commuting involutions satisfying the group law would constitute a non-trivial representation, but this only imposes weak structural constraints on the winning strategy.

Instead, we will force the strategy to implement a specific \emph{faithful} (i.e.\ injective or 1-to-1) representation of dimension $|\mathbb Z_2^m|=2^m$. Faithfulness prevents the representation from factorising into a product over a non-trivial kernel and the corresponding quotient group. The canonical faithful unitary representation over $(\mathbb{C}^2)^{\otimes m}$ is diagonal in the computational basis indexed by $\mathbb Z_2^m$, mapping each group element to an involution (i.e.,\ a Hermitian unitary) that flips the phase of exactly one basis vector (i.e., a diagonal matrix with $\pm 1$ entries). These matrices commute, and in quantum terms they correspond to highly nonlocal diagonal phase gates, including multi-controlled phase gates (denoted $\mathsf{C}^{m-1}(\mathsf{Z})$). Conjugating by Hadamard gates yields the corresponding multi-controlled Toffoli gates $\mathsf{C}^{m-1}(\mathsf{X})$, which are diagonal in the $\mathsf{X}$ basis. Such gates implement AND-type Boolean functions and require logarithmic depth in standard circuit models \cite{Dutta2025,dutta2025optimaltdepth}.

We force the implementation of this faithful representation of $\mathbb Z_2^m$ by constructing an embedding of $\mathbb Z_2^m$ into a larger group with additional compatibility constraints. The resulting ``rigidity'' of the representation makes our construction depth-sensitive, by tying perfect success in the nonlocal game to implementing observables with genuinely global structure (equivalent to multi-controlled phase operators). Consequently, any perfect strategy must implement these operations rather than an equivalent but computationally simpler realization. As $m$ (and hence the size of the constraint system, and corresponding nonlocal game instance) grows, the enforced operations become more global, and our lower bound exploits this to obtain a fine-grained separation between constant-depth provers.

Before proceeding further, let us first define the algebraic objects used in our construction.

\paragraph{Presentations, free products, amalgams, and semidirect products.}
A \emph{finite group presentation} $\langle S\mid R\rangle$ specifies a group $G$ by generators $S=\{s_1,\dots,s_k\}$ and relations $R=\{r_1,\dots,r_\ell\}$, where each $r_j$ is a word over $S\cup S^{-1}$ (i.e., words in the free group $F(S)$). It denotes the quotient group
\[
\langle S \mid R\rangle \;:=\; F(S)\big/\langle\!\langle R\rangle\!\rangle,
\]
where $\langle\!\langle R\rangle\!\rangle$ is the normal closure of $R$ in $F(S)$. A solution (or realization) of the presentation in a group $H$ is a map $\phi:S\to H$ that extends to a homomorphism from $F(S)\to H$ which maps $r_j$ to the identity element in $H$ for all $r_j\in R$.

If $G=\langle S\mid R\rangle$ and $H=\langle T\mid U\rangle$ with $S\cap T=\emptyset$, then the \emph{free product} $G*H$ has the presentation $\langle S\cup T \mid R\cup U\rangle$. For a group $K$, given homomorphisms (usually embeddings) $\iota_G:K\to G$ and $\iota_H:K\to H$, the \emph{amalgamated product} $G*_K H$ is the quotient of $G*H$ by the additional relations $\iota_G(k)\iota_H(k)^{-1}$ for all $k\in K$. We will refer to these as hybrid relations, and they serve to identify the two images of $K$ inside the free product.

Let $N$ and $H$ be groups and let $\varphi:H\to \mathrm{Aut}(N)$ be a homomorphism describing an action of $H$ on $N$ by automorphisms. The \emph{semi-direct product} $N\rtimes_{\varphi} H$ is the group whose underlying set is $N\times H$ with multiplication $(n,h)\cdot(n',h') \;:=\; \bigl(n\,\varphi(h)(n'),\, hh'\bigr)$.
Equivalently, $N\rtimes_{\varphi} H$ is generated by copies of $N$ and $H$ subject to the relations of $N$ and $H$ together with the conjugation rule $hnh^{-1} \;=\; \varphi(h)(n) ~\forall~n\in N,\; h\in H$, so that $N$ is a normal subgroup and $H$ acts on $N$ by conjugation.

\paragraph{Symmetry-determined embedding.} To eliminate the freedom to choose arbitrary representations of the abelian core, we embed $\mathbb Z_2^m$ into a larger group built from three components: $\mathbb Z_2^m$, the Pauli group, and the general affine group. The Pauli sector identifies the abstract observables with a canonical Pauli structure by identifying certain products of the generators of $\mathbb Z_2^m$ with Pauli operators. This still leaves freedom in how the individual observables are realized. We remove that freedom by imposing affine symmetries, which require the observables to transform consistently under relabelings of the Boolean hypercube. Together, these constraints force a unique faithful realization of the observable family while preserving the commutativity of the original $\mathbb Z_2^m$ sector, as we shall see below.

Let $\mathcal{P}_m$ denote the $m$-qubit Pauli group modulo global phase. Starting from the free product $\mathbb Z_2^m * \mathcal{P}_m$, which allows mixed words over the generators of $\mathbb Z_2^m$ and $\mathcal{P}_m$ while preserving the internal relations of each sector independently, we impose the hybrid relations
\begin{equation*}
    \prod_{\substack{\veci\in\{0,1\}^m:\ \veci\cdot \vecs = 1}} z_{\veci} = Z(\vecs)
\end{equation*}
for all $\vecs \in \{0,1\}^m$.
The elements $\chi_{\vecs} := \prod_{\veci:\veci\cdot \vecs=1} z_{\veci}$ form a subgroup of $\mathbb Z_2^m * \mathcal{P}_m$ isomorphic to $\mathbb Z_2^m$. The hybrid relations identify this subgroup with the Pauli subgroup generated by $\{Z(\vecs)\}_{\vecs\in\mathbb Z_2^m}$. The resulting group is the amalgamated product
\begin{equation*}
    \mathbb Z_2^m *_{\mathbb Z_2^m} \mathcal{P}_m,
\end{equation*}
where $\mathbb Z_2^m$ is embedded into itself via $\vecs \mapsto \chi_{\vecs}$ and into $\mathcal{P}_m$ via $\vecs \mapsto Z(\vecs)$, and the amalgamation identifies the images of the two embeddings.

We then extend this structure by affine symmetry. Let
$\mathrm{AGL}(m,2)=\mathbb Z_2^m\rtimes \mathrm{GL}(m,2)$
denote the general affine group, whose elements are pairs $g=(A,\vecb)$ with $A\in\mathrm{GL}(m,2)$ and $\vecb\in\mathbb Z_2^m$. We use $\mathrm{AGL}(m,2)$ because its action preserves the affine structure of $\mathbb Z_2^m$ underlying the hybrid relations introduced above. The affine group acts on the hypercube sector $\mathbb Z_2^m$ by
\[
g z_{\veci} g^{-1}=z_{\tau(g,\veci)},
\qquad
\tau(g,\veci)=A\veci\oplus\vecb.
\]

It also acts on the Pauli sector through the corresponding Clifford action. Writing $\alpha(g,\vecs)$ for the induced action on Pauli labels, we have 
\[
g\,P(\vecs)\,g^{-1} = P(\alpha(g,\vecs)),
\]
for any $\vecs \in \mathbb{Z}_2^{m}$ indexing Pauli operators up to phase. This results in a semi-direct product group: 
\begin{equation}
  G_{\mathbb Z_2^m}=  \bigl(\mathbb Z_2^m *_{\mathbb Z_2^m} \mathcal{P}_m \bigr)\rtimes_{(\tau,\alpha)} \mathrm{AGL}(m,2).
\end{equation}

$G_{\mathbb Z_2^m}$ is non-abelian, due both to the intrinsic commutation relations of the Pauli group and the affine conjugation action. But the $\mathbb Z_2^m$ subgroup itself remains abelian. Consequently, the distinguished observable family $\{z_{\veci}\}_{\veci\in\mathbb Z_2^m}$ is jointly measurable, as required in the standard LCS-based nonlocal games. Moreover, the Pauli and Clifford symmetry constraints enforcing rigidity are efficiently implementable, enabling their direct incorporation into an interactive protocol. 

Next, we will show that this enlarged group structure precisely rigidifies the representations of $\mathbb Z_2^m$ to the canonical faithful representation described earlier.

\subsection{Uniqueness of Operator-Valued Solutions}
We now describe a finite presentation of $G_{\mathbb Z_2^m}$, which will subsequently be compiled into a nonlocal game. Rather than presenting the group abstractly, we encode its defining relations directly as an operator-valued constraint system. The Pauli sector is represented by Pauli operators, and the affine symmetries are represented by the corresponding Clifford operators in the computational basis. The resulting constraints encode their action on the generators of the $\mathbb Z_2^m$.

\begin{definition}
[Faithful $\mathbb Z_2^m$ operator-valued BCS]\label{def:OVCS_Z2}
Let $m\in\mathbb N,m\geq2$ and $M=2^m$. The constraint system is specified by operator-valued variables $\{z_{\veci}\}_{\veci\in\{0,1\}^m}$ acting on a Hilbert space of dimension $M$, subject to four families of relations: 
\begin{equation*}
\begin{aligned}
\textbf{(Involution relations)} \qquad
& z_{\veci}^2 = I, && \forall \veci\in\{0,1\}^m,\\
& z_{\veci} z_{\vecj} = z_{\vecj} z_{\veci}, && \forall \veci,\vecj\in\{0,1\}^m\\[0.4em]
\textbf{(Parity–Pauli identification)} \qquad
& \prod_{\veci: \veci\cdot\vecs = 1} z_{\veci} = Z(\vecs), 
&& \forall \vecs \in \{0,1\}^m\setminus 0^m,\\[0.6em]
\textbf{(Affine symmetry relations)} \qquad
& \mathsf{Swap}_{a,b}\, z_{\veci}\, \mathsf{Swap}_{a,b} 
= z_{\sigma_{a,b}(\veci)}, 
&& \forall \veci\in\{0,1\}^m,\; a,b\in[m],\; a\leq b,\\
& X_a\, z_{\veci}\, X_a = z_{\veci\oplus\mathbf{e}_a}, 
&& \forall \veci\in\{0,1\}^m,\; a\in[m],\\
& \mathsf{CNOT}_{a,b}\, z_{\veci}\, \mathsf{CNOT}_{a,b} 
= z_{\veci\oplus(\veci[a]\cdot\mathbf{e}_b)}, 
&& \forall a,b\in[m],\; a\neq b,\\[0.4em]
\textbf{(Global constraint)} \qquad
& \prod_{\veci \in \{0,1\}^m} z_{\veci} = -I.
\end{aligned}
\end{equation*}
\end{definition}

The four families of relations play complementary roles and together leave only the canonical faithful realization of $\mathbb Z_2^m$ as a valid operator-valued solution. The involution and commutation relations define an abelian reflection group on the generators $\{z_{\veci}\}_{\veci}$ and imply that the observables are simultaneously diagonalizable. The parity–Pauli identification ties this abstract structure to the Paulis diagonal in the computational basis and restricts the admissible diagonal $\pm 1$-spectra. The affine symmetry relations enforce consistency under the action of $\mathrm{AGL}(m,2)$ on $\{z_{\veci}\}_{\veci}$, ultimately forcing these spectra to match the canonical basis labelling. Finally, the global constraint excludes the trivial representation. The proof of rigidity follows precisely this sequence of steps. A concrete illustration of the rigidity mechanism for $m=3$ is given in \Cref{app:example}.

\begin{lemma}[Rigidity of the faithful $\mathbb Z_2^m$ constraint system]
\label{lem:Z2_faithful_rep}
Any operator-valued solution to the constraint system in \Cref{def:OVCS_Z2} is unitarily equivalent to the canonical faithful representation of $\mathbb Z_2^m$. In particular, there exists a unique solution in which
\begin{equation}
z_{\veci} = \operatorname{diag}(1,\dots,1,\underbrace{-1}_{i\text{-th position}},1,\dots,1),
\qquad \forall i \in [M].
\end{equation}
\end{lemma}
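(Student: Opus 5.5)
The plan is to use the four families of relations in the order they are listed in \Cref{def:OVCS_Z2}. Each family is turned into a condition on the diagonal entries of the $z_{\veci}$, and these conditions together pin down the solution. The operators $Z(\vecs)$, $\mathsf{Swap}_{a,b}$, $X_a$, $\mathsf{CNOT}_{a,b}$ in the constraints are the fixed computational-basis operators on $\mathcal H^m$. So the claim is really an identity in a fixed basis. If one allows a larger space $\mathcal H^m\otimes\mathcal K$, the same argument runs blockwise and yields the canonical solution tensored with $I_{\mathcal K}$, up to a unitary on $\mathcal K$.

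\textbf{Step 1: diagonality.} By the involution relations the $z_{\veci}$ pairwise commute. Each $Z(\vecs)$ is, by the parity–Pauli identification, a product of $z$'s, so every $z_{\vecj}$ commutes with all $Z(\vecs)$. The $Z(\vecs)$ span the full algebra of diagonal matrices, which is maximal abelian. Hence every $z_{\vecj}$ is diagonal with $\pm1$ entries, and we can write
\[
z_{\veci}\ket{\mathbf x}=(-1)^{f_{\veci}(\mathbf x)}\ket{\mathbf x}
\]
for some functions $f_{\veci}:\{0,1\}^m\to\mathbb Z_2$.

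\textbf{Step 2: symmetry reduction.} Conjugating a diagonal operator by $X_a$, $\mathsf{Swap}_{a,b}$ or $\mathsf{CNOT}_{a,b}$ permutes its diagonal entries by the corresponding affine map on basis labels. The affine symmetry relations therefore give
\[
f_{g(\veci)}(g(\mathbf x))=f_{\veci}(\mathbf x)
\]
for all $g$ in the group generated by bit flips, swaps and CNOTs, which is all of $\mathrm{AGL}(m,2)$. Using the translations, $f_{\veci}(\mathbf x)=F(\mathbf x\oplus\veci)$ with $F:=f_{0^m}$. Using the linear part, $F(A\mathbf y)=F(\mathbf y)$ for all $A\in\mathrm{GL}(m,2)$. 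Since $\mathrm{GL}(m,2)$ is transitive on nonzero vectors, $F(0^m)=\alpha$ and $F(\mathbf y)=\beta$ for all $\mathbf y\neq0^m$, for two bits $\alpha,\beta$.

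\textbf{Step 3: the remaining two bits.} The global constraint gives $\alpha+(M-1)\beta\equiv\alpha+\beta\equiv1$. So $(\alpha,\beta)\in\{(1,0),(0,1)\}$, and both cases are faithful, hence nontrivial.

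Next check the parity–Pauli constraints for $\vecs\neq0$ at a basis state $\mathbf x$, substituting $\mathbf y=\mathbf x\oplus\veci$. If $\vecs\cdot\mathbf x=0$, the sum runs over the $M/2$ nonzero $\mathbf y$ with $\vecs\cdot\mathbf y=1$, giving $(M/2)\beta\equiv0$. If $\vecs\cdot\mathbf x=1$, it runs over the hyperplane $\vecs\cdot\mathbf y=0$, which contains $0^m$, giving $\alpha+(M/2-1)\beta\equiv\alpha+\beta\equiv1$. Here $M/2$ is even because $m\ge2$. So $(\alpha,\beta)=(1,0)$ satisfies every constraint and yields exactly $z_{\veci}=\operatorname{diag}(1,\dots,-1,\dots,1)$ with the $-1$ at position $i$; this is the existence part of the lemma.

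\textbf{Main obstacle.} The delicate point is excluding $(\alpha,\beta)=(0,1)$, i.e.\ $z_{\veci}\mapsto -z_{\veci}$. The parity–Pauli constraints involve $M/2$ factors, an even number, and the global constraint involves $M$ factors, also even. So the computation above suggests this sign flip also satisfies every relation as written. I would first recheck the conventions in \Cref{def:OVCS_Z2}, in particular whether a constraint of odd arity, such as a $\vecs=0^m$ or single-generator normalization, is implicitly intended. If no such constraint is present, I would state the rigidity as uniqueness up to this global sign. That sign is irrelevant for what follows, since $-z_{\veci}$ is still faithful and equally hard to implement: it is $-\mathsf{C}^{m-1}(\mathsf Z)$ up to $X$-conjugation.
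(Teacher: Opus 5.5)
Your proof is correct and follows essentially the paper's route: diagonality from the parity--Pauli relations, then the affine symmetries to collapse $z_{\mathbf 0}$ to a single distinguished entry and translate it, then the global constraint to fix the relative sign. Your worry about the sign is also justified, since $z_{\veci}\mapsto -z_{\veci}$ preserves every relation for $m\ge 2$, and the paper's own proof likewise only concludes ``up to permutation and a global sign.'' One small correction to your aside: on $\mathcal H^m\otimes\mathcal K$ this sign freedom becomes $z_{\veci}=z^{\mathrm{can}}_{\veci}\otimes S$ for an arbitrary involution $S$ on $\mathcal K$, not $z^{\mathrm{can}}_{\veci}\otimes I_{\mathcal K}$ up to a unitary on $\mathcal K$.
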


\begin{proof}
The $z_{\veci}$ are unitaries over the Hilbert space $\mathbb C ^M$ with $M=2^m$. Since $z_{\veci}^2 = I$ for all $i\in[M]$, each $z_{\veci}$ is Hermitian and has eigenvalues $\pm1$. The relations $z_{\veci} z_{\vecj} = z_{\vecj} z_{\veci}$ further imply that these operators are simultaneously diagonalizable by a unitary $U$, such that $D_{\veci} := U z_{\veci} U^\dagger$ is diagonal for every $i$.

Conjugating the constraint $\prod_{\veci:\, \veci\cdot \vecs = 1} z_{\veci} = Z(\vecs)$
by $U$ yields for all $s\in\{0,1\}^m$ that
\[
\prod_{\veci:\, \veci\cdot \vecs = 1} D_{\veci} = U Z(\vecs) U^\dagger.
\]
The left-hand side is diagonal, and hence $U Z(\vecs) U^\dagger$ is diagonal for
all $\vecs$. By linearity, $U$ also conjugates every diagonal matrix $\sum_{{\vecs}\in \{0,1\}^m} c_{\vecs} Z(\vecs)$ with $c_{\vecs}\in\mathbb C$ to another diagonal matrix.

In particular, any rank-$1$ computational basis projector $P_{\mathbf{k}}=\ket{\mathbf{k}}\bra{\mathbf{k}}$ maps to a diagonal matrix.
Since $(UP_{\mathbf{k}}U^\dagger)^2=UP_{\mathbf{k}}U^\dagger$ and $\Tr{UP_{\mathbf{k}}U^\dagger}=\Tr{P_{\mathbf{k}}}=1$, we have that $UP_{\mathbf{k}}U^\dagger=\ket{\vecj}\bra{\vecj}$ also projects onto a computational basis vector. Hence, we have that
\begin{equation}
    U \ket{\mathbf{k}}=e^{i\theta_{\mathbf{k}}}\ket{\sigma(\mathbf{k})}, \text{ for some permutation } \sigma\in S_M, \theta_{\mathbf k}\in[0,2\pi)\;.
\end{equation}
Hence $U$ has only one non-zero entry per row and column, since $\bra{\vecj}U\ket{\veck}= \bra{\vecj}e^{i\theta_\veck}\ket{\sigma(\veck)}= \delta_{(\sigma(\veck),\vecj)}\cdot e^{i\theta_\veck}$. Furthermore, we may write $U=\mathsf{D}_U\Sigma$ with $\Sigma$ a permutation matrix, and $\mathsf{D}_U$ a diagonal unitary matrix.
As $z_{\veci}= U^\dagger D_{\veci}U= \Sigma^{-1}\mathsf{D}_U^\dag D_{\veci}\mathsf{D}_U\Sigma$, each $z_{\veci}$ is also diagonal. Combined with the knowledge of its spectrum, we have $z_{\veci}=\mathsf{diag}(\pm 1, \pm 1, ..., \pm 1)$.

To further determine the form of $z_{\veci}$, consider the restriction from the $\mathsf{Swap}$ operators:
\begin{align}
    \mathsf{Swap}_{a,b}\, z_{\mathbf{0}}\, \mathsf{Swap}_{a,b} &= z_{\mathbf{0}}, & a\leq b\leq m.
\end{align}
This implies that the operator $z_{\mathbf{0}}$ remains invariant when the qubits are permuted around. Let $z_{\mathbf{0}}=\mathsf{diag}(y_{0..0}, ...,y_{\veck},...,y_{1..1})$ with $\veck\in \{0,1\}^m$ and $y_{\veck}\in \{\pm1\}$. Then $y_{\vecj}=y_{\veck}$ if $|\vecj|=|\veck|$. Therefore, there are at most $(m+1)$ distinct diagonal matrices $z_{\mathbf{0}}$, with entries $f^{0},...,f^{m}\in \{\pm1\}$ corresponding to the $m+1$ Hamming weight slices, $y_{\veck}= f^{i}$ for $|\veck|=i\in[m]$.

By the relators $\mathsf{CNOT}_{a,b} z_{\veci} \mathsf{CNOT}_{a,b} = z_{\tau_{a,b}(\veci)}$, the $\mathsf{CNOT}$ operators can map any index $i$ with Hamming weight $|\veci|\geq 1$ to any other index $j$ with $|\vecj|\geq 1$. But since $z_{\mathbf{0}}$ remains invariant, while its diagonal entries other than $y_{00\ldots 0}$ are permuted, we see that
\begin{equation}
    f^1=f^2=...=f^{m}\;.
\end{equation}

All the operators $z_{\veci}$ have the same properties as $z_{\mathbf{0}}$, up to a permutation of the diagonal entries, as
\begin{equation}
 X_a\, z_{\veci}\, X_a = z_{\veci\oplus \mathbf{e}_a},
\end{equation}
where $a\in [m]$ allows us to map to any other $z_{\veci}$ starting from $z_{\mathbf{0}}$. 

Finally, since $\prod_{\veci \in \{0,1\}^m} z_{\veci} = -I$, we have that $f^1=f^2=...=f^{m}=-f^0$. This follows from the fact that the product of the first entry of the diagonal of all $z_{\veci}$ is equal to $-1$. Furthermore, because the $z_{\veci}$ are the same as $z_{\mathbf{0}}$ with diagonal entries permuted, all their first entries are just all the values along the diagonal of $z_{\mathbf{0}}$. The set of representation matrices for the $z_{\veci}$ thus corresponds exactly to the canonical faithful representation of $\mathbb Z_2^m$, up to permutation and a global sign. 
\end{proof}

\section{Quantum Interactive Protocol with Clifford Verifier}\label{subsec:Quantum_depth}

We begin with a quantum verifier that can implement constant-depth Clifford circuits and is allowed quantum communication with two non-communicating provers. By granting the verifier the ability to prepare quantum states, we construct a single-round quantum interactive protocol that enforces increasing depth (or coherence-time) requirements on the provers. We then translate this interactive protocol into a relation problem with quantum input and classical output in the plain non-interactive model, thereby establishing a quantum depth hierarchy for the class of $\QNC^0$ circuits.

\subsection{Single-round Quantum Interactive Protocol}\label{subsec:Int_Clifford}
To derive our protocol, we adapt the standard correspondence between linear operator-valued constraint systems and two-prover nonlocal games (see \Cref{sec:BCSgameDef} and, e.g., \cite{cleve2014characterization, coladangelo2017robust, culf2024re}) to a setting with a quantum verifier. A direct translation of \Cref{def:OVCS_Z2} would require the verifier to query both the observables ${z_{\veci}}$ and the Clifford symmetry operators $\{\mathsf{CNOT},\mathsf{Swap},X\}$. However, rather than treating all generators in \Cref{def:OVCS_Z2} as possible question labels, we let the verifier absorb all $\{\mathsf{CNOT}, \mathsf{Swap}, X\}$ operators into its own state preparation procedure. 

The protocol consists of a state-preparation phase and a constraint-testing phase. Note that the Clifford operators only specify the local measurement bases for the Bell pairs shared between the provers (i.e. players in the nonlocal game), and do not introduce additional algebraic degrees of freedom into the provers' answers. Since the verifier is allowed to prepare (and hence fix) the entangled resource state shared by the provers, these basis changes can be incorporated directly into its state preparation step. Consequently, the verifier may restrict the question set to the generators ${z_{\veci}}$ alone.

\begin{definition}[Question sampling for the $\mathbb{Z}_2^m$-operator-valued BCS]
\label{def:QS_Z2OCS}

Let $\{z_{\veci}\}_{\veci\in\mathbb{Z}_2^m}$ be the generators of the constraint system from \Cref{def:OVCS_Z2}. The question set $\mathcal{Q}$ consists of ordered tuples $q=(r,q_A,q_B,U_A,U_B)$,
where $q_A,q_B$ are generator or constraint labels and $U_A,U_B$ specify the local Clifford frame used during state preparation. The verifier's procedure for sampling $q$ is as follows.

\begin{enumerate}
\item Sample a prover index $r\in\{A,B\}$ uniformly at random and let $\bar r=\{A,B\}\setminus{r}$.

\item Sample $q_r$ uniformly at random from one of the following three types:
\begin{itemize}
\item \emph{Type 1: Label of the global constraint}
\[
\prod_{\veci\in\{0,1\}^m} z_{\veci}=-I.
\]

\item \emph{Type 2: label $\vecs\in\mathbb Z_2^m$ of a parity--Pauli identification constraint}
\[
\prod_{\veci:\,\veci\cdot\vecs=1} z_{\veci}=Z(\vecs).
\]

\item \emph{Type 3: label ${\veci}$ of a generator $z_{\veci}$}.
\end{itemize}

\item Set $U_r=I$ and generate $(q_{\bar r},U_{\bar r})$ according to the type of $q_r$:
\begin{itemize}
\item If $q_r$ is a constraint query (type 1 or 2), sample $q_{\bar r}$ uniformly at random from the generators appearing in that constraint and set $U_{\bar r}=I$.

\item If $q_r$ is a generator query (type 3), sample $U\in\{I,\mathsf{Swap}_{a,b},X_a,\mathsf{CNOT}_{a,b}\}_{a,b\in[m]}$ uniformly at random, set $U_{\bar r}=U$, and let $q_{\bar r}$ be the label of the generator $Uz_{\veci}U^\dag$.
\end{itemize}
\end{enumerate}
\end{definition}

\paragraph{Protocol description.} Our interactive protocol is described fully in \Cref{def:QI_LBCS}. In each round, the verifier first prepares an $m$-fold tensor product of Bell pairs, $\ket{\Phi}_{AB}:=\bigotimes_{i=1}^m \ket{\Phi^+}_{A_i B_i}$, with
\begin{equation*}
    \ket{\Phi^+}_{A_i B_i}
    = \frac{\ket{0}_{A_i}\ket{0}_{B_i}+\ket{1}_{A_i}\ket{1}_{B_i}}{\sqrt{2}}.
\end{equation*}
Here $\ket{\Phi}_{AB}\in \mathcal{H}_A\otimes\mathcal{H}_B$ across the Alice-Bob bipartition, with $A := A_1\cdots A_m$ and $B := B_1\cdots B_m$. The verifier then applies unitaries $U_A$ and $U_B$ from the question tuple $q=(r,q_A,q_B,U_A,U_B)$ to subsystems $A$ and $B$, respectively. Let $\Phi_{AB}$ denote the joint density operator following the verifier's processing of the sampled questions, and let $\leftarrow$ denote the assignment operation. We have
\begin{equation}\label{eq:stateprepare}
    \Phi_{AB}\leftarrow (U_A\otimes U_B)\left(\bigotimes_{i=1}^m \ket{\Phi^+}\bra{\Phi^+}_{A_i B_i}\right)(U_A\otimes U_B)^{\dagger}\;.
\end{equation}
For a type-2 question, the verifier further measures $Z(\vecs)$ on subsystem $r$:
\begin{equation}\label{eq:stateprepare2}
    \Phi_{AB}\leftarrow  (\hat{Z}(s)\otimes I_{\bar{r}})\Phi_{AB}(\hat{Z}(s)\otimes I_{\bar{r}})/\tr((\hat{Z}(s)\otimes I_{\bar{r}})\Phi_{AB}),
\end{equation}
and records the outcome $\hat{Z}(s)$. Note that the verifier does not inform the provers whether a measurement has been performed. The verifier then sends the quantum system and question label $(A, q_A)$ to Alice, and $(B, q_B)$ to Bob. 

The verifier checks output correctness as follows.
\begin{itemize}
    \item If $q_r$ is of types 1 or 2 in \Cref{def:QS_Z2OCS}, following the structure of a constraint-variable game, the verifier checks that prover $r$ outputs an assignment satisfying the constraint indexed by $q_r$. Simultaneously, the verifier checks whether the complementary prover $\bar{r}$ provides a value for the generator $q_{\bar{r}}$ that is identical to the assignment of this generator from $r$.

    \item If $q_r$ is of type 3, following the structure of a 2-CS game, the verifier checks if Alice and Bob return the same value to the generators they were asked for.
\end{itemize}

\begin{algorithm}[htb]
\floatname{algorithm}{Protocol}
\caption{Interactive Boolean Hypercube Game with Clifford-Verifier (\textsf{``Interactive HypGame''})}\label{def:QI_LBCS}
\begin{algorithmic}[1]
\Require Question set $\mathcal{Q}$, parameters $m,k\in \mathbb{N}$. 

\vspace{0.2mm}

\Statex {\bfseries Stage 1: Randomly choose a valid set of questions.}

\vspace{0.2mm}

\State Sample $k$ question tuples $\{q^i=(r^i,q_A^i,q_B^i,U_A^i,U_B^i)\in\mathcal{Q}\}_{i=1}^{k}$ as per \Cref{def:QS_Z2OCS}.\\
Let $t\gets 0$.

\vspace{2mm}

\For{$i=1:k$}

\vspace{0.2mm}
\Statex {\bfseries Stage 2: Prompt provers to answer all questions in the set.}

\vspace{0.2mm}
    
    \State The verifier prepares $\Phi_{AB}$ as per \Cref{eq:stateprepare,eq:stateprepare2}.
    
    \vspace{0.2mm}
    
    \If{$q_{r}^i$ belongs to Type 2}
      \State The verifier measures $Z(\vecs)$ on subsystem $r^i$.    
    \Else\ Continue
   \EndIf
   \State The verifier sends $(A,q_A^i)$ to Alice and $(B,q_B^i)$ to Bob.

\vspace{0.2mm}

\Statex {\bfseries Stage 3: Check the answers returned by the provers.}

\vspace{0.2mm}
  
  \State The provers send back their classical answers $a_A^i$ and $a_B^i$.
  \State The verifier checks the correctness: $t\gets t+1$ if $(a_A^i,a_B^i)$ is a valid answer. 
\EndFor
\State \Return the fraction of accepted rounds $t/k$.
\end{algorithmic}
\end{algorithm}

\smallskip
The interactive protocol above closely follows the mapping from operator-valued constraint systems to nonlocal games introduced in \Cref{sec:BCSgameDef}. This formulation provides a uniform presentation and allows us to directly leverage the standard analysis of constraint–variable and 2-CS games. In particular, the protocol inherits the usual correspondence between linear operator-valued constraint systems and nonlocal games: any perfect winning strategy induces an operator assignment satisfying all constraints in \Cref{def:OVCS_Z2}, and conversely, any satisfying assignment yields a perfect strategy. Consequently, by the rigidity result of \Cref{lem:Z2_faithful_rep}, every perfect strategy realizes the unique faithful representation of $\mathbb{Z}_2^m$ characterized therein.

We now state the completeness and soundness properties of the protocol.

\begin{lemma}\label{lemma:cliff_verif}
For any $m\in\mathbb{N}$, consider the associated interactive protocol in \Cref{def:QI_LBCS}. Then there exists $k$ large enough such that the following holds:

\begin{itemize}
\item \textbf{Perfect completeness:}
There exists a strategy for two provers in $\mathsf{QNC}^0$ with depth $d=O(\log m)$ over a finite universal gate set and a constant number of ancillary qubits, that causes the verifier to accept with probability~$1$.

\item \textbf{Soundness against shallow provers:}
For any pair of provers with circuit depths $d_1$ and $d_2$ composed of bounded fan-in gates, respectively, if $d'=\min(d_1, d_2) < \lfloor \log m \rfloor$, then any such strategy achieves success probability at most  $1-2^{-\Omega(m)}$, regardless of circuit size or the number of ancillary qubits.
\end{itemize}
\end{lemma}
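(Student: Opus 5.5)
The plan splits into the two claims, completeness first and then soundness.

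\emph{Completeness.} The honest provers share the verifier's state $\Phi_{AB}$. On generator query $\veci$, a prover measures the canonical observable $z_{\veci}=I-2\ket{\veci}\bra{\veci}$ on its $m$ received qubits. Since Bob's half of an EPR pair transforms under $U$ as Alice's half under $\bar U$, and $\mathsf{CNOT},\mathsf{Swap},X$ are real, this is consistent: a Clifford frame $U$ on one side together with the relabelled question $Uz_{\veci}U^\dagger=z_{\tau(\veci)}$ gives perfectly correlated outcomes, by the EPR transpose trick $(A\otimes I)\ket{\Phi}=(I\otimes A^{T})\ket{\Phi}$.

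To measure $z_{\veci}$ in shallow depth, the prover uses compute–phase–uncompute:
\begin{itemize}
\item apply $X(\veci\oplus 1^m)$ to map $\ket{\veci}$ to $\ket{1^m}$;
\item measure the eigenvalue of $\mathsf{C}^{m-1}(\mathsf{Z})$ by a Hadamard test on one ancilla, controlling the phase onto it;
\item uncompute.
\end{itemize}
The controlled $\mathsf{C}^{m}(\mathsf{Z})$ is synthesized in depth $O(\log m)$ by the tree construction of \Cref{thm:upperToffoli}, using a constant number of ancillas via borrowed qubits.

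Constraint queries are harder. For type 2, the prover must output values $a_{\veci}$ for all $\veci$ with $\veci\cdot\vecs=1$. Measuring all $2^{m-1}$ commuting observables naively needs large depth. The compressed strategy is to measure the computational basis once, obtaining $\mathbf{x}$, and set $a_{\veci}=-1$ iff $\veci=\mathbf{x}$. This jointly realizes all $z_{\veci}$, so the product is $(-1)^{\mathbf{x}\cdot\vecs}$, which equals the $Z(\vecs)$ outcome, and the global product is $-1$. The classical description of the output is $\mathbf{x}$ together with a fixed rule, so depth is $O(1)$. I expect this to be the intended ``constraint compression''.

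Because all answers come from a single consistent operator solution on a maximally entangled state, every round is accepted with probability 1, for any $k$.

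\emph{Soundness.} Suppose the acceptance probability is at least $1-\delta$. Take $k$ large enough that the empirical fraction reflects the per-round winning probability. Per round, the question distribution has minimum weight $\mu_{\min}\ge 2^{-O(m)}$, because there are $O(2^m)$ generators and $O(m^2)$ Cliffords. So if $\delta<\mu_{\min}$, every question pair is won perfectly.

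The standard constraint-variable/2-CS argument then applies. Define Alice's observables $A_{\veci}$ from her generator measurements, restricted to the support of her reduced state, which is maximally mixed because the Clifford frames act locally on EPR halves. Consistency with constraint answers forces the $A_{\veci}$ to commute and satisfy the parity–Pauli and global relations. Consistency across frames forces $UA_{\veci}U^\dagger=A_{\tau(\veci)}$; the verifier's pre-measurement makes the parity–Pauli identification hold on the state. So the $\{A_{\veci}\}$ form an operator solution of \Cref{def:OVCS_Z2}, and by \Cref{lem:Z2_faithful_rep} they are, up to isometry, the canonical $z_{\veci}$. In particular, some prover's circuit, followed by computational-basis measurement of one output bit, realizes the observable $\mathsf{C}^{m-1}(\mathsf{Z})$ acting on $m$ input qubits.

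\Cref{prop:Toffoli_lower} then applies: a bounded fan-in circuit outputting this bit must have that bit's backward light cone contain all $m$ input qubits, since $\mathsf{C}^{m-1}(\mathsf{Z})$ depends nontrivially on every qubit. Fan-in 2 therefore forces depth at least $\lfloor\log m\rfloor$, a contradiction. Hence the success probability is at most $1-\mu_{\min}=1-2^{-\Omega(m)}$.

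\emph{Main obstacle.} The hardest step is the rigidity transfer: converting perfect success on question-dependent, partially pre-measured states into exact operator relations on one Hilbert space, in particular the conjugation relations, where Bob's frame changes the state rather than the question. I would handle this by pulling $U_{\bar r}$ through the EPR state onto the other side via the transpose trick, which turns every check into a relation on the fixed state $\ket{\Phi}$. I would then use full-rank support of the marginal to upgrade state-dependent equalities to operator equalities.
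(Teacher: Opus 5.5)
Your completeness argument matches the paper's: compute--phase--uncompute with a Hadamard test around an $O(\log m)$-depth $\mathsf{C}^{m}(\mathsf{Z})$ from \Cref{thm:upperToffoli}. Your explicit compression rule (measure the computational basis, report $\mathbf{x}$, read $a_{\veci}=-1$ iff $\veci=\mathbf{x}$) is a cleaner version of the paper's reversible post-processing.

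Your exact-case soundness is also correct. The direct light-cone bound is sharper than the counting argument of \Cref{prop:Toffoli_lower}: the effective observable $\bra{0}C^\dagger Z_{\mathrm{out}}C\ket{0}$ is supported on the output bit's backward light cone, while the canonical $z_{\veci}$ acts nontrivially on all $m$ received qubits. Apply it to \emph{each} prover, since the hypothesis only bounds $\min(d_1,d_2)$. Your rigidity step does force both provers' generator observables to be canonical, so this is available.

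The genuine gap is the quantitative step ``if $\delta<\mu_{\min}$, every question pair is won perfectly.'' Write $p_q$ for the winning probability on question pair $q$. The hypothesis $\sum_q\mu(q)(1-p_q)\le\delta$ only gives $1-p_q\le\delta/\mu_{\min}<1$, i.e.\ $p_q>0$, not $p_q=1$. The min-weight argument works for deterministic classical strategies, where $p_q\in\{0,1\}$. Quantum strategies have fractional per-question success, and shallow ones exhibit it naturally. Suppose both provers answer $+1$ on every generator query and measure in the computational basis on constraint queries, which takes depth $O(1)$. Then every type-1 or type-2 pair is won with probability exactly $1-2^{-m}$, and no pair is lost outright.

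So your argument proves only that perfect success is impossible below depth $\lfloor\log m\rfloor$, plus at best a non-explicit gap by compactness of the bounded light-cone strategy space. It does not give the bound $1-2^{-\Omega(m)}$.

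What is missing is a robustness argument, which the paper supplies in three steps:
\begin{itemize}
\item Success $1-\varepsilon$ yields approximate relations on the single reference state $\ket{\Phi}$. Your transpose trick still applies here, because the verifier's Cliffords and $Z(\vecs)$ pre-measurements are exact.
\item Stability of approximate representations of the finite group $G_{\mathbb Z_2^m}$ (\Cref{lemma:Vidick}) then makes the observables $2^{O(m)}\mathrm{poly}(\varepsilon)$-close to the canonical solution (\Cref{lem:RQP}).
\item A robust synthesis bound (\Cref{thm:robustdepth}) shows that closeness below roughly $2^{-m/2}$ still forces depth $\Omega(\log m)$.
\end{itemize}
Your light-cone argument robustifies neatly for the last step. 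Any operator acting as the identity on one of the $m$ received qubits has normalized squared Frobenius distance at least $2^{1-m}$ from $z_{\veci}$. Taking $\varepsilon\le 2^{-Cm}$ therefore yields the contradiction, giving a failure probability of $2^{-O(m)}$.
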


\paragraph{Proof of completeness.} 
For the completeness analysis, the honest provers answer the queried generators $z_{\veci}$ according to the outcomes of measurements of the Hermitian operators from \Cref{lem:Z2_faithful_rep} applied to the quantum state received from the verifier.

\smallskip
\noindent\textit{Constraint satisfaction.} We show that the provers shall reply correctly to all questions in all cases. Without loss of generality, we assume that the first step samples $r$ to be Alice.

\begin{enumerate}
\item Global constraint: 
    
The verifier will distribute $m$-fold tensor products of Bell pairs $\bigotimes_{i=1}^{m}\ket{\Phi^+}_{A_iB_i}$ to the provers. By construction, Alice will output a satisfying assignment to the constraint. Note that the corresponding observables of the set of generators, $z_{\veci}$, are compatible and can be measured simultaneously. Moreover, the provers will assign consistent values to the same generator as
    \begin{equation}\label{eq:EPRtrick}
        _{AB}\bra{\Phi} z_{\veci} \otimes z_{\veci}\ket{\Phi}_{AB} =\tr(z_{\veci}z_{\veci}^{\mathrm{T}})/M=1,
    \end{equation}
    with $z_{\veci}=z_{\veci}^{\mathrm{T}}$.

\item Parity-Pauli identification: 
    
In the case where $q_A$ refers to the constraint $\prod_{\veci: \veci\cdot \vecs = 1} z_{\veci}=Z(\vecs)$ for some $\vecs$, the verifier will have measured $Z(\vecs)$ on system $A$ of the Bell pairs before sending them to the provers. Then,
\begin{align}
  _{AB}\bra{\Phi} Z(\vecs)  \left (\prod_{\veci: \veci\cdot \vecs = 1} z_{\veci}  \right )\otimes I \ket{\Phi}_{AB} =_{AB}\bra{\Phi} I  \otimes I \ket{\Phi}_{AB}=1.
\end{align}
Therefore, the parity of the measurement outcomes of $z_{\veci}$ for $\veci\cdot \vecs=1$ will be equal to the measurement outcome of $Z(\vecs)$. The values assigned by the provers to the same generator will again be equal as in Eq.~\eqref{eq:EPRtrick}.

\item Involution or affine symmetry relations: 

The verifier will apply the sample unitary $U\in\{I, \mathsf{Swap}_{a,b}, \mathsf{CNOT}_{c,d}, X_a\}$ to system $B$. Then, 
\begin{equation}
  _{AB}\bra{\Phi} z_{\veci} \otimes U^\dagger z_{\vecj} U \ket{\Phi}_{AB} = \tr(U^\dagger z_{\vecj} U z_{\veci}^{\mathrm{T}})/M=1,
\end{equation}
as the generator sent to Bob, $z_{\vecj}$, is selected based on \Cref{def:OVCS_Z2} such that $U z_{\veci} U^\dagger=z_{\vecj}$. Note that one can equally check the validity of the previous assertions by applying the Heisenberg picture to the observables $\{z_{\veci}\}$. 
\end{enumerate}

Thus, if the provers measure the observables obtained from 
\Cref{lem:Z2_faithful_rep} when asked the corresponding generators $z_{\veci}$, they will generate the correct outcomes and succeed in every round.

\medskip
\noindent\textit{Circuit depth analysis.}
Having established the correctness of the honest strategy, we now analyze the circuit depth required to implement it. By \Cref{lem:Z2_faithful_rep}, each question label $q_A$ specifies an observable $M_{q_A}$ acting on the received subsystem $\Phi_A$. Thus, the complexity of the honest strategy reduces to implementing these measurements efficiently.

The observables $M_{q_A}$ can be expressed in terms of generalized controlled-phase gates $\mathsf{C}^{m-1}(\mathsf{Z})$ acting on subsets of qubits associated with the generators $z_{\veci}$. Consequently, the dominant contribution to the circuit depth arises from realizing these multi-qubit operations.

To measure such an observable, we employ a phase-estimation procedure (specifically, the Hadamard test), illustrated in \Cref{fig:top_circuit}. The procedure uses an additional ancillary qubit initialized in $\ket{0}$ and applies a controlled implementation of $\mathsf{C}^{m-1}(\mathsf{Z})$, whose phase information is subsequently extracted by a computational-basis measurement on the ancilla. The controlled operation is equivalent to a generalized controlled-phase gate $\mathsf{C}^{m}(\mathsf{Z})$ and can, in turn, be realized by a generalized Toffoli gate $\mathsf{C}^{m}(\mathsf{X})$. Consequently, the circuit depth is determined by the implementation of this generalized Toffoli gate. Using the construction of \cite{nie2024quantum}, we give a precise upper bound in \Cref{eq:Toffolidepth}, which scales as $O(\log m)$. Furthermore, the constants in the scaling overhead are small, although they depend on the specific underlying gate set.

\medskip
\noindent\textit{Loading question labels.}
To implement the measurements within a uniform circuit family, we encode the question label $q_A$ as a computational-basis state $\ket{q_A}$ and use it to control a fixed circuit architecture. Rather than constructing a different circuit for each possible question, the same circuit is applied uniformly, with the input label determining the corresponding observable through a standard ``compute–phase–uncompute'' pattern.

\begin{enumerate}
\item \emph{Compute:} using the bits of $q_A$ as classical controls, apply a single layer of controlled-$X$ gates in parallel to the register $\Phi_A$.

\item \emph{Essential non-Clifford operation:} apply the phase-estimation circuit to measure the $\mathsf{C}^{m-1}(\mathsf{Z})$ observable, utilizing an ancillary qubit as described previously.

\item \emph{Uncompute:} apply the same layer of controlled-$X$ gates used in the compute step to invert the initial preprocessing.

\item \emph{Readout:} measure the designated output qubit(s) in the computational basis to obtain the answer $a_A$.
\end{enumerate}

Notably, in the second step, we apply a fixed circuit in which $\mathsf{C}^{m-1}(\mathsf{Z})$ utilizes the first $(m-1)$ qubits as controls and the final qubit as the target. The basis transformation from this fixed operator to other $z_{\veci}$ generators is mediated by the \emph{compute} operation. \Cref{fig:top_circuit,fig:bottom_circuit} illustrate the overall circuit and the circuit implementing the \emph{compute} operation respectively. Hence, the total depth of the implementation is the depth of $\mathsf{C}^{m-1}(\mathsf{Z})$ (given in \Cref{eq:Toffolidepth}) plus a constant overhead of five layers. Moreover, the construction requires only a single ancillary qubit.

\begin{figure}[hbt!]
    \centering
    
    \begin{subfigure}{\textwidth}
        \centering
        \begin{quantikz}[baseline=(current bounding box.center)]
    \lstick{$\ket{q_A}$} & \qwbundle{m} & \ctrl{2} & \qw & \qw & \qw & \ctrl{2} & \qw \\
    \lstick{Ancilla $\ket{0}$} & \qw & \qw & \gate{H} & \ctrl{1} & \gate{H} & \qw & \meter{} & \cw \\
    \lstick{Control} & \qwbundle{m-1} & \gate[wires=2]{U_{q_A}} & \qw & \gate[wires=2]{\mathsf{C}^{m-1}(\mathsf{Z})} & \qw & \gate[wires=2]{U_{q_A}^\dagger} & \qw \\
    \lstick{Target} & \qw & \qw & \qw & \qw & \qw & \qw & \qw 
\end{quantikz}
        \caption{Overall circuit architecture. The $m-1$ control qubits and the target qubit compose $\Phi_A$. The question label state $\ket{q_A}$ is read and applied to $\Phi_A$ via a compute--phase--uncompute architecture; the \emph{compute} circuit of the controlled-$U_{q_A}$ is shown in \Cref{fig:bottom_circuit}, and \emph{uncompute} is its reverse operation. Measurement of $z_{\veci}$ is realized with a phase estimation protocol, using one ancillary qubit initialized to $\ket{0}$.}
        \label{fig:top_circuit}
    \end{subfigure}
    
    \vspace{1cm} 
    
    \begin{subfigure}{\textwidth}
        \centering
        \begin{quantikz}[baseline=(current bounding box.center)]
        \lstick{$\ket{q_A}$} & \qwbundle{m} & \ctrl{1} & \qw \\
        \lstick{$\Phi_A$} & \qwbundle{m} & \gate{U_{q_A}} & \qw
    \end{quantikz}
$=$
\begin{quantikz}[baseline=(current bounding box.center)]
    \lstick{$\ket{q_A(1)}$}   & \ctrl{1} & \qw \\
    \lstick{$\Phi_A$, qubit $1$}   & \targ{} & \qw \\
    \lstick{$\ket{q_A(2)}$}   & \ctrl{1} & \qw \\
    \lstick{$\Phi_A$, qubit $2$}   & \targ{} & \qw \\
    \lstick{\vdots} \\
    \lstick{$\ket{q_A(m)}$}   & \ctrl{1} & \qw \\
    \lstick{$\Phi_A$, qubit $m$}   & \targ{} & \qw
\end{quantikz}
        \caption{Circuit to realize the \emph{compute} operation. With the question label $q_A$ encoded as a bit string on the computational basis, the $i$'th qubit in $\ket{q_A}$ controls the $i$'th qubit in $\Phi_A$ via a $\mathsf{CNOT}$ gate.}
        \label{fig:bottom_circuit}
    \end{subfigure}
    
    \caption{The circuit to measure some $z_{\veci}$ upon input question label $q_A$.}
    \label{fig:full_framework}
\end{figure}

\medskip
\noindent
\textit{Depth-efficient enforcement of global constraints.}  We must enforce all constraints of the system in order to apply our rigidity lemma, which guarantees the uniqueness of the winning strategy and, in particular, forces the implementation of the $\mathsf{C}^{m-1}(\mathsf{Z})$ gate. The previous construction establishes how the honest provers can implement individual observables $z_{\veci}$. For most constraints, namely the involution and affine symmetry relations, this already suffices, since they require only a single observable measurement.

The global constraint and the parity–Pauli identification relations, on the other hand, comprise multiple $z_{\veci}$ generators. When $k$ $z_{\veci}$ observables are measured simultaneously, notice that they are all diagonal on the computational basis. Hence, one can introduce $k$ ancillary qubits and perform phase estimation in parallel while sharing a single generalized phase gate $\mathsf{C}^{m-1+k}(\mathsf{Z})$. However, what is problematic here is that these two types of constraints involve products over generators that are exponentially many in $m$. If the provers were required to explicitly compute and output all corresponding eigenvalues, this would increase the depth of the honest strategy to $O(m)$.

To establish a $\Theta(\log m)$ depth bound, we first observe that in this protocol, the prover's response to either a global constraint or a parity-Pauli identification constraint can be modeled as the outcome of measuring a fixed observable. This measurement simultaneously determines all relevant generator outcomes, inducing a fixed ordering of the generators and a corresponding output string that encodes the eigenvalues of the $z_{\veci}$ operators. Crucially, as long as the verifier's subsequent classical post-processing is reversible, it does not alter the validity of the prover's strategy. We are therefore permitted to compose the prover's measurement outcomes with an arbitrary reversible map selected by the verifier, provided the map is invertible.

We exploit this degree of freedom by selecting a reversible transformation that maps the target observable outcomes to those obtained from computational-basis measurements. Such a transformation exists because computational-basis measurements are sufficient to reconstruct the eigenvalues of any set of commuting diagonal observables. Equivalently, the verifier can be viewed as applying the inverse of this transformation to the interpretation of the provers' responses. Under this construction, the honest provers are only required to perform computational-basis measurements, while the verifier's classical post-processing reconstructs the outcomes corresponding to the original observables. Consequently, the exponential-size product constraints are enforced in the same $O(\log m)$-depth circuits of the provers without requiring additional quantum circuit depth from them.

Therefore, without modifying the underlying operator-valued solution or any essential property of the protocol, the implementation of these additional observables can be absorbed into the verifier's classical post-processing.\hfill\qedsymbol

\paragraph{Proof of soundness.} For soundness, we show that if the non-communicating provers succeed perfectly in the interactive protocol, then they must realize the observables of \Cref{lem:Z2_faithful_rep}, up to local isometries. We further establish a robust version of this statement, showing that any near-perfect strategy must implement observables that remain close to the ideal realization. Consequently, both perfect and near-perfect success require the implementation of observables whose exact or approximate realization demands a minimum circuit depth.

\medskip
\noindent\textit{Non-contextual strategies.} 
We assume by default that the verifier follows the protocol honestly during both question sampling and the prompting of the provers. The soundness of our protocol then follows from the arguments established in \cite{cleve2014characterization, coladangelo2017robust, culf2024re}. As first proven in \cite[Theorem 1]{cleve2014characterization}, for a one-round constraint-variable nonlocal game defined by a non-trivial linear constraint system, a perfect winning strategy implies that up to a local isometry, the provers must apply a tensor product of EPR states ($\ket{\Phi^+}$); furthermore, the observables they measure must be non-contextual\footnote{In \cite{cleve2014characterization}, non-contextuality refers to the requirement that if different questions (or ``contexts'') overlap on a specific variable, the prover must measure the same observable to assign a value to that variable, regardless of the context. Note that this definition of non-contextuality in \cite{cleve2014characterization} is the opposite of that found in works such as \cite{budroni2022kochen}; here, we adhere to the terminology of the former.} and must correspond to a satisfying assignment to the linear constraint system. Recent work in \cite{culf2024re} further establishes this correspondence for perfect strategies in constraint-constraint nonlocal games and 2-CS games, mapping them to satisfying assignments of the underlying system.

Essentially, the soundness of \Cref{def:QI_LBCS} is established by applying the results of \cite{cleve2014characterization, culf2024re} to the distributed states and question labels at the commencement of Stage 3. This setting deviates from prior work in that the verifier may apply non-trivial unitaries or measurements, meaning the distributed state is not necessarily a tensor product of EPR states ($\ket{\Phi^+}$). Nevertheless, we argue that the provers gain no additional information regarding the questions assigned to them by receiving such states.

\begin{enumerate}
    \item Affine symmetry relations:

    When these questions are sampled, the verifier applies a specific unitary to one of the provers' shares of the Bell states. In such cases, the distributed quantum state remains maximally entangled, subject only to a local basis change. Crucially, the question labels $q_A$ and $q_B$ refer to randomly sampled generators $z_{\veci}$, and their marginal distribution is identical to the case where an involution relation $z_{\veci}^2=I$ is sampled. Consequently, the provers cannot infer the basis change from their own received states, which remain maximally mixed, nor from the generator labels $q_A$ and $q_B$\footnote{Notably, the provers cannot gain an advantage here even if they pre-share entanglement, although this scenario is not considered in our circuit model, where entanglement preparation itself necessitates a specific circuit depth.}. Up to the local unitary applied by the verifier (as specified by the sampled affine symmetry relation), the question is equivalent to an involution relation. Thus, the soundness for these question types follows from the 2-CS game results established in \cite{culf2024re}.
    
    \item Parity-Pauli identification relations:
    
    When these questions are sampled, the verifier performs a Pauli measurement of the type $Z(\mathbf{s})$ on either subsystem $A$ or $B$ of the shared Bell states. As in the preceding argument, each prover receives a local state that is maximally mixed. Consequently, from either their local state or the question label, a prover asked a label corresponding to a generator $z_{\veci}$ cannot determine which specific constraint the generator is associated with. Therefore, as before, their strategy must be non-contextual. By considering the provers' measurements in conjunction with the verifier's prior measurement, the soundness for these question types follows from the constraint-variable game results established in \cite{cleve2014characterization}.
\end{enumerate}

In summary, the provers can cause the verifier to accept with probability one only if they implement a non-contextual strategy, thereby restoring the usual correspondence between perfect strategies for the nonlocal game and operator-valued solutions of the underlying constraint system. By \Cref{lem:Z2_faithful_rep}, the operator-valued assignments satisfying the system of \Cref{def:QI_LBCS} are uniquely determined up to local isometries. Consequently, the interactive protocol itself admits a unique rigid solution: any provers that fail to implement observables equivalent to the prescribed measurement operators cannot achieve a perfect acceptance probability.

\medskip
\noindent\textit{On the soundness of depth-efficient enforcement of global constraints.}
We further clarify that the protocol modification designed to enforce an $O(\log m)$ depth bound in the complete protocol implementation does not compromise soundness. In the case of parity–Pauli identification constraints, the verifier performs a measurement of the observable $Z(\textbf{s})$ prior to distributing the state. Consequently, the prover's output string must be consistent with this latent outcome; the correct string cannot be generated without performing a computational-basis measurement on the received quantum system. Simultaneously, the second prover is queried on a single generator within the same constraint, which facilitates a consistency check between the two responses. This prover, who is ignorant of the question asked to the other prover, must be measuring the specified $z_{\veci}$ generator as characterized in \Cref{lem:Z2_faithful_rep} without any modification. This mechanism ensures that the constraint is satisfied in a non-contextual manner, preserving the integrity of the preceding analysis.

\medskip
\noindent\textit{Circuit depth lower bound on the measurements.} 
To realize the rigid perfect strategy and hence to achieve perfect success, the provers must implement the full family of multi-controlled phase operators, or equivalently generalized Toffoli gates, up to a local isometry. For bounded fan-in circuits, the exact implementation of these transformations admits depth lower bounds that are invariant under such isometries. We establish this formally in \Cref{prop:Toffoli_lower}, showing that any realization requires circuit depth at least $d \ge \lfloor \log m \rfloor$. Consequently, no quantum circuit of depth $o(\log m)$ can implement the required observables, completing the soundness argument in the rigid setting.

Moreover, \Cref{lem:RQP} establishes a robust version of the protocol; see \Cref{subsec:rob_Q} for the full analysis. Specifically, we show that any strategy achieving success probability at least $1-\varepsilon$ must implement observables that are $2^{O(m)}\mathsf{poly}(\varepsilon)$-close to the rigid solution of the underlying constraint system, and hence equally close, up to local isometries, to the full family of multi-controlled phase gates certified by the protocol. We then combine this robustness statement with approximate unitary-synthesis lower bounds. In \Cref{thm:robustdepth}, we show that any bounded fan-in circuit realizing these transformations within error $\delta< 2\sqrt{2}\cdot2^{-m/2}$ must still have depth at least $d \ge \lfloor \log m \rfloor$. Consequently, no quantum circuit of depth $o(\log m)$ can achieve success rate above $1-\varepsilon(m)$ for $\varepsilon(m)=2^{-O(m)}$ in the protocol, completing the soundness argument in the robust setting.
\hfill \qedsymbol

\subsection{Towards Removing Interactivity: a Quantum Relation Problem}\label{sub:quantum_plain}

Next, we translate the previous interactive protocol into a computational problem with a quantum input and a classical output in the non-interactive model. Specifically, we consider an input quantum state of $n$ qubits. The input state essentially encodes the $2m$-qubit quantum state prepared by the verifier with respect to the sampled question in the interactive protocol. However, the position of this state will be randomized over the qubit indices of the input state. Other than the qubits prepared in the question state, the remaining qubits among the $n$ qubits will be prepared as the computational basis state $\ket{0}$. 

\begin{definition}[Quantum Boolean Hypercube Relation Problem $\mathcal{R}_m$]
\label{def:R_m}
Fix $m,n \in \mathbb{N}$. The quantum-input relation problem $\mathcal{R}_m \subseteq \mathcal{H}^n \times \{0,1\}^n$ is a relation over inputs of size $n>4m$ defined as follows:

Each input consists of a quantum state of the form
\[
\ket{\psi}_{i,j;\,q_A,q_B}
= \ket{0}^{\otimes s_1} \otimes \underbrace{ \ket{q_A}}_{i} \otimes\ \Phi_A 
\otimes \ket{0}^{\otimes s_2} \otimes \underbrace{\ket{q_B} }_{j} \otimes\ \Phi_B 
\otimes \ket{0}^{\otimes s_3},
\]
where:
\begin{itemize}
    \item $(q_A,q_B)$ are classical question labels sampled as in \Cref{def:QS_Z2OCS}, encoded in the computational basis on designated registers $i$ and $j$ (we denote the registers with the indices of the first qubits in $\ket{q_A}$ and $\ket{q_B}$), respectively;
    \item $\Phi$ denotes the $2m$-qubit density operator prepared as in Eq.~\eqref{eq:stateprepare} and \eqref{eq:stateprepare2} with the $A/B$ partition as in \Cref{def:QI_LBCS} with $\Phi_A=\tr_B(\Phi_{AB}),\Phi_B=\tr_A(\Phi_{AB})$;
    \item the remaining qubits are initialized to $\ket{0}$.
\end{itemize}

An output string $\mathbf{z} \in \{0,1\}^n$ satisfies the relation if it is of the form
\[
\mathbf{z} = 0^{s_1}\, a_A \, 0^{s_2}\, a_B \, 0^{s_3},
\]
where $(a_A,a_B)$ is a valid accepting pair of answers for the nonlocal game instance defined by $(q_A,q_B)$.
\end{definition}

Following the results in \Cref{subsec:Int_Clifford}, it is natural to expect that a sufficiently expressive quantum computational model can efficiently solve all instances of this family of problems. Indeed, solving a given instance is equivalent to implementing the honest provers' strategy from \Cref{lemma:cliff_verif} for the corresponding interactive protocol. Note that as the input size $n$ grows, the relevant hardness parameter $m = O(1)$ remains constant; this parameter determines both the underlying question set and the minimal circuit depth required to implement the associated strategy. Consequently, any quantum circuit architecture with sufficient depth to implement the multi-controlled phase gates $\mathsf{C}^{m-1}(\mathsf{Z})$ can solve the problem exactly.

In contrast, we demonstrate that shallower devices cannot solve the relation problem perfectly. Specifically, $\mathsf{QNC}^0$ circuits of depth $o(\log m)$ fail to solve the problem exactly due to their depth restrictions and the resulting constraints on connectivity. Intuitively, a $\mathsf{QNC}^0$ circuit with depth constant in the input size $n$ cannot propagate information across all relevant qubit indices as $n$ increases; it therefore cannot coordinate the nonlocal operations required to circumvent the depth requirements of a winning strategy. Consequently, the circuit must resort to the unique accepting strategy of the interactive protocol which, as analyzed in \Cref{subsec:Int_Clifford}, cannot be realized with depth $o(\log m)$.

More formally, our proof relies on a light cone argument. We show that, with high probability over the qubit embeddings, the questions and Bell pairs are processed by two disjoint regions of the circuit whose light cones do not intersect. The circuit thus behaves as two non-communicating quantum devices acting locally on a shared entangled state. Each region is therefore restricted to the computational power of a local prover, precisely the setting characterized in \Cref{lemma:cliff_verif}.

\begin{theorem}
[Depth separation via quantum inputs]\label{thm:QPlain}
Fix a constant $m \in \mathbb{N}$, and 
consider the quantum Boolean hypercube relation problem $\mathcal{R}_m$ in \Cref{def:R_m}. The relation problem has the following properties:

\begin{itemize}
\item \textbf{Perfect completeness:}
There exists a family of bounded fan-in circuits $\{C_n\}$ with depth $O(\log m)$ and a constant number of ancillary qubits that solves $\mathcal{R}_m$ with success probability $1$ on every valid input.

\item \textbf{Soundness against shallow circuits:}
For every family of bounded fan-in circuits $\{C_n\}$ of depth $d<\lfloor \log m \rfloor$, regardless of the number of ancillary qubits or access to quantum advice, there exists an input size $n$ such that $C_n$ solves $\mathcal R_m$ with success probability at most $1-2^{-\Omega(m)}$.
\end{itemize}
\end{theorem}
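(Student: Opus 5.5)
Completeness is a direct parallel simulation of the honest provers from \Cref{lemma:cliff_verif}. The circuit $C_n$ must locate the registers $\ket{q_A}\otimes\Phi_A$ and $\ket{q_B}\otimes\Phi_B$ without knowing $i,j$. Since the padding qubits are $\ket{0}$ and valid outputs are $0$ there, I would make the circuit translation-invariant: for every candidate offset $p\in[n]$, a copy of the compute--phase--uncompute gadget of \Cref{fig:full_framework} acts on the window of $2m$ qubits starting at $p$. Running all copies in one layer is not possible because the windows overlap. I would therefore tile the positions into $O(m)$ residue classes modulo $2m+1$ and note that $m=O(1)$, so this costs only a constant-factor overhead. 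A cleaner route is to assume the register boundaries are aligned to multiples of $2m$, which loses no generality since the relation is defined by us; then one fixed block-local circuit of depth $\gamma(m)+O(1)=O(\log m)$ acts on each block.

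On blocks containing only zeros, the gadget must output zeros. I would enforce this by adding a constant-depth check: a flag qubit recording whether the label register is a well-formed nonzero encoding. This requires labels to be encoded with a leading marker bit, which I would add to \Cref{def:R_m}. The flag controls the ancilla of the Hadamard test, which adds one more control to $\mathsf{C}^{m}(\mathsf{Z})$ and so keeps the depth $O(\log m)$. On genuine blocks the output distribution is exactly that of the honest strategy, so the answer is accepted with probability $1$ by the completeness part of \Cref{lemma:cliff_verif}.

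For soundness, fix a circuit $C_n$ of depth $d<\lfloor\log m\rfloor$. Each output qubit has a backward light cone of at most $2^d$ inputs, and each input has a forward light cone of at most $2^d$ outputs. Consider the set $L_A$ of inputs in the backward light cones of the output positions of $a_A$, together with the registers $q_A,\Phi_A$; define $L_B$ similarly. Because $i$ and $j$ are randomized, I would choose $n$ large relative to $2^{2d}m$ and argue by a counting/averaging step that for a constant fraction of pairs $(i,j)$ the forward light cone of the $A$ block avoids the backward light cone of the $B$ outputs, and vice versa. For such placements, the $a_A$ outputs depend only on the qubits in $q_A,\Phi_A$ and on zeros/ancillas, and similarly for $B$. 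Quantum advice and fixed ancilla states can be absorbed into a shared entangled resource.

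The joint output is then that of two non-communicating provers of depth at most $d$. Here I must argue that the overlap region of ancillas lying in both cones only acts as a pre-shared state. This can be done by moving all gates outside the union of the two cones to be irrelevant, and grouping the gates within each cone into a local unitary of depth at most $d$. Conditioning on a good placement, if the success probability were above $1-2^{-\Omega(m)}$ overall, it would exceed $1-c^{-1}2^{-\Omega(m)}$ on some good placement. This contradicts the soundness of \Cref{lemma:cliff_verif} (via \Cref{lem:RQP} and \Cref{thm:robustdepth}) after adjusting constants in the $\Omega$.

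I expect the main obstacle to be this light-cone decoupling. There are two concerns. The first is making sure the shared pre-existing entanglement and advice do not break the hypotheses of \Cref{lemma:cliff_verif}, whose robust analysis must allow arbitrary shared states. The second is that one fixed good placement is not enough, because the distribution over $(i,j)$ must be independent of the circuit. I would handle the latter by averaging: the failure bound $2^{-\Omega(m)}$ is multiplied by the constant probability of a good placement, and that probability tends to $1$ as $n\to\infty$.
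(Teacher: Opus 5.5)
Your soundness argument follows the paper's route: light-cone decoupling into two non-communicating sub-circuits, then \Cref{lemma:cliff_verif} via \Cref{lem:RQP} and \Cref{thm:robustdepth}, then averaging over placements. Your decoupling condition is stronger than the paper's. You require $a_A$ to avoid the forward cone of the whole $B$-block, not just of $q_B$ as in \Cref{lemma:lightcone}, and that is the right condition for quantum inputs.

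\textbf{Soundness gap: the shared state.} You reduce to two provers that share a state (ancillas in the overlap of the two backward cones, plus advice). You then need \Cref{lemma:cliff_verif} to hold for arbitrary pre-shared entanglement, and you flag this but do not close it. The paper does not prove that. Its robust analysis explicitly assumes the provers start from exactly the verifier's state tensored with local ancillas, and \Cref{prop:ToffoliIsom} uses that tracing out $BB'$ leaves $2^{-m}I_A\otimes\ketbra{\mathbf 0}{\mathbf 0}_{A'}$.
\begin{itemize}
\item For circuit ancillas the fix is cheap: require the backward cones $\mathcal L_A,\mathcal L_B$ of $a_A,a_B$ to be fully disjoint. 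Since $|\mathcal L_A|\le 2m\,2^d$ and each qubit has a forward cone of size at most $2^d$, at most $2m\,4^d$ output positions have backward cones meeting $\mathcal L_A$. A random $B$-placement therefore avoids them with probability $1-O(m^2 4^d/n)$. On that event the two sub-circuits act on disjoint registers holding their half of the verifier's state and $\ket{0}$'s, which is exactly the hypothesis of \Cref{lemma:cliff_verif}.
\item Advice cannot be removed this way. You need either the paper's separate monogamy argument or an entangled-prover version of the soundness lemma, which you would have to prove yourself.
\end{itemize}

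\textbf{Completeness: the tiling route misses the bound.} Running $2m+1$ residue classes of gadgets in sequence has depth $(2m+1)(\gamma(m)+O(1))=\Theta(m\log m)$. Writing ``$m=O(1)$'' hides exactly the parameter the theorem, and the hierarchy, is about. In addition, the misaligned windows act on the genuine block and read halves of $\Phi_A$ as label bits. A leading-marker flag does not exclude this, since a random bit of $\Phi_A$ looks like a marker half the time.

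\textbf{Completeness: the aligned-block route changes the problem.} It gives depth $\gamma(m)+O(1)$, but it proves the statement for a modified relation, not \Cref{def:R_m}. The paper instead runs the \Cref{fig:full_framework} gadget directly on the two blocks and appeals to broadcasting for uniformity. Your underlying worry is nonetheless well founded. With arbitrary $s_1,s_2,s_3$ and labels that may begin with zeros (e.g.\ $q_B=0^m$ with $\Phi_B$ collapsing to $\ket{0^m}$), the block boundary is not locally determined. The paper's argument tacitly assumes the circuit knows where the blocks are. If you keep the modification, state it explicitly as a change to \Cref{def:R_m}, and note that soundness is unaffected because the light-cone counting only needs $\Theta(n/m)$ admissible placements.
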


\begin{remark}
Note that the robustness gap of our interactive problem, and of the underlying non-local game, decreases rapidly with the parameter $m$. This is a direct consequence of the multi-controlled phase gates that characterize the unique winning strategy, where $m$ determines the number of control qubits. Indeed, multi-controlled phase gates differ from the identity only on an exponentially small subspace of the Hilbert space. Consequently, if the approximation error is not sufficiently restricted, such gates can be approximated trivially by shallow circuits, including the identity operator itself. Therefore, any meaningful depth lower bound necessarily requires an error threshold that decreases with $m$. This phenomenon is not specific to our protocol: even in a non-delegated setting, certifying the implementation of such gates via process tomography requires distinguishing them from the identity on an exponentially vanishing fraction of the state space, leading to comparable quantitative bounds \cite{mohseni2008quantum,rodionov2015compressed}.
\end{remark}

\paragraph{Completeness.}
Completeness of the relation problem is a direct corollary of the completeness of \Cref{def:QI_LBCS}.
Let $\ket{\psi}_{i,j;\,q_A,q_B}$ be any promised input to $\mathcal{R}_m$ as in \Cref{def:R_m} (and recall that $q_A,q_B$ are computational-basis encodings of the classical questions). We describe a bounded fan-in circuit $C_n$ that outputs an accepting string with probability~$1$.

The circuit acts independently on the two disjoint blocks
\[
\ket{q_A}\otimes \Phi_{A}
\qquad\text{and}\qquad
\ket{q_B}\otimes \Phi_{B},
\]
and leaves the remaining $\ket{0}$-padding registers unchanged. We call the block of qubits $\ket{q_A}\otimes \Phi_{A}$ the $A$-block and the block of qubits $\ket{q_B}\otimes \Phi_{B}$ the $B$-block. 
Below we describe the operations on the $A$-block, and the complete operations on the $B$-block can be constructed similarly.

\medskip
\noindent\emph{Circuit construction.}
Using the quantum circuit in \Cref{fig:top_circuit}, one can load the question encoded in $\ket{q_A}$ into the system and apply a corresponding measurement using the phase estimation circuit. Because this circuit implements a projective measurement of $M_{q_A}$ on $\Phi_{A}$, the resulting answer $a_A$ is an accepting one prescribed by \Cref{lem:Z2_faithful_rep}. A similar construction on the $B$-block produces $a_B$. As the padding registers are untouched and remain $0$, the final output string is exactly of the form
\[
\vecz \;=\; 0^{s_1}\, a_A\, 0^{s_2}\, a_B\, 0^{s_3}
\]
and therefore satisfies $(\ket{\psi}_{i,j;q_A,q_B},\vecz)\in \mathcal{R}_m$. As analyzed in the completeness analysis of \Cref{def:QI_LBCS}, the circuit can be realized in depth $O(\log m)$ with bounded fan-in gates and one ancillary qubit. 

Finally, we note that the circuit family $\{C_n\}_n$ described above is, as stated, non-uniform in $n$. This can be remedied with a simple modification. In the underlying interactive protocol of \Cref{def:QI_LBCS}, the provers receive explicit classical descriptions of their questions, $q_A$ and $q_B$, and can therefore implement the corresponding operations directly, without needing to identify where in the system the questions apply. Their behavior is thus naturally question-controlled rather than position-dependent.

We exploit this in the circuit setting by broadcasting the classical questions to all registers and implementing, in parallel, the corresponding local operations across all candidate subsets. Since no routing or index computation is required, this yields a uniform $\QNC^0$ circuit family. This establishes perfect completeness.

\paragraph{Soundness.} 
Consider a constant-depth quantum circuit with all-to-all connectivity with a bounded fan-in of $K=2$. For this architecture, let $i$ and $j$ denote randomly selected blocks of qubits carrying the question labels $q_A$ and $q_B$ in the input state $\ket{\psi}_{i,j;q_A,q_B}$. We show that there is a non-negligible probability that the output $a_A$ is independent of the input question label $q_B$, while simultaneously the output $a_B$ is independent of the input question label $q_A$.

Specifically, in a $\QNC^0$ circuit, due to the bounded depth and bounded fan-in restrictions, the backward light cone of any output register covers at most $2^d$ input qubits: $|L^\leftarrow(i)|\leq 2^d$, where $i$ denotes the index of the output register, $L^\leftarrow(\cdot)$ represents the backward light cone, and $|\cdot|$ denotes the size, i.e., the number of qubits. It can be easily seen that for a given set of output registers, $O$, for a randomly sampled set of input registers, $I$, we have
\[
\Pr_I[O\cap L_C^{\rightarrow}(I)\neq\emptyset]\leq\sum_{P\subseteq O,P\neq\emptyset}\Pr_{I}[I\cap L_C^{\leftarrow}(P)\neq\emptyset]\leq 2^{|O|}|O|2^d \Pr_I[v\in I],
\]
where $L_C^{\rightarrow}(\cdot)$ represents the forward light cone, and $v\in I$ is any fixed input qubit index. With this transformation between the forward and backward light cones, we can show that the qubits outputting the answers lie outside the forward light cones of the input questions with high probability, effectively isolating the local computations. This can be described precisely by the following lemma:

\begin{lemma}[{\cite[Supplementary Information Lemma 4]{Zhang2024}}]\label{lemma:lightcone} 
Consider a $\QNC_0$ circuit $C$ with depth $D$ composed of gates with bounded fan-in. Define event $E_C\subset S$ as the set of input states satisfying
\begin{equation}
    \mathsf{supp}(a_A)\cap L^{\rightarrow}(\mathsf{supp}(q_B))=\emptyset \quad \text{and} \quad  \mathsf{supp}(a_B)\cap L^{\rightarrow}(\mathsf{supp}(q_A))=\emptyset, 
\end{equation}
where $S$ represents the set of all possible inputs, and we denote the qubits carrying the answer $a_A$ as $\mathsf{supp}(a_A)$, and similarly for $\mathsf{supp}(a_B)$.

Then, under a uniform distribution of inputs $S$, the event $E_C$ occurs with probability $1-O(1/n)$.
\end{lemma}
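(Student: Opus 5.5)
We must prove \Cref{lemma:lightcone}: under uniformly random placement of the question/answer registers, with probability $1-O(1/n)$ neither answer register lies in the forward light cone of the other prover's question register. The plan is a union bound over four "bad" events, each controlled by the bounded size of light cones in a depth-$D$, fan-in-$2$ circuit. The structural fact we rely on is that every qubit's backward light cone has size at most $2^D$. Since $m$ and $D$ are constants, the registers $\mathsf{supp}(q_A)$, $\mathsf{supp}(a_A)$ (the qubits of the $A$-block) have size $O(m)$, and likewise for $B$.

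\textbf{Step 1 (reformulation).} The condition $\mathsf{supp}(a_A)\cap L^{\rightarrow}(\mathsf{supp}(q_B))\neq\emptyset$ is equivalent to $\mathsf{supp}(q_B)\cap L^{\leftarrow}(\mathsf{supp}(a_A))\neq\emptyset$. By the light-cone bound,
\[
\bigl|L^{\leftarrow}(\mathsf{supp}(a_A))\bigr|\le 2^D\,|\mathsf{supp}(a_A)| = O(m2^D) = O(1).
\]
This is the inequality displayed just before the lemma; it uses the single-output bound rather than the cruder $2^{|O|}$ factor, although either suffices for constant $m$.

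\textbf{Step 2 (random placement).} Under the uniform distribution on the offsets $(s_1,s_2,s_3)$, i.e.\ on the pair of block positions $(i,j)$, we condition on the position $i$ of the $A$-block. This fixes the set $L^{\leftarrow}(\mathsf{supp}(a_A))$. The $B$-block is a contiguous window of $O(m)$ qubits whose start $j$ ranges over $\Omega(n)$ admissible positions. The window intersects a fixed set of size $O(m2^D)$ for at most $O(m)\cdot O(m2^D)=O(1)$ values of $j$. Hence this bad event has probability $O(1/n)$. The symmetric event, conditioning on $j$ and randomizing $i$, is bounded identically. A union bound then gives $\Pr[\neg E_C]=O(1/n)$.

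\textbf{Main obstacle.} The delicate point is that $i$ and $j$ are not independent, since the layout forces the $A$-block to precede the $B$-block with nonnegative gaps. I will need to check that, for every fixed $i$, the conditional distribution of $j$ remains spread over $\Omega(n)$ positions. Concretely, the number of pairs with $i\le n/2$ is a constant fraction of all pairs, and for such $i$ there are $\ge n/2-O(m)$ choices of $j$. If needed, I would restrict to this constant-fraction event and absorb its complement by a per-$j$ counting over all pairs: the number of bad pairs $(i,j)$ is at most $n\cdot O(1)$ out of $\Theta(n^2)$ total pairs. This counting argument, which avoids conditioning altogether, is the version I would write out. Note that the answers are output on the same positions as the inputs, which justifies identifying $\mathsf{supp}(a_A)$ with the $A$-block.
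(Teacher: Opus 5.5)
Your proposal is correct and follows the route the paper indicates. The paper rewrites the forward-light-cone condition as $\mathsf{supp}(q_B)\cap L^{\leftarrow}(\mathsf{supp}(a_A))\neq\emptyset$, bounds the backward light cone by $2^D|\mathsf{supp}(a_A)|=O(1)$, notes that a randomly placed constant-size register meets a fixed constant-size set with probability $O(1/n)$, and takes a union bound over the two events. Your global pair count (at most $O(n)$ bad placements out of $\Theta(n^2)$) handles the dependence between $i$ and $j$ more carefully than the paper's $\Pr_I[v\in I]$ shorthand, and it drops the unnecessary $2^{|O|}$ factor.
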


This lemma follows from a standard light cone argument, including a transformation between output-backward and input-forward light cones.

Applying the aforementioned lemma, we observe that for sufficiently large $n$, the random choice of inputs in the promise problem $\mathcal{R}_m$ ensures that, with constant probability, the response to question label $q_A$ (respectively $q_B$) is independent of the input associated with $q_B$ (respectively $q_A$); see \Cref{def:QS_Z2OCS}. In this event, the protocol decomposes into two space-like separated $\mathsf{QNC}^0$ circuits that respond independently to $q_A$ and $q_B$.

This is precisely the setting of \Cref{lemma:cliff_verif}, which shows that any prover strategy implemented by circuits of depth $d < \log m$ succeeds with probability at most $1-2^{-O(m)}$. It follows that, with constant probability, any such shallow-depth strategy fails to produce a valid solution to $\mathcal{R}_m$. Consequently, no family of $\mathsf{QNC}^0$ circuits of depth $d < \log m$ can solve $\mathcal{R}_m$ with success probability exceeding $1-2^{-\Omega(m)}$.

Finally, we rule out the use of quantum advice states. Any winning strategy requires access to near-perfect EPR pairs across two randomly selected registers. By the monogamy of entanglement for EPR pairs, together with the large number of possible register choices, no fixed polynomial-size advice state can simultaneously provide the required correlations for more than a negligible fraction of instances arising from $\mathcal{R}_m$. Consequently, quantum advice does not substantially enlarge the set of instances that can be solved correctly, and therefore does not asymptotically increase the success probability of insufficient-depth $\QNC^0$ circuits.
\hfill\qedsymbol

This theorem yields a natural quantum depth hierarchy: harder instances of the quantum-input relation problem require increasingly deep quantum circuits to achieve near-perfect success.

\paragraph{Comparison with Semi-Quantum Games.} The setting in which quantum inputs are provided and classical outputs are required is naturally related to the framework of semi-quantum games introduced by Buscemi \cite{buscemi2012all}. In these games, non-communicating provers receive randomly sampled quantum states as questions and must return classical answers, which the verifier checks against a relation defined by the labels of the quantum inputs. Semi-quantum games play an important role in quantum information theory because they generalize Bell nonlocality: for every entangled state, including states that do not violate any standard Bell inequality, there exists a semi-quantum game in which that state achieves a higher success probability than any classical strategy.

The interactive protocol of \Cref{subsec:Int_Clifford} naturally fits within this framework, while the present section extends such games into computational problems. However, our use of the model differs from its conventional role in quantum information theory. Rather than using semi-quantum games to certify entanglement or nonlocality, we employ a related game-based construction to distinguish between quantum provers with different computational capabilities. In particular, the resulting relation problem characterizes the quantum circuit depth required to implement specific observables and correlations.

\paragraph{Comparison with state synthesis problems.} One might argue that quantum depth hierarchies can already be obtained from quantum state-synthesis problems, with multipartite states such as GHZ states serving as natural examples. However, our construction differs from such approaches in several important respects.

First, state-synthesis problems typically involve classical inputs and quantum outputs, whereas our setting reverses this structure by considering quantum inputs and classical outputs. More importantly, state-synthesis approaches require certifying the preparation of the target quantum state, which generally necessitates either trusted quantum measurements together with tomography-like procedures or more elaborate verification methods such as swap tests. In contrast, our construction only assumes that the verifier can realize constant-depth Clifford circuits.

Another distinction is that GHZ state-synthesis tasks typically certify the ability to prepare entanglement across a fixed collection of qubit registers determined by the circuit architecture. In a uniform circuit family, the locations participating in the GHZ preparation are therefore essentially hardwired into the circuit. By contrast, in our construction, the relevant subsystems are selected dynamically through the input and may appear at many different locations within the circuit. The task, therefore, certifies the ability to maintain and manipulate coherent correlations between multiple possible regions of the device, rather than only along a predetermined entangling structure.

\section{Classical Interactive Protocol}\label{sec:Classical_int}

In this section, our main objective is to replace the Clifford-capable verifier in \Cref{def:QI_LBCS} of \Cref{subsec:Int_Clifford} with a fully classical verifier. To achieve this, we construct a classical verification procedure that preserves the two essential functionalities on which our earlier construction relied:

\begin{itemize}
\item the preparation of specific Clifford-rotated EPR states, and
\item the measurement of Pauli observables.
\end{itemize}

Since a classical verifier cannot directly perform quantum state preparation or Pauli measurements, both tasks must be delegated to quantum provers and enforced through interaction. We first show how a two-round interactive protocol with three provers enables verifiable delegated state preparation of the required states over two of these provers. Moreover, the same mechanism that certifies the state also certifies the Pauli observables needed for our construction. Finally, we show how this setting can be naturally adapted to the shallow-depth regime, where it behaves similarly to a state-commitment protocol for space-like separated regions of a quantum circuit.

Our second step is to embed the questions corresponding to the constraint system introduced in \Cref{def:OVCS_Z2} into this state-commitment framework. This embedding leverages the certification achieved in the first step to enforce an equivalent uniqueness property for the operator-valued solution as established in \Cref{lem:Z2_faithful_rep}. The construction proceeds through a careful composition of games, mirroring at the algebraic level the amalgamated-product structure introduced in \Cref{subsec:irrep-to-faithful}.

In the end, this yields a family of fully classical-verifier interactive tasks that separate shallow classical and quantum computation. Every classical $\textsf{NC}^0$ circuit has success probability bounded away from 1, whereas each task admits a perfect strategy implemented by a shallow quantum circuit of sufficient depth. Moreover, quantum circuits below the required depth threshold cannot achieve near-perfect success. Thus, the interactive formulation preserves the robust quantum circuit depth hierarchy established in the previous section.

\subsection{Delegated State Preparation with Three Provers}\label{subsec:delegated}

Delegated quantum state preparation, and more broadly delegated quantum computation, have been extensively studied in the literature \cite{Reichardt2013,Coladangelo2024}. Our setting, however, requires one of the strongest forms of certification: the verifier chooses a local Clifford frame for a Bell state, and the protocol must enforce that this specific Clifford operation is physically applied while remaining hidden from the parties performing the later measurements. This setting has received comparatively less attention because Clifford operations are often regarded as computationally simple \cite{Clifford_sim} and can frequently be absorbed into classical post-processing.

In particular, existing rigidity and delegation protocols certify EPR pairs and measurements only up to local isometries. In particular, in two-prover settings, local Clifford transformations can be absorbed into a change of measurement basis or classical post-processing without affecting the observed correlations. Consequently, prior techniques do not suffice to certify that a verifier-chosen Clifford operation has actually been implemented on remote subsystems, or otherwise require additional assumptions or resources that are undesirable in our setting. For instance, delegated blind quantum computation based on measurement-based quantum computation \cite{Broadbent09} could in principle achieve this functionality, but such protocols typically require either initial quantum communication between the verifier and prover or, to remove it, computational assumptions \cite{mahadev2018classical}.

To address this issue, we introduce a two-round interactive protocol involving one classical verifier and three quantum provers; see \Cref{fig:delegated_state_preparation}. The third prover, Charlie, is introduced specifically to implement verifier-chosen Clifford operations, which are applied to the systems of Alice and Bob via gate teleportation. The key idea is to decouple the application of the Clifford transformation from the subsystems on which the final measurements are performed. Charlie alone receives the verifier’s instruction specifying the Clifford operation, while Alice and Bob remain unaware of this choice. This asymmetry is essential: by withholding the Clifford description from the measuring provers, we prevent them from absorbing its effect into local basis changes or classical post-processing. Since Alice and Bob do not know which Clifford operation was applied, they cannot adapt their measurement strategies to compensate for deviations later in the protocol. Conversely, any failure by Charlie to implement the prescribed operation produces inconsistencies in the correlations observed between Alice and Bob, which the verifier detects using the teleportation transcript together with the subsequent measurement outcomes.

\begin{figure}[htb]
\centering
\begin{tikzpicture}[
    x=1cm,y=1cm,
    lab/.style={font=\small},
    party/.style={font=\small},
    cmsg/.style={->,thick,dashed},
    roundsep/.style={densely dashed,gray}
]

\def\xRes{0.2}
\def\xStart{2}
\def\xRoneL{4.2}
\def\xRoneR{6.4}
\def\xSep{6.2}
\def\xRtwoL{7.0}
\def\xRtwoR{10.8}
\def\xEnd{10.8}

\def\yA{2.2}
\def\yC{1.0}
\def\yB{-0.2}
\def\yV{-1.6}

\node[lab] at (\xRes-1,\yA-0.4) {$\ket{\Phi^+}^{\otimes m}$};
\node[lab] at (\xRes-1,\yB+0.5) {$\ket{\Phi^+}^{\otimes m}$};

\draw[thick] (\xStart-1.95,1.55) -- (\xStart-1.95,1.95);   
\draw[thick] (\xStart-1.95,1.95) -- (\xStart-1.45,\yA+0.05); 
\draw[thick] (\xStart-1.95,1.55) -- (\xStart-1.45,\yC+0.05); 

\draw[thick] (\xStart-1.95,0.10) -- (\xStart-1.95,0.50);   
\draw[thick] (\xStart-1.95,0.5) -- (\xStart-1.45,\yC-0.05); 
\draw[thick] (\xStart-1.95,0.10) -- (\xStart-1.45,\yB+0.00); 

\node[party] at (\xStart-0.8,\yA+0) {Alice};
\node[party] at (\xStart-0.8,\yC+0) {Charlie};
\node[party] at (\xStart-0.8,\yB+0) {Bob};
\node[party] at (\xStart-0.8,\yV+0) {\textbf{Verifier}};

\draw (\xStart,\yA) -- (\xEnd,\yA);
\draw (\xStart,\yC) -- (\xEnd,\yC);
\draw (\xStart,\yB) -- (\xEnd,\yB);
\draw (\xStart,\yV) -- (\xEnd,\yV);

\draw[roundsep] (\xSep,\yV-0.7) -- (\xSep,\yA+0.7);

\node[lab] at (\xSep-2,3.2) {\textbf{Round 1}};
\node[lab] at (\xSep+2,3.2) {\textbf{Round 2}};

\draw[cmsg] (\xSep-3.6,\yV) -- (\xSep-2.6,\yC);
\node[lab, above] at (\xSep-3.55,-0.8) {$c$};

\draw[cmsg] (\xSep-2,\yC) -- (\xSep-1,\yV);
\node[lab, right] at (\xSep-1.,-0.8) {$y$};

\draw[cmsg] (\xRtwoL,\yV) -- (\xSep+2.2,\yA);
\draw[cmsg] (\xRtwoL,\yV) -- (\xSep+2.2,\yB);

\node[lab, above] at (8.25,0.15) {$q_A(y)$};
\node[lab, below] at (8.00,-0.9) {$q_B(y)$};

\draw[cmsg] (9,\yA) -- (10.2,\yV);
\draw[cmsg] (9,\yB) -- (10.2,\yV);

\node[lab, above] at (9.9,0.2) {$a_A$};
\node[lab, below] at (9.45,-0.9) {$a_B$};

\end{tikzpicture}
\caption{Representation of the Clifford-basis test protocol. \textbf{Round 1.} The verifier selects a random Clifford operation $\mathcal{C}_c \in \{\mathsf{CNOT}, \mathsf{Swap}, \mathsf{X}, I\}$ together with the qubit registers on which it is to be applied. The verifier then instructs Charlie, via a classical label $c$, to perform the corresponding gate-teleportation procedure and return the measurement outcomes $y$. \textbf{Round 2.} The verifier computes the Pauli correction $r_{c,y}$ determined by Charlie’s measurement outcomes and sends questions corresponding to the Clifford-rotated parallel Mermin--Peres relations described in \Cref{Fig:CliffordtoMermin}, with the questions updated according to the correction $r_{c,y}$.}
\label{fig:delegated_state_preparation}
\end{figure}

\begin{figure}[htb]
\centering
\begin{tikzpicture}[>=Stealth, scale=2]
\foreach \i in {1,2,3} {
    \foreach \j in {1,2,3} {
        \pgfmathsetmacro{\xcoord}{\j}
        \pgfmathsetmacro{\ycoord}{4-\i}
        \pgfmathsetmacro{\index}{int((\i-1)*3 + \j)}
        \fill ( \xcoord, \ycoord ) circle (1.5pt) node (v\index) {};
        \node[anchor=south east, font=\small] at (v\index) {$v_{\index}$};
    }
}

\tikzset{
    ->, >=Stealth, very thick,
    no_sign/.style={solid},
    sign_change/.style={dash pattern=on 3pt off 2pt},
    cnot/.style={black},
    swap/.style={cmykRed},
    xi/.style={cmykBlue},
    ix/.style={cmykGreen},
    bi/.style={<->, shorten >=4pt, shorten <=4pt}
}

\draw [bi, cnot, no_sign, bend left=35] (v1) to (v3);
\draw [cnot, no_sign, loop, out=70, in=110, looseness=8, distance=0.5cm] (v2) to (v2);
\draw [bi, cnot, no_sign, bend left=35] (v4) to (v6);
\draw [cnot, no_sign, loop, out=250, in=290, looseness=8, distance=0.5cm] (v5) to (v5);
\draw [bi, cnot, sign_change, bend right=35] (v7) to (v9);
\draw [cnot, no_sign, loop, out=160, in=200, looseness=8, distance=0.5cm] (v8) to (v8);

\draw [bi, swap, no_sign, bend left=15] (v1) to (v2);
\draw [bi, swap, no_sign, bend left=15] (v4) to (v5);
\draw [bi, swap, no_sign, bend left=15] (v7) to (v8);
\draw [swap, no_sign, loop, out=20, in=340, looseness=8, distance=0.5cm] (v3) to (v3);
\draw [swap, no_sign, loop, out=310, in=350, looseness=8, distance=0.5cm] (v6) to (v6);
\draw [swap, no_sign, loop, out=130, in=170, looseness=8, distance=0.5cm] (v9) to (v9);

\draw [xi, sign_change, loop, out=220, in=260, looseness=10, distance=0.6cm] (v1) to (v1);
\draw [xi, no_sign, loop, out=290, in=330, looseness=10, distance=0.6cm] (v2) to (v2);
\draw [xi, sign_change, loop, out=220, in=260, looseness=10, distance=0.6cm] (v3) to (v3);
\draw [xi, no_sign, loop, out=110, in=150, looseness=10, distance=0.6cm] (v4) to (v4);
\draw [xi, no_sign, loop, out=110, in=150, looseness=10, distance=0.6cm] (v5) to (v5);
\draw [xi, no_sign, loop, out=110, in=150, looseness=10, distance=0.6cm] (v6) to (v6);
\draw [xi, sign_change, loop, out=50, in=90, looseness=10, distance=0.6cm] (v7) to (v7);
\draw [xi, no_sign, loop, out=340, in=20, looseness=10, distance=0.6cm] (v8) to (v8);
\draw [xi, sign_change, loop, out=250, in=290, looseness=10, distance=0.6cm] (v9) to (v9);

\draw [ix, no_sign, loop, out=40, in=80, looseness=12, distance=0.7cm] (v1) to (v1);
\draw [ix, sign_change, loop, out=140, in=180, looseness=12, distance=0.7cm] (v2) to (v2);
\draw [ix, sign_change, loop, out=140, in=180, looseness=12, distance=0.7cm] (v3) to (v3);
\draw [ix, no_sign, loop, out=20, in=340, looseness=12, distance=0.7cm] (v4) to (v4);
\draw [ix, no_sign, loop, out=20, in=340, looseness=12, distance=0.7cm] (v5) to (v5);
\draw [ix, no_sign, loop, out=20, in=340, looseness=12, distance=0.7cm] (v6) to (v6);
\draw [ix, no_sign, loop, out=140, in=180, looseness=12, distance=0.7cm] (v7) to (v7);
\draw [ix, sign_change, loop, out=230, in=270, looseness=12, distance=0.7cm] (v8) to (v8);
\draw [ix, sign_change, loop, out=140, in=180, looseness=12, distance=0.7cm] (v9) to (v9);

\begin{scope}[shift={(2, 0.3)}, font=\footnotesize]
    \begin{scope}[xshift=-1.8cm]
        \draw[black, very thick] (0,0) -- (0.3,0) node[right, black] {$\mathsf{CNOT}$};
        \draw[cmykRed, very thick] (0.9,0) -- (1.2,0) node[right, black] {$\mathsf{Swap}$};
        \draw[cmykBlue, very thick] (1.9,0) -- (2.2,0) node[right, black] {$X\otimes I$};
        \draw[cmykGreen, very thick] (2.8,0) -- (3.1,0) node[right, black] {$I\otimes X$};
    \end{scope}
    
    \begin{scope}[xshift=-1.3cm, yshift=-0.25cm]
        \draw[black, very thick] (0,0) -- (0.3,0) node[right, black] {No Sign Change};
        \draw[black, very thick, dash pattern=on 3pt off 2pt] (1.6,0) -- (1.9,0) node[right, black] {Sign Change};
    \end{scope}
\end{scope}

\end{tikzpicture}
\caption{Action of Clifford operations on the variables of the Mermin--Peres constraint system. Each vertex represents a variable from \Cref{eq:MerminPeres}, and each directed edge represents the action of a Clifford operation mapping one variable to another under conjugation. When extended to the parallel Mermin--Peres constraint system (\Cref{def:MP_parallel}), these transformations encode the relators implementing the action $\alpha$ in the semidirect product structure $\mathcal{P}_m \rtimes_{\alpha} \mathrm{AGL}(m,2)$, defining the underlying operator-valued constraint system of the Clifford-basis test protocol.}
\label{Fig:CliffordtoMermin}
\end{figure}

We now state the self-testing lemma for the two-round three-prover protocol, certifying the preparation of EPR pairs in verifier-chosen local Clifford bases together with the corresponding Pauli observables.

\begin{lemma}[Robustness of the Clifford-basis test]\label{lem:clifford_basis_rigidity}
Consider the two-round three-prover protocol of \Cref{fig:delegated_state_preparation}, and suppose that Alice, Bob, and Charlie follow a strategy that succeeds with probability at least $1-\varepsilon$.  Then there exists a local isometry $V = V_A\otimes V_B$ acting on the joint Hilbert space of Alice and Bob such that, for every choice of Clifford label $c$ and corresponding Pauli strings $\vecs,\vect,\vecu,\vecv\in\{0,1\}^{m}$, there exist observables $A_{\vecs,\vect}$ and $B_{\vecu,\vecv}$ on Alice's and Bob's local Hilbert spaces satisfying
\begin{equation}\label{eq:clifford_rigidity_exp}
\Bigl|\langle \phi_c|X(\vecs)Z(\vect)\otimes X(\vecu)Z(\vecv)|\phi_c\rangle-\langle \psi_c| A_{\vecs,\vect}\otimes B_{\vecu,\vecv}|\psi_c\rangle\Bigr|\leq O(m^2\sqrt{\varepsilon})
\end{equation}
where $|\phi_c\rangle := V(|\psi_c\rangle)$.

Moreover, 
\begin{equation}
\bra{\phi_c} (I^{k} \otimes \mathcal{C}_c \otimes I^{m'})(\ket{\Phi}  \bra{\Phi}_{AB}\otimes \rho_{junk})(I^{k}\otimes \mathcal{C}_c \otimes I^{m'})^{\dagger} \ket{\phi_c} \geq 1-O(m^2\sqrt{\varepsilon}),
\end{equation}
\noindent where $\ket{\Phi}_{AB}$ is the $2m$-fold tensor product of Bell pairs $\bigotimes_{i=1}^{2m}\ket{\Phi^+}_{A_iB_i}$, $k\in[2m]$ and $m'=2m-k-|c|$.
\end{lemma}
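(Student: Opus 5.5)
We reduce the statement to standard robust self-testing of the parallel Mermin--Peres game, and then use the round structure to transport that guarantee through Charlie's gate teleportation. The first step is to condition on Charlie's Round-1 label $c$ and outcome $y$. After Round 1, Alice and Bob share a post-measurement state $\ket{\psi_c}$ (averaged over $y$ with the Pauli correction $r_{c,y}$ absorbed into the relabelled Round-2 questions). Alice and Bob never see $c$ or $y$, so their Round-2 measurement operators are independent of $c$. Since the total success probability is at least $1-\varepsilon$, a Markov argument shows that for each $c$ the conditional success on the Clifford-rotated parallel Mermin--Peres subtest is at least $1-O(|\mathcal{C}|\varepsilon)$. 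Here $|\mathcal{C}|=O(m^2)$ is the number of Clifford labels, and this is where the $m^2$ prefactor comes from.

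Next we apply the robust rigidity of the parallel Mermin--Peres constraint system (\Cref{def:MP_parallel}), following \cite{coladangelo2017robust,cleve2017perfect} and the stability framework of \Cref{lemma:Vidick}. We run it in the $c=I$ branch, where the questions are the unrotated ones. This produces a local isometry $V=V_A\otimes V_B$ that depends only on Alice's and Bob's operators and hence is independent of $c$. Under $V$, Alice's derived observables $A_{\vecs,\vect}$ are $O(\sqrt{\varepsilon'})$-close in state-dependent norm to $X(\vecs)Z(\vect)\otimes I_{\rm junk}$, Bob's are close to the transposed Paulis, and the state is close to $\ket{\Phi}^{\otimes 2m}\otimes\rho_{\rm junk}$. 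Because the same operators are used in every branch, the operator identification transfers to each $\ket{\psi_c}$. This gives the first inequality \eqref{eq:clifford_rigidity_exp}, once we note that a single-observable state-dependent bound becomes a product bound by the triangle inequality together with unitarity of the Paulis.

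For the second statement we use the Clifford-rotated relations. In branch $c$, the verifier checks correlations of the form $\langle A_{P}\otimes B_{\mathcal{C}_c P\mathcal{C}_c^\dagger}\rangle\approx \pm 1$ for all Pauli $P$ in the Mermin--Peres frame, with signs fixed by \Cref{Fig:CliffordtoMermin} and $r_{c,y}$. Substituting the ideal Pauli operators from the previous step, the state $V\ket{\psi_c}$ is an approximate $+1$ eigenvector of the stabilizer group generated by $P\otimes(\mathcal{C}_cP\mathcal{C}_c^\dagger)^{T}$. That group has $(I\otimes\mathcal{C}_c)\ket{\Phi}$, tensored with junk, as its unique joint $+1$ eigenspace. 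We then convert stabilizer violation to fidelity by the standard projector bound $1-\langle\Pi\rangle\le\sum_{\text{gens}}(1-\langle g\rangle)/2$ over the $O(m)$ generators. Only the $|c|\le 2$ qubits touched by $\mathcal{C}_c$ need the rotated check, and the rest are covered by the unrotated one. This yields fidelity $1-O(m^2\sqrt{\varepsilon})$.

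The main obstacle is the second step: showing that one isometry $V$ serves every branch. The rigidity theorem is applied to the state of a single branch, while the states $\ket{\psi_c}$ differ across branches. To bridge this, we would use that Alice's reduced density matrix is the same in every branch, by no-signalling from Charlie together with the fact that averaging over $y$ makes it maximally mixed. It follows that state-dependent operator distances measured in the $c=I$ branch bound the corresponding distances on Alice's side in every other branch. Making this transfer precise, with only polynomial loss in $m$, is where the care is needed.
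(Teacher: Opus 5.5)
Your argument is correct, but it is organised differently from the paper's.

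\textbf{The paper's route.} It conditions on each Clifford label $c$ and forms $\rho_{cc^{-1}}=(\mathcal{C}_c\otimes I)\rho_c(\mathcal{C}_c\otimes I)^\dagger$. It relabels Bob's observables by the Clifford, so that every branch looks exactly like the unrotated parallel Mermin--Peres game. It then invokes the parallel Mermin--Peres rigidity theorem separately in each branch to obtain isometries $V_c$. Finally, it asserts that all $V_c$ coincide, because Alice and Bob use a single family of observables and Charlie cannot signal.

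\textbf{Your route.} You apply rigidity once, in the $c=I$ branch, and carry the operator identification to every other branch by no-signalling. Only then do you recover the rotated state in branch $c$, from the rotated consistency checks and the commuting-projector union bound.

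\textbf{What each buys.} Your route is rigorous exactly where the paper's is informal. The step you flag as the main obstacle is in fact lossless. A one-sided quantity such as $\operatorname{Tr}\bigl(\rho\,(X^\dagger X\otimes I)\bigr)$ with $X=V_AA_v-(P_v\otimes I)V_A$ depends only on Alice's marginal. Alice's marginal, averaged over Charlie's outcome $y$, is literally independent of $c$, since Charlie's instrument acts only on his own system; the same holds for Bob. So the transfer is an identity, not an estimate. Your route also avoids applying an ideal Clifford to the provers' non-qubit physical space, as the definition of $\rho_{cc^{-1}}$ does. It also makes explicit where the rotation $\mathcal{C}_c$ in the fidelity statement comes from. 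Per-branch isometries built from Clifford-relabelled observables would generally differ by the ideal Clifford rather than literally coincide, so the paper's ``$V_c=V_{c'}$'' needs essentially your argument to be made precise. The paper's route is shorter only because it uses rigidity purely as a black box in each branch.

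\textbf{Two small corrections.} First, Alice's marginal being maximally mixed is neither needed nor true for a general dishonest strategy; only its $c$-independence matters. Second, the $m^2$ does not come from averaging over $c$ alone. The averaging gives $\sqrt{|\mathcal{C}|\varepsilon}=O(m\sqrt{\varepsilon})$, which is then multiplied by the $m$-dependence of the parallel rigidity theorem and by the $O(m)$ generators in your union bound. Your final polynomial in $m$ will therefore generally exceed $m^2$. This is harmless for the later $2^{O(m)}$ bounds. If you go through \Cref{lemma:Vidick} rather than the parallel Mermin--Peres rigidity theorem, you also have to convert its averaged bound into pointwise ones.
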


The proof follows in \Cref{subsec:robust5}.

\paragraph{Unconditional state-commitment.} Subsequently, we translate the above self-testing statement, originally formulated for space-like separated provers sharing an entangled state and performing local measurements, into an equivalent setting for circuits in $\QNC^0$.

In this formulation, the collection of provers from \Cref{fig:delegated_state_preparation} are replaced by a single prover implemented as a shallow quantum circuit, where abstractly each prover corresponds to a designated subset of qubit registers. The protocol remains two-round with a classical verifier, and the first round is interpreted as a state-commitment phase: the circuit prepares the required entangled state with a Pauli frame fixed up to measurement outcomes, while the verifier obtains a classical transcript determining this frame and uses it to update the second-round questions. Since these questions are chosen independently of the interaction and depend only on the Pauli frame for consistency, which the circuit cannot determine due to the computational limitations of $\QNC^0$, the circuit cannot modify the committed state (up to local isometries) and must respond consistently with it, as do the space-like separated provers in the Clifford basis test.

\begin{algorithm}[htb]
\floatname{algorithm}{Protocol}
\caption{Clifford-Rotated EPR State-Commitment Protocol}\label{def:Cliff_EPR_self}
\begin{algorithmic}[1]
\Require Parameters $n,m,k\in \mathbb{N}$. 

\State Verifier randomly samples $k$ question tuples $\{q^i=(q_A^i,q_B^i)\}_{i=1}^{k}$ from the $m$-fold parallel Mermin--Peres game. Let $t\gets 0$.  

\For{$i=1:k$}

\vspace{0.2mm}
\Statex {\bfseries Round 1: State commitment.}
\vspace{0.2mm}

\State Verifier randomly selects labels $l<h<j\in [n]$, $\mathcal{C}_c \in \{\mathsf{CNOT},\mathsf{Swap},X,I\}$ and sends 
\Statex \hspace{\algorithmicindent}$s_c=0^{l-1} \times \underbrace{1}_{l} \times 0^{h-l-1} \times \underbrace{c}_h \times 0^{j-h-1}\times \underbrace{1}_{j}\times 0^{n-j}.$
\State Prover sends back a classical string $y\in \{0,1\}^{(n-2m)}$.

\vspace{0.2mm}
\Statex {\bfseries Round 2: Clifford-basis test.}
\vspace{0.2mm}

\State Verifier updates the question pair $(q_A^{i},q_B^i)$, based on $\mathcal{C}_c$ and $y$, and sends to prover $s_{(q_A,q_B)}=0^{l-1} \times q_A^{i}(y) \times 0^{j-l-1}  \times q_B^i(y) \times 0^{n-j}$. 

\State Prover sends back its answer $s_{(a_A,a_B)}=\ \{0,1\}^{l-1} \times a_A^i \times \{0,1\}^{j-l-1}\times a_B^i \times \{0,1\}^{n-j}$ .

\State Verifier checks correctness: $t\gets t+1$ if $s_{(a_A,a_B)}$ is a valid answer. 

\EndFor
\State \Return the fraction of accepted rounds $t/k$.
\end{algorithmic}
\end{algorithm}

\begin{lemma}[Robustness of the Clifford-basis test for $\QNC^0$ circuits]
\label{lemma:cliff_test_qnc}
For any $m \in \mathbb{N}$, set $k = m$. For all sufficiently large $n$, in the interactive protocol of \Cref{def:Cliff_EPR_self}, the following holds:
\begin{itemize}
\item \textbf{Perfect completeness.}
There exists a $\mathsf{QNC}^0$ circuit of depth $O(1)$ that succeeds in the protocol with probability $1$.

\item \textbf{Rigidity for shallow quantum circuits.}
Let $C$ be any $\mathsf{QNC}^0$ circuit that succeeds with probability at least $1-\varepsilon$. Then there exist two disjoint sets of output registers, denoted $A$ and $B$, such that the induced strategy on these registers satisfies the conclusions of \Cref{lem:clifford_basis_rigidity}.
\end{itemize}
Moreover, the verifier transcript can be computed in $\NC^0[\oplus]$.
\end{lemma}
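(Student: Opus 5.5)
The proof has three parts: completeness, rigidity, and the verifier-complexity claim.

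\textbf{Completeness.} We build an explicit constant-depth circuit that simulates Alice, Bob and Charlie in parallel on every candidate register block. Since the positions $l<h<j$ are unknown to the circuit, we use the standard ``prepare everywhere'' approach. In depth $O(1)$ we prepare EPR pairs between neighbouring blocks along a fixed line, as in the Bravyi--Gosset--K\"onig construction. In round 1, every block that receives a nonzero symbol of $s_c$ performs the local operations of the honest strategy.
\begin{itemize}
\item The marked blocks $l$ and $j$ keep their halves of the chains.
\item The block $h$ applies the gate $\mathcal{C}_c$ by gate teleportation, which is Clifford and hence constant depth.
\item All intermediate blocks perform Bell measurements, i.e.\ entanglement swapping.
\end{itemize}
The outcomes form $y$. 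The accumulated Pauli frame on the Alice and Bob registers is a parity of entries of $y$, and the verifier absorbs it into $q_A(y),q_B(y)$. Because $\mathcal{C}_c$ is Clifford, conjugating a Pauli correction through it yields another Pauli, so the frame update stays linear over $\mathbb{Z}_2$. This gives the final claim that the transcript is computable in $\NC^0[\oplus]$. In round 2 the marked blocks measure the honest parallel Mermin--Peres Pauli observables, each of which has constant depth. Success probability is $1$ by the correctness of entanglement swapping and gate teleportation.

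\textbf{Rigidity.} We reduce to \Cref{lem:clifford_basis_rigidity} through a light-cone argument in the spirit of \Cref{lemma:lightcone}. Let $A$, $B$ and $C$ be the output registers carrying $a_A$, $a_B$ and $y$ restricted to Charlie's part. Over the random choice of $l,h,j$, the following holds with probability $1-O(1/n)$:
\begin{itemize}
\item the round-2 answers $a_A$ lie outside the forward light cone of $q_B$ and of the label $c$;
\item symmetrically, $a_B$ lies outside the forward light cone of $q_A$ and of $c$;
\item Charlie's round-1 computation depends on $c$ only locally.
\end{itemize}
The union bound has the same form as the one displayed before \Cref{lemma:lightcone}, with three marked blocks instead of two.

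Conditioned on this event, the two-round circuit factorizes into three non-communicating local strategies. These act on the post-round-1 state, which we take as the shared entangled state, and Alice and Bob do not learn $c$. The key point is that the round-1 state is a fixed function of the circuit applied to $s_c$. Outside Charlie's light cone it does not depend on $c$, so Alice's and Bob's round-2 operators are independent of $c$, as the three-prover model requires. Averaging, success at least $1-\varepsilon$ overall implies success at least $1-\varepsilon-O(1/n)$ on a good event. For $n$ large enough, \Cref{lem:clifford_basis_rigidity} then yields the isometry and the bounds with $\varepsilon$ replaced by $\varepsilon+O(1/n)$. Taking $k=m$ rounds only affects the concentration of the accepted fraction $t/k$; per-round rigidity is what matters.

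\textbf{Main obstacle.} The hardest step is the two-round light-cone argument. The round-2 circuit acts on a state produced by round 1, and the information in $y$, which the verifier mixes into the questions, is global. We must show that the only cross-dependence the circuit can exploit passes through the verifier's classical parity post-processing. The circuit itself must not be able to correlate $a_A$ with $c$.

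To handle this, we first treat the circuit's internal state after round 1 as the shared state. We then check that the question update $q(y)$ depends on $y$ only through a Pauli frame, which can equivalently be pushed onto the state as a local Pauli. This reduces the situation exactly to the honest three-prover model of \Cref{fig:delegated_state_preparation}.
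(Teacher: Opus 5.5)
Your proposal is correct and follows essentially the same route as the paper. Completeness uses a constant-depth nearest-neighbour EPR chain with entanglement swapping, with $\mathcal{C}_c$ gate-teleported at Charlie's block and the verifier absorbing the resulting Pauli frame through parities of $y$ (hence $\NC^0[\oplus]$); soundness uses a three-register light-cone argument that reduces the circuit to the three-prover setting of \Cref{lem:clifford_basis_rigidity}. The differences are minor: you apply $\mathcal{C}_c$ locally at block $h$ rather than routing through a dedicated computational region (which suffices for the $O(1)$ claim, though not the paper's explicit depth-$12$ bound), and you are more explicit than the paper about composing light cones across the two rounds.
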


\paragraph{Proof of completeness.} For completeness, we show that an honest prover restricted to $\QNC^0$ computational power can succeed in the Clifford-rotated EPR state-commitment protocol with probability $1$. 

It suffices to show that, in the first round, the circuit can prepare, up to Pauli corrections determined by the teleportation outcomes, the state
\begin{equation}
\left(I^{\otimes l} \otimes \mathcal{C}_c^{A,B} \otimes I^{\otimes n-l-1}\right)
 \ket{\Phi} \bra{\Phi}_{AB} \left(I^{\otimes l} \otimes \mathcal{C}_c^{A,B} \otimes I^{\otimes n-l-1}\right )^\dagger
\end{equation}
across two output registers indexed by $j$ and $l$. Here, the $A$-registers are located at position $j$, the $B$-registers at position $l$, and the Clifford operator $\mathcal{C}_c^{A,B}$ is specified by the label $c$, while the qubit register index $h$ defines where the previous label is provided in the string. The resulting state is then used to generate the correlations required for the nonlocal game questions $q_A$ and $q_B$.

The main difficulty is that the indices $l,h,j$, together with the location of the Clifford operation, depend on the classical input label, while the circuit family $\{C_n\}_n$ must remain fixed for each input size $n$ (see the label sent in Step 3 of \Cref{def:Cliff_EPR_self}). In particular, a $\QNC^0$ circuit cannot compute routing predicates from binary-encoded indices.

\medskip
\noindent\textit{Input-dependent routing in constant-depth circuits.} To address this, we define a general $\QNC^0$ circuit architecture consisting of two components: a local computational region, whose operations are restricted to Clifford gates, and a fixed one-dimensional nearest-neighbour teleportation chain connecting all qubits. The teleportation chain serves as a transport layer. States located at arbitrary indices can be transferred to auxiliary qubits in the computational region, processed locally, and then returned to designated indices by activating contiguous segments of the chain. The classical input specifies a unary activation pattern that determines which segment of the chain is used. Performing parallel Bell measurements along the activated segment implements teleportation across that segment. Because this activation pattern is supplied explicitly as part of the input, the circuit performs no internal search or index computation.

\medskip
Subsequently, we describe how the first round of the interactive task can be implemented by a circuit composed of the following three stages.

\medskip
\noindent\textbf{(1) Bell pair preparation.}
The first layer prepares a collection of nearest-neighbour EPR pairs across the circuit. Concretely, for each layer $r \in [m]$ and each position $t \in [n]$, the circuit prepares a Bell pair  $\ket{\Phi^+}$ between qubits at positions $(2t-1,r)$ and $(2t,r)$. This results in an $m$-fold tensor product of nearest-neighbour EPR pairs arranged along the one-dimensional layout of $n$ elements. In particular, this includes the qubits located at Charlie’s register $h$, which will later be used as the source system for teleportation. This preparation can be implemented using constant-depth Clifford circuits (of depth $2$).

\medskip
\noindent\textbf{(2) Clifford application at Charlie.}
In the second layer, the circuit applies the Clifford operation $\mathcal{C}_c$ to the appropriate subset of qubits within Charlie’s registers, indexed by $h$. To obtain a uniform circuit description, we employ the input-dependent routing architecture described earlier to implement these Clifford operations as constant-depth circuits.

Specifically, the relevant qubits are routed from Charlie’s registers to a dedicated computational region using entanglement-swapping layers and auxiliary qubits, and are subsequently returned to their original locations. This routing requires one layer of controlled swaps to select the participating data qubits, followed by three layers for entanglement swapping.

The input label $c$ determines the choice of Clifford operation. Since $\mathcal{C}_c$ belongs to a fixed finite set (e.g., $\{\mathsf{CNOT}, \mathsf{Swap}, X, I\}$), it can be implemented using at most two layers of classically controlled Clifford gates. Finally, an additional 3+1 layers are used to route the qubits back to their original positions. Overall, the number of ancilla qubits required is linear in the system size, i.e., $O(n)$.

\medskip
\noindent\textbf{(3) Entanglement swapping (Bell measurements).}
In the third layer, the circuit performs Bell-basis measurements to teleport Charlie’s subsystems to the output registers indexed by $j$ and $l$. 
Specifically, for each relevant segment determined by the input string $s_c$, Bell measurements are performed in parallel between neighbouring qubits, effectively implementing entanglement swapping along the path from $h$ to $j$ and from $h$ to $l$. 
Since the input string contains exactly two active positions, this procedure selects the appropriate intervals and establishes an $m$-fold tensor product of EPR pairs between the endpoints $j$ and $l$. 
This is a standard entanglement swapping procedure and can be realized with three layers.

\medskip
The entire circuit can be composed to implement the prover’s operations in the first round. In particular, using the previous circuit architecture, the circuit depth can be bounded by 12. The outcomes of the Bell measurements define the classical string $y$ returned by the prover in this round, and determine the Pauli corrections induced by the teleportation.

In the second round, the verifier uses $y$ to compute the corresponding Pauli corrections and combines them with the Clifford label $c$ to determine the measurement bases for the parallel Mermin--Peres instances. Both the Pauli frame updates and the Clifford conjugations reduce to local relabelings on $m$-bit strings together with $n$-bit parity computations, and hence can be implemented in $\NC^0[\oplus]$.

The prover’s strategy in this round is implemented by a constant-depth $\QNC^0$ circuit consisting only of single- and two-qubit gates that perform the required Pauli measurements on the registers associated with the corresponding observable labels. 

Overall, the three layers of the first round consist solely of constant-depth Clifford operations and measurements and can therefore be executed in parallel, while the second round involves only classical queries and local measurements on the designated registers. Consequently, the honest strategy admits a classically controlled $\QNC^0$ implementation that succeeds with probability~$1$.\hfill\qedsymbol

\paragraph{Proof of soundness.} For soundness, we reduce the analysis to the three-prover setting of \Cref{lem:clifford_basis_rigidity} by showing that the relevant registers of the $\QNC^0$ circuit behave as non-communicating parties with high probability.

The verifier selects three registers indexed by $l$, $h$, and $j$ (corresponding to Alice, Bob, and Charlie) uniformly at random.  The backward light cone of any output register in a depth-$d$ circuit has size at most $O(2^d)$. Since the circuit has constant depth, this implies that the probability that any two of the three selected registers have intersecting backward light cones is negligible. Applying a union bound over the three pairs and applying a standard light cone argument as in \Cref{lemma:lightcone}, we conclude that, with high probability, all three registers have disjoint backward light cones. Conditioned on this event, the corresponding subsystems behave as non-communicating provers in both rounds of the protocol.

Under this condition, the induced strategy of the circuit is operationally equivalent to a three-prover strategy in the sense of \Cref{fig:delegated_state_preparation}. The only difference is that, in the circuit setting, the entanglement shared between the registers is generated internally by the circuit rather than provided \emph{a priori}. However, this distinction is immaterial: any successful strategy for the protocol must still produce correlations consistent with those of the parallel repeated Mermin--Peres game in the second round.

Crucially, the verifier’s modifications to the second-round questions, accounting for the Clifford label $c$, and the teleportation outcomes $y$, only relabel the Pauli observables appearing in the test and are independent of the choice of the questions themselves. Thus, they do not alter the underlying constraint system of the parallel Mermin--Peres game. Therefore, the rigidity guarantees of \Cref{lem:clifford_basis_rigidity} continue to apply.

It follows that any $\QNC^0$ circuit that succeeds in the protocol with probability at least $1-\varepsilon$ must induce, on the selected registers, a strategy satisfying the same conclusions as in \Cref{lem:clifford_basis_rigidity}. In particular, up to local isometries, the circuit prepares the appropriate Clifford-rotated EPR state and implements the corresponding Pauli observables on the extracted registers.\hfill\qedsymbol

\subsection{Composing Constraint Systems}\label{subsec:composing}

Here, we describe the composition of the game and constraint systems underlying \Cref{def:QI_LBCS} and \Cref{def:Cliff_EPR_self}. In this construction, the Clifford-Rotated EPR State-Commitment Protocol certifies the states and observables required by \Cref{def:QI_LBCS}, replacing the functionalities previously carried out by the Clifford-capable verifier. This composition is therefore the key step in replacing the quantum verifier in \Cref{def:QI_LBCS} with a fully classical verifier while still enforcing the optimal strategies for the faithful $\mathbb Z_2^m$ operator-valued constraint system introduced in \Cref{sec:constraint_system}. The resulting protocol preserves the essential properties of both components and reformulates all observables as abstract variables within a single composed constraint system.

Overall, we consider two sub-tests: (1) the nonlocal game described in \Cref{def:QI_LBCS}, derived from the faithful $\mathbb Z_2^m$ operator-valued constraint system in \Cref{def:OVCS_Z2}, in which the verifier alternates between consistency and constraint-satisfaction checks; (2) the Clifford-basis test described in \Cref{fig:delegated_state_preparation}, defined from the parallel Mermin--Peres constraint system in \Cref{def:MP_parallel} subjected to a set of classical commands of local unitary transformations. Note that the parallel Mermin--Peres game itself is already obtained as a parallel repetition of the single-instance Mermin--Peres game. Moreover, both sub-tests admit rigid optimal strategies for self-testing the shared state $\bigotimes_{i=1}^{m}\ket{\Phi^+}$.

The essence of the game composition is to notice the overlapping variables that show up in different constraint systems, which are performed over the same underlying system. Specifically,

\begin{itemize}
    \item \textbf{Pauli observables:} The explicit Pauli observables $Z(\mathbf{s})$ appearing in (1), originating from \Cref{def:OVCS_Z2}, are identified with the subset of abstract variables $a_{j,\veci[j]}$ satisfying $\veci[j]=0$ in (2).

    \item \textbf{Clifford unitary transformations:} The Clifford operations $\mathsf{CNOT}$, $X$, and $\mathsf{Swap}$ are explicitly specified in (1), where they act by mapping variables $z_{\veci}$ to variables $z_{\veci'}$. In the Clifford-basis test of \Cref{fig:delegated_state_preparation}, these same transformations are enforced abstractly through their action on the variables $v_{j,k}$, with $k\in\{1,\dots,9\}$, inducing relabellings between variables corresponding to $a_{j,\veci[j]}$ and $a_{j,\veci'[j]}$ for $\veci[j],\veci'[j]\in\{0,\dots,5\}$ (see \Cref{Fig:CliffordtoMermin}).
\end{itemize}

To compose the constraint systems, we present all variables abstractly. In addition to the generators $z_{\veci}$ of $\mathbb{Z}_2^m$, the generators $a_{j,\mathbf{i}[j]}$ of the parallel Mermin--Peres constraint system, and its composing variables $v_{j,k}$ in single-instance Mermin--Peres constraint systems for some values of $k$ according to \Cref{eq:MerminPeres}, we introduce the following abstract generators:
\begin{itemize}
    \item $s_{a,b}, c_{a,b}, x_a$, corresponding respectively to $\mathsf{Swap}_{a,b}$, $\mathsf{CNOT}_{a,b}$, and $X_a$.
\end{itemize}
We denote the set of these generators as $\mathrm{CL}$ (referring to Clifford operations). 

Based on these generators, we define the set of merge relators $R_{\mathrm{merge}}$ by
\begin{equation}
\begin{aligned}
& \prod_{\veci: \veci\cdot \mathsf{Bin}(\mathbf{k}[1],...,\mathbf{k}[m/2]) = 1} z_{\veci} = v_{(1,\mathbf{k}[1])}
...v_{(m/2,\mathbf{k}[m/2])}, && \forall \mathbf{k} \in \{1,2,3\}^{m/2},\\
& s_{a,b}\, z_{\veci}\, s_{a,b} 
= z_{\sigma_{a,b}(\veci)}, 
&& \forall \veci\in\{0,1\}^m,\; a,b\in[m],\; a\leq b,\\
& x_a\, z_{\veci}\, x_a = z_{\veci\oplus\veca}, 
&& \forall \veci\in\{0,1\}^m,\; a\in[m],\\
& c_{a,b}\, z_{\veci}\, c_{a,b} 
= z_{\veci\oplus(\veci[a]\mathbf{e}_b)}, 
&& \forall a,b\in[m],\; a\neq b, \\ 
&g \left(v_{(1,k[1])}...v_{(m/2,k[m/2])}\right) g  =v_{(1,\alpha_g(k[1]),))}...v_{(m/2,\alpha_g(k[m/2]))}   &&\forall  g \in \mathrm{CL},\ \forall \mathbf{k} \in \{1,2,3\}^{m/2}.
\end{aligned}
\end{equation}
Here, $\mathsf{Bin}:\mathbb{N}^{m/2}\rightarrow \mathbb{Z}_2^m$ denotes the map sending each integer to its associated two-bit binary string, applied componentwise. In particular, $\mathsf{Bin}(1)=01,\ \mathsf{Bin}(2)=10,\ \mathsf{Bin}(3)=11$. The map $\alpha$ denotes the standard relabeling of Pauli observables induced by Clifford conjugation on the variables of the $m/2$-instance parallel Mermin--Peres constraint system, corresponding to the action $\alpha$ in the semidirect product structure $\mathcal{P}_m \rtimes_{\alpha} \mathrm{AGL}(m,2)$. The full relabelling of generators induced by the Clifford generators in $\mathrm{CL}$ within the composed constraint system is illustrated in \Cref{Fig:ConstraintsComposition}.

For convenience, we express the constraint systems using group-presentation notation. We write the parallel Mermin--Peres constraint system as
\begin{equation*}
\langle \{a_{j,\veci[j]}\}_{j\in[m],\veci\in\{0,1,\cdots,5\}^m}\cup\{v_{j,k}\}_{j\in[m],k\in[9]} \mid R_{\mathrm{MP}}\rangle.
\end{equation*}
where $R_{\mathrm{MP}}$ is the set of relators defined in \Cref{def:MP_parallel}. Similarly, let $R_{\mathbb{Z}_2^m}$ denote the relators defining the $\mathbb Z_2^m$ group.

Finally, combining these relator sets with the merge relations $R_{\mathrm{merge}}$ and the global constraint relator from \Cref{def:OVCS_Z2}, which we denote by $R_{\mathrm{global}}$, we obtain the composed constraint system
\begin{equation*}
\langle \{z_{\veci}\}_{\veci}\cup\{a_{j,\veci[j]}\}_{j\in[m],\veci\in\{0,1,\cdots,5\}^m}\cup\{v_{j,k}\}_{j\in[m],k\in[9]}\cup \mathrm{CL} \mid R_{\mathrm{MP}}\cup R_{\mathbb{Z}_2^m} \cup R_{\mathrm{merge}}\cup R_{global}\rangle.
\end{equation*}
We refer to this as the \emph{Mermin--Peres extended faithful $\mathbb{Z}_2^m$ operator-valued constraint system}. The composed structure realizes an amalgamated product between the faithful $\mathbb{Z}_2^m$ constraint system and the parallel Mermin--Peres system, together with the induced Clifford action, yielding a presentation isomorphic to $\bigl(\mathbb Z_2^m *_{\mathbb Z_2^m} \mathcal{P}_m \bigr)\rtimes_{(\tau,\alpha)} \mathrm{AGL}(m,2)$.

\begin{figure}[htb]
\centering
\begin{tikzpicture}[>=Stealth, scale=2]
\tikzset{
    dot/.style={circle, fill=black, inner sep=1.2pt},
    path_style/.style={->, thick},
    label_style/.style={font=\small, inner sep=2pt, fill=white, fill opacity=0.8, text opacity=1, rounded corners=1pt},
    divider/.style={ultra thick, gray, dash pattern=on 8pt off 4pt},
    region_label/.style={font=\large\bfseries, gray!70}
}

\draw[divider] (-0.4, 1.7) -- (-0.4, -0.5);

\draw[divider] (-0.4, 1.2) -- (4.9, 1.2);

\node[dot] (zi)      at (0, 1.6)    {}; \node[above=2pt, label_style] at (zi) {$z_{\mathbf{i}}$};
\node[dot] (zsigma)  at (1.5, 1.6)  {}; \node[above=2pt, label_style] at (zsigma) {$z_{\sigma_{a,b(\mathbf{i})}}$};
\node[dot] (ziplus)  at (3, 1.6)    {}; \node[above=2pt, label_style] at (ziplus) {$z_{\mathbf{i}\oplus\mathbf{e}_a}$};
\node[dot] (zieb)    at (4.5, 1.6)  {}; \node[above=2pt, label_style] at (zieb) {$z_{\mathbf{i}\oplus \mathbf{i}[a]\mathbf{e}_b}$};

\node[dot] (cnot)    at (-0.8, 1.2) {}; \node[left=2pt, label_style]  at (cnot) {$\mathsf{c}_{a,b}$};

\node[dot] (v1)      at (0.75, 0.8) {}; \node[below=2pt, label_style] at (v1) {$v_{j,1}$};
\node[dot] (v2)      at (2.25, 0.8) {}; \node[below=2pt, label_style] at (v2) {$v_{j,2}$};
\node[dot] (v3)      at (3.75, 0.8) {}; \node[below=2pt, label_style] at (v3) {$v_{j,3}$};

\node[dot] (X)       at (-0.8, 0.4) {}; \node[left=2pt, label_style]  at (X) {$x_a$};

\node[dot] (aj0)     at (2.25, 0.0) {}; \node[right=2pt, label_style] at (aj0) {$a_{j,0}$};

\node[dot] (swap)    at (-0.8, -0.4) {}; \node[left=2pt, label_style]  at (swap) {$\mathsf{s}_{a,b}$};

\begin{scope}
    \foreach \rone in {zi, zsigma, ziplus, zieb} {
        \foreach \rtwo in {v1, v2, v3} {
            \draw [gray, thin] (\rone) -- (\rtwo);
        }
    }
    \foreach \rtwo in {v1, v2, v3} {
        \draw [gray, thin] (\rtwo) -- (aj0);
    }
\end{scope}


\draw [path_style, black] (zi) .. controls ($(zi)+(-0.4, -0.1)$) and ($(cnot)+(0.2, 0.2)$) .. (cnot);
\draw [path_style, black] (cnot) .. controls ($(cnot)+(0.6, 0.6)$) and ($(zieb)+(-0.6, 0.2)$) .. (zieb);

\draw [path_style, cmykBlue] (zi) .. controls ($(zi)+(-0.6, -0.6)$ ) and ($(X)+(0.3, 0.3)$) .. (X);
\draw [path_style, cmykBlue] (X) .. controls ($(X)+(1.2, -0.2)$) and ($(ziplus)+(-0.3, -0.8)$) .. (ziplus);

\draw [path_style, cmykRed] (zi) .. controls ($(zi)+(-1.0, -1.2)$) and ($(swap)+(0.3, 0.3)$) .. (swap);
\draw [path_style, cmykRed] (swap) .. controls ($(swap)+(1.2, -0.3)$) and ($(zsigma)+(-0.3, -1.2)$) .. (zsigma);
\end{tikzpicture}
\caption{The composition of the tests involves overlapping variables.  Vertices represent variables, and the vertical dashed line separates unitary-transformation variables (i.e.\ $\mathsf{c}_{a,b},x_a,\mathsf{s}_{a,b}$) from observable variables that are measured. Undirected edges connect compatible variables that appear together in a constraint and are therefore jointly measurable. All the variables on the right side are abstractly presented, and we separate the ones native to the $\mathbb{Z}_2^m$ operator-valued constraint system ($z_{\mathbf{i}},z_{\sigma_{a,b(\mathbf{i})}},z_{\mathbf{i}\oplus\mathsf{e}_a},z_{\mathbf{i}\oplus\veci[a]\mathbf{e}_b}$) from those native to the (parallel) Mermin--Peres constraint system ($v_{j,1},v_{j,2},v_{j,3},a_{j,0}$). Directed paths indicate how $z_{\veci}$-type variables are mapped to each other: the intermediate vertex represents the unitary acting on the observable associated with the initial vertex, and the final vertex the transformed observable. We omit the unitary transformation over $v_{j,i}$'s (and hence $a_{j,l}$'s) for clarity, and present it separately in \Cref{Fig:CliffordtoMermin}. The notation is as used in \Cref{def:MP_parallel,def:OVCS_Z2}.
}
\label{Fig:ConstraintsComposition}
\end{figure}

\begin{lemma}
\label{lem:Extended_Z2_faithful_rep}
Any quantum operator-valued solution to the Mermin--Peres extended faithful $\mathbb Z_2^m$ operator-valued constraint system is equivalent to the solution in which the variables $z_{\veci}$ are represented by the diagonal matrices from \Cref{lem:Z2_faithful_rep}, the variables $v_{j,k}$ are represented by the two-qubit Pauli observables from Eq.~\eqref{eq:MPqubits}, and the variables $a_{j,\veci[j]}$ are determined from the $v_{j,k}$ according to \Cref{def:MP_parallel}.
\end{lemma}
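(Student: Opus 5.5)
The plan is to prove rigidity in two stages. First we pin down the Mermin--Peres sector from its own relators. Then we use the merge relators to transport that rigid Pauli structure onto the hypercube sector, so that the argument of \Cref{lem:Z2_faithful_rep} applies verbatim.

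\textbf{Stage 1: the Pauli sector.} Restrict an arbitrary operator-valued solution to the variables $\{v_{j,k}\}$ and $\{a_{j,\veci[j]}\}$; these satisfy the relators $R_{\mathrm{MP}}$.
\begin{itemize}
\item By the standard rigidity of the Mermin--Peres constraint system (the solution group analysis of \cite{cleve2014characterization,coladangelo2017robust}), each copy $j$ is unitarily equivalent to the assignment \eqref{eq:MPqubits} tensored with an identity on a junk space.
\item Distinct copies commute, so the $m/2$ copies jointly generate a representation of the Pauli group $\mathcal{P}_m$ on $m$ qubits.
\item This gives a unitary $W$ and a decomposition $\mathcal{H}\cong(\mathbb{C}^2)^{\otimes m}\otimes\mathcal{K}$ under which every $v_{j,k}$ becomes the prescribed Pauli tensored with $I_{\mathcal K}$.
\item In particular, each product $v_{(1,\veck[1])}\cdots v_{(m/2,\veck[m/2])}$ with $\veck\in\{1,2,3\}^{m/2}$ becomes the diagonal Pauli $Z(\mathsf{Bin}(\veck))\otimes I$. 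This is because $v_1,v_2,v_3$ are precisely $Z\otimes I$, $I\otimes Z$, $Z\otimes Z$, and $\mathsf{Bin}$ maps $1,2,3$ to $01,10,11$, so every nonzero $\vecs\in\mathbb{Z}_2^m$ is hit.
\item The $a_{j,\veci[j]}$ are then fixed by their defining products in \Cref{def:MP_parallel}.
\end{itemize}

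\textbf{Stage 2: the Clifford generators.} The last family of merge relators says that conjugation by each $g\in\mathrm{CL}$ permutes these Pauli products according to $\alpha_g$, which is the Clifford action of $\mathsf{Swap}_{a,b}$, $\mathsf{CNOT}_{a,b}$ or $X_a$.
\begin{itemize}
\item Take the literal Clifford $C_g$ acting on $(\mathbb{C}^2)^{\otimes m}$. Then $C_g^\dagger g$ commutes with every represented Pauli.
\item The Paulis span the full matrix algebra on the first factor, so $g=C_g\otimes K_g$ for some unitary $K_g$ on $\mathcal K$.
\item Since $g^2=I$ is implicit in the conjugation relators of the form $gzg$, it follows that $K_g^2=I$.
\end{itemize}

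\textbf{Stage 3: reduction to \Cref{lem:Z2_faithful_rep}.}
\begin{itemize}
\item Substituting Stage 1 into the first merge relator gives exactly the parity--Pauli identifications $\prod_{\veci\cdot\vecs=1}z_{\veci}=Z(\vecs)\otimes I_{\mathcal K}$ for all $\vecs\neq 0$.
\item Together with $R_{\mathbb{Z}_2^m}$ and $R_{\mathrm{global}}$, this reproduces every hypothesis of \Cref{lem:Z2_faithful_rep} except that the symmetries act as $C_g\otimes K_g$ rather than $C_g$.
\item Following that proof, the $z_{\veci}$ are simultaneously diagonalizable and commute with all $Z(\vecs)\otimes I$. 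Hence each $z_{\veci}$ is block diagonal, $z_{\veci}=\sum_{\veck}\ketbra{\veck}{\veck}\otimes D_{\veci,\veck}$, with commuting involutions $D_{\veci,\veck}$ on $\mathcal K$.
\item Jointly diagonalize everything on $\mathcal K$, together with the $K_g$ if possible. The argument of \Cref{lem:Z2_faithful_rep} then runs fiberwise in each joint eigenspace, giving $f^1=\dots=f^m=-f^0$ in each fiber.
\item The parity--Pauli relation fixes the sign of each fiber: it pins the product of $z_{\veci}$ over each hyperplane, which excludes the other global sign $z_{\veci}\mapsto -z_{\veci}$ up to the permutation freedom already allowed in the lemma.
\item Conclusion: $z_{\veci}=\mathrm{diag}(\ldots,-1_{\veci},\ldots)\otimes I_{\mathcal K}$, which is the claimed solution.
\end{itemize}

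\textbf{Main obstacle.} I expect Stage 3 to be the hardest step, specifically controlling the junk operators $K_g$. The relators $g z_{\veci} g = z_{\tau_g(\veci)}$ only constrain $K_g D_{\veci,\veck} K_g$, and the $K_g$ need not commute with the $D$'s a priori.

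My first attempt would be to show directly from the relators that $D_{\veci,\veck}$ is a scalar multiple of $I_{\mathcal K}$:
\begin{itemize}
\item Each single $D$ is constrained only through products over hyperplanes equal to $I$, so fiber-by-fiber multiplicities would have to be matched by a counting argument, exactly as the spectrum argument in \Cref{lem:Z2_faithful_rep} but applied to multiplicities.
\item The parity constraints give, for each $\veck$, $\prod_{\veci\cdot\vecs=1}D_{\veci,\veck}=(-1)^{\vecs\cdot\veck}I$.
\item If the scalar claim fails, I would instead conclude the weaker statement of equivalence up to a junk factor on which the $K_g$ act by a representation commuting with everything. This is the standard "up to local isometry" meaning of "equivalent" in this paper, and it suffices for the later depth arguments.
\end{itemize}
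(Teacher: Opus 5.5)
Your route is the paper's: rigidity of the parallel Mermin--Peres sector, then the merge relators, then a reduction to \Cref{lem:Z2_faithful_rep}. Your Stage~2 is more careful than the paper, which simply declares that it identifies the generators in $\mathrm{CL}$ with the literal Clifford unitaries. But Stage~3 contains two incorrect claims and one step that cannot be completed as you have set it up.

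First, $\mathsf{Bin}(\{1,2,3\}^{m/2})=\{01,10,11\}^{m/2}$ is not all of $\mathbb Z_2^m\setminus\{0\}$. For $m\ge 4$ it misses every $\vecs$ with a $00$ block, so the merge relators give only $3^{m/2}$ of the $2^m-1$ parity--Pauli identifications. The paper notices this and recovers the rest by Clifford conjugation. An easier fix: by the involution and commutation relations, $\vecs\mapsto\prod_{\veci\cdot\vecs=1}z_{\veci}$ is a homomorphism, and $\{01,10,11\}^{m/2}$ spans $\mathbb Z_2^m$.

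Second, the parity--Pauli relators cannot fix the sign. Each is a product of $2^{m-1}$ generators and the global relator is a product of $2^m$, both even numbers. So $z_{\veci}\mapsto -z_{\veci}$ preserves every relator and gives a solution that is not unitarily equivalent to the canonical one. The paper's own proof of \Cref{lem:Z2_faithful_rep} ends ``up to permutation and a global sign''; your conclusion must carry the same caveat rather than claim to exclude it.

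The serious gap is the $K_g$ issue you flag. It is not a technicality: in the generality of your Stage~2, both your scalar claim and your fallback are false. Here is a counterexample. Take $m\ge 4$ and let $\mathcal K$ have orthonormal basis $\{\ket{L}\}$ indexed by the $2$-dimensional subspaces $L\le\mathbb Z_2^m$. Let each $g\in\mathrm{CL}$ with affine map $\tau_g(\vecu)=A_g\vecu\oplus\vecb_g$ act as $C_g\otimes K_g$, where $K_g\ket{L}=\ket{A_gL}$; this is a genuine representation of $\mathrm{AGL}(m,2)$. Keep the $v_{j,k}$ as Paulis tensored with $I_{\mathcal K}$, and set
\[
z_{\veci}=\sum_{\veck}\ketbra{\veck}{\veck}\otimes D_{\veci\oplus\veck},\qquad D_{\vecu}\ket{L}=(-1)^{[\vecu\in L\setminus\{0\}]}\ket{L}.
\]
Now check the relators.
\begin{itemize}
\item Since $L\setminus\{0\}$ consists of three vectors summing to $0$, the number of $\veci\in\veck\oplus(L\setminus\{0\})$ with $\veci\cdot\vecs=1$ has parity $\vecs\cdot\veck$. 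Hence every parity--Pauli relator holds.
\item The global relator holds, since there are three sign flips per $\ket{L}$.
\item $K_gD_{\vecu}K_g^\dagger=D_{A_g\vecu}$ gives $gz_{\veci}g=z_{\tau_g(\veci)}$.
\item The relators on the $v$'s hold because the first tensor factor is the literal $C_g$.
\end{itemize}
Yet for $\veci\neq\vecj$, the $(-1)$-eigenspace of $z_{\veci}z_{\vecj}$ has relative dimension different from $2^{1-m}$. For anything of the form $W\bigl((I-2\ketbra{\veci}{\veci})\otimes B\bigr)W^\dagger$ it is exactly $2^{1-m}$. So nothing in the relators forces $K_g$ to be harmless.

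What closes the proof is the paper's identification of the $\mathrm{CL}$ generators with the literal Cliffords, i.e.\ $K_g=I$; you must adopt it as part of the system. With it, your fiberwise idea needs no joint diagonalization. The relators become $D_{\veci,\veck}=D_{\tau_g(\veci),\tau_g(\veck)}$. Since $\mathrm{AGL}(m,2)$ is $2$-transitive on $\mathbb Z_2^m$, this gives $D_{\veci,\veci}=A$ and $D_{\veci,\veck}=B$ for $\veci\neq\veck$. Here $A,B$ are commuting involutions, and the global relator gives $AB=-I$. Hence $z_{\veci}=(I-2\ketbra{\veci}{\veci})\otimes B$. This is the claimed solution up to the sign $B$, which is a scalar $\pm1$ when there is no junk space, as in \Cref{def:OVCS_Z2}.
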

\begin{proof}
First, consider the parallel Mermin--Peres constraint system defined by the parallel anti-commutation relations in Eq.~\eqref{eq:parallelanti}. By the rigidity results for the parallel repeated Mermin--Peres game \cite{coudron2016parallel}, the tensor product of the optimal single-instance solutions is the unique quantum operator-valued solution to this system. In particular, the variables $a_{j,\veci[j]}$ are realized by the corresponding Pauli observables.

We now extend the system by adding the relators of the composed constraint system. Since adding relators can only restrict the set of operator-valued solutions, every solution of the extended system must satisfy the original relations as well. We use this monotonicity property later in our analysis of the composed system.

Identifying the corresponding variables $v_{j,k}$ from \Cref{def:MP_parallel} with a subset of the Pauli strings $Z(\vecs)$, the parity-Pauli identification constraints from \Cref{def:OVCS_Z2} can be written as
\begin{equation}\label{eq:MPAbstract}
\prod_{\veci: \veci\cdot \vecs = 1} z_{\veci} = Z(\vecs),\quad\quad  \forall \vecs\in \{01,10,11\}^{m/2}.
\end{equation}

These relations do not yet recover the full family of parity--Pauli identification constraints, since \Cref{def:OVCS_Z2} requires one such relation for every nonzero $\vecs\in\{0,1\}^m$. However, once the Clifford relators are adjoined, whose action on the parallel Mermin--Peres variables is fixed through the conjugation relations, the remaining Pauli strings are generated by Clifford conjugation. For this reason, we henceforth identify the Clifford generators with their corresponding Clifford unitaries.  

In particular, the conjugation identity
\begin{equation*}
    \mathsf{CNOT}(Z\otimes Z)\mathsf{CNOT} = I\otimes Z,
\end{equation*}
allows existing Pauli-$Z$ strings to be transformed into new ones. By composing such Clifford conjugations, every observable $Z(\vecs)$ appearing in \Cref{def:OVCS_Z2} can be generated, thereby recovering the full family of parity--Pauli identification constraints.

Consequently, after fixing the $v$-variables, the remaining relations on the generators $\{z_{\veci}\}_{\veci}$ reduce exactly to the constraint system of \Cref{def:OVCS_Z2}. Therefore, following the proof of \Cref{lem:Z2_faithful_rep}, the unique solution is
$z_{\veci} = \operatorname{diag}(1,\dots,1,\underbrace{-1}_{i\text{-th position}},1,\dots,1)$ for all $\veci \in \{0,1\}^m$. This uniquely determines the entire solution for the composed constraint system.
\end{proof}

\subsection{A Quantum Depth Hierarchy}\label{subsec:final_depth}
Finally, we show that the composed constraint system obtained from the parallel Mermin--Peres game together with the $\textsf{HypGame}$ described in \Cref{subsec:composing} gives rise to a two-round interactive protocol (\Cref{def:Classical_Int}) with a computationally efficient classical verifier. Moreover, the protocol robustly enforces, up to the natural equivalences, the unique operator-valued solution of the composed constraint system.

Therefore, this protocol yields both a separation between classical and quantum computation and a depth-sensitive separation within quantum computation itself. Classical bounded fan-in circuits cannot succeed perfectly, as they are unable to reproduce the required quantum correlations, while quantum circuits below the required depth threshold cannot realize the enforced strategy exactly, since they cannot prepare the necessary observables. In contrast, there exists a quantum strategy of sufficient depth over a finite universal gate set that succeeds with probability~$1$.

\begin{algorithm}[htb]
\floatname{algorithm}{Protocol}
\caption{\textsf{Interactive HypGame} with a Classical Verifier}\label{def:Classical_Int}
\begin{algorithmic}[1]
\Require Parameters $m=2^d,k\in \mathbb{N}$. 

\State Let $t\gets 0$.
\For{$i=1:k$}

\vspace{0.2mm}
\Statex {\bfseries Round 1: State commitment.}
\vspace{0.2mm}

\State Verifier performs the first round of the Clifford rotated parallel EPR self tests (\Cref{def:Cliff_EPR_self}).

\vspace{0.2mm}
\Statex {\bfseries Round 2: Clifford-basis+Boolean Hypercube test.}
\vspace{0.2mm}

\State Verifier updates the question pair $(q_A^{i},q_B^i)$, based on $\mathcal{C}_c$ and $y$, and sends to prover 
\Statex \hspace{\algorithmicindent}$s_{(q_A,q_B)}=0^{l-1} \times \underbrace{q_A^{i}(y)}_{l} \times 0^{j-l-1}  \times \underbrace{q_B^i(y)}_{j}\times 0^{n-j}$.

\State Prover sends back its answer $s_{(a_A,a_B)}=\ \{0,1\}^{l-1} \times a_A^i \times \{0,1\}^{j-l-1}\times a_B^i \times \{0,1\}^{n-j}$ .
\State Verifier checks correctness: $t\gets t+1$ if $s_{(a_A,a_B)}$ is a valid answer. 

\EndFor

\State \Return the fraction of accepted rounds $t/k$.
\end{algorithmic}
\end{algorithm}

\begin{theorem}[Interactive problem with classical messages]
\label{thm:hierarchy-classical-verifier}
Fix an integer $d>d_0$ and consider circuits of fan-in $2$. For the interactive 2-round problem $\mathcal{IR}_{m}^n$ defined in \Cref{def:Classical_Int}, we have:

\begin{itemize}
\item \textbf{Perfect completeness (upper bound).}
There exists a circuit family $\{C_{n}\}_{n\in\mathbb{N}}$ of depth $c\cdot d$ and bounded fan-in that solves $\mathcal{IR}_{m}^n$ with success probability $1$ on every promised input.

\item \textbf{Soundness against shallower circuits (lower bound).}
Every quantum circuit family of depth $d'\leq d-1$ can succeed with probability at most $1-\varepsilon(d)$, irrespective of the size of the circuit, the underlying gate set, and the number of ancillary qubits or access to quantum advice. 

\item \textbf{Soundness against classical shallow circuits.} 
  No classical bounded fan-in circuit family of constant-depth $(\mathsf{NC}^0)$ can succeed with probability exceeding $17/18$. 
\end{itemize}
Moreover, the verifier transcript can be computed in $\NC^0[\oplus]$. The constant $c = 2 + 2C_4 + C_3$, where $C_4$ and $C_3$ denote the depths for the realization of 4- and 3-qubit controlled Toffoli constructions.  Simultaneously, the soundness gap satisfies $\varepsilon(d)\geq 2^{-O(2^d)}$. Finally, $d_0\leq 12$ defines the minimal depth of the quantum circuits in the first interaction. 
\end{theorem}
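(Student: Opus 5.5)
The plan has three parts: completeness, quantum soundness, and classical soundness.

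\textbf{Completeness.} Set $m=2^d$ and combine the two honest strategies already constructed. In Round~1 the circuit runs the depth-$\le 12$ routing, Clifford-application and entanglement-swapping procedure from the completeness proof of \Cref{lemma:cliff_test_qnc}. This leaves, on the registers $l$ and $j$, Clifford-rotated EPR pairs up to a Pauli frame; that frame is fixed by $y$, and the verifier absorbs it in $\NC^0[\oplus]$. In Round~2 the circuit answers each query on the designated registers. A Mermin--Peres query is answered by a constant-depth Pauli measurement. A hypercube query is answered by the compute--phase--uncompute circuit of \Cref{fig:full_framework}, using the constraint-compression trick from the proof of \Cref{lemma:cliff_verif} for the global and parity--Pauli questions. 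The only non-Clifford step is the Hadamard test around $\mathsf{C}^{m-1}(\mathsf{Z})$, i.e.\ a $\mathsf{C}^{m}(\mathsf{X})$. I would synthesize it recursively from $\mathsf{C}^4$ and $\mathsf{C}^3$ blocks, using \Cref{thm:upperToffoli}. Each level of the recursion contributes $2C_4+C_3$ layers, and there are $O(\log m)=O(d)$ levels; together with the two compute/uncompute layers this gives depth $(2+2C_4+C_3)d=c\cdot d$. Broadcasting the question labels, as in the proof of \Cref{thm:QPlain}, makes the family uniform.

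\textbf{Quantum soundness.} Take a depth-$d'\le d-1$ circuit succeeding with probability $1-\varepsilon$.
\begin{enumerate}
\item Use the light-cone argument of \Cref{lemma:lightcone} (union bound over the three random registers $l,h,j$) to condition on the event that Alice, Bob and Charlie have disjoint light cones. This costs only $O(1/n)$ in probability.
\item Apply \Cref{lemma:cliff_test_qnc}, hence \Cref{lem:clifford_basis_rigidity}, to the Mermin--Peres sub-test. It gives a local isometry under which the shared state is $O(m^2\sqrt{\varepsilon})$-close to the ideal Clifford-rotated EPR state and the Pauli observables are close to the ideal ones.
\item Use \Cref{lemma:sta_chang_D} to move every state-dependent distance arising in the hypercube sub-test onto this fixed ideal reference state.
\item Run the stability argument of \Cref{lem:RQP}: canonical forms (\Cref{lemma:can_form}) plus \Cref{lemma:Vidick}. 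This shows Alice's $z_{\veci}$ observables are $2^{O(m)}\mathrm{poly}(\varepsilon)$-close, up to isometry, to the unique solution of \Cref{lem:Extended_Z2_faithful_rep}, namely the $\mathsf{C}^{m-1}(\mathsf{Z})$-type diagonal observables.
\item Choose $\varepsilon(d)=2^{-O(m)}=2^{-O(2^d)}$ small enough that this closeness falls below the threshold $2\sqrt2\cdot 2^{-m/2}$ of \Cref{thm:robustdepth}. That theorem then forces depth at least $\lfloor\log m\rfloor=d$, a contradiction.
\end{enumerate}
Quantum advice is ruled out by the same monogamy argument as in \Cref{thm:QPlain}.

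\textbf{Classical soundness.} An $\NC^0$ circuit gives, after the same light-cone decoupling, a classical (shared-randomness) strategy for the parallel Mermin--Peres sub-test. The Mermin--Peres system has no $\pm1$ assignment, so a deterministic strategy must violate some constraint-consistency check. Assume the verifier picks the Mermin--Peres sub-test with constant probability, and within it one of $6$ constraints paired with one of $3$ variables. Then the failure probability is at least $1/18$, giving success at most $17/18$; I would pin down this constant by choosing the sub-test weights to match. This bound is independent of depth because it is purely non-contextuality.

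The main obstacle is step 3 of the quantum soundness argument. The Clifford relations are no longer exact, because Charlie implements them only approximately. The error terms, amplified by $2^{O(m)}$ through the stability lemma, must still close under \Cref{thm:robustdepth}. My first attempt would be to bound everything crudely by $2^{O(m)}\mathrm{poly}(\varepsilon)$ and accept a doubly exponential $\varepsilon(d)$.
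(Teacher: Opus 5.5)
Your completeness construction and quantum-soundness chain follow the paper's route. The chain is: light-cone decoupling, \Cref{lem:clifford_basis_rigidity}, transfer to the ideal reference state via \Cref{lemma:sta_chang_D}, the canonical-form/stability argument packaged in \Cref{lemma:rob_sec5}, then \Cref{thm:robustdepth} with $\varepsilon(d)=2^{-O(2^d)}$. Two small fixes there:
\begin{itemize}
\item In \Cref{thm:upperToffoli} the `$2$' in $c=2+2C_4+C_3$ is paid at every recursion level (the Pauli-$X$ layers), not once for compute/uncompute. The bound $c\cdot d$ still suffices: $m=2^d$ gives $\lceil\log_2(m+4)\rceil-3=d-2$ levels, which leaves slack for the base case $C_4$ and the constant wrapper.
\item Steps 2--3 tacitly use a point the paper states explicitly. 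The state certified by the Mermin--Peres sub-test is the same state on which the hypercube questions are asked, because the sub-test type is revealed only in Round~2.
\end{itemize}

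The genuine gap is the classical constant. Failing one of the $18$ constraint--variable pairs gives failure $1/18$ only \emph{conditional} on the Mermin--Peres sub-test. If that sub-test is run with probability $p<1$, you get success $\le 1-p/18$, which is strictly above $17/18$. No choice of weights repairs this short of $p=1$, and that removes the hypercube test altogether. The hypercube sub-test cannot make up the difference either. Its scalar version is indeed unsatisfiable: the affine relations force all $z_{\veci}$ to be equal, contradicting $\prod_{\veci} z_{\veci}=-1$. But a deterministic strategy need only fail one of exponentially many question pairs there, so its contribution is about $2^{-m}$. The paper instead uses the classical value $8/9$ of a single Mermin--Peres instance isolated from the parallel copies. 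It combines this with the observation that the more than $6^{m/2}$ Mermin--Peres constraints dominate the $2^{m+1}$ $z$-only constraints, so the Mermin--Peres questions carry most of the weight. With the $8/9$ bound, $1-p/9\le 17/18$ already once $p\ge 1/2$. You need that sharper per-test bound together with a quantitative lower bound on $p$; your $18$-pair count alone cannot reach $17/18$.
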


\paragraph{Proof of perfect completeness.} For each input size $n$, let $C_n$ denote the honest classically controlled quantum circuit achieving perfect completeness. The circuit receives, in each round of the protocol, the verifier’s classical messages specifying the corresponding labels.

In the first round, the circuit uses the same construction as given in \Cref{lemma:cliff_test_qnc}. In particular, it performs entanglement swapping and gate teleportation in constant depth (modulo the quantum error correction in teleportation) in order to prepare an $m$-fold tensor of EPR pairs between the two designated registers, in the Clifford-rotated basis specified by the verifier.

In the second round, the verifier provides queries derived from the Mermin--Peres extended faithful $\mathbb Z_2^n$ operator-valued constraint system defined in \Cref{subsec:composing}. By construction, these queries locally match the structure of the questions appearing in both \Cref{def:Cliff_EPR_self,def:QI_LBCS}. This follows from the fact that the Mermin--Peres extended faithful $\mathbb Z_2^n$ operator-valued constraint system is obtained by composing the two sub-constraint systems.

Consequently, the circuit can implement the same optimal strategy used in \Cref{thm:QPlain} and \Cref{lemma:cliff_test_qnc}. Concretely, for the $v$-type Mermin--Peres variables, the circuit performs the corresponding Pauli measurements, while for $z$-type $\mathbb{Z}_2^m$ variables, it implements the corresponding multi-controlled $\mathsf{C}^{m-1}(\mathsf{Z})$ observables, as described in \Cref{fig:full_framework}. All these operations can be realized in constant depth. As these observables determine the operator-valued solution to the composed constraint system, as shown in \Cref{lem:Extended_Z2_faithful_rep}, the uniform family of circuits $C_n$ shall output valid answers to the nonlocal game questions with probability $1$. This establishes the perfect completeness of the protocol.

Moreover, the update of the questions, and the computation of the Pauli correction parities, act on bit strings of length $m$. These operations correspond to affine transformations over $\mathbb{F}_2^m$, i.e., elements of $\mathrm{AGL}(m,2)$. Such transformations can be implemented by Boolean circuits of depth $O(\log m)$. Since $m$ is a fixed constant in our setting, this yields an overall constant depth with respect to the input size $n$ and the verifier transcript can be computed in $\NC^0[\oplus]$.
\hfill\qedsymbol

Operationally, the first round prepares a distributed entangled resource state through entanglement swapping between non-communicating regions of the circuit. The computational content underlying the depth hierarchy is then concentrated in the second round, where the verifier certifies the prepared state while enforcing consistency with the intended global algebraic structure. In this way, the interaction is not used to hide computation in a complex verifier, but rather to isolate quantum depth as the source of the required nonlocal correlations.

\paragraph{Proof of soundness (against quantum provers).} Soundness against insufficient-depth $\QNC^0$ circuits follows by first observing that the interactive protocol implements the composed constraint system defined in \Cref{subsec:composing}, and therefore rigidly enforces its unique operator-valued solution.

We start by considering that the interactive problem of \Cref{def:Classical_Int} contains, as a subprotocol, the Clifford-rotated EPR self-test of \Cref{def:Cliff_EPR_self}. Hence, any circuit that succeeds with probability $1$ in the full protocol must, in particular, succeed perfectly in this self-testing component. Although not every question in the full protocol contributes directly to the certification of the first-round state preparation, this does not weaken the argument: the prover only learns which type of second-round test is being performed after the first-round state preparation has already been completed. Therefore, the prover cannot condition its first-round behavior on whether the subsequent test is “only” an EPR self-test or part of the larger constraint-system game. In particular, no contextual cheating strategy is available at this stage, and the rigidity statement of \Cref{lemma:cliff_test_qnc} continues to apply.

Moreover, the resulting state is equivalent to the resource state required in \Cref{thm:QPlain}. Hence, once this state is certified (e.g.\ by sufficiently many repetitions of the protocol), the soundness argument reduces to that of \Cref{lemma:cliff_verif} for the remaining constraints of the Mermin--Peres extended faithful $\mathbb Z_2^m$ operator-valued constraint system. Indeed, on the $z$-variables, the enforced operator-valued solution coincides with the faithful solution of the constraint system in \Cref{def:OVCS_Z2} underlying \Cref{def:QI_LBCS}. Consequently, \Cref{def:Classical_Int} enforces the target composed constraint system and imposes the unique operator-valued solution characterized in \Cref{lem:Extended_Z2_faithful_rep}.

We furthermore establish a robust version of \Cref{def:Classical_Int} in \Cref{lemma:rob_sec5}; see \Cref{subsec:robust5} for the full analysis. Specifically, we show that any strategy achieving success probability at least $1-\varepsilon$ must implement observables that are $2^{O(m)}\mathsf{poly}(\varepsilon)$-close, up to local isometries, to the unique operator-valued solution characterized in \Cref{lem:Extended_Z2_faithful_rep}. Thus, near-perfect success certifies an approximate realization of the same faithful representation enforced in the perfect setting.

Combining this robust rigidity statement with the approximate unitary-synthesis lower bounds established earlier, we obtain the same depth lower-bound mechanism as in \Cref{lemma:cliff_verif}. In particular, any strategy achieving sufficiently high success probability must approximately implement the family of multi-controlled phase (equivalently, generalized Toffoli) observables enforced by the protocol. Since realizing such observables within the required precision demands circuit depth at least $d \geq \lfloor \log m \rfloor$, it follows that no family of bounded fan-in $\QNC^0$ circuits of depth $o(\log m)$ can achieve success larger than $1-2^{-\Omega(m)}$. This completes the soundness proof in the robust setting.
\hfill\qedsymbol

\paragraph{Proof of soundness (against classical provers).} For soundness against classical provers, we first observe that the nonlocal game defined in the second round of \Cref{def:Classical_Int} admits no perfect classical strategy. Indeed, the protocol embeds instances of the Mermin--Peres nonlocal game, whose successful execution requires correlations corresponding to Pauli-string measurements, which cannot be perfectly reproduced classically \cite{bravyi2020quantum}.

Moreover, the first round only allows the prover to generate classical information independently of the second-round questions. Since no entanglement is available, the resulting strategy reduces to a classical strategy for the underlying nonlocal games without communication. In particular, the prover cannot adapt its answers to the specific constraint tested in the second round. Perfect success would therefore imply a perfect classical strategy for the Mermin--Peres game without communication, which is impossible \cite{peres1991two,mermin1990simple}. Achieving the required global coordination instead requires circuit depth at least $\Omega(\log n)$.

To obtain an explicit bound, we isolate a single Mermin--Peres constraint within the parallel repeated instances appearing in the composed constraint system. It is well known that no classical strategy wins this game with probability exceeding $8/9$ \cite{bravyi2020quantum}. Furthermore, the composed constraint system contains $2^{m+1}$ constraints involving only $z$-type variables, but more than $6^{m/2}$ constraints arising from the parallel Mermin--Peres component. Since the latter dominate asymptotically, the overall success probability of any classical strategy is governed by the Mermin--Peres constraints and therefore cannot asymptotically exceed the corresponding classical value.

This proves that classical provers cannot achieve perfect success, and in particular cannot do so using $o(\log n)$-depth circuits. \hfill\qedsymbol

\section*{Acknowledgements}

Mdo thanks Angelos Bampounis and Matthew Coudron for useful feedback. SS acknowledges support from the Royal Society through a University Research Fellowship. XZ acknowledges Hong Kong Research Grant Council (RGC) and the Chancellor's Research Fellowship scheme provided by the University of Technology Sydney, and thanks Honghao Fu for useful feedback.

\printbibliography

\appendix

\section{Game robustness}\label{App:Rigidity}

For the semi-quantum games and corresponding protocols introduced in \Cref{subsec:Quantum_depth} and \Cref{sec:Classical_int}, we establish rigidity results: in the ideal case ($\varepsilon = 0$), any perfect strategy must implement the operator-valued solutions to the corresponding constraint systems, as characterized in \Cref{lem:Z2_faithful_rep} and \Cref{lem:Extended_Z2_faithful_rep}, respectively. However, these exact rigidity statements do not directly extend to the robust setting, where the players may succeed with probability $1-\varepsilon$ for some $\varepsilon > 0$. In particular, it is not clear that near-optimal strategies remain close to the corresponding operator-valued solutions.

A key feature of our setting is that the game admits a group-theoretic description: valid strategies correspond to representations of a solution group, and the intended (honest) strategy realizes a distinguished, rigid representation of this group. This viewpoint suggests following a standard approach in the literature, whereby approximate satisfaction of the defining relations is converted into closeness to an exact representation via stability results for approximate representations. 

We seek to leverage the following stability theorem for approximate representations of finite groups. 
\begin{theorem}[Informal statement from \cite{gowers2017generalizations}] Let $G$ be a finite group and $f : G\mapsto U(\mathbb C^n)$ be such that $||f(x)f(y)−f(xy)||_2 \leq \varepsilon \sqrt{n}$ for all $x,y \in G$. Then there exists $m \leq (1 + \varepsilon^2)n$, an isometry $V : \mathbb C^n \mapsto \mathbb C^m$, and a unitary representation $\sigma :G \mapsto U( \mathbb C^m)$, such that $||f(x)−V^\dagger \sigma(x)V||_2 \in O(\varepsilon \sqrt{n})$ for every $x\in G$.
\end{theorem}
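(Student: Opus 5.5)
This is the Gowers--Hatami stability theorem. I would follow their Fourier-analytic argument, reading $\|\cdot\|_2$ as the Frobenius norm (which is why the tolerance scales as $\sqrt n$). The argument has two parts: a crude but clean construction through the regular representation, followed by a pruning step that brings the dimension down to $(1+\varepsilon^2)n$.

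\emph{Step 1 (induced isometry).} Define $V:\mathbb C^n\to\mathbb C^{|G|}\otimes\mathbb C^n$ by $Vu=|G|^{-1/2}\sum_{x\in G} e_x\otimes f(x)^\dagger u$. Since each $f(x)$ is unitary, $V^\dagger V=\E_x f(x)f(x)^\dagger=I$, so $V$ is an isometry. Let $\sigma(y)=\lambda(y)\otimes I_n$, where $\lambda$ is the left-regular representation. A direct computation gives
\[
V^\dagger\sigma(y)V=\E_x f(yx)f(x)^\dagger .
\]
Using $f(y)=\E_x f(y)f(x)f(x)^\dagger$, the triangle inequality and unitary invariance of the Frobenius norm, we get
\[
\|V^\dagger\sigma(y)V-f(y)\|_2\le \E_x\|f(yx)-f(y)f(x)\|_2\le\varepsilon\sqrt n .
\]
This already gives the approximation statement, but in dimension $|G|n$.

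\emph{Step 2 (Fourier decomposition and pruning).} Decompose $\lambda\cong\bigoplus_\rho \rho^{\oplus d_\rho}$ over the irreducibles $\rho$. Under this decomposition $V$ splits into blocks $\sqrt{d_\rho}\,\hat f(\rho)$, with $\hat f(\rho)=\E_x \overline{\rho(x)}\otimes f(x)^\dagger$. Plancherel gives $\sum_\rho d_\rho\|\hat f(\rho)\|_2^2=n$.

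The key estimate turns the almost-homomorphism hypothesis into near-idempotence of the Fourier blocks. The convolution $\E_x f(yx)f(x)^\dagger$ is $\varepsilon\sqrt n$-close to $f(y)$. Taking Fourier transforms, this says that $\hat f(\rho)\hat f(\rho)^\dagger$, suitably normalized, is close in weighted Frobenius norm to a projection $P_\rho$ commuting with $\rho\otimes I$.

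I would then keep only the spectral part of each $P_\rho$ with eigenvalues at least $1/2$. This leaves a $\sigma$-invariant subspace $W$, since the projections are $\rho$-equivariant. Set $V'$ to be the normalized compression of $V$ to $W$; the discarded weight costs $O(\varepsilon^2 n)$ in squared Frobenius norm. Because a projection's rank equals its trace, $\dim W\le\sum_\rho d_\rho\tr P_\rho\le(1+O(\varepsilon^2))n$. Finally, restricting $\sigma$ to $W$ and polar-decomposing $V'$ to restore exact isometry changes each $V^\dagger\sigma(x)V$ by only $O(\varepsilon\sqrt n)$.

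\emph{Main obstacle.} Step 1 is routine. The difficulty lies in Step 2: the threshold argument must simultaneously control the dimension (the $(1+\varepsilon^2)$ factor) and keep the error in approximating $f(x)$ at $O(\varepsilon\sqrt n)$ uniformly in $x$. I would first try the spectral-truncation bound described above. If the constant in front of $\varepsilon^2$ does not come out to $1$, I would accept $(1+O(\varepsilon^2))n$, which suffices for all uses in this paper.

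For our purposes only the state-dependent variant is needed. There the same construction applies, with the Frobenius norm replaced by the seminorm $\|A\|_\psi=\|A\otimes I\ket\psi\|$.
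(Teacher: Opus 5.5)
\textbf{Relation to the paper.} The paper does not prove this statement. It imports it from Gowers--Hatami, and the only version used downstream is the state-dependent \Cref{lemma:Vidick}, which carries no dimension bound. Your Step~1 is the construction that matters for that use, and it is correct, including the pointwise bound $\varepsilon\sqrt n$ with constant $1$. One caution when you transport it to the seminorm $\|A\otimes I\ket\psi\|$: right-unitary invariance is lost. You should therefore write the error as $\|(f(y)^\dagger f(yx)f(x)^\dagger-I)\otimes I\ket\psi\|$ and use $\|(W-I)\psi\|=\|(W^\dagger-I)\psi\|$ for unitary $W$. This is why the hypothesis of \Cref{lemma:Vidick} is phrased through $f(x)f(yx)^\dagger f(y)$. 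Step~2, however, does not establish the dimension bound.

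\textbf{What fails in Step~2.} First, a computational slip. You must project on the multiplicity register, where $\sigma=\lambda\otimes I$ acts trivially; that is also what makes $W$ invariant, not any equivariance of $\hat f\hat f^\dagger$. The $\rho$-block of $V$, reshaped to output into that register, is $\sqrt{d_\rho}\,\E_x\overline{\rho(x)}\otimes f(x^{-1})^\dagger$. Up to ordering of factors, this is $\sqrt{d_\rho}$ times the adjoint of the Gowers--Hatami coefficient $\hat f(\rho)=\E_x f(x)\otimes\overline{\rho(x)}$. Your $\E_x\overline{\rho(x)}\otimes f(x)^\dagger$ is not near-idempotent: for $f=\rho$ a real irreducible of dimension $d$, its coefficient at $\rho$ is $d^{-1}\mathrm{SWAP}$.

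More seriously, even with the correct $\hat f$, your key estimate $\sum_\rho d_\rho\|\hat f(\rho)\hat f(\rho)^\dagger-\hat f(\rho)\|_2^2\le\varepsilon^2 n$ is second order. In terms of the singular values $s_{\rho,i}\in[0,1]$ it controls $\sum d_\rho s^2(1-s)^2$. That suffices for the discarded weight. But the rank excess $\sum_{\mathrm{kept}}d_\rho(1-s^2)-\sum_{\mathrm{disc}}d_\rho s^2$ is first order in $1-s$, and Frobenius closeness bounds it only by $O(\varepsilon n)$ via Cauchy--Schwarz. This is sharp for the information you keep. A profile with all $s_{\rho,i}=1-\varepsilon$ is compatible with Plancherel and with your estimate. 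Yet any projection losing only $O(\varepsilon^2 n)$ weight then has rank about $(1+2\varepsilon)n$. So your route yields $m\le(1+O(\varepsilon))n$: the exponent fails, not merely the constant.

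\textbf{The missing ingredient.} Use the hypothesis in its first-order, trilinear form, as Gowers--Hatami and Vidick do. For unitaries, $\|f(x)f(y)-f(xy)\|_2^2=2n-2\operatorname{Re}\tr(f(xy)^\dagger f(x)f(y))$. Hence $\sum_\rho d_\rho\operatorname{Re}\tr(\hat f(\rho)^2\hat f(\rho)^\dagger)=\E_{x,y}\operatorname{Re}\tr(f(xy)^\dagger f(x)f(y))\ge(1-\varepsilon^2/2)n$. Since $|\tr(A^2A^\dagger)|\le\sum_i s_i(A)^3$ and $\sum_\rho d_\rho\sum_i s_{\rho,i}^2=n$, this gives $\sum_\rho d_\rho\sum_i s_{\rho,i}^2(1-s_{\rho,i})\le\varepsilon^2 n/2$.

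Now keep the directions with $s\ge 1/2$. The discarded weight is at most $\varepsilon^2 n$, because $s^2\le 2s^2(1-s)$ on discarded directions. On kept directions $1-s^2\le 8s^2(1-s)$, so $\dim W\le(1+4\varepsilon^2)n$. The rest of your Step~2 then goes through pointwise in $x$:
\begin{itemize}
\item compression costs $\|V^\dagger\sigma(x)(I-Q)V\|_2\le\|(I-Q)V\|_2$;
\item the polar correction costs $O(\|(I-Q)V\|_2)$;
\item if $V^\dagger QV$ is singular, first pad $W$ with at most $\tr V^\dagger(I-Q)V\le\varepsilon^2 n$ trivial dimensions, so that an isometry exists.
\end{itemize}
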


A variant, due to Vidick, adapts this to the context of nonlocal games, wherein one must work with \emph{state-dependent} notions of distance, rather than operator norms.
\begin{lemma}[Formalized and proven in \cite{coladangelo2017robust}]\label{lemma:Vidick}
Let $G$ be a finite group, and let $f : G \to U(\mathcal{H}_A)$ be such that $f(x^{-1}) = f(x)^\dagger \,.$ Let $\rho_{AB}$ be a state on $\mathcal{H}_A \otimes \mathcal{H}_B$ and
\begin{equation*}
\E_{x,y \in G} D_\rho  \left ( f(x) f(yx)^\dagger f(y) \otimes I_B\ \Big \|\  I_{AB} \right ) \le \eta\ .
\end{equation*}
Then there exists a Hilbert space $\widehat{\mathcal{H}}_A$, an isometry $V : \mathcal{H}_A \to \widehat{\mathcal{H}}_A \,,$ and a representation $\tau : G \to U(\widehat{\mathcal{H}}_A)$ such that
\begin{equation*}
\E_{x \in G} D_\rho \, \left (  V f(x) V^\dagger \otimes I_B\ \Big \|\  \tau(x)  \otimes I_B \right ) \le \eta.
\end{equation*}
where $D_\rho(A\ \|\ B)=\mathsf{Tr}(\rho (A-B)^\dagger(A-B))$.
\end{lemma}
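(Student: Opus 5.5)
The plan is to use the Gowers--Hatami construction: embed $\mathcal{H}_A$ into $\mathcal{H}_A\otimes\mathbb{C}^{|G|}$ and let $\tau$ be a regular representation acting on the group register. Concretely, I define
\[
V\ket{u} \;=\; \frac{1}{\sqrt{|G|}}\sum_{y\in G} f(y)^\dagger\ket{u}\otimes\ket{y},
\qquad \widehat{\mathcal{H}}_A=\mathcal{H}_A\otimes\mathbb{C}^{|G|}.
\]
Since each $f(y)$ is unitary, $V^\dagger V=\E_y f(y)f(y)^\dagger=I$, so $V$ is an isometry. For the representation I take $\tau(x)=I_{\mathcal{H}_A}\otimes R(x)$, where $R(x)$ is a permutation of the basis $\{\ket{y}\}$. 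I choose $R$ as the left or right regular action so that
\[
\tau(x)V\ket{u}=\frac{1}{\sqrt{|G|}}\sum_y f(yx)^\dagger\ket{u}\otimes\ket{y}.
\]
Then $\tau$ is an honest unitary representation, because $R$ is.

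The core computation is to bound the intertwining error $\E_x D_\rho\bigl(Vf(x)\otimes I_B \,\big\|\, \tau(x)V\otimes I_B\bigr)$. I purify $\rho$ to $\ket{\psi}$ and expand the vector $\bigl((Vf(x)-\tau(x)V)\otimes I\bigr)\ket{\psi}$ in the orthogonal group register. This gives exactly
\[
\E_{x,y}\bigl\|\bigl((f(y)^\dagger f(x)-f(yx)^\dagger)\otimes I\bigr)\ket\psi\bigr\|^2 .
\]
Next I left-multiply inside the norm by the unitary $f(y)\otimes I$, which does not change the norm. This turns the expression into $\E_{x,y}\|((f(x)-f(y)f(yx)^\dagger)\otimes I)\ket\psi\|^2$.

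The point requiring care is that state-dependent distances are invariant under left multiplication by unitaries, but not under cyclic reordering of a word. So I must arrange the factors to land exactly on the hypothesis quantity $f(x)f(yx)^\dagger f(y)$ compared with $I$. To do this I use the hypothesis $f(x^{-1})=f(x)^\dagger$ together with a change of variables $(x,y)\mapsto(x^{-1},y^{-1})$ or similar, which is measure-preserving on $G\times G$. If the left-action choice of $R$ produces the wrong ordering, I switch to $V\ket u=|G|^{-1/2}\sum_y f(y)\ket u\otimes\ket y$ with the right action. I expect this index and ordering bookkeeping to be the main obstacle: I will check each of the four variants (left or right action, $f$ or $f^\dagger$ in $V$) and keep the one whose error term is literally $D_\rho\bigl(f(x)f(yx)^\dagger f(y)\otimes I \,\|\, I\bigr)$ after a single left multiplication. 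That yields $\E_x D_\rho(Vf(x)\otimes I\,\|\,\tau(x)V\otimes I)\le\eta$.

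Finally, I convert the intertwining bound into the form stated in \Cref{lemma:Vidick}. Right-multiplying by $V^\dagger$ and using that $VV^\dagger$ is a projection (a contraction), I compare $Vf(x)V^\dagger$ with $\tau(x)VV^\dagger$. On the image of $V$, where the embedded state $(V\otimes I)\rho(V\otimes I)^\dagger$ lives, the latter acts as $\tau(x)$. Hence, measured with respect to the embedded state, $D_\rho(Vf(x)V^\dagger\otimes I\,\|\,\tau(x)\otimes I)$ is bounded by the intertwining error. Averaging over $x$ then gives the claimed bound $\eta$. Here I read $D_\rho$ on $\widehat{\mathcal{H}}_A$ as evaluated on the pushed-forward state, which is the interpretation used in \cite{coladangelo2017robust}.
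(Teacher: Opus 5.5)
Your approach is the standard Gowers--Hatami dilation argument, which is how the cited source (Vidick's state-dependent version, written up in \cite{coladangelo2017robust}) proves the lemma; the paper itself gives no proof beyond the citation. Your repair of the statement's type mismatch is the natural one: read $D$ on the pushed-forward state $(V\otimes I)\rho(V\otimes I)^\dagger$. Under that reading, $D$ equals the intertwining error $\|((Vf(x)-\tau(x)V)\otimes I)\ket\psi\|^2$ exactly, because $V^\dagger V=I$.

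However, the pairing you commit to first does not work. The index bookkeeping you defer is essentially the whole proof, so here is where it goes wrong and what works instead.

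\textbf{The displayed pairing fails.} Take $V\ket u=|G|^{-1/2}\sum_y f(y)^\dagger\ket u\otimes\ket y$ with $\tau(x)V\ket u=|G|^{-1/2}\sum_y f(yx)^\dagger\ket u\otimes\ket y$. Left-multiplying the error by $f(yx)$ gives the word $f(yx)f(y)^\dagger f(x)$. Its group product is $yxy^{-1}x\neq e$, so no change of variables can turn it into the hypothesis. Concretely, for $G=\mathbb{Z}_3$ with $f$ a faithful character, $\eta=0$ but this error equals $2$.

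\textbf{Choice 1: $f^\dagger$ in $V$ with the left regular action $\ket z\mapsto\ket{xz}$.} Then $\tau(x)V\ket u=|G|^{-1/2}\sum_y f(x^{-1}y)^\dagger\ket u\otimes\ket y$. Left-multiplying the error by $f(x)^\dagger f(y)$ gives the word $f(x^{-1})f(y)f(y^{-1}x)$. The bijection $(x,y)\mapsto(a,b)=(x^{-1},y^{-1}x)$ of $G\times G$ turns this literally into $f(a)f(ba)^\dagger f(b)$.

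\textbf{Choice 2: your fallback, $f$ in $V$ with the right action $\ket z\mapsto\ket{zx^{-1}}$.} This also works. Left-multiplying the error by $f(yx)^\dagger$ gives $f(yx)^\dagger f(y)f(x)$, and $(x,y)\mapsto(x^{-1}y^{-1},x)$ is the required bijection.

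In both cases the bound is exactly $\eta$. There is no convexity loss, because you never compress by $V^\dagger$. If you want the more conventional form $\E_x D_\rho(f(x)\otimes I\,\|\,V^\dagger\tau(x)V\otimes I)\le\eta$, it follows immediately since $V^\dagger$ is a contraction.
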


Thus if the relations defining the group are approximately satisfied (in a state-dependent sense), then the approximate operators $f(x)$ are close to a genuine representation of the group, up to an isometry. This would allow us to conclude that any near-optimal strategy must be close to the representation enforced by our construction. 

In the following two subsections we establish the ingredients required to apply \Cref{lemma:Vidick} and prove robustness for each of the constructions introduced in  \Cref{subsec:Quantum_depth} and \Cref{sec:Classical_int}.

\subsection{Faithful \texorpdfstring{$\mathbb Z_2^m$}{Z\_2\^{}m} operator-valued BCS}\label{subsec:rob_Q}

Recall that in \Cref{subsec:Quantum_depth}, the verifier prepares and distributes a bipartite quantum state to two non-communicating provers, who are assumed not to share any prior entanglement. Hence, the joint state available to the provers at the start of each round
is exactly the verifier-prepared state, tensored with local ancillary registers.

In this setting, robustness reduces to an operator-rigidity statement: any near-optimal strategy must implement measurements that are close, on the verifier-prepared states, to the intended observables.

\begin{definition}[Robust operator rigidity]\label{def:OperatorRigidity}
Let $G$ be a semi-quantum game in which the verifier sends classical question labels $(x,y)$ together with a corresponding bipartite quantum state $\rho_{x,y}$. Suppose that $G$ self-tests a strategy $S = \{\{A_x\}_x, \{B_y\}_y\}$. We say that $G$ is $\delta(\varepsilon)$-rigid if for any strategy $\{\tilde A_x, \tilde B_y\}$ achieving success probability at least $\omega^*(G) - \varepsilon$, there exist local isometries $V_A, V_B$ and an auxiliary state $\rho_{\mathrm{junk}}$ such that for all $(x,y)$,
\begin{align}
\Big\| (V_A \otimes V_B)\big(\tilde A_x \otimes \tilde{B}_y\big)\rho_{x,y}\otimes \rho_{junk}&\big(\tilde{A}_x \otimes \tilde{B}_y\big)^\dagger (V_A^\dagger \otimes V_B^\dagger)\nonumber\\ 
 &-\big(A_x \otimes B_y\big)\rho_{x,y}\big(A_x \otimes B_y\big)^\dagger \otimes \rho_{\mathrm{junk}} \Big \|_2\le \delta(\varepsilon),
\end{align}
for a decaying function $\delta$.
\end{definition}

Starting from an approximate strategy for \Cref{def:QI_LBCS}, based on the semi-quantum game induced by the operator-valued constraint system in \Cref{def:OVCS_Z2}, and achieving a winning probability $p_{\mathrm{win}}$, we obtain that the corresponding observables satisfy the following bounds.

\begin{lemma}\label{lemma:prob_to_dist}
Let a two-prover strategy be given by observables $\{\tilde A_v\}_v$ and $\{\tilde B_v\}_v$. 
Let $p_{\mathrm{con}}$ be the probability of passing the consistency checks, $p_{\mathrm{sat}}$ the probability of passing the constraint satisfaction checks, and $p_{\mathrm{win}}$ the overall success probability. Then $p_{\mathrm{win}} \le \min\{p_{\mathrm{con}}, p_{\mathrm{sat}}\}$.
Moreover, the following bounds hold:
\begin{equation}
\E_{r,v,\,c}\ \frac{1}{4} \left [D_\rho\left(\tilde{A}_v^r \otimes \mathcal{C}_c\ \big(\tilde{B}_{g(v,c)}\big)\mathcal{C}_c^\dagger\ \Big\|\ I \right)^2\right ] \le 1- p_{\mathrm{win}},
\end{equation}
\begin{equation}
\E_{r'}\ \frac{1}{4}\left[ D_\rho\left(\prod_{v \in r} \tilde A_v \otimes I_B\ \Big\|\ (-1)^{\lambda_{r'}} I\right)^2 \right] \le 1 - p_{\mathrm{win}}.
\end{equation}
Here $c$ is sampled uniformly from $\{\mathsf{CNOT}, \mathsf{Swap}, \mathsf{X}, I\}$, $g:\{\{0,1\}^m,\{\mathsf{CNOT}, \mathsf{Swap}, \mathsf{X}, I\}\}\mapsto \{0,1\}^m$ relabels indexes based on Clifford conjugations rules defined by the affine symmetry relations in \Cref{def:OVCS_Z2}, $r$ ranges over the constraints of the system \Cref{def:OVCS_Z2}, and $r'$ on all except the involution and affine symmetry constraints, $\lambda_r \in \{\pm 1\}$ is the target value of constraint $r$, and $\rho = \ket{\Phi}\bra{\Phi}_{A,B} \otimes \rho_{\mathrm{junk}}$.
\end{lemma}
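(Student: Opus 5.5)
The plan is the standard conversion of acceptance probabilities into state-dependent distances for binary observables, done separately for the two kinds of checks and then averaged over the question distribution of \Cref{def:QS_Z2OCS}. First, the inequality $p_{\mathrm{win}}\le\min\{p_{\mathrm{con}},p_{\mathrm{sat}}\}$ holds because a round is accepted only if every check performed in it passes. For type~1 and type~2 questions, prover $r$'s answer must both satisfy the queried constraint and agree with $\bar r$ on the sampled generator. The accepting event is therefore contained in each of the two sub-events. For type~3 questions the only check is consistency, and we count it as trivially passing satisfaction. Hence $1-p_{\mathrm{win}}$ upper bounds both the average probability of failing a consistency check and the average probability of failing a satisfaction check.

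For the consistency bound, we write the outcome of each queried variable as a $\pm1$ observable on Alice's or Bob's side, absorbing local ancillas into $\rho_{\mathrm{junk}}$. When question type~3 is sampled with Clifford $c$, the state the provers receive is $(I\otimes\mathcal C_c)(\ket{\Phi}\bra{\Phi}\otimes\rho_{\mathrm{junk}})(I\otimes\mathcal C_c)^\dagger$. The agreement probability then equals $\tfrac12\bigl(1+\mathrm{Tr}[\rho\,\tilde A_v\otimes \mathcal C_c^\dagger\tilde B_{g(v,c)}\mathcal C_c]\bigr)$, with $\rho=\ket{\Phi}\bra{\Phi}\otimes\rho_{\mathrm{junk}}$; here we move the verifier's unitary onto Bob's observable in the Heisenberg picture (the $\mathcal C_c(\cdot)\mathcal C_c^\dagger$ in the statement is this conjugation, up to the convention of which side carries the dagger). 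For $\pm1$ observables $P,Q$ we have $(P\otimes Q-I)^2=2I-2P\otimes Q$, so $D_\rho(P\otimes Q\,\|\,I)=2-2\langle P\otimes Q\rangle_\rho=4\Pr[\text{disagree}]$. Averaging over $r$, $v$ and $c$ gives $\E\,\tfrac14 D_\rho\le 1-p_{\mathrm{con}}\le 1-p_{\mathrm{win}}$. Since $D_\rho\le4$, we have $\tfrac14 D_\rho^2\le D_\rho$, so the stated squared form follows a fortiori, with room to spare. The constraint-variable consistency checks of types~1 and~2 are handled identically, using the constraint prover's marginal observable for the shared variable.

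For the satisfaction bound, prover $r$ measures the commuting family $\{\tilde A_v\}_{v\in r}$ jointly, and we identify its product with a single $\pm1$ observable. For the global constraint the target is the scalar $-I$, and the same identity gives $\Pr[\text{fail}]=\tfrac14 D_\rho(\prod\tilde A_v\otimes I\,\|\,-I)$. For parity--Pauli constraints the target is the verifier's hidden outcome of $Z(\vecs)$. Here I would rewrite the post-measurement state as a mixture over outcomes $\lambda$ with projectors $\tfrac12(I+\lambda Z(\vecs))$ applied to $\ket{\Phi}$, so that $\Pr[\text{fail}]=\tfrac14 D_\rho\bigl(\prod\tilde A_v\,\|\,Z(\vecs)\bigr)$ averaged over $\lambda$, or equivalently $D$ against $(-1)^{\lambda}I$ on the conditioned state. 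Again $\tfrac14D^2\le D$ converts this to the stated form.

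The main obstacle is this last step: the reference state differs across questions (Clifford-rotated, or post-measured), whereas the lemma is stated with a single $\rho$. I would resolve it in two ways. On the consistency side, the Clifford is moved onto the operators exactly. On the satisfaction side, the post-measurement mixture is averaged out, using that $Z(\vecs)$ commutes with the diagonal target on the measured side, and then $\lambda_{r'}$ is reinterpreted as the recorded outcome.
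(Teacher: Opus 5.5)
Your approach is the one the paper intends. Its proof is only the remark that the bounds follow from the protocol and the provers' use of binary observables, and the identity $(P\otimes Q-I)^2=2I-2P\otimes Q$ is the right ingredient.

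\textbf{The step producing the squared form is wrong.} Taking $D_\rho(A\|B)=\mathrm{Tr}(\rho(A-B)^\dagger(A-B))$, you correctly get $\tfrac14 D_\rho=\Pr[\text{fail}]$. But $\tfrac14 D_\rho^2\le D_\rho$ only gives $\E\,\tfrac14 D_\rho^2\le \E\, D_\rho\le 4(1-p_{\mathrm{win}})$. To pass from $\tfrac14 D_\rho$ to $\tfrac14 D_\rho^2$ you would need $D_\rho\le 1$, which fails. In fact the lemma is false under this normalization: a strategy whose answers always disagree has $D_\rho=4$ on every question, so the left-hand side equals $4$.

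The lemma only holds with $D_\rho(A\|B)=\sqrt{\mathrm{Tr}(\rho(A-B)^\dagger(A-B))}$. That is how the paper actually uses it, despite the formula printed in \Cref{lemma:Vidick}: the proof of \Cref{lemma:sta_chang_D} writes $D_\rho(X\|Y)^2=\mathrm{Tr}(\rho(X-Y)^\dagger(X-Y))$, and the Cauchy--Schwarz step in \Cref{lemma:set_of_dif} requires it. With that reading your identity gives $\tfrac14 D_\rho^2=\Pr[\text{fail}]$ exactly, question by question, so no ``a fortiori'' conversion is needed. The same correction applies to your satisfaction bound.

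\textbf{The type-2 consistency checks are not ``handled identically''.} The verifier has measured $Z(\vecs)$ on prover $r$'s half. So the agreement probability is $\tfrac12\bigl(1+\langle \mathcal D(\tilde A^r_v)\otimes \tilde B_v\rangle_\rho\bigr)$ with $\mathcal D(A)=\tfrac12\bigl(A+Z(\vecs)AZ(\vecs)\bigr)$. This differs from $\tfrac12\bigl(1+\langle\tilde A^r_v\otimes\tilde B_v\rangle_\rho\bigr)$ unless $\tilde A^r_v$ commutes with $Z(\vecs)$, and your Clifford trick for type~3 does not cover it. On the unmeasured $\rho$ you only get $\tfrac14 D_\rho(\tilde A^r_v\otimes\tilde B_v\|I)^2\le 2\Pr[\text{disagree}]$ for these questions. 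Alternatively, using $\prod_{u\in r}\tilde A^r_u\approx Z(\vecs)$, you get an additive $O(\sqrt{\cdot})$ error. Either is harmless for the $O(\cdot)$ bounds used later, but neither is the constant-$1$ inequality.

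\textbf{Your satisfaction-side conclusion is right, but not for the reason you give.} No commutation is involved. With $\Pi_\lambda=\tfrac12(I+\lambda Z(\vecs))$ one has $\sum_{\lambda=\pm1}\lambda\,\Pi_\lambda O\Pi_\lambda=\tfrac12\{O,Z(\vecs)\}$. This yields $\Pr[\text{fail}]=\tfrac14\mathrm{Tr}\bigl(\rho\,(O-Z(\vecs))^2\bigr)$ for every $\pm1$ observable $O$. This $Z(\vecs)$-target form is what is provable on the unmeasured $\rho$. A scalar target $\pm I$ there would give $\tfrac12$ even for the honest strategy, so your reinterpretation of $\lambda_{r'}$ as the recorded outcome is necessary, not cosmetic.
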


The proof follows directly from \Cref{def:QI_LBCS}, and the use of the binary observables by the provers. 

\medskip

Note that Bob’s observables are necessarily non-contextual: since he receives only a single variable and no information about Alice’s constraint, his strategy is specified by a single family $\{\tilde B_v\}_v$. Consistency checks then imply that these operators approximately satisfy the constraint relations, in particular, approximately commuting whenever they appear together.

In contrast, Alice’s observables are indexed by both a variable and a restriction: $\tilde A_v^{(r)}$ denotes Alice’s observable for variable $v$ when she is asked restriction $r$. In the approximate setting, this allows for the possibility that these observables are contextual with respect to the chosen restriction. Nevertheless, for the robustness proof, we will work with a single operator per variable. To this end, for each variable $v$, we fix one restriction $r_v$ containing $v$ and define $\tilde A_v := \tilde A_v^{(r_v)}$. This choice is ultimately immaterial, as the argument would proceed identically for any such selection, and the resulting statements for the chosen observables extend to all others. Indeed, by consistency, any operator $\tilde A_v^{(r)}$ corresponding to another restriction $r$ is close to the same Bob operator $\tilde B_v$, and hence also close to the chosen representative $\tilde A_v$.

The following lemma shows that the fixed representatives $\{\tilde A_v\}_v$, together with Bob’s observables $\{\tilde B_v\}_v$, approximately satisfy the relations of the solution group. Thus, both define approximate operator solutions to the same system. 

\begin{lemma}\label{lemma:set_of_dif}
Let $\{\tilde B_v\}$ be Bob’s observables and $\{\tilde A_v^{(r_v)}\}$ Alice’s observables. Assume the strategy succeeds with probability at least $1-\varepsilon$. Then $\{\tilde B_v\}$ is an approximate operator solution in the sense that
\begin{equation}
\sum_{r'} D_\rho \left(\prod_{v \in r'} I_A \otimes  \tilde{B}_{v} ^\dagger\ \Big \| \  (-1)^{\lambda_{r'}}I\right)  \le\  O(m ^3 2^{2m} \sqrt{\varepsilon}),
\end{equation}

\begin{equation}
\sum_{v,v',c}D_\rho \left(I_A \otimes [\tilde{B}_v,\mathcal{C}_c\tilde{B}_{v'}\mathcal{C}_c^\dagger]\ \Big \| \ I \right)\le O(m^6 2^{2m} \sqrt{\varepsilon}).
\end{equation}

The same bounds hold for Alice operators $\{\tilde A_v^{(r_v)}\}$ in place of Bob’s observables $\{\tilde B_v\}$.
\end{lemma}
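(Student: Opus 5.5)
The plan is to pass from the per-question bounds of \Cref{lemma:prob_to_dist} to bounds for every relation, then sum over relations, paying a polynomial factor from the size of the question distribution and a factor from chaining.

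\textbf{Step 1: per-question bounds.} Write $1-p_{\mathrm{win}}\le\varepsilon$. Each constraint $r'$ is sampled with probability at least $\Omega(1/2^m)$ (there are $2^m+1$ type-1/2 constraints). Each pair $(v,c)$ in the consistency check is sampled with probability $\Omega(1/(m^2 2^m))$, since $U$ ranges over $O(m^2)$ Cliffords and $\veci$ over $2^m$ labels. By Markov, each individual term of \Cref{lemma:prob_to_dist} is therefore at most $O(m^2 2^m\varepsilon)$. Because the distance appears squared, taking square roots gives, for each fixed question, $D_\rho(\cdot\|\cdot)\le O(m\,2^{m/2}\sqrt{\varepsilon})$. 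Summing over the relevant indices then gives the stated totals, up to the polynomial factors tracked below.

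\textbf{Step 2: transfer constraint satisfaction from Alice to Bob.} Fix a constraint $r'$ with variables $v_1,\dots,v_t$, where $t\le 2^m$. Using that $\rho$ is (up to junk) a maximally entangled state, together with the consistency bound $D_\rho(\tilde A^{r'}_{v}\otimes \tilde B_v\|I)$ for each $v\in r'$, replace the Alice factors one at a time by Bob factors. The identity that makes this possible is $(X\otimes I)\ket{\Phi}=(I\otimes X^{\mathrm T})\ket{\Phi}$. On the general state we cannot use this identity directly, so instead we use the approximate version $\tilde A_v\otimes I\approx_\rho I\otimes\tilde B_v$, which follows from the consistency bound because both operators are unitary involutions.

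The replacement proceeds by a telescoping (hybrid) argument. Each step swaps one factor and reverses the operator order on Bob's side, which is where the $\tilde B_v^\dagger$ and the reversed product in the statement come from. Since each factor is unitary, each swap costs one consistency error under the triangle inequality for $\sqrt{D_\rho}$. A constraint has at most $2^m$ factors, so this costs a factor $2^m$. Combining this with the satisfaction bound for Alice, and summing over all $2^m$ constraints with the Step 1 per-question bounds, yields $O(m^3 2^{2m}\sqrt{\varepsilon})$; the powers of $m$ come from the Clifford sampling weight. The same chain run in reverse gives the statement for Alice's fixed representatives $\tilde A_v^{(r_v)}$. Here we first use consistency to show that $\tilde A_v^{(r)}$ and $\tilde A_v^{(r_v)}$ are both close to $\tilde B_v$.

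\textbf{Step 3: commutators.} For a pair $v,v'$ and a Clifford label $c$, choose a context containing both $v$ and the relabelled variable. Inside a single context Alice's observables commute exactly, since she measures them jointly as one projective measurement. Transfer both factors to Bob's side as in Step 2, one of them conjugated by $\mathcal C_c$ through the affine-symmetry consistency check. Moving a product $XY$ across the tensor product reverses the order of the factors. Comparing the two orders therefore shows that Bob's commutator is approximately $I$, at a cost of four swaps. Summing over the $O(2^{2m}m^2)$ triples $(v,v',c)$ with the per-question bounds gives the $O(m^6 2^{2m}\sqrt{\varepsilon})$ total.

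\textbf{Main obstacle.} The delicate step is the state change. The consistency checks for affine symmetries, and those with the verifier's pre-measurement of $Z(\vecs)$, are evaluated on different states: $(I\otimes U)\Phi(I\otimes U)^\dagger$, or a post-measurement state. All bounds must be moved onto the single reference state $\rho$. I would first absorb the exact Clifford into the observable as $\mathcal C_c\tilde B\mathcal C_c^\dagger$, which is how it already appears in \Cref{lemma:prob_to_dist}. For the type-2 measurement, I would write the post-measurement state as $\frac12(\rho\pm \hat Z\rho\hat Z)$-type mixtures and lose only a constant factor. The multiplicative losses for $2^m$-fold products are the other point to watch.
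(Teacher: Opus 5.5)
Your architecture matches the paper's. You replace Alice's factors by Bob's one at a time using the consistency terms of \Cref{lemma:prob_to_dist}, with the order reversal on Bob's side. Commutation is inherited from Alice's joint measurement within a single context; the global constraint contains every variable, so such a context always exists. The gap is in Step~1, and it breaks the exponents you claim. Bounding each question separately by Markov, $D_q\le 2\sqrt{\varepsilon/\mu(q)}$, and then summing loses a factor $\sqrt{N}$ relative to averaging, where $N$ is the number of summed questions. Here $N$ is exponential in $m$.

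Step~2 also does not use the Clifford checks at all. It uses the constraint--variable checks (Alice asked $r'$, Bob asked $v\in r'$, with $U_B=I$). For a parity--Pauli constraint, each such pair $(r',v)$ is sampled with probability $\Theta(2^{-2m})$, not $\Omega(1/(m^2 2^m))$. Your per-question bound there is therefore only $O(2^m\sqrt{\varepsilon})$, and summing over the $\Theta(2^{2m})$ pairs gives $O(2^{3m}\sqrt{\varepsilon})$. Even with your figure $O(m2^{m/2}\sqrt{\varepsilon})$, the count $2^m\cdot 2^m\cdot m2^{m/2}$ is not $O(m^3 2^{2m})$. Similarly, in Step~3, summing a constant number of per-question terms of size $O(m2^{m/2}\sqrt{\varepsilon})$ over $\Theta(m^2 2^{2m})$ triples gives $O(m^3 2^{5m/2}\sqrt{\varepsilon})$, which exceeds $O(m^6 2^{2m}\sqrt{\varepsilon})$ for large $m$. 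Your attribution of the powers of $m$ to the Clifford sampling weight is also wrong for Step~2.

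The repair is what the paper does: never bound individual terms. Sum first and apply Cauchy--Schwarz to the averaged inequality, $\sum_q D_q\le\sqrt{N\sum_q D_q^2}\le\sqrt{4N\varepsilon/\min_q\mu(q)}$. This is $O(N\sqrt{\varepsilon})$ when sampling is near-uniform within each question type. It gives $O(2^{2m}\sqrt{\varepsilon})$ for the constraint--variable sum and $O(m^2 2^m\sqrt{\varepsilon})$ for the Clifford consistency sum. For the commutators, split the triple sum into terms depending only on $v$ (multiplicity $\Theta(m^2 2^m)$) and terms depending only on $(v',c)$ (multiplicity $2^m$), which gives $O(m^2 2^{2m}\sqrt{\varepsilon})$.

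The state change you single out as the main obstacle is already handled by \Cref{lemma:prob_to_dist}. All of its terms live on the single state $\rho$, with the verifier's Clifford moved into $\mathcal{C}_c\tilde B\mathcal{C}_c^\dagger$. For type-2 questions, a direct computation shows the failure probability is exactly $\frac14\tr\bigl(\rho\,(P-Z(\vecs))^2\bigr)$, where $P$ is Alice's product observable and both operators act on her side. So no mixture argument or constant-factor loss is needed; the real care is in the bookkeeping above.
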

\begin{proof}
We start from
\begin{equation}
\E_{r,v,\,c}\ \frac{1}{4} \left [D_\rho\left(\tilde{A}_v^r \otimes \mathcal{C}_c\ \big(\tilde{B}_{g(v,c)}\big)\mathcal{C}_c^\dagger \  \Big\|\  I \right)^2\right ] \le \varepsilon. 
\end{equation}
We deduce that,
\begin{equation*}
\sum_{r,v,c} D_\rho\left(\tilde{A}_v^r \otimes \mathcal{C}_c\ \big(\tilde{B}_{g(v,c)}\big)\mathcal{C}_c^\dagger\  \Big\|\ I \right) \le 2|r||v||c|\sqrt{\varepsilon}. 
\end{equation*}
\begin{equation}\label{lemma:approx_AvBv}
\sum_{r,v,c} D_\rho\left(\tilde{A}_v^r \otimes I_B\   \Big\|\ I_A \otimes \mathcal{C}_c\ \big(\tilde{B}_{g(v,c)}^\dagger\big)\mathcal{C}_c^\dagger \right) \le  m^3 2^{2m}\sqrt{\varepsilon}.
\end{equation}
\noindent The first equation follows from Cauchy-Schwarz and the second from $D_\rho(UZ\ \|\ I)=D_\rho(Z\ \|\ U^\dagger)$.

We can equally determine the following expression from \Cref{lemma:prob_to_dist}, 
\begin{equation}\label{lemma:approx_Av}
\sum_{r'}\  D_\rho\left(\prod_{v \in r'} \tilde A_v\  \Big\|\  (-1)^{\lambda_{r'}} I\right)  \le 2^m \sqrt{\varepsilon}.
\end{equation}

\noindent This allows us to show that Bob's operators satisfy approximately the all the constraints,
\begin{align*}
    \sum_{r'} D_\rho\left(\prod_{v \in r'} I_A \otimes \tilde B_v\ \Big\|\ (-1)^{\lambda_{r'}} I\right) \leq  \sum_{r'} D_\rho\left(\prod_{v \in r'} I_A \otimes \tilde B_v\ \Big\|\ \prod_{v \in r'} \tilde A_v^{r_v} \otimes  I_B \right) \\ + \sum_{r'}\  D_\rho\left(\prod_{v \in r'} \tilde A_v\  \Big\|\  (-1)^{\lambda_{r'}} I\right) \notag \\
    \leq \sum_{r',v} D_\rho\left(\tilde{A}_v^r \otimes I_B \ \Big\|\ I_A \otimes \tilde{B}_{v}^\dagger \right) + \sum_{r'}\  D_\rho\left(\prod_{v \in r'} \tilde A_v\ \Big\|\ (-1)^{\lambda_{r'}} I\right)
    \leq (2^m+ m^3 2^{2m})\sqrt{\varepsilon}.
\end{align*}

The first inequality follows from $D_\rho(Z_1 \| Z_3)\leq D_\rho (Z_1\|Z_2)+ D_\rho (Z_2 \| Z_3)$. The second uses $D_\rho(UZ \| I)=D_\rho(Z \| U^\dagger)$, together with the bound $D_\rho(\prod_i A_i \otimes I_B\ \|\ \prod_i I_A \otimes B_i )\leq \sum_i D_\rho (A_i \otimes I_A\ \|\ I_A \otimes B_i)$ after expanding the products. The third uses \Cref{lemma:approx_Av} and \Cref{lemma:approx_AvBv}, and drops the summation over the Clifford labels in the latter bound. 

Subsequently, for the commutation of Bob's operators we have that, 
\begin{equation*}
    D_\rho\left( I_A \otimes [\tilde  B_v,\mathcal{C}_c \tilde B_{v'}\mathcal{C}_c^\dagger]\ \big \| \ I\right) \leq 2 D_\rho\left(\tilde A_v^{r_v} \otimes \tilde B_v \ \big \| \ I\right)+2 D_\rho\left(\tilde A_{g(v',c)}^{r_v} \otimes \mathcal{C}_c \tilde B_{v'}  \mathcal{C}_c^\dagger\ \big \|  \ I\right)
\end{equation*}
This follows, again from $D_\rho(Z_1\|Z_3)\leq D_\rho (Z_1\|Z_2)+ D_\rho (Z_2 \| Z_3)$ and triangle inequalities. This allows us to bound the expression on average as follows,
\begin{equation}
\sum_{v,c} D_\rho \left(I_A \otimes [\tilde B_v,\mathcal{C}_c \tilde B_{v'}\mathcal{C}_c^\dagger]\ \big \| \  I\ \right) \leq 2 m^3  \sum_{r,v,c} D_\rho\left(\tilde{A}_v^r \otimes \mathcal{C}_c\ \big(\tilde{B}_{g(v,c)}\big)\mathcal{C}_c^\dagger\  \big\| \ I \right)\leq m^62^{2m}\sqrt{\varepsilon}.
\end{equation}

Finally, we can now use Bob's contextual operators and bounds, to derive the same bounds on the pre-selected set of observables $\{\tilde A_v^{(r_v)}\}$ by Alice.
\end{proof}

So far, we have assigned observables only to the generators of the group, together with bounds showing that these approximately satisfy the relations tested in the nonlocal game. To apply the stability lemma, however, we require observables associated with arbitrary group elements.

To obtain a well-defined extension, we fix a canonical form for each group element and define its observable as the ordered product of the observables corresponding to the generators appearing in this canonical form. This avoids ambiguities arising from different decompositions of the same element in the approximate setting.

\begin{lemma}\label{lemma:can_form}
Every element $x \in G_{\mathbb Z_2^m}$ admits a canonical decomposition
\begin{equation}
can(x) = \operatorname{can}_{\mathbb Z_2^m}(z)\operatorname{can}_{\mathcal{P}_m}(p)\operatorname{can}_{\mathrm{AGL}(m,2)}(g),
\end{equation}
for some $z \in \mathbb Z_2^m$, $p \in \mathcal P_m$, $g \in \mathrm{AGL}(m,2)$, with each factor is expressed in its respective canonical form.
\end{lemma}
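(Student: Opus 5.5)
The plan is to use the layered structure of $G_{\mathbb Z_2^m}=\bigl(\mathbb Z_2^m *_{\mathbb Z_2^m}\mathcal P_m\bigr)\rtimes_{(\tau,\alpha)}\mathrm{AGL}(m,2)$. First, I will deal with the semidirect product. The conjugation rule $hnh^{-1}=\varphi(h)(n)$ lets every $\mathrm{AGL}(m,2)$ letter in a word be moved to the right end. The affine letters jump past hypercube letters using $g z_{\veci}g^{-1}=z_{\tau(g,\veci)}$, and past Pauli letters using $gP(\vecs)g^{-1}=P(\alpha(g,\vecs))$. After this step the word reads $n\cdot g$, with $n$ in the normal subgroup $N=\mathbb Z_2^m *_{\mathbb Z_2^m}\mathcal P_m$ and $g\in\mathrm{AGL}(m,2)$. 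The factor $g$ is unique because $N\rtimes\mathrm{AGL}(m,2)\to\mathrm{AGL}(m,2)$ is a well-defined projection. I then fix $\operatorname{can}_{\mathrm{AGL}(m,2)}(g)$ as a fixed word in the generators $\mathsf{Swap}_{a,b},X_a,\mathsf{CNOT}_{a,b}$. A convenient choice is Gaussian elimination of $A$ into a sequence of $\mathsf{CNOT}$s and $\mathsf{Swap}$s, followed by the translation $X(\vecb)$.

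Second, I need the normal form inside $N$. Read as an abstract amalgam, $\mathbb Z_2^m *_{\mathbb Z_2^m}\mathcal P_m$ is infinite, and its elements are alternating words in coset representatives, not products $zp$. That would contradict both the statement and the finiteness needed for \Cref{lemma:Vidick}. So I will work with the group actually presented by \Cref{def:OVCS_Z2}. There the involution and commutation relations make $\langle z_{\veci}\rangle$ abelian. I also need the relation that the $z_{\veci}$ commute with the Pauli sector: with $Z(\vecs)$ this follows from the hybrid relations, and with $X$-type Paulis it comes from the affine relation $X_a z_{\veci}X_a=z_{\veci\oplus\mathbf e_a}$, which is already a relation of the presented group via the $\mathrm{AGL}$ translation part identified with Pauli $X$'s. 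I will state explicitly that $G_{\mathbb Z_2^m}$ denotes this finite quotient. With these relations, any word in $N$ is rewritten as $z\cdot p$: push the $z$-letters left, swapping them past Pauli letters. Each swap either commutes exactly or replaces $z_{\veci}$ by a relabelled $z_{\veci'}$, and the relabelled letters stay in the abelian hypercube subgroup. Then collect the $z$-part as $z=\prod_{\veci\in T}z_{\veci}$ for a subset $T\subseteq\{0,1\}^m$, ordered lexicographically, giving $\operatorname{can}_{\mathbb Z_2^m}(z)$. Collect the Pauli part as $P(\vect)=X(\veca)Z(\vecb)$, modulo phase, with qubit-ordered letters, giving $\operatorname{can}_{\mathcal P_m}(p)$.

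Third, I will handle the redundancy created by the hybrid relations. These say that $\chi_{\vecs}=\prod_{\veci\cdot\vecs=1}z_{\veci}$ equals $Z(\vecs)$, so the pair $(z,p)$ is determined only modulo the diagonal subgroup $\{(\chi_{\vecs},Z(\vecs))\}$. To make the decomposition canonical I will fix a transversal: always absorb the $Z$-component of $p$ into the hypercube factor, so that $p$ ranges only over $X$-type strings $X(\veca)$. Alternatively one can keep $p$ general and require $z$ to lie in a fixed complement of $\{\chi_{\vecs}\}$; either choice works, but absorbing into the hypercube factor is simpler to describe. Existence of the decomposition then follows from the rewriting above, and uniqueness follows by counting: the number of resulting normal forms should match $|N|$, and $|N|$ is bounded via the faithful representation from \Cref{lem:Z2_faithful_rep} extended by $\mathcal P_m$, which separates distinct normal forms.

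The main obstacle is this second and third step: confirming that the presented group really has the $z\,p$ product structure, that is, that $z$-letters commute with $X$-type Paulis up to relabelling. I would derive this from the affine relations, identifying $X_a$ in $\mathrm{AGL}$ with the Pauli $X_a$ through the action $\alpha$. If that identification fails, I would instead define the canonical form with $z$ ranging over the subgroup generated by all $\tau$-images, which is still abelian.
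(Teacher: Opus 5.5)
The existence part of your argument is the paper's own proof. You conjugate the $\mathrm{AGL}(m,2)$ letters to the right using $\tau$ and $\alpha$. You move $z$-letters past $Z$-type Paulis via the hybrid relations and past $X$-type Paulis via $X_a z_{\veci}X_a=z_{\veci\oplus\mathbf e_a}$. Then you write each block as a fixed word. You also correctly spot what the paper glosses over: the abstract amalgam is infinite, the $X$-Pauli conjugation relation is genuinely needed, and the hybrid relations make $(z,p)$ non-unique.

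\textbf{The gap: your uniqueness argument contradicts the relation you use for existence.} In Step~2 you get the $X$-Pauli action by identifying $X_a\in\mathcal P_m\subset N$ with the translation $(I,\mathbf e_a)\in\mathrm{AGL}(m,2)$. Step~1, however, derives uniqueness of $g$ from the projection $N\rtimes\mathrm{AGL}(m,2)\to\mathrm{AGL}(m,2)$. That projection sends the new relator $X_a(I,\mathbf e_a)^{-1}$ to $(I,\mathbf e_a)^{-1}\neq 1$. So on your quotient it only descends to a map onto $\mathrm{GL}(m,2)$, and the translation part of $g$ is not an invariant.

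Concretely, $z\,X(\veca)\,g$ and $z\cdot 1\cdot\bigl((I,\veca)g\bigr)$ are two of your normal forms for the same element. Your choice of $\operatorname{can}_{\mathrm{AGL}}$ ending in a translation $X(\vecb)$ makes this visible. Hence the Step~3 transversal, which absorbs only the $Z$-part of $p$, is still not unique. Counting via the faithful representation cannot detect this, because it sends the Pauli $X(\veca)$ and the translation $(I,\veca)$ to the same permutation matrix; it only certifies uniqueness of $z\,X(\veca)$ inside $N$. The paper's one-line ``written uniquely'' has the same defect, plus the $Z$-redundancy you noticed.

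\textbf{How to repair it.} Fix one convention and keep it throughout.
\begin{enumerate}
\item Keep the identification and also absorb the $X$-part of $p$ into the translation part of $g$. The normal form then becomes $z\cdot g$ with $p$ trivial. The faithful representation, whose image is diagonal sign matrices times affine permutation matrices, then does separate normal forms.
\item Do not identify. Instead impose $X_a z_{\veci}X_a=z_{\veci\oplus\mathbf e_a}$ for the Pauli $X_a$ as a relation inside $N$, and check that $(\tau,\alpha)$ preserves it (it does for $\mathsf{CNOT}$, $\mathsf{Swap}$ and translations). Then $G_{\mathbb Z_2^m}$ really is $N\rtimes\mathrm{AGL}(m,2)$, and your projection and counting arguments go through.
\end{enumerate}

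In either case keep the Pauli sign rather than working modulo phase. For $\vecs[a]=1$ one has $X_a\chi_{\vecs}X_a=\chi_{\vecs}\prod_{\veci}z_{\veci}$, while $X_aZ(\vecs)X_a=-Z(\vecs)$. Modulo phase, the hybrid relations would therefore force $\prod_{\veci}z_{\veci}=1$, clashing with the global constraint. The faithful representation would then no longer represent the group you are counting in.

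Your fallback does not remove the obstruction. The $\tau$-images of the $z_{\veci}$ are again $z$'s. Without some relation letting $X$-Paulis act on them, a word such as $X_a z_{\veci}$ is a reduced alternating word in the amalgam and has no $z\,p$ form at all.

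Finally, for the later use in \Cref{def:apx_map}, existence plus a fixed choice of one word per element already suffices. Your reordering argument, with the relation made explicit, gives exactly that.
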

\begin{proof}
Any element of $G_{\mathbb Z_2^m}$ is a word in generators from $\mathbb Z_2^m$, $\mathcal P_m$, and $\mathrm{AGL}(m,2)$. Using the relations of \Cref{def:OVCS_Z2}, we reorder these generators into a fixed canonical order.

First, elements of $\mathbb Z_2^m$ can be commuted through $\mathcal P_m$. Indeed, the $X$-type Paulis act on $\mathbb Z_2^m$ according to the prescribed conjugation relations, while the $Z$-type Paulis are identified with elements of $\mathbb Z_2^m$ (as these instantiate as the $\mathsf{C}^{m}(\mathsf{Z})$ operators discussed before). Next, elements of $\mathrm{AGL}(m,2)$ are represented by Clifford circuits generated by
$\{\mathsf{CNOT},\mathsf{Swap},\mathsf{X},\mathsf{I}\}$. By the affine symmetry relations, these generators can likewise be commuted through $\mathbb Z_2^m$, modifying only the corresponding indices.

Finally, Clifford operators conjugate Pauli operators to Pauli operators, allowing all elements of $\mathrm{AGL}(m,2)$ to be moved to the right of the Pauli sector. Since both $\mathcal P_m$ and $\mathrm{AGL}(m,2)$ admit canonical forms \cite{Clifford_sim}, every element of $G_{\mathbb Z_2^m}$ can therefore be written uniquely as intended.
\end{proof}

The canonical decomposition allows us to extend the approximate operator assignment to all group elements in a well-defined way, providing the input required for the stability lemma.

\begin{definition}\label{def:apx_map}
Let $f_A : G_{\mathbb Z_2^m} \to U(\mathcal{H}_A)$ and $f_B : G_{\mathbb Z_2^m} \to U(\mathcal{H}_B)$ be defined by
\begin{equation*}
f_A(x) =
\begin{cases}
- I & \text{if } x = J, \\
\tilde A_x^{r_x} & \text{if } \operatorname{can}(x)\in \mathbb Z_2^n , \\
x \text{ or }\mathsf{Clifford}(x)  & \text{if } \operatorname{can}(x)\in \mathcal{P}_m \text{ or } \mathrm{AGL}(m,2)\text{ resp.} , \\
\displaystyle \prod_{e \in \operatorname{can}(x)} f_A(e) & \text{otherwise},
\end{cases}
\end{equation*}
with $can(x)$ as defined in \Cref{lemma:can_form} and  $\mathsf{Clifford}(x)$ being a unique Clifford circuit defined according to the canonical form introduced in \cite{Clifford_sim}. $f_B$ is defined equally with the difference being that $f_B(x)=B_e$ if $\operatorname{can}(x) = e$.
\end{definition}

Using the maps $f_A$ and $f_B$ together with the bounds established in \Cref{lemma:set_of_dif}, we can bound the distances between arbitrary group elements, providing the input required to apply the stability lemma and state our main lemma.

\begin{lemma}\label{lem:RQP}
\Cref{def:QI_LBCS} robustly self-tests the operator strategy of~\Cref{lem:Z2_faithful_rep}, in the sense of~\Cref{def:OperatorRigidity}, with robustness $O(m^9 2^{5m} \sqrt{\varepsilon})$ .
\end{lemma}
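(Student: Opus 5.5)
The plan is to feed the approximate assignment $f_A$ of \Cref{def:apx_map} into the stability result \Cref{lemma:Vidick}. This produces an isometry $V_A$ and an exact representation $\tau$ of $G_{\mathbb Z_2^m}$ close to $f_A$ in state-dependent distance. The rigidity of \Cref{lem:Z2_faithful_rep} then identifies $\tau$ on the $z$-sector with (copies of) the canonical faithful representation. The same steps are run for Bob with $f_B$, and the two state-dependent operator bounds are combined into the post-measurement state bound required by \Cref{def:OperatorRigidity}.

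\emph{Step 1 (local defects of products).} For arbitrary $x,y\in G_{\mathbb Z_2^m}$, I bound $D_\rho(f_A(x)f_A(yx)^\dagger f_A(y)\otimes I_B\,\|\,I)$. Write $\operatorname{can}(x)$ and $\operatorname{can}(y)$ as in \Cref{lemma:can_form}. Bringing the concatenated word $\operatorname{can}(y)\operatorname{can}(x)$ to canonical form uses a sequence of elementary rewrites of three kinds:
\begin{itemize}
\item commuting two $z$-generators;
\item conjugating a $z$-generator by a Clifford generator or by an $X$-type Pauli;
\item replacing a parity product $\prod_{\veci\cdot\vecs=1}z_{\veci}$ by $Z(\vecs)$, together with the global relation.
\end{itemize}
Pauli and Clifford letters are represented exactly, so every rewrite among them costs nothing. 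Each rewrite involving $z$-letters costs at most the corresponding defect from \Cref{lemma:set_of_dif}. Two standard tools carry this through: the triangle inequality for $\sqrt{D_\rho}$, and the fact that $D_\rho(UW\,\|\,UW')=D_\rho(W\,\|\,W')$ for unitaries $U$ acting on $A$ placed to the left. To move operators sitting on the right past $\rho$, I use the EPR transfer trick. By \eqref{lemma:approx_AvBv}, any $\tilde A_v\otimes I$ can be swapped for $I\otimes\tilde B_v^\dagger$ (or its Clifford conjugate), which then commutes with all $A$-side operators. This is the point where the exactness of the verifier's Cliffords and the fact that all states are Clifford rotations of the same $\ket{\Phi}$ let me unify the several reference states $\rho_{x,y}$ into the single state $\rho=\ket\Phi\bra\Phi\otimes\rho_{\mathrm{junk}}$. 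A canonical word has $O(2^m)$ $z$-letters and $O(m^2)$ Clifford letters, so the number of rewrites is $\mathrm{poly}(m)2^{O(m)}$. Averaging over $x,y$ and combining with the bounds of \Cref{lemma:set_of_dif} should give
\[
\eta=\E_{x,y}D_\rho\bigl(f_A(x)f_A(yx)^\dagger f_A(y)\otimes I_B\,\|\,I\bigr)\le O(m^9 2^{5m}\sqrt\varepsilon).
\]

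\emph{Step 2 (stability and identification).} \Cref{lemma:Vidick} then gives $V_A$ and $\tau$ with $\E_x D_\rho(V_Af(x)V_A^\dagger\otimes I\,\|\,\tau(x)\otimes I)\le\eta$. Since $\tau$ is an exact representation satisfying all relations of \Cref{def:OVCS_Z2}, it decomposes into irreps. On the support of $\rho$ the global relation forces $\tau(J)=-I$. By the argument of \Cref{lem:Z2_faithful_rep} (diagonalize, then use the Pauli identification and affine symmetry), $\tau(z_{\veci})$ is then the canonical $z_{\veci}\otimes I_{\mathrm{junk}}$, with $\tau$ restricted to Paulis and Cliffords acting as the standard ones tensored with identity. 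Because $\tau$ is exact, this step costs no additional error. Passing from the average over $x$ to a bound for each generator costs a factor $|G_{\mathbb Z_2^m}|$ restricted to the generator set. I would instead apply the average bound only over the generator coset structure, and absorb the result into the $2^{O(m)}$ factor.

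\emph{Step 3 (to state closeness).} For each question $(x,y)$, I expand the difference of post-measurement operators telescopically as $\tilde A\rho\tilde A^\dagger-A\rho A^\dagger=(\tilde A-A)\rho\tilde A^\dagger+A\rho(\tilde A-A)^\dagger$, and likewise on Bob's side. Cauchy–Schwarz bounds each term by $\sqrt{D_\rho}$. The question-dependent states $\rho_{x,y}$ are handled as in Step 1: they are exact Clifford rotations of $\rho$, or $\rho$ followed by an exact Pauli projection.

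The main obstacle is Step 1. Controlling the reordering errors of long canonical words requires the error to stay additive rather than multiplicative across $2^{O(m)}$ rewrites, and this needs the EPR trick at every step. The Pauli-projected states arising from type-2 questions must also be related back to $\rho$; I would handle these by writing the projection as $\tfrac12(I\pm Z(\vecs))$ and using the exact identity $Z(\vecs)\otimes Z(\vecs)^{T}\ket\Phi=\ket\Phi$.
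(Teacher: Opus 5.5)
Your Steps 1 and 2 follow the paper's proof of this lemma. You bound the approximate-homomorphism defect of $f_A$ by reordering canonical words, with Pauli and Clifford letters exact and each $z$-rewrite charged to \Cref{lemma:set_of_dif}, and then apply \Cref{lemma:Vidick}. Your identification of $\tau$ via \Cref{lem:Z2_faithful_rep} and your Step 3 conversion to post-measurement states are details the paper leaves implicit.

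The gap is the passage from the \emph{averaged} conclusion of \Cref{lemma:Vidick} to a bound for each individual question, which \Cref{def:OperatorRigidity} requires. Paying a Markov-type factor and absorbing it into $2^{O(m)}$ does not work, for two reasons:
\begin{itemize}
\item The $z_{\veci}$ alone generate a subgroup of order $2^{2^m}$; in the canonical solution they generate all diagonal $\pm1$ matrices on $\mathbb C^{M}$.
\item $\mathrm{AGL}(m,2)$ contributes a further factor $2^{\Theta(m^2)}$.
\end{itemize}
So restricting the uniform average over $G_{\mathbb Z_2^m}$ to the $\mathrm{poly}(m)2^m$ question labels costs a doubly exponential factor. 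No coset decomposition makes that average concentrate on the generators. The same issue undercuts your claim that $\tau(J)=-I$ on the support of $\rho$. Since $J$ is a single group element, this needs a pointwise estimate at $J$. Even then it holds only approximately, so you must project onto the $\tau(J)=-1$ component at a small cost.

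The missing idea is the translation (self-correction) argument, which the paper uses right after invoking \Cref{lemma:Vidick}. Fix $z$.
\begin{itemize}
\item Your Step 1 already gives, for every $x$, that $f_A(z)$ is close on $\rho$ to $f_A(zx)f_A(x)^\dagger$.
\item Meanwhile $\tau(z)=\tau(zx)\tau(x)^{-1}$ holds exactly.
\item Since $x\mapsto zx$ is a bijection of $G_{\mathbb Z_2^m}$, the average over $x$ of $D_\rho(V_Af_A(zx)V_A^\dagger\otimes I_B\,\|\,\tau(zx)\otimes I_B)$ equals the averaged stability bound.
\end{itemize}
The triangle inequality then gives a bound of order $\eta$ for every single $z$, with no dependence on $|G_{\mathbb Z_2^m}|$. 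Because the distances are state-dependent, the factor acting first on the state must be moved off Alice's register using your EPR/consistency transfer. This is why the paper places $\sigma(zx)^{-1}$ on the $B$ system. With this replacing your Markov-type conversion, the remainder of your plan can proceed as described.

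One bookkeeping point: Step 3 passes through $\sqrt{D_\rho}$. If $D_\rho$ is read as a squared distance, the defect $\eta$ fed to \Cref{lemma:Vidick} must scale like $\varepsilon$ rather than $\sqrt{\varepsilon}$ for the final bound to be linear in $\sqrt{\varepsilon}$.
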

\begin{proof}
We start by showing that
\begin{equation}\label{eq:pre_sta}
    D_\rho \left( f_A(x) f_A(yx^{-1}) f_A(y) \otimes I_B\ \big \|  \ I \right) \leq O(m^6 2^{3m} \sqrt{\varepsilon})
\end{equation}
\begin{equation}
    D_\rho \left(I_A \otimes  f_B(x) f_B(yx^{-1}) f_B(y)\ \big \| \ I \right) \leq O(m^6 2^{3m} \sqrt{\varepsilon})
\end{equation}
Consider the word
$\operatorname{can}(x)\operatorname{can}(yx^{-1})\operatorname{can}(y)$.
Since $x(yx^{-1})y=1$, reducing this word to the identity amounts to commuting subgroup components into canonical order.

Only commutations involving $\mathbb Z_2^m$ contribute nontrivially to the error. Indeed, by construction (\Cref{def:apx_map}), the $\mathcal P_m$ and $\mathrm{AGL}(m,2)$ sectors are implemented exactly. Thus, errors arise only from the $\mathbb Z_2^m$ generators appearing in the canonical forms, each contributing according to the bounds of \Cref{lemma:approx_AvBv}.

Subsequently, the reduction uses at most $O(m^3)$ commutators of type $[\mathrm{AGL}(m,2),\mathbb Z_2^m]$, at most $O(m^2)$ commutators of the type $[\mathcal P_m,\mathbb Z_2^m]$ and $O(2^m)$ of the type  $[\mathbb Z_2^m,\mathbb Z_2^m]$\footnote{We write $[A,B]$ for commutators of the form $[a,b]$ with $a\in A$ and $b\in B$.}. All remaining commutations are exact and therefore incur no error. Applying the bounds from \Cref{lemma:set_of_dif} together with repeated use of $D_\rho(U_1U_2\|I) \le D_\rho(U_1\|I)+D_\rho(U_2\|I)$. Performing this reduction first for $\operatorname{can}(yx^{-1})\operatorname{can}(y)$
and then for the resulting product with $\operatorname{can}(x)$ yields the claimed bounds.

Now by \Cref{lemma:Vidick}, we obtain that there exist two isometries $V_A$ and $V_B$ such that,
\begin{equation}\label{eq:post_stab}
    \E_x\ D_\rho\left( f_A(x) \otimes I_B\ \big \|\  V_A \sigma(x) V_A^{\dagger} \otimes I_B \right)  \leq O(m^6 2^{3m}\sqrt{\varepsilon})
\end{equation}
with $\sigma$ being a representation of $G_{\mathbb Z_2^m}$ and $\rho=\!_{A,B}\ket{\Phi}\!\bra{\Phi}_{A,B} \otimes \rho_{\mathrm{junk}}$.

Finally, we need to show that the previous bounds also hold pointwise. First, we consider the following bound,
\begin{equation*}
    \E_x\ D_\rho\left(V_A f_A(zx)V_A^{\dagger} \otimes ( \sigma(zx)^{-1}  \otimes I)_B\ \big\|\ I  \right)  \leq O(m^6 2^{3m}\sqrt{\varepsilon}).
\end{equation*}
\noindent This follows from the previous bound on Alice, with a change of variables from $x\mapsto zx$ which does not alter the bound due to its being over the expected value of $x$. 

Subsequently, we use that $\sigma(zx)^{-1}=\sigma(x)^{-1}\sigma(z)^{-1}$ and \Cref{eq:post_stab} to deduce, 
\begin{equation}
    \E_x\ D_\rho\left(V_A f_A(zx)f_A(x)^{\dagger}V_A^{\dagger} \otimes ( \sigma(z) \otimes I)_B\ \big\|\ I  \right)  \leq O(m^6 2^{3m}\sqrt{\varepsilon}).
\end{equation}
Now we can just use \Cref{eq:pre_sta} and the triangle inequality to deduce, 
\begin{equation}
    \E_x\ D_\rho\left(V_A f_A(z)V_A^{\dagger} \otimes ( \sigma(z) \otimes I)_B\ \big\|\ I  \right)  \leq O(m^6 2^{3m}\sqrt{\varepsilon}).
\end{equation}
Finally, we can just drop the expectation operator while obtaining our final robustness bounds. Note that for Bob the same deductions will hold.
\end{proof}

\subsection{Mermin--Peres extended faithful \texorpdfstring{$\mathbb Z_2^m$}{Z\_2\^{}m} operator-valued BCS}\label{subsec:robust5}

The robustness proof for \Cref{sec:Classical_int} follows the same overall strategy as before. However, in the present setting, we must first show that the classically delegated setting remains, up to small error, equivalent to the ideal Clifford-rotated EPR-pair setting considered there. 

Recall that this difficulty results from neither the state preparation nor the Clifford operations being directly implemented by the verifier. Instead, in the three-prover protocol, the Clifford generators are delegated to Charlie via gate teleportation and are therefore only approximately certified through the interaction itself. This obstructs the argument from \Cref{lem:RQP}, where the reduction to the stability lemma relied on exact Clifford relations to robustly reduce arbitrary words in $\mathcal{P}_m \rtimes_{\alpha} \mathrm{GL}(m,2)$ to canonical form. In the present setting, only state-dependent approximate Clifford relations are available, and it is therefore no longer immediate that the same reduction remains valid.

We now explain how to overcome this issue by reducing any approximate strategy for the three-prover protocol to an effective two-prover strategy. Concretely, we write the strategy as $S = \bigl\{\{\rho_c\}_c,\{A_x\}_x,\{B_y\}_y\bigr\}$. Here, $\rho_c$ denotes the post-teleportation state shared by Alice and Bob conditioned on the Clifford operator delegated to Charlie, labelled by $c$. Importantly, the questions in the resulting nonlocal game remain state-dependent, exactly as in \Cref{subsec:Quantum_depth}.

\begin{lemma}\label{lemma:prob_to_dist2}
Let a two-prover strategy be given by observables $\{\tilde A_v\}_v$ and $\{\tilde B_v\}_v$ and a question-dependent state $\rho_c$.
Let $p_{\mathrm{con}}$ be the probability of passing the consistency checks, $p_{\mathrm{sat}}$ the probability of passing the constraint satisfaction checks, and $p_{\mathrm{win}}$ the overall success probability. Then $p_{\mathrm{win}} \le \min\{p_{\mathrm{con}}, p_{\mathrm{sat}}\}$.
Moreover, the following bounds hold:
\begin{equation}
\E_{r,v,c}\ \frac{1}{4} \left [D_{\rho_c}\left(\tilde{A}_v^r \otimes \tilde{B}_{g(v,c)}\ \big\|\  I \right)^2\right ] \le 1- p_{\mathrm{win}},
\end{equation}
\begin{equation}
\E_{r'}\ \frac{1}{4}\left[ D_{\rho_c}\left(\prod_{v \in r} \tilde A_v \otimes I_B \ \big\|\   (-1)^{\lambda_r} I\right)^2 \right] \le 1 - p_{\mathrm{win}}.
\end{equation}
Here $r$ ranges over the constraints of the Mermin--Peres extended faithful $\mathbb Z_2^n$ operator-valued constraint system defined in \Cref{subsec:composing}, while $r'$ ranges over all constraints except the involution and affine symmetry relations. For each constraint $r$, the value $\lambda_r \in \{\pm 1\}$  denotes its target value. Furthermore, $c$ is sampled uniformly from $\{\mathsf{CNOT}, \mathsf{Swap}, \mathsf{X}, I\}$ and $g:\{0,1\}^m \times \{\mathsf{CNOT}, \mathsf{Swap}, \mathsf{X}, I\} \mapsto \{0,1\}^m$ denotes the relabelling map induced by Clifford conjugation according to the affine symmetry relations of the constraint system.
\end{lemma}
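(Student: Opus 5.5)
The plan is to mirror the derivation of \Cref{lemma:prob_to_dist}, with the fixed reference state $\rho$ replaced by the question-dependent post-teleportation state $\rho_c$. Throughout, I treat the test as a state-dependent two-prover game: conditioned on Charlie's label $c$ (and transcript $y$), Alice and Bob share $\rho_c$ and receive classical labels. The first claim, $p_{\mathrm{win}}\le\min\{p_{\mathrm{con}},p_{\mathrm{sat}}\}$, is immediate: winning a round requires passing whichever check is performed in that round, so the event ``win'' is contained in both ``pass consistency'' and ``pass satisfaction.'' More precisely, $1-p_{\mathrm{win}}$ upper-bounds the conditional failure probability of each sub-test, up to the constant fraction with which that sub-test is sampled; this constant I absorb into the normalization.

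\textbf{Consistency bound.} Fix a round in which Alice is asked a constraint $r$ containing $v$, and Bob is asked the relabelled variable $g(v,c)$. Alice's answer for $v$ is the $\pm1$ outcome of the binary observable $\tilde A_v^r$, obtained by coarse-graining her projective measurement for $r$. Bob's answer comes from $\tilde B_{g(v,c)}$. The probability that the two answers agree is
\[
\tfrac12\bigl(1+\tr(\rho_c\,\tilde A_v^r\otimes\tilde B_{g(v,c)})\bigr).
\]
For the product $W$ of two commuting-in-tensor binary observables, $D_{\rho_c}(W\|I)=\tr(\rho_c(W-I)^2)=2-2\tr(\rho_c W)$, which equals $4\Pr[\text{disagree}]$. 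The key difference from \Cref{lemma:prob_to_dist} is that the Clifford frame is now already inside $\rho_c$, so no explicit $\mathcal{C}_c$ conjugation of $\tilde B$ appears. Averaging over $(r,v,c)$ gives $\E\,\tfrac14 D_{\rho_c}(\cdot)\le 1-p_{\mathrm{win}}$. The squared version claimed in the lemma then follows as well, using the trivial bound $D\le 4$ together with Jensen; alternatively I would keep the linear bound, which is stronger up to constants.

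\textbf{Satisfaction bound.} For constraint queries $r'$, Alice's answers multiply to $\lambda_{r'}$ with probability
\[
\tfrac12\bigl(1+(-1)^{\lambda_{r'}}\tr(\rho_c\prod_{v\in r'}\tilde A_v)\bigr).
\]
This uses that the $\tilde A_v$ within one context are coarse-grainings of a single projective measurement, so they commute and their product is a binary observable. The same identity as above then yields $D_{\rho_c}(\prod\tilde A_v\|(-1)^{\lambda}I)=4\Pr[\text{fail}]$.

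\textbf{Main obstacle.} The delicate point is justifying that $\rho_c$, together with measurement operators that depend only on Alice's and Bob's own labels, faithfully models the three-prover interaction. Charlie's transcript $y$ is announced before round 2, but Alice and Bob never see $c$ or $y$, so their operators cannot depend on them. I would handle this by defining $\rho_c$ as the average over $y$ of the post-measurement state of Charlie's round-1 measurement, with questions relabelled by $y$ as in \Cref{def:Cliff_EPR_self}. By linearity the success probability then decomposes as an average over $c$ of the state-dependent quantities above. I would also check that Pauli-frame relabelling by $y$ is absorbed into $g$ and $\lambda_r$.
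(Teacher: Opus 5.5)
Your core computation is the paper's argument. The paper gives no separate proof; as for \Cref{lemma:prob_to_dist}, it treats the bounds as immediate from the protocol and the binarity of the observables. That is exactly your identity $\tr\bigl(\rho_c(W-I)^2\bigr)=2-2\tr(\rho_c W)=4\Pr[\text{disagree}]$ for $W=\tilde A_v^r\otimes\tilde B_{g(v,c)}$, together with its analogue for the product of the commuting observables of one context. Two things in your write-up should be fixed, however.

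The step in your last paragraph fails as written. You cannot fold Charlie's outcome $y$ into a single averaged state $\bar\rho_c=\sum_y p(y|c)\rho_{c,y}$, because the round-2 labels and the acceptance signs depend on $y$. A Pauli correction $X(\mathbf a)Z(\mathbf b)$ relabels $z_{\veci}\mapsto z_{\veci\oplus\mathbf a}$ and flips the signs of Pauli-type variables. Conditioned on $(c,y)$, the consistency term is $\lambda_{c,y}\tr\bigl(\rho_{c,y}\,\tilde A_{g_A(v,c,y)}\otimes\tilde B_{g_B(v,c,y)}\bigr)$. The sum $\sum_y p(y|c)\,\lambda_{c,y}\tr(\rho_{c,y}W_y)$ is not of the form $\tr(\bar\rho_c W)$ with $W$ independent of $y$. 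So ``absorb $y$ into $g$ and $\lambda_r$'' and ``average $y$ into $\rho_c$'' cannot both be done. The correct move is to index the question-dependent state by the full round-1 transcript $(c,y)$. Apply your computation to each $\rho_{c,y}$ with its own relabelling and sign, then average the resulting terms over $(c,y)$. The lemma already takes the two-prover strategy with its question-dependent state as given, and the paper simply suppresses $y$ in writing $\rho_c$. So this modelling step is not strictly needed for the statement, but if you include it, it has to be done this way.

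Second, the normalisation. The displayed bounds use $\tfrac14 D^2$, which matches the convention of \Cref{lemma:sta_chang_D}, $D_\rho(X\|Y)^2=\tr\bigl(\rho(X-Y)^\dagger(X-Y)\bigr)$. You instead used the squared convention from the statement of \Cref{lemma:Vidick}; the paper is inconsistent between the two. With the former convention, $\tfrac14 D_{\rho_c}(W\|I)^2$ is exactly the disagreement (resp.\ failure) probability. The bounds then hold as stated, up to the sub-test sampling fraction you already flagged. With your convention, $D\le 4$ only yields $\E\tfrac14 D^2\le 4(1-p_{\mathrm{win}})$. Jensen does not help here, since it lower-bounds $\E D^2$.
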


The main difference between \Cref{lemma:prob_to_dist} and the previous lemma is that the resulting operator distances remain state-dependent, even after the Clifford conjugations. This is unavoidable in the classically delegated setting, since the conditioned states $\rho_c$ may deviate from the ideal Clifford-rotated EPR states tested by the protocol. To overcome this issue, we derive a bound relating operator distances evaluated on the approximate states to the corresponding distances evaluated on a fixed ideal reference state, together with a trace-distance bound between the approximate and ideal states.

\begin{lemma}[State-dependent stability under state perturbations]\label{lemma:sta_chang_D}
Let \(\rho,\sigma\) be density operators and let \(X,Y\) be unitary operators. Then
\begin{equation}
        D_\rho(X \| Y)^2 \leq  D_\sigma(X \| Y)^2 + 4\|\rho -\sigma\|_1. 
\end{equation}
\end{lemma}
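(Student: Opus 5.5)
The plan is to expand $D_\rho(X\|Y)=\Tr{\rho\,(X-Y)^\dagger(X-Y)}$ and compare it directly with the same expression for $\sigma$. Set $W:=(X-Y)^\dagger(X-Y)$. It is positive semidefinite, and since $X,Y$ are unitary, $\|X-Y\|_\infty\le 2$, so $0\le W\le 4I$. By linearity of the trace,
\begin{equation*}
D_\rho(X\|Y)-D_\sigma(X\|Y)=\Tr{(\rho-\sigma)W},
\end{equation*}
and H\"older's inequality gives $|\Tr{(\rho-\sigma)W}|\le \|W\|_\infty\|\rho-\sigma\|_1\le 4\|\rho-\sigma\|_1$. (One can sharpen the constant by replacing $W$ with $W-2I$, since $\Tr{\rho-\sigma}=0$, but this is not needed.) This yields the unsquared comparison $D_\rho(X\|Y)\le D_\sigma(X\|Y)+4\|\rho-\sigma\|_1$.

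The one point needing care is the convention. As literally defined in \Cref{lemma:Vidick}, $D_\rho$ is a quadratic quantity (a squared state-dependent norm). The way it is squared and square-rooted in \Cref{lemma:prob_to_dist} suggests it is being used as the norm $\|(X-Y)\rho^{1/2}\|_2$. I would therefore read $D_\rho(X\|Y)^2$ in the statement as $\Tr{\rho W}$, that is, the squared norm. Under this reading the inequality above is exactly the claim $D_\rho^2\le D_\sigma^2+4\|\rho-\sigma\|_1$, and the proof is complete.

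If instead $D$ is taken literally as the trace quantity, the statement follows from the same bound by a short additional step. Because $0\le D_\rho,D_\sigma\le 4$, we can write
\begin{equation*}
D_\rho^2-D_\sigma^2=(D_\rho-D_\sigma)(D_\rho+D_\sigma)\le 8\cdot 4\|\rho-\sigma\|_1,
\end{equation*}
which gives the claim up to a constant factor. In that case I would note the adjusted constant explicitly, or normalize by the operator-norm bound of $X-Y$.

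I expect no real obstacle here. The only subtlety is this bookkeeping of conventions, together with making sure unitarity supplies the operator-norm bound $\|W\|_\infty\le 4$. If $X$ and $Y$ were only contractions, the same argument would still go through, since $\|X-Y\|_\infty\le 2$ continues to hold.
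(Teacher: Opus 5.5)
Your proposal is correct and matches the paper's proof: the paper also identifies $D_\rho(X\|Y)^2$ with $\mathrm{Tr}\bigl(\rho\,(X-Y)^\dagger(X-Y)\bigr)$ (your squared-norm reading, despite the literal definition in \Cref{lemma:Vidick}), writes the difference as $\mathrm{Tr}\bigl((\rho-\sigma)(X-Y)^\dagger(X-Y)\bigr)$, and bounds it by H\"older's inequality with $\|(X-Y)^\dagger(X-Y)\|_\infty\le 4$. Your caveat about the convention is well founded: under the literal reading the constant $4$ genuinely fails. For example, take $X=I$, $Y=Z$, $\rho=\ketbra{1}{1}$ and $\sigma=(1-t)\ketbra{1}{1}+t\ketbra{0}{0}$; then the left-hand difference is $32t-16t^2$, while the allowed slack is only $8t$.
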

\begin{proof}
We start by writing the difference
\begin{align*}
\ | D_\rho(X\| Y)^2 -D_\sigma(X \| Y)^2\ | &= \ |\mathsf{Tr}(\rho (X-Y)^\dagger(X-Y))-\mathsf{Tr}(\sigma (X-Y)^\dagger(X-Y))\ |\\
    &= \ |\mathsf{Tr}(\rho U^\dagger U)-\mathsf{Tr}(\sigma U^\dagger U)\ |=\ |\mathsf{Tr}(\rho U^\dagger U-\sigma U^\dagger U)\ |\\ &= \ |\mathsf{Tr}((\rho-\sigma) U^\dagger U) \ |
\end{align*}
Next we use the Schatten norm, $\|ST\|_1\leq \|S\|_p \|T\|_q$ with $1=1/p+1/q$, obtaining 
\begin{equation*}
 |\mathsf{Tr}((\rho-\sigma) U^\dagger U) | \leq \|\rho-\sigma \|_1   \|U^\dagger U \|_{\infty}
\end{equation*} 
\noindent for $p=1$ and $q=\infty$.

We now bound $\|U^\dagger U \|_\infty\leq \|U \|_\infty^2= \|X-Y \|_\infty^2\leq (\|X \|_\infty+\|Y \|_\infty)^2\leq 4$. Thus, $\ | D_\rho(X\| Y)^2 -D_\sigma(X\| Y)^2 | \leq 4\|\rho-\sigma \|_1.$
\end{proof}

We now derive bounds relating the states arising from the approximate strategy to the corresponding ideal reference states. To this end, we take the opportunity to establish the full collection of robustness statements for the Clifford-basis test (\Cref{def:Cliff_EPR_self}), including both the states and the observables.

\paragraph{Proof of \Cref{lem:clifford_basis_rigidity}.} The proof will, as before, reduce the analysis of the protocol in \Cref{fig:delegated_state_preparation} to the two non-communicating parties Alice and Bob, with Charlie’s behaviour absorbed into the state shared by Alice and Bob at the beginning of the second round. We therefore define an approximate strategy achieving a winning probability $p_{\mathrm{win}}$ as $S'=\{\{\rho_c\}_c,\{A_x\}_x,\{B_y\}_y\}$. Furthermore, for each state $\rho_c$, let $\mathcal{C}_c$ denote the ideal Clifford operator corresponding to the operation requested from Charlie. We define the rotated state
\begin{equation}
\rho_{cc^{-1}} := (\mathcal{C}_c \otimes I)\rho_c(\mathcal{C}_c \otimes I)^\dagger
\end{equation}
for each Clifford label $c$. 

We now observe that, when restricted to a fixed state $\rho_{cc^{-1}}$, and after relabelling Bob's observables $\tilde B_v$ according to conjugation by the Clifford operator $\mathcal C_c$, the resulting statistics are exactly reduced to those of the parallel Mermin--Peres self-testing scenario. For example, in the constraint-system notation, suppose Alice is asked to assign values to variables $v_{j,1}^A,v_{j,2}^A,v_{j,3}^A$ (among variables that define a parallel Mermin--Peres constraint), which need to satisfy the constraint of $v_{j,1}^Av_{j,2}^Av_{j,3}^A=1$, and that Bob is asked to assign a value to $v_{j,1}^B$. Here, we specify the variables on each side with a superscript for clarity. Now suppose Charlie is asked to perform a $\mathsf{CNOT}$ which is delegated to Alice's pair of qubits $j$ via teleportation. Then, the verifier will check if Bob's assignment to $v_{j,1}^B$ is the same as Alice's assignment to $v_{j,3}^A$.

Consequently, each such setting is equivalent to the original parallel Mermin--Peres game up to the corresponding Clifford relabeling, so that there exists for each $c$ a local isometry $V_c = V_A^c\otimes V_B^c$ acting on the joint Hilbert space of Alice and Bob such that, for Pauli strings $\vecs,\vect,\vecu,\vecv\in\{0,1\}^{2m}$, there exist observables $A_{\vecs,\vect}$ and $B_{\vecu,\vecv}$ on Alice's and Bob's local Hilbert spaces satisfying
\begin{equation}
\Bigl|\langle \phi_c|X(\vecs)Z(\vect)\otimes X(\vecu)Z(\vecv)|\phi_c\rangle-\langle \psi_c| A_{\vecs,\vect}\otimes B_{\vecu,\vecv}|\psi_c\rangle\Bigr|\leq O(m^2\sqrt{\varepsilon})
\end{equation}
where $|\phi_c\rangle := V_c(|\psi_c\rangle)$. Moreover, 
\begin{equation}
\bra{\phi_c} (I^{k} \otimes \mathcal{C}_c \otimes I^{m'})(\ket{\Phi}  \bra{\Phi}_{AB})(I^{m'} \otimes \mathcal{C}_c^\dagger \otimes I^{k} )\otimes I_{junk} \ket{\phi_c} \geq 1-O(m^2\sqrt{\varepsilon})
\end{equation}
\noindent where $\Phi_{AB}$ is the $2m$-fold tensor product of Bell pairs $\bigotimes_{i=1}^{2m}\ket{\Phi^+}_{A_iB_i}$, $k\in[2m]$, and $m'=2m-k-|c|$ based on bounds determined by \cite{coudron2016parallel}.

Importantly, the corresponding local isometries must in fact coincide, i.e., $V_c = V_{c'}$ for all $c,c'\in\{\mathsf{Swap},\mathsf{CNOT},\mathsf{X},\mathsf{I}\}$, over all possible qubit subsystems of \(\mathcal H_B\). Indeed, the isometry represents the degree of freedom relating the physical observables to their extracted ideal counterparts. Since Charlie cannot communicate with Alice and Bob, he cannot coordinate any adaptive change of this degree of freedom between different Clifford-conditioned executions of the protocol. Consequently, Alice and Bob must employ a single family of observables throughout. The extracted observables must therefore remain consistent across all post-selected settings, forcing the associated rigidity isometries to coincide. We conclude that the rigidity isometry can be chosen uniformly over all Clifford labels \(c\), yielding the full statement of \Cref{lem:clifford_basis_rigidity}.\hfill\qedsymbol\\

The previous proven bounds for \Cref{def:Cliff_EPR_self} allow us to control the distance between the state prepared through Charlie’s actions and the ideal reference state considered in \Cref{def:QI_LBCS}. More importantly, it allows us to rewrite all the state-dependent distances arising from \Cref{def:Classical_Int} with respect to the ideal Clifford-rotated EPR states. From this, we can derive the following inequalities for the generators and relators.

\begin{lemma}\label{lemma:set_of_dif2}
Let $\{\tilde B_v\}$ be Bob’s observables and $\{\tilde A_v^{(r_v)}\}$ Alice’s observables. Assume the strategy succeeds with probability at least $1-\varepsilon$. Then $\{\tilde B_v\}$ is an approximate operator solution in the sense that
\begin{equation}
\sum_{r'} D_\rho \left(\prod_{v \in r'} I_A \otimes  \tilde{B}_{v} ^\dagger\ \Big \| \  (-1)^{\lambda_{r'}}I\right)  \le\  O(m^52^{4m}\sqrt{\varepsilon}),
\end{equation}

\begin{equation}
\sum_{v,v',c}D_\rho \left(I_A \otimes [\tilde{B}_v,\mathcal{C}_c\tilde{B}_{v'}\mathcal{C}_c^\dagger]\ \Big \| \ I \right)\le O(m^82^{4m}\sqrt{\varepsilon}).
\end{equation}

The same bounds hold for Alice operators $\{\tilde A_v^{(r_v)}\}$ in place of Bob’s observables $\{\tilde B_v\}$,  and $\rho = \ket{\Phi}\bra{\Phi}_{A,B} \otimes \rho_{\mathrm{junk}}$.
\end{lemma}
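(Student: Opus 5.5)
The plan is to follow the proof of \Cref{lemma:set_of_dif} line by line, with \Cref{lemma:prob_to_dist2} as the starting point instead of \Cref{lemma:prob_to_dist}. The only new work is to move every state-dependent distance from the question-dependent states $\rho_c$ (or $\rho_{cc^{-1}}$) onto the single ideal reference state $\rho=\ket{\Phi}\bra{\Phi}_{A,B}\otimes\rho_{\mathrm{junk}}$, where $\rho_{\mathrm{junk}}$ is the junk state supplied by \Cref{lem:clifford_basis_rigidity}.

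First, I would make the change of state explicit. \Cref{lem:clifford_basis_rigidity}, applied through its $\QNC^0$ version \Cref{lemma:cliff_test_qnc}, gives a uniform local isometry $V$ such that $V\rho_c V^\dagger$ has fidelity at least $1-O(m^2\sqrt{\varepsilon})$ with $(\mathcal{C}_c\otimes I)(\ket{\Phi}\bra{\Phi}\otimes\rho_{\mathrm{junk}})(\mathcal{C}_c\otimes I)^\dagger$. Fuchs--van de Graaf converts this into a trace-distance bound $\|V\rho_cV^\dagger-\mathcal{C}_c\rho\,\mathcal{C}_c^\dagger\|_1\le O(m\,\varepsilon^{1/4})$. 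I will absorb $V$ into the observables throughout, since every $D$ statement is invariant under conjugating both the state and the operators by $V$. Next, \Cref{lemma:sta_chang_D} turns each squared distance on $\rho_c$ into the corresponding squared distance on $\mathcal{C}_c\rho\,\mathcal{C}_c^\dagger$, at an additive cost $4\|\cdot\|_1$. Conjugating the operators by $\mathcal{C}_c$ then rewrites everything on $\rho$ with Bob's operators replaced by $\mathcal{C}_c\tilde B\,\mathcal{C}_c^\dagger$. This recovers exactly the two inequalities of \Cref{lemma:prob_to_dist}, with $1-p_{\mathrm{win}}$ replaced by $\varepsilon+O(m\varepsilon^{1/4})$. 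To land on the stated $\sqrt{\varepsilon}$ scaling rather than $\varepsilon^{1/8}$, I would instead use the state-error in the form of the first display of \Cref{lem:clifford_basis_rigidity}, which is a direct expectation-value bound of order $O(m^2\sqrt\varepsilon)$. That bound should be applied to the operator $(X-Y)^\dagger(X-Y)$ expanded in the Pauli basis on the certified registers, rather than passing through trace distance. Each generator observable then carries an extra $O(m^2)$ factor, which is where the upgrade from $m^3 2^{2m}$ to $m^5 2^{4m}$ should come from.

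With the reduced inequalities on $\rho$ in hand, I would repeat the steps of \Cref{lemma:set_of_dif} in the same order. First, Cauchy--Schwarz converts the averaged squared bounds into summed linear bounds, which multiplies by the question-set sizes; the composed system has $O(2^m\cdot 6^{m/2})$ constraints, giving roughly $2^{2m}$ rather than $2^{m}$. Second, the identity $D_\rho(UZ\|I)=D_\rho(Z\|U^\dagger)$ transfers each of Alice's operators to Bob's side. Third, for the constraint relators, a triangle inequality splits each product into single-factor distances, which are then summed. Fourth, for the commutators, the same two-term triangle bound as before picks up an extra factor $m^3$ from summing over $v'$ and $c$. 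Finally, I would transfer the bounds back to Alice's fixed representatives $\tilde A_v^{(r_v)}$ via consistency, exactly as in \Cref{lemma:set_of_dif}.

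The main obstacle is the change-of-state step. Constraint relators of the Mermin--Peres extended system involve products of exponentially many $z$-variables together with the $v$-variables. The state error must therefore be applied once per relator, and not once per factor, or the $2^{m}$ blow-up compounds. I would handle this by applying \Cref{lemma:sta_chang_D} only at the level of the bounds from \Cref{lemma:prob_to_dist2}, where each relator appears as a single operator, and doing all product expansions afterwards on the fixed state $\rho$. A secondary concern is that the isometry must be the same for every $c$. \Cref{lem:clifford_basis_rigidity} asserts this, so I would simply invoke it.
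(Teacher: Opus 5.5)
\textbf{There is a genuine gap in your repair step.} Your overall route is the paper's. You start from \Cref{lemma:prob_to_dist2}, move each squared distance from the conditioned state to $\rho=\ket{\Phi}\bra{\Phi}_{A,B}\otimes\rho_{\mathrm{junk}}$ using \Cref{lemma:sta_chang_D} and the state bound of \Cref{lem:clifford_basis_rigidity}, and then rerun \Cref{lemma:set_of_dif}.

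Your diagnosis that this only yields $\varepsilon^{1/8}$ is correct, and the paper does not escape it either. Its line $\sum_{r,v,c}D(\cdots)\le 2|r||v||c|(m^2+1)\sqrt{\varepsilon}$ is not what \Cref{lemma:sta_chang_D} gives. There, the state error $4\|\cdot\|_1$ enters $D^2$ and so sits under the square root after Cauchy--Schwarz. Moreover, the fidelity bound $1-O(m^2\sqrt{\varepsilon})$ gives trace distance only $O(m\,\varepsilon^{1/4})$. Your proposed fix does not recover $\sqrt{\varepsilon}$, for two reasons.

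First, however you bound $\operatorname{Tr}\bigl((\sigma_c-\rho)(X-Y)^\dagger(X-Y)\bigr)$, with $\sigma_c$ the Clifford-rotated conditioned state, that bound is an additive error on $D^2$. An $O(m^2\sqrt{\varepsilon})$ error there gives, after Cauchy--Schwarz, $\E D_\rho\le\sqrt{4\varepsilon+O(m^2\sqrt{\varepsilon})}=O(m\,\varepsilon^{1/4})$. It does not become an extra multiplicative $O(m^2)$ on $\sqrt{\varepsilon}$. To get that, you would need an additive error of order $\poly(m)\,\varepsilon$ on $D^2$.

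Second, the first display of \Cref{lem:clifford_basis_rigidity} is not a comparison between $\rho_c$ and the ideal state at all. It says that the provers' Mermin--Peres observables on $\ket{\psi_c}$ reproduce the correlations of ideal Pauli strings on $V\ket{\psi_c}$.
\begin{itemize}
\item It is silent about the hypercube observables $\tilde A_v^r$ and $\tilde B_{g(v,c)}$ from which $X$ is built.
\item $V(X-Y)^\dagger(X-Y)V^\dagger$ acts jointly on the certified registers and the junk. Its Pauli expansion on the certified registers therefore has junk-valued coefficients, which the display does not control.
\item Even with scalar coefficients, the $\ell_1$-norm of those coefficients can be exponential in $m$, not $O(m^2)$.
\end{itemize}
At this stage nothing is known about $X$ beyond $\|X-Y\|_\infty\le 2$, so the only uniform handle is the trace distance.

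The best this route honestly gives uses the unsquared bound $|D_\rho(X\|Y)-D_\sigma(X\|Y)|\le 2\|\ket{\psi}-\ket{\phi}\|$ for Uhlmann-aligned purifications. That is an additive $O(m\,\varepsilon^{1/4})$ per term, which improves on your $\varepsilon^{1/8}$. The result has prefactors of the same $m^{O(1)}2^{O(m)}$ type, but with $\varepsilon^{1/4}$ in place of $\sqrt{\varepsilon}$. Reaching $\sqrt{\varepsilon}$ would require the Clifford-basis test to certify fidelity $1-O(\poly(m)\,\varepsilon)$, which \Cref{lem:clifford_basis_rigidity} does not provide.
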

\begin{proof}
Starting from, 
\begin{equation*}
\E_{r,v,\,c}\ \frac{1}{4} \left [D_{\rho_c}\left(\tilde{A}_v^r \otimes \tilde{B}_{g(v,c)} \Big\| I \right)^2\right ] \le \varepsilon,
\end{equation*}
obtained from \Cref{lemma:prob_to_dist2}, and considering that 
\begin{equation*}
D_{\rho_c}\left(\tilde{A}_v^r \otimes \tilde{B}_{g(v,c)}\ \Big\|\ I \right)^2 =  D_{\mathcal C_c\rho_c\mathcal C_c^{\dagger}}\left(\tilde{A}_v^r \otimes \mathcal C_c\tilde{B}_{g(v,c)}\mathcal C_c^{\dagger}\ \Big\|\ I \right)^2.
\end{equation*}

\noindent We can use \Cref{lemma:sta_chang_D} to write all distances in function of a single reference state, which for us will be $\rho= \ket{\Phi}\bra{\Phi}_{A,B} \otimes \rho_{\mathrm{junk}}$,
\begin{equation}
D_{\rho}\left(\tilde{A}_v^r \otimes \mathcal C_c\tilde{B}_{g(v,c)}\mathcal C_c^{\dagger}\ \Big\|\ I \right)^2 \leq  D_{\rho_c}\left(\tilde{A}_v^r \otimes \tilde{B}_{g(v,c)}\ \Big\|\ I \right)^2+4\left \|\mathcal C_c \rho_c\mathcal C_c^{\dagger}-\rho\right \|_1.
\end{equation}

Thus, 
\begin{equation*}
\sum_{r,v,c} D_{\rho_c} \left(\tilde{A}_v^r \otimes \mathcal{C}_c\ \big(\tilde{B}_{g(v,c)}\big)\mathcal{C}_c^\dagger\  \Big\|\ I \right)\leq 2|r||v||c|\sqrt{\varepsilon}(m^2+1). 
\end{equation*}
\begin{equation}\label{eq:upd_state_dist}
\sum_{r,v,c} D_{\rho_c} \left(\tilde{A}_v^r \otimes I_B\  \Big\|\  I_A \otimes \mathcal{C}_c^\dagger\ \big(\tilde{B}_{g(v,c)}^ \dagger\big)\mathcal{C}_c\right)\leq m^52^{4m}\sqrt{\varepsilon}. 
\end{equation}

At this point the proof just follows as in \Cref{lemma:set_of_dif} with the bound of \Cref{eq:upd_state_dist} in place of \Cref{lemma:approx_AvBv}.
\end{proof}

Finally we state our main robustness lemma for \Cref{sec:Classical_int}, 
\begin{lemma}\label{lemma:rob_sec5}
\Cref{def:Classical_Int} robustly self-tests the operator strategy of~\Cref{lem:Extended_Z2_faithful_rep}, in the sense of~\Cref{def:OperatorRigidity}, with robustness $O(m^{11}2^{7m}\sqrt{\varepsilon})$.
\end{lemma}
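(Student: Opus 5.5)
The proof will mirror that of \Cref{lem:RQP}, with \Cref{lemma:set_of_dif2} replacing \Cref{lemma:set_of_dif} as the source of approximate relations. First, I will extend the approximate operator assignment to all of the composed group exactly as in \Cref{def:apx_map}. The composed system presents $\bigl(\mathbb Z_2^m *_{\mathbb Z_2^m}\mathcal P_m\bigr)\rtimes_{(\tau,\alpha)}\mathrm{AGL}(m,2)$, so the canonical form of \Cref{lemma:can_form} still applies. The $z$-sector generators are sent to the fixed representatives $\tilde A_v^{(r_v)}$ (respectively $\tilde B_v$). The Pauli sector and Clifford sector are sent to their ideal operators, and arbitrary elements to ordered products along $\operatorname{can}(x)$.

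The key point is that the Pauli and Clifford sectors are no longer exact on the physical state; they are only certified by the Clifford-basis test. I will handle this by composing with the uniform local isometry $V=V_A\otimes V_B$ of \Cref{lem:clifford_basis_rigidity}. After applying $V$, the extracted Pauli observables and the Clifford-rotated states $\rho_c$ agree with their ideal counterparts up to $O(m^2\sqrt{\varepsilon})$ in the state-dependent distance and in fidelity. I will then use \Cref{lemma:sta_chang_D} to transfer every distance from $\rho_c$ to the single reference state $\rho=\ket{\Phi}\bra{\Phi}\otimes\rho_{\mathrm{junk}}$. This is the step already packaged into \Cref{lemma:set_of_dif2}, at the cost of the extra factor $m^2 2^{2m}$ relative to the quantum-verifier case.

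Next I will establish the analogue of \Cref{eq:pre_sta}, namely that $D_\rho\bigl(f_A(x)f_A(yx^{-1})f_A(y)\otimes I_B\,\|\,I\bigr)$ is bounded for all $x,y$. To do this I will reduce the word $\operatorname{can}(x)\operatorname{can}(yx^{-1})\operatorname{can}(y)$ to the identity by commutations and count the error per commutation type, as in \Cref{lem:RQP}.
\begin{itemize}
\item There are $O(m^3)$ commutations of type $[\mathrm{AGL},\mathbb Z_2^m]$ and $O(m^2)$ of type $[\mathcal P_m,\mathbb Z_2^m]$. Each is charged using the commutator and constraint bounds of \Cref{lemma:set_of_dif2}, which are of order $m^8 2^{4m}\sqrt{\varepsilon}$ summed.
\item There are $O(2^m)$ commutations inside $\mathbb Z_2^m$, each charged in the same way.
\item Commutations internal to the Pauli/Clifford sector are now charged $O(m^2\sqrt{\varepsilon})$ each via \Cref{lem:clifford_basis_rigidity}, rather than being exact; this cost is dominated by the other terms.
\end{itemize}
Together these should give a pre-stability bound of order $m^{11}2^{7m}\sqrt{\varepsilon}$ after the expansion $D_\rho(U_1U_2\|I)\le D_\rho(U_1\|I)+D_\rho(U_2\|I)$.

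Finally, I will apply \Cref{lemma:Vidick} to $f_A$ and to $f_B$ to obtain isometries and a representation $\sigma$ that is close on average over $x$. I will then convert this average bound into a pointwise bound with the change of variables $x\mapsto zx$ used in \Cref{lem:RQP}. Next I will invoke the uniqueness in \Cref{lem:Extended_Z2_faithful_rep}: $\sigma$ restricted to the relevant sector must be (a multiple of) the faithful diagonal solution together with the Pauli solution. Closeness of the observables, combined with the state closeness from \Cref{lem:clifford_basis_rigidity}, then yields the bound in \Cref{def:OperatorRigidity}.

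The main obstacle is the previous step's reliance on the Clifford sector being exact. In the delegated setting only state-dependent approximate Clifford relations hold, and they hold on the different states $\rho_c$. I would first try to fix everything on the single reference state through the uniform isometry. The argument then rests on the claim in \Cref{lem:clifford_basis_rigidity} that $V_c$ is independent of $c$; if that claim fails, the errors would need to be tracked per label $c$ and summed.
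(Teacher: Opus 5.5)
Your proposal is essentially the paper's proof: transfer all approximate relations to the single reference state $\rho=\ket{\Phi}\bra{\Phi}\otimes\rho_{\mathrm{junk}}$ via \Cref{lemma:set_of_dif2}, repeat the canonical-form reduction of \Cref{lem:RQP}, apply \Cref{lemma:Vidick}, and close with the uniqueness in \Cref{lem:Extended_Z2_faithful_rep}. Like the paper, it rests on the $c$-independence of the isometry asserted in \Cref{lem:clifford_basis_rigidity}; your extra error charge for commutations inside the Pauli/Clifford sector is unnecessary once those sectors are mapped to ideal operators as in \Cref{def:apx_map}, but it is harmless because those terms are dominated.
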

\begin{proof}
By \Cref{lemma:set_of_dif2}, any strategy succeeding with probability at least $1-\varepsilon$ yields observables that approximately satisfy the defining relations of the Mermin--Peres extended faithful $\mathbb Z_2^m$ operator-valued constraint system, on the fixed reference state $\rho=|\Phi\rangle\!\langle\Phi|_{AB}\otimes\rho_{\mathrm{junk}}$. Thus, the hypotheses needed in the proof for \Cref{def:QI_LBCS} are recovered, with the bounds of \Cref{lemma:set_of_dif2} replacing those of \Cref{lemma:set_of_dif}.

Repeating the same canonical-form reduction and applying \Cref{lemma:Vidick}, together with the rigidity of \Cref{lem:Extended_Z2_faithful_rep}, gives the robustness bound.
\end{proof}

\section{Circuit depth for multi-controlled phase gates}\label{append:Toffoli}

\subsection{Depth lower bound for protocol success}

Note that, because the verifier supplies the states from \Cref{def:QI_LBCS} and the protocol is robust (as shown in \Cref{lem:RQP}), the depth lower bounds for the observables certified by the protocol reduce directly to the depth lower bounds for the corresponding exact or approximate multi-controlled phase operators. Indeed, the state preparation is fixed by the verifier, while the operator self-testing statement of \Cref{def:OperatorRigidity} implies that the relevant operator isometries, together with the corresponding conjugated observables, must be realized by the provers themselves. Therefore, since the shared EPR pairs can be prepared in two layers of gates, no isometry can reduce the gate or depth complexity: the isometry and the effective observables implemented by the provers together realize operators that are precisely equivalent to the required multi-controlled phase operators.

The dequantized protocol of \Cref{def:Classical_Int}, however, requires additional analysis. Unlike the quantum-input setting, the rigidity statement is specified only up to local isometries, which in principle allows part of the gate and depth complexity to be shifted from the measurement stage to the state-preparation stage. As a result, the computational complexity is not \emph{a priori} localized to either the first or the second round of the protocol. For the purposes of a depth hierarchy, however, it is desirable to identify a specific round that necessarily incurs the depth cost. We therefore refine the analysis to show that any successful strategy must realize an $\Omega(\log m)$-depth computation in the second round of the protocol in the exact setting, and likewise in the approximate setting for sufficiently small error. 

A related issue concerns the use of ancillary systems. Since the self-testing statement certifies the target observables only up to local isometries, one might imagine implementing the required correlations via a joint unitary acting on a larger Hilbert space, followed by partial measurements, thereby reproducing the desired statistics with a shallower circuit. We explicitly rule out this possibility within our framework, showing that ancillary systems do not provide a mechanism for circumventing the depth lower bound.

Together, these observations yield the cleaner depth hierarchy stated in the main theorems.

\paragraph{Exact case.} We begin by analyzing the exact case of \Cref{def:Classical_Int}. 
We first describe a general circuit to implement the quantum strategy in the nonlocal game. On a shared entangled state on systems $A$ and $B$, the provers each append their own system with an ancillary system, which without loss of generality can be taken as $\ket{\mathbf{0}}_{A'}$ and $\ket{\mathbf{0}}_{B'}$. 

Upon receiving the questions, the provers apply a unitary operation $\tilde{A}_{q_A}$ and $\tilde{B}_{q_B}$ on their own systems, respectively. Finally, they measure a part of their ancillary systems $A'$ and $B'$ on the computational basis and obtain the measurement outcomes.

By the rigidity statement in the self test, there exist $V_{AA'}$ and $V_{BB'}$ such that for every $(q_A,q_B)$,
    \[
        (V_{AA'}\tilde{A}_{q_A}\otimes V_{BB'}\tilde{B}_{q_B})(\ket{\Phi_{AB}^+}\otimes\ket{\mathbf{0}}_{A'B'})=(A_{q_A}\otimes B_{q_B})\ket{\Phi_{AB}^+}\otimes\ket{\mathbf{0}}_{A'B'}.
    \]

Here, we denote the ideal multi-controlled phase operators with respect to $m$-bit question labels $q_A$ and $q_B$ as $A_{q_A}$ and $B_{q_B}$ (corresponding to $z_{\mathbf{i}}$ in \Cref{def:OVCS_Z2}).
Since the verifier can access the provers' behaviour only through the resulting measurement statistics, the self-testing statement leaves an ambiguity corresponding to local isometries $V_{AA'}$ and $V_{BB'}$. Our goal is therefore to show that no choice of such isometries can substantially reduce the complexity of implementing the entire physical family of operators, $\{\tilde{A}_{q_A}\}$ and $\{\tilde{B}_{q_B}\}$.

To this end, we first observe that the self-tested state $\Phi_{AB}$ is a maximally entangled state, up to local Clifford transformations. 
This allows us to adapt the argument of \cite{cleve2014characterization} to deal with measurement strategies employing ancillary systems and remove the state dependence in the self-testing statement.
In particular, any admissible strategy can be represented by an isometry that factors into the registers supporting the EPR pairs and an ancillary subsystem. Consequently, any complexity arising from the ancillary degrees of freedom can be separated from the certified observables, and the relevant gate and depth complexity is therefore essentially captured by the latter.

\begin{proposition}\label{prop:ToffoliIsom}
    The set of operators $\{\tilde{A}_{q_A}\}$ is isomorphic to the set of operators $\{A_{q_A}\}$ in the sense that $\tilde{A}_{q_A}=V_{AA'}^\dagger(A_{q_A}C_A\otimes\ket{\mathbf{0}}\bra{\mathbf{0}}_{A'}\oplus C_{A,\bot})$ for every $q_A$, where $C_A$ is a constant unitary operator on system $A$, and $C_{A,\bot}$ is a constant unitary operator on the subsystem orthogonal to $\mathcal{H}_A\cup\{\ket{\mathbf{0}}_{A'}\}$. A similar result holds for $\{\tilde{B}_{q_B}\}$ and $\{B_{q_B}\}$.
\end{proposition}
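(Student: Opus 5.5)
We start from the exact self-testing identity stated just before the proposition,
\[
(V_{AA'}\tilde{A}_{q_A}\otimes V_{BB'}\tilde{B}_{q_B})(\ket{\Phi_{AB}^+}\otimes\ket{\mathbf{0}}_{A'B'})=(A_{q_A}\otimes B_{q_B})\ket{\Phi_{AB}^+}\otimes\ket{\mathbf{0}}_{A'B'},
\]
valid for every pair $(q_A,q_B)$. The plan is to remove the state dependence using the maximally entangled structure of $\ket{\Phi^+_{AB}}$, following the transpose trick of \cite{cleve2014characterization}. First fix one Bob question $q_B^0$ and vary $q_A$. Then compare two Alice questions $q_A,q_A'$ with the same Bob question, and cancel Bob's (unitary) contribution. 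This shows that the operator $W_{q_A}:=V_{AA'}\tilde A_{q_A}$, restricted to the subspace $\mathcal H_A\otimes\ket{\mathbf 0}_{A'}$, satisfies
\[
W_{q_A}(\ket{\Phi^+}\otimes\ket{\mathbf 0}) = (A_{q_A}\otimes K)(\ket{\Phi^+}\otimes\ket{\mathbf 0})
\]
for a fixed operator $K$ on Bob's side determined by $q_B^0$.

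The key step is the maximally entangled state identity $(X\otimes I)\ket{\Phi^+}=(I\otimes X^{\mathrm T})\ket{\Phi^+}$, together with the fact that $\ket{\Phi^+}$ has full Schmidt rank on $\mathcal H_A$. Any operator acting on Alice's side is uniquely determined by its action on $\ket{\Phi^+}$. Transferring $K$ to Alice's side as $C_A:=K^{\mathrm T}$ therefore gives
\[
W_{q_A}\big|_{\mathcal H_A\otimes\ket{\mathbf 0}} = A_{q_A}C_A\otimes\ket{\mathbf 0}\bra{\mathbf 0}_{A'}.
\]
Here $C_A$ is independent of $q_A$, and it is unitary because $W_{q_A}$ is an isometry on that subspace and $A_{q_A}$ is unitary.

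On the orthogonal complement of $\mathcal H_A\otimes\ket{\mathbf 0}_{A'}$ the self-test gives no information, since the state has no support there. We therefore take $C_{A,\bot}$ to be the corresponding block. For it to be constant in $q_A$, we argue as follows. The operators $\tilde A_{q_A}$ are controlled by the question register through a fixed compute–uncompute architecture, as in \Cref{fig:full_framework}. The part of the circuit acting on ancilla components outside $\ket{\mathbf 0}$ can then be replaced by a fixed unitary without changing any observed statistics, so we may assume it without loss of generality.

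Multiplying by $V_{AA'}^\dagger$ then yields the claimed form of $\tilde A_{q_A}$. The argument for Bob is symmetric, with the roles of the two parties exchanged.

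The main obstacle is the orthogonal block $C_{A,\bot}$. Its $q_A$-independence does not follow from the self-testing identity alone, so it has to be justified either by the without-loss-of-generality replacement above or by a definition of the isomorphism that is modulo components unseen by the state. A secondary subtlety is that $V_{AA'}$ is only an isometry. Writing $\tilde A_{q_A}=V^\dagger(\cdots)$ requires the range of $A_{q_A}C_A$ to lie in the image of $V_{AA'}$, and this follows from the self-testing identity itself.
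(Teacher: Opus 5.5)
Your derivation of the block on $\mathcal H_A\otimes\ket{\mathbf{0}}_{A'}$ is correct and is essentially the paper's argument. You fix one Bob question, move Bob's effective operator to Alice's side with the transpose trick on $\ket{\Phi^+}$, and read off a $q_A$-independent gauge unitary; the paper's $C_A=B_{q_B}(M_{q_B}^{\mathrm T})^{-1}$ is exactly your $K^{\mathrm T}$. The paper first traces out $BB'$ to see that $V_{AA'}\tilde A_{q_A}$ acts as a unitary $M_{q_A}$ on $\mathcal H_A\otimes\ket{\mathbf{0}}_{A'}$, and then solves $M_{q_A}M_{q_B}^{\mathrm T}=A_{q_A}B_{q_B}^{\mathrm T}$. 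You instead cancel Bob's isometry with its adjoint, which is only a reordering.

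Two small points on this part. First, you should check that $K$ maps $\mathcal H_B\otimes\ket{\mathbf{0}}_{B'}$ into itself, so that $K^{\mathrm T}$ is an operator on $\mathcal H_A$. This holds because the left-hand side leaves $B'$ in $\ket{\mathbf{0}}$ and the vectors $A_{q_A}\ket{i}\otimes\ket{\mathbf{0}}_{A'}$ are linearly independent. Second, your ``secondary subtlety'' is vacuous. The identity $V_{AA'}^\dagger V_{AA'}=I$ gives $\tilde A_{q_A}=V_{AA'}^\dagger(V_{AA'}\tilde A_{q_A})$ outright. The direct-sum form holds because the isometry $V_{AA'}\tilde A_{q_A}$ maps $\mathcal H_A\otimes\ket{\mathbf{0}}_{A'}$ onto itself, and hence maps its orthogonal complement into the orthogonal complement.

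The genuine gap is the constancy of $C_{A,\bot}$, and your justification does not close it. The compute--uncompute architecture of \Cref{fig:full_framework} is the \emph{honest} provers' circuit. In a soundness statement $\tilde A_{q_A}$ is an arbitrary prover unitary and need not have that structure. Nor is replacing the off-support block by a fixed unitary without loss of generality here: it changes the very operator whose depth \Cref{prop:Toffoli_lower} is supposed to lower-bound.

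In fact the claim is false as stated whenever $A'$ is nontrivial. Take any strategy satisfying the self-testing identity and put $P_0=I_A\otimes\ketbra{\mathbf{0}}{\mathbf{0}}_{A'}$. Replace $\tilde A_{q_A}$ by $\tilde A_{q_A}(P_0+R_{q_A})$, where $R_{q_A}$ is any $q_A$-dependent unitary on the range of $I-P_0$, extended by zero on the range of $P_0$. The new operator is unitary and agrees with $\tilde A_{q_A}$ on $\mathcal H_A\otimes\ket{\mathbf{0}}_{A'}$, so the identity and all statistics are unchanged. However, the block of $V_{AA'}\tilde A_{q_A}$ on the complement gets multiplied on the right by $R_{q_A}$, so it cannot be $q_A$-independent for all such choices.

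The paper's proof does not get around this either: it simply asserts that $C_{A,\bot}$ is constant right after the reduced-state step. The correct formulation is your second option. What is determined is $V_{AA'}\tilde A_{q_A}(I_A\otimes\ketbra{\mathbf{0}}{\mathbf{0}}_{A'})=A_{q_A}C_A\otimes\ketbra{\mathbf{0}}{\mathbf{0}}_{A'}$, with an unconstrained, possibly $q_A$-dependent, block on the complement. This weaker form is all that \Cref{prop:Toffoli_lower} actually uses, since \eqref{eq:DimCapacity} only evaluates $\tilde A_q$ on inputs $\ket{\tilde p}\otimes\ket{\mathbf{0}}_{A'}$.
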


\begin{proof}
    To simplify the notation in the derivations, we denote 
    \begin{equation}\label{eq:NotationSimplify}
    \begin{aligned}
        \tilde{E}_{q_A}&=V_{AA'}\tilde{A}_{q_A},\quad \tilde{E}_{q_B}=V_{BB'}\tilde{B}_{q_B}, \\
        E_{q_A}&=A_{q_A}\otimes\ketbra{\mathbf{0}}{\mathbf{0}}_{A'},\quad E_{q_B}=B_{q_B}\otimes\ketbra{\mathbf{0}}{\mathbf{0}}_{B'}.
    \end{aligned}
    \end{equation}
    Using the simplified notation, the self-testing constraint is denoted as
    \begin{equation}\label{eq:SimplifiedSelfTest}
      (\tilde{E}_{q_A}\otimes\tilde{E}_{q_B})(\ket{\Phi^+}_{AB}\otimes\ket{\mathbf{0}}_{A'B'})=(E_{q_A}\otimes E_{q_B})(\ket{\Phi^+}_{AB}\otimes\ket{\mathbf{0}}_{A'B'}).
    \end{equation}
    Tracing out systems $BB'$ in \Cref{eq:SimplifiedSelfTest}, as $E_{q_B}$ is unitary, the reduced density matrix becomes
    \begin{equation}
        \rho_{AA'}=\frac{1}{2^{m}}\tilde{E}_{q_A}({I}\otimes\ketbra{\mathbf{0}}{\mathbf{0}}_{A'})\tilde{E}_{q_A}^\dag 
        =\frac{1}{2^{m}}E_{q_A}E_{q_A}^\dag\otimes\ketbra{\mathbf{0}}{\mathbf{0}}_{A'}
        =\frac{1}{2^{m}}{I}_{A}\otimes\ketbra{\mathbf{0}}{\mathbf{0}}_{A'}.
    \end{equation}
    Therefore, there exists some unitary $M_{q_A}$ on system $A$ such that
    \begin{equation}
        \tilde{E}_{q_A}=M_{q_A}\otimes\ketbra{\mathbf{0}}{\mathbf{0}}_{A'}\oplus C_{A,\bot},
    \end{equation}
    with $C_{A,\bot}$ a constant unitary on the subsystem orthogonal to $\mathcal{H}_A\cup\{\ket{\mathbf{0}}_{A'}\}$.
    A similar argument can be applied to system $B$, hence there exist some unitary operators $M_{q_A}$ and $M_{q_B}$ such that
    \begin{equation}
        (M_{q_A}\otimes M_{q_B})\ket{\Phi^+}_{AB}=(A_{q_A}\otimes B_{q_B})\ket{\Phi^+}_{AB}.
    \end{equation}
    Using the transpose trick over $\ket{\Phi^+}_{AB}$, we have $M_{q_A}M_{q_B}^{\mathrm{T}}=A_{q_A}B_{q_B}^{\mathrm{T}}$.    Noticing that $B_{q_B}$ is the multi-controlled phase operation that is real and diagonal in the computational basis, hence $M_{q_A}=A_{q_A}B_{q_B}(M_{q_B}^{\mathrm{T}})^{-1}$, of which the left hand side is independent of label $q_B$; also notice that such an equation needs to hold for every $q_A,q_B$. For this to hold, we must have $B_{q_B}(M_{q_B}^{\mathrm{T}})^{-1}=C_A$
    with $C_A$ some constant unitary operator (we label it with $A$ for notation consistency though it is derived on system $B$). Note that such a gauge freedom is inevitable due to the fact that for any unitary $C$, $(C\otimes C^*)\ket{\Phi^+}_{AB}=\ket{\Phi^+}_{AB}$; operationally, such a unitary may be applied to $\ket{\Phi^+}$ before measurements without changing the state. After reorganizing the formula, we have that for any $(q_A,q_B)$,
\begin{equation}\label{eq:IsometryEqual}
\tilde{A}_{q_A}=V_{AA'}^\dag(A_{q_A}C_A\otimes\ketbra{\mathbf{0}}{\mathbf{0}}_{A'}\oplus C_{A,\bot}),\quad
\tilde{B}_{q_B}=V_{BB'}^\dag(B_{q_B}C_B\otimes\ketbra{\mathbf{0}}{\mathbf{0}}_{B'}\oplus C_{B,\bot}).
\end{equation}
\end{proof}

Next, we show the depth lower bounds to implement the set of operators $\{\tilde{A}_{q_A}\}$ ($\{\tilde{B}_{q_B}\}$ resp.).
It is worth noting that an arbitrary self-testing isometry could in principle rotate the computational basis to another orthonormal basis, for instance the Fourier/parity basis, thereby mapping the singleton phase-flip operators $I-2\ket{q}\bra{q}$ to singleton phase flips with respect to that new basis. 
However, it cannot map them to, for instance, Pauli parity observables $Z(\mathbf{s})$, since unitary conjugation preserves the spectrum. More generally, for any orthonormal basis $\{\ket{u_q}\}$ obtained from the self-tested realization, the conjugated observables remain rank-one reflections of the form $I-2\ket{u_q}\bra{u_q}$. 
The relevant question is whether the isometry can transform the entire family into another basis in which all of these rank-one reflections admit a substantially lower-depth implementation. Establishing or ruling out such a possibility requires an argument beyond the spectral properties alone.
For this purpose, we shall combine the use of a light cone argument and a counting argument.

\begin{lemma}\label{prop:Toffoli_lower}
Any circuit that implements the set of operators $\{\tilde{A}_{q_A}\}$ with single- and two-qubit gates requires a circuit depth $d=\Omega(\log{m})$. A similar result holds for $\{\tilde{B}_{q_B}\}$.
\end{lemma}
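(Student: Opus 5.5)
The plan is a light-cone argument in the Heisenberg picture. The structural input is \Cref{prop:ToffoliIsom}; the counting is a rank argument on rank-one reflections.

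\textbf{Step 1: reduce to a fixed rank-one reflection.} Fix a question label $q_A$. The prover's answer bit is obtained by measuring a single designated output qubit in the computational basis after the depth-$d$ circuit $W_{q_A}$ acts on $A\otimes A'$, with $A'$ initialised to $\ket{\mathbf{0}}$. In the Heisenberg picture, the binary observable actually measured on $A$ is
\[
O_{q_A}:=\bra{\mathbf{0}}_{A'}\, W_{q_A}^\dagger\, Z_{\mathrm{out}}\, W_{q_A}\ket{\mathbf{0}}_{A'} .
\]
By \Cref{prop:ToffoliIsom}, on the relevant subspace $O_{q_A}$ agrees with $C_A^\dagger A_{q_A} C_A$. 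Here $A_{q_A}=I-2\ket{q_A}\bra{q_A}$ is the ideal $\mathsf{C}^{m-1}(\mathsf{Z})$-type observable, so
\[
O_{q_A}=I-2\ket{u_{q_A}}\bra{u_{q_A}}, \qquad \ket{u_{q_A}}:=C_A^\dagger\ket{q_A}.
\]
This is a rank-one reflection on the $m$ qubits of $A$. As the paper notes, the isometry may rotate the basis but cannot change this spectral shape. The $V_{AA'}$ gauge and the $C_{A,\bot}$ block play no role, because we only evaluate $O_{q_A}$ with $A'$ in $\ket{\mathbf{0}}$.

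\textbf{Step 2: the answer must depend on every qubit of $A$.} Suppose some qubit $j$ of $A$ lies outside the backward light cone $L^{\leftarrow}(\mathrm{out})$. Then $W_{q_A}^\dagger Z_{\mathrm{out}} W_{q_A}$ acts as the identity on qubit $j$. Compressing by $\ket{\mathbf{0}}_{A'}$ preserves this, so $O_{q_A}=O'\otimes I_j$. The $-1$ eigenspace of any operator of the form $O'\otimes I_j$ has even dimension, since it equals (eigenspace of $O'$)$\,\otimes\,\mathbb{C}^2$. But $O_{q_A}$ has a one-dimensional $-1$ eigenspace. This contradiction shows all $m$ qubits of $A$ lie in $L^{\leftarrow}(\mathrm{out})$.

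\textbf{Step 3: count.} With two-qubit gates, $|L^{\leftarrow}(\mathrm{out})|\le 2^d$, so $2^d\ge m$ and hence $d\ge\lfloor\log m\rfloor=\Omega(\log m)$. This holds for every single label $q_A$, so implementing the whole family $\{\tilde A_{q_A}\}$ is at least as hard. The argument for $\{\tilde B_{q_B}\}$ is identical, using $C_B$.

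\textbf{Main obstacle.} The delicate point is Step 1. We must justify that the answer is read from one output qubit, or more generally from a parity of output qubits, whose Heisenberg observable is $\bigotimes Z$ on a set $S$. In the parity case, replace $L^{\leftarrow}(\mathrm{out})$ by $L^{\leftarrow}(S)$. For a constant-size answer register this only weakens the bound to $|S|\,2^d\ge m$, which still gives $\Omega(\log m)$. We must also make sure that compressing by the ancilla and allowing adaptive ancilla use do not reintroduce more general (non-projective) operators. I would handle this by working with the exact equality from \Cref{prop:ToffoliIsom}, which already fixes $O_{q_A}$ to be a unitary reflection.
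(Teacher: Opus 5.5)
Your proof is correct, and it follows a genuinely different route from the paper's. The paper argues about the whole family at once. From \Cref{prop:ToffoliIsom} it writes $\tilde{A}_{q}=\sum_{p}(-1)^{\delta_{q,p}}\ket{\bar p}\bra{\tilde p,\mathbf{0}}$ and notes that these operators span a $2^m$-dimensional space. It then asserts that each $\tilde A_q$ acts on at most $w=2^d$ qubits and counts, requiring $\binom{M}{w}2^w\ge 2^m$, where $M$ is the total number of qubits. This yields $d=\Omega(\log m)$ only under the extra assumption $M=\mathrm{poly}(m)$, and even then only $d\ge\log m-O(\log\log m)$.

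You argue label by label instead. The compressed Heisenberg observable of the answer qubit has an odd-dimensional eigenspace, which no operator of the form $O'\otimes I_j$ can have, so the backward light cone of that single qubit must contain all of $A$. This gains several things:
\begin{itemize}
\item The light cone is applied to one output qubit, whereas a depth-$d$ unitary as a whole can touch all $M$ qubits.
\item The bound does not depend on the number of ancillas, which is what the main theorems claim.
\item It gives $d\ge\lceil\log_2 m\rceil$ directly, which implies the $\lfloor\log m\rfloor$ stated in the main text.
\item It holds for each single $\tilde A_q$, not only for the family.
\item It uses only eigenvalue multiplicities, so it is basis-independent. This disposes of the paper's concern that an isometry might rotate the singleton reflections into a cheaper basis.
\end{itemize}
The paper's span-dimension formulation has one advantage: it is reused for the approximate case in \Cref{thm:robustdepth}. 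Your exact parity argument would need its own robust version, e.g.\ via the Frobenius distance from a rank-one reflection to the set of operators $O'\otimes I_j$.

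Two small points in Step~1. First, \Cref{prop:ToffoliIsom} literally gives the form $A_qC_A$, not the conjugate $C_A^\dagger A_qC_A$. It is cleaner to argue directly: with $A'$ in $\ket{\mathbf{0}}$, the marginal of $\ket{\Phi^+}$ on $A$ has full rank. Perfect consistency then reads $2^{-m}\tr(O_qP_q^{\mathrm{T}})=1$, where $P_q$ is Bob's compressed observable. By Cauchy--Schwarz this forces the contraction $O_q$ to be a Hermitian unitary, and \Cref{lem:Z2_faithful_rep} then gives $O_q=\pm(I-2\ketbra{q}{q})$. Second, that rigidity proof genuinely leaves the global sign open. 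In the minus case the $-1$ eigenspace has dimension $2^m-1$ rather than $1$, which is still odd, so your contradiction is unaffected.
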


\begin{proof}
    The self-testing reference $\{A_{q_A}\}$ are multi-controlled phase operators. For convenience, we abbreviate $q_A$ as $q$ and denote $A_{q_A}={I}-2\ketbra{q}{q}$, where $\{\ket{q}\}$ forms an orthonormal complete basis of $\mathcal{H}_A$ ($q_A$ specifies an $m$-bit string per \Cref{def:OVCS_Z2} and $A_{q_A}$ is given in \Cref{lem:Z2_faithful_rep}). Then,
    \[
        A_{q}C_A=({I}-2\ketbra{q}{q})C_A=C_A-2\ketbra{q}{q}C_A.
    \]
    Since $C_A$ is a unitary operator, $\ket{\tilde{q}}=C_A^\dag\ket{q}$ also forms an orthonormal complete basis. Taking it to \Cref{eq:IsometryEqual}, when $\ket{\tilde{p}}$ is input to the principal system $A$ and $\ket{\mathbf{0}}_{A'}$ is input to the ancillary system $A'$,
    \begin{equation}
        \tilde{A}_{q}(\ket{\tilde{p}}\otimes\ket{\mathbf{0}}_{A'})=V_{AA'}^{\dag}(A_qC_AC_A^\dag\ket{p}\otimes\ket{\mathbf{0}}_{A'})
        =(-1)^{\delta_{q,p}}V_{AA'}^{\dag}(\ket{p}\otimes\ket{\mathbf{0}}_{A'})
        :=(-1)^{\delta_{q,p}}\ket{\bar{p}}_{AA'},
    \end{equation}
    in which $A_q\ket{p}=(-1)^{\delta_{q,p}}\ket{p}$. Therefore, $\tilde{A}_{q}$ is given by $2^m$ linearly independent operators:
    \begin{equation}\label{eq:DimCapacity}
        \tilde{A}_{q}=\sum_{p=0}^{2^m-1}(-1)^{\delta_{q,p}}\ket{\bar{p}}\bra{\tilde{p},\mathbf{0}}_{AA'},
    \end{equation}
    which spans a linear space with dimension $2^m$. Specifically, the phase flip given by $(-1)^{\delta_{q,p}}$ requires reading the $m$-bit string value carried by the input basis state, calculating the function value $(-1)^{\delta_{q,p}}$, and encoding the value to one of $2^m$ orthogonal states among $\ket{\bar{p}}$.

    Now, suppose the set of unitaries $\{\tilde{A}_q\}$ can be implemented with a maximum depth $d$. We denote the total number of qubits in $A$ and $A'$ as $M$. Using a light cone argument, each $\tilde{A}_q$ is restricted to be acting on a subset of qubits $\mathcal{S}_q$ over $A$ and $A'$ with 
    \begin{equation}\label{eq:lightcone}
        |\mathcal{S}_q|\leq2^d:=w,
    \end{equation}
    and we can hence decompose $\tilde{A}_q$ as $\tilde{A}_q=\tilde{A}_{q|\mathcal{S}_q}\otimes{I}_{\bar{\mathcal{S}_q}}$,
    where $\tilde{A}_{q|\mathcal{S}_q}$ is a unitary operator acting only on the qubits in $\mathcal{S}_q$, and $\bar{\mathcal{S}_q}$ represents the rest $(M-w)$ qubits. Applying the spectral decomposition to $\tilde{A}_{q|\mathcal{S}_q}$, it specifies at most $2^w$ distinct eigenspaces. With $M$ qubits in total, there are $\binom{M}{w}$ ways to choose $w$ qubits. To meet the space dimension requirement for \Cref{eq:DimCapacity}, we must have
    \begin{equation}
        \binom{M}{w}2^w\geq 2^m.
    \end{equation}
    Therefore, combining the restriction of \Cref{eq:lightcone}, with $M=\mathsf{poly}(m)$, we have $d=\Omega(\log{m})$.
\end{proof}

\paragraph{Approximate case.}
The same reduction must now be carried out in the robust setting to transfer the depth lower bounds to approximate realizations of the multi-controlled phase operators independent of any isometry. 
After this is done, intuitively, the argument remains fundamentally unchanged. Even approximate implementations of highly nonlocal operations require information to propagate across many input qubits, and this propagation cannot be substantially compressed when the approximation error is sufficiently small. Consequently, approximate realizations of the certified observables continue to exhibit essentially the same depth requirements as their exact counterparts. We formalize this intuition by combining the robustness statement with approximate unitary-synthesis lower bounds, thereby obtaining depth lower bounds for all sufficiently accurate winning strategies.

\begin{lemma}\label{thm:robustdepth}
    Suppose there exist $V_{AA'}$ and $V_{BB'}$ such that for every $(q_A,q_B)$, 
    \begin{equation}\label{eq:ApproxSelfTest}
        \|(V_{AA'}\tilde{A}_{q_A}\otimes V_{BB'}\tilde{B}_{q_B})(\ket{\Phi_{AB}^+}\otimes\ket{\mathbf{0}}_{A'B'})-(A_{q_A}\otimes B_{q_B})\ket{\Phi_{AB}^+}\otimes\ket{\mathbf{0}}_{A'B'}\|_1\leq\delta.
    \end{equation}
    Then, for $\delta< 2\sqrt{2}\cdot2^{-m/2}$, there exists at least one $q_A$ ($q_B$ resp.) such that the circuit depth for the implementation of $\tilde{A}_{q_A}$ ($\tilde{B}_{q_B}$ resp.) is $\Omega(\log{m})$.
\end{lemma}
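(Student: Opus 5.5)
The plan is to reduce the approximate statement to the exact counting argument of \Cref{prop:Toffoli_lower}, using the fact that the threshold $\delta<2\sqrt2\cdot 2^{-m/2}$ is below the distance between any singleton phase flip and the identity on the EPR state. We write $\tilde E_{q_A}=V_{AA'}\tilde A_{q_A}$ and $E_{q_A}=A_{q_A}\otimes\ketbra{\mathbf 0}{\mathbf 0}_{A'}$ as in \Cref{eq:NotationSimplify}. The key preliminary computation is that for $q\neq q'$,
\[
\bigl\|(E_{q}\otimes E_{q_B})\ket{\Phi^+}\ket{\mathbf 0}-(E_{q'}\otimes E_{q_B})\ket{\Phi^+}\ket{\mathbf 0}\bigr\|
=\tfrac{1}{\sqrt{2^m}}\bigl\|(A_q-A_{q'})\bigr\|_{2,\mathrm{HS}}=2\sqrt2\cdot 2^{-m/2}.
\]
This holds because $A_q-A_{q'}=2(\ketbra{q'}{q'}-\ketbra{q}{q})$ has Hilbert--Schmidt norm $2\sqrt2$. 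Similarly, the distance to $I$ on the EPR state is $2\cdot 2^{-m/2}$. Hence, by the triangle inequality, the hypothesis \eqref{eq:ApproxSelfTest} with $\delta$ below this separation forces the ideal images of the physical operators to stay pairwise distinguishable. The map $q_A\mapsto \tilde E_{q_A}(\cdot)\ket{\Phi^+}\ket{\mathbf 0}$ therefore remains injective, and more quantitatively the $2^m$ vectors are $\delta$-close to the ideal ones.

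The second step is to transport the distinguishability from the EPR-state picture to the operators $\tilde A_{q_A}$ themselves, in the spirit of \Cref{prop:ToffoliIsom}. I would fix Bob's label $q_B$. Tracing out $BB'$ and using the transpose trick, the vector $(\tilde E_{q_A}\otimes \tilde E_{q_B})\ket{\Phi^+}\ket{\mathbf 0}$ corresponds to the operator $\tilde E_{q_A}(I\otimes\ket{\mathbf 0})\,N_{q_B}^{\mathrm T}$ for some fixed contraction $N_{q_B}$ coming from Bob. The ideal vector corresponds to $A_{q_A}B_{q_B}^{\mathrm T}$. With $q_B$ fixed, the normalized Hilbert--Schmidt closeness gives
\[
\tfrac{1}{\sqrt{2^m}}\bigl\|\tilde E_{q_A}\,W - A_{q_A}C\bigr\|_{\mathrm{HS}}\le\delta
\]
for a fixed operator $W$ and a fixed unitary $C$, both independent of $q_A$. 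This is the approximate analogue of $\tilde A_q=V^\dagger(A_qC_A\otimes\ketbra{\mathbf 0}{\mathbf 0}\oplus C_{A,\perp})$.

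The third step is the counting and light-cone argument. Suppose every $\tilde A_{q_A}$ were implemented in depth $d$ with $2^d=w=o(m)$ and $M=\mathrm{poly}(m)$ qubits. Then each $\tilde A_{q}$ acts nontrivially only on a set $\mathcal S_q$ with $|\mathcal S_q|\le w$. The family $\{\tilde A_q\}$ would lie in the union over supports $\mathcal S$ of the operator spaces $U(2^{|\mathcal S|})\otimes I$, which have total dimension at most $\binom{M}{w}4^{w}=2^{O(w\log M)}$. By contrast, the $2^m$ operators $\tilde E_qW$ are $\delta$-close (normalized) to the operators $A_qC$. These are linearly independent: the span of the $A_q$ is all diagonal matrices, of dimension $2^m$, and every $A_q-A_{q'}$ is far from zero by the first step. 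A small perturbation of such a well-separated, well-conditioned family keeps its span of dimension $2^m$, provided $\delta\sqrt{2^m}$ is below the separation scale. This is exactly the condition $\delta<2\sqrt2\,2^{-m/2}$, after unnormalizing. We would then conclude $w\log M\gtrsim m$, i.e.\ $d=\Omega(\log m)$, for at least one $q_A$. The argument for Bob is symmetric.

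The main obstacle is this last perturbation step. A dimension count is not stable under approximation in general, so I would instead use a packing argument. The $2^m$ targets are pairwise separated by $2\sqrt2$ in HS norm, and the errors are at most $\delta\sqrt{2^m}<2\sqrt2$. This makes the $\tilde E_q$ pairwise distinct, and after projecting onto the diagonal (computational-basis) part, the Gram matrix is diagonally dominant, giving linear independence of $2^m$ operators. I would also have to handle the fixed $W$, which may be non-invertible on part of the space. I expect to restrict to the range where $W$ is near-isometric, which is guaranteed since $A_qC$ is unitary. The light-cone bound only needs to be applied to the single worst $q$, which is why the statement asserts existence of at least one deep $\tilde A_{q_A}$.
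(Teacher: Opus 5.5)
You follow the paper's route: fix $q_B$, use the transpose trick to put each $\tilde E_{q_A}W$ in a Hilbert--Schmidt ball of radius $2^{m/2}\delta<2\sqrt2$ around $A_{q_A}C$, conclude that these operators span a $2^m$-dimensional space, and finish with the support count of \Cref{prop:Toffoli_lower}. The transpose-trick step is fine; it is the paper's $W_{q_B}$. The two steps that carry the weight both fail. The paper's own proof makes the same moves: it asserts disjoint error balls, calls the $\hat A_{q_A}$ ``mutually orthogonal'', and reuses the same support count.

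\textbf{The perturbation step.} Your own computation already refutes it. The identity sits at distance $2\cdot2^{-m/2}$ from every $A_q$ on the EPR state, and $2\cdot2^{-m/2}<2\sqrt2\cdot2^{-m/2}$, so the threshold is \emph{not} below that distance, contrary to your opening claim. Concretely, take $V_{AA'}=V_{BB'}=I$, $\tilde A_{q_A}=I$ for every $q_A$, and honest $\tilde B_{q_B}=B_{q_B}$. The difference vector is $(2\ketbra{q_A}{q_A}\otimes B_{q_B})\ket{\Phi^+}\ket{\mathbf 0}=\pm2\cdot2^{-m/2}\ket{q_Aq_A}\ket{\mathbf 0}$, so the hypothesis holds with $\delta=2\cdot2^{-m/2}$. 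Yet every $\tilde E_{q_A}W$ equals the single operator $C$, and Alice's operators have depth $0$. So Alice's half of the statement is false at the stated threshold, and no packing argument can rescue it. Two further problems with that step: disjointness of the balls would need radius below $\sqrt2$, and the ideal family is badly conditioned. Since $\mathrm{Tr}(A_qA_{q'})=2^m-4$ for $q\neq q'$, the Gram matrix is far from diagonally dominant. The bound $\|\sum_q c_qA_q\|_{\mathrm{HS}}\ge2\|c\|_2$ gives linear independence of the perturbed family only for $\delta<2^{1-m}$.

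\textbf{The counting step.} Even granting linear independence, it proves nothing. A depth-$d$ circuit of two-qubit gates is not supported on $2^d$ qubits, since one layer can touch all $M$ qubits; only each qubit's backward light cone has size at most $2^d$. In fact span dimension is not depth-sensitive at all. The depth-one operators $Z(\mathbf s)$, $\mathbf s\in\{0,1\}^m$, are linearly independent and span exactly the diagonal algebra spanned by the $A_q$, so the bound $\binom{M}{w}4^w$ is false.

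\textbf{What does work, at least in the exact case,} is a rank-based light-cone argument.
By \Cref{prop:ToffoliIsom}, $V_{AA'}\tilde A_q(\ket{\phi}\ket{\mathbf 0})=A_qC\ket{\phi}\ket{\mathbf 0}$. Write $d_q$ for the depth of $\tilde A_q$. Then $U=\tilde A_{q'}^\dagger\tilde A_q$, of depth at most $d_q+d_{q'}$, acts on $\mathcal H_A\otimes\ket{\mathbf 0}_{A'}$ as the reflection $R=C^\dagger A_{q'}A_qC$, with $\mathrm{rank}(R-I)=2$. For a Pauli $P_j$ on qubit $j\in A$, compressing $U^\dagger P_jU$ onto this subspace shows that $RP_jR-P_j$ is supported on the light cone of $j$ intersected with $A$. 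This operator has rank at most $4$, whereas a nonzero operator supported on $S\subseteq A$ has rank at least $2^{m-|S|}$. Hence either $R$ acts trivially on qubit $j$, or that light cone contains at least $m-2$ qubits. The first case can hold for at most one $j$, because $\mathrm{rank}(R-I)=2$. This gives $d_q+d_{q'}\ge\log_2(m-2)$ for all $q\neq q'$, so all but at most one $\tilde A_q$ have depth at least $\tfrac12\log_2(m-2)$. The approximate lemma needs a robust version of this argument, with a threshold strictly below $2\cdot2^{-m/2}$.
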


\begin{proof}
    We note that a large part of the full proof for this theorem is similar to the proofs for \Cref{prop:ToffoliIsom} and \Cref{prop:Toffoli_lower}. We do not repeat such contents for simplicity and only highlight the different parts. In addition, we shall follow the same notation simplification as above. By taking
    \begin{equation}
    \begin{aligned}
      \hat{E}_{q_A}&=V_{AA'}\tilde{A}_{q_A}(I\otimes\ket{\mathbf{0}}_{A'}),\quad \hat{E}_{q_B}=V_{BB'}\tilde{B}_{q_B}(I\otimes\ket{\mathbf{0}}_{B'}), \\
      \hat{A}_{q_A}&=A_{q_A}\otimes\ket{\mathbf{0}}_{A'},\quad \hat{B}_{q_B}=B_{q_B}\otimes\ket{\mathbf{0}}_{B'},
    \end{aligned}
    \end{equation}
    we can use \Cref{eq:ApproxSelfTest} and the transpose trick with respect to $\ket{\Phi^+}_{AB}$ to obtain an upper bound on the Frobenius-norm distance between the following operators:
    \begin{equation}
        \frac{1}{2^m}\|\hat{E}_{q_A}\hat{E}_{q_B}^{\mathrm{T}}-\hat{A}_{q_A}\hat{B}_{q_B}^{\mathrm{T}}\|_F^2\leq\delta^2< 8\cdot2^{-m}.
    \end{equation}
    Then,
    \begin{equation}\label{eq:FrobDistRadius}
        \|\hat{E}_{q_A}\hat{E}_{q_B}^{\mathrm{T}}\hat{B}_{q_B}-\hat{A}_{q_A}\hat{B}_{q_B}^{\mathrm{T}}\hat{B}_{q_B}\|_F =\|\hat{E}_{q_A}(\hat{E}_{q_B}^{\mathrm{T}}\hat{B}_{q_B})-\hat{A}_{q_A}\|_F:=\|\hat{E}_{q_A}W_{q_B}-\hat{A}_{q_A}\|_F\leq 2^{m/2}\delta<2\sqrt{2},
    \end{equation}
    which holds for every $q_A$ and $q_B$. For the ideal operators $\hat{A}_{q_A}$, given $q_A\neq q_A'$, $\|\hat{A}_{q_A}-\hat{A}_{q_A'}\|_F^2=8$. As \Cref{eq:FrobDistRadius} guarantees that every operator $\hat{E}_{q_A} W_{q_B}$ resides inside a local error sphere of radius strictly smaller than the separation between $\hat{A}_{q_A}$ and $\hat{A}_{q_A'}$, consequently, these error spheres are mutually disjoint. Furthermore, since the $2^m$ ideal operators $\hat{A}_{q_A}$ are mutually orthogonal, to guarantee the same linear independence among the operators without violating \Cref{eq:FrobDistRadius}, we must have $\dim(\mathrm{span}\{\hat{E}_{q_A}W_{q_B}\})\geq 2^m$. By this, we can follow a same dimension on the subspaces spanned by $\hat{A}_{q_A}$ as in \Cref{prop:Toffoli_lower} and derive the depth lower bound.
\end{proof}

\subsection{Depth upper bound}\label{append:Toff_upper}

We recall the explicit construction of \cite{nie2024quantum} for implementing the generalized Toffoli gate with a single ancillary qubit and derive the corresponding depth bound.

\begin{lemma}\label{thm:upperToffoli}
Let $\gamma(n)$ denote the circuit depth required to implement $\mathsf{C}^{n}(\mathsf{X})$ using single- and two-qubit gates and one(possibly dirty) ancillary qubit. Then $\mathsf{C}^{n}(\mathsf{X})$ admits an exact implementation of size $O(n)$ and depth
\begin{equation}\label{eq:Toffolidepth}
\gamma(n)\leq (2+2\gamma(4)+\gamma(3))(\lceil\log_2(n+4)\rceil-3)+\gamma(4).
\end{equation}
\end{lemma}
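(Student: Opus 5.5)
We plan to follow the recursive construction of \cite{nie2024quantum}: split the $n$ controls into a constant number of blocks, compute partial conjunctions of the blocks onto borrowed (dirty) qubits, and recurse on a smaller multi-controlled gate. The borrowed qubits are simply those controls not involved at the current level. The recursion is arranged so that each level reduces the number of effective controls by roughly a factor of two. Each level costs a constant number of layers built from small fixed-size Toffoli-type gadgets, namely $\mathsf{C}^4(\mathsf{X})$ and $\mathsf{C}^3(\mathsf{X})$, plus two layers of single-qubit or $\mathsf{CNOT}$ corrections. This gives the per-level cost $2+2\gamma(4)+\gamma(3)$. The number of levels is $\lceil\log_2(n+4)\rceil-3$, and the recursion bottoms out at a single $\mathsf{C}^4(\mathsf{X})$, which accounts for the additive $\gamma(4)$.

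The key steps are as follows. First, we recall the dirty-ancilla identity: a $\mathsf{C}^k(\mathsf{X})$ with one borrowed qubit $a$ can be written as a compute--target--uncompute sandwich, $T_1\,T_2\,T_1\,T_2$, where $T_1$ toggles $a$ by the AND of one half of the controls and $T_2$ toggles the target by the AND of $a$ and the other half. Correctness is independent of the initial value of $a$, since the $a$-dependent contributions cancel in pairs.

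Second, we parallelise each half-conjunction. Pair up the controls and compute the pairwise ANDs into disjoint borrowed qubits using Toffoli gates in parallel, halving the control count. Each such layer is realised as a layer of $\mathsf{C}^3(\mathsf{X})$ or $\mathsf{C}^4(\mathsf{X})$ gadgets whose dirty-ancilla correctness uses the same toggling trick. Here we invoke the specific gadget schedule of \cite{nie2024quantum}, which packs one compute layer, one uncompute layer, and one cleanup layer into depth $2\gamma(4)+\gamma(3)$, with two extra Clifford layers for basis changes.

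Third, we set up the recurrence $\gamma(n)\le (2+2\gamma(4)+\gamma(3)) + \gamma(\lceil (n+4)/2\rceil-4)$. The shift by $4$ reflects the four qubits reserved for the base gadget. With the base case $\gamma(4)$ at $n+4\le 8$, we unroll it by induction on $\lceil\log_2(n+4)\rceil$. Each step decreases $\lceil\log_2(n+4)\rceil$ by one, which yields exactly the stated bound. The size bound $O(n)$ follows because the geometric sum of gadget counts over levels is linear.

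The main obstacle is verifying that the borrowed qubits used at each recursion level are genuinely free, meaning disjoint from the controls active at that level, and that all gadgets in one layer have disjoint supports. Only then is the per-level depth really $2+2\gamma(4)+\gamma(3)$ rather than a multiple of it. We would handle this by fixing an explicit indexing of the controls into pairs and designating the odd-indexed survivors as the next level's controls. We would also check that the single allowed ancilla is needed only at the top level, where no control qubit is available to borrow.
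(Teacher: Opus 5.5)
Your recurrence and its solution agree with the paper's. Since $\lceil (n+4)/2\rceil-4=\lceil n/2\rceil-2$, your recurrence is exactly $\gamma(n)\le \gamma(\lceil n/2\rceil-2)+2+2\gamma(4)+\gamma(3)$, which the paper reads off from \cite[Fig.~3]{nie2024quantum}. Unrolling it via $\lceil\log_2\lceil x/2\rceil\rceil=\lceil\log_2 x\rceil-1$, with base case $n+4\le 8$, is the right way to solve it. The gap is that the per-level mechanism you describe does not produce this recurrence. The recurrence is therefore imported from the citation rather than derived, and your own sketch would fail for three reasons.

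\emph{The toggling identity is not additive.} The identity $T_1T_2T_1T_2$ executes each inner stage twice. Recursing through it therefore at least doubles the inner depth at every level, which cannot give an additive recurrence.

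\emph{The cancellation breaks with many dirty bits.} Computing pairwise ANDs in parallel onto many dirty bits and then recursing destroys the cancellation that makes the identity work. Two passes toggle the target by $\bigwedge_i(a_i\oplus g_i)\oplus\bigwedge_i a_i$, which already for two bits equals $a_1g_2\oplus g_1a_2\oplus g_1g_2$ rather than $g_1g_2$. At the top level you are forced into exactly this situation. The only qubits not in use are the target and the single ancilla, so the roughly $n/2$ conjunctions must be toggled onto other (dirty) controls.

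\emph{The shift is unexplained.} Even a correct halving step $n\mapsto\lceil n/2\rceil$ would give only $c(\lceil\log_2 n\rceil-2)+\gamma(4)$, with $c=2+2\gamma(4)+\gamma(3)$. This is weaker than the statement, e.g.\ at $n=9$. The extra $-2$ is not accounted for by ``four qubits reserved for the base gadget''.

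The missing idea is the decomposition the paper takes from \cite{nie2024quantum}.
\begin{enumerate}
\item Pad to $N=n+(n\bmod 2)$ and split the controls into two halves.
\item Run two $\mathsf{C}^{N/2-2}(\mathsf{X})$ gates in parallel. Each reserves one input qubit of its group as borrowed ancilla and another as its target. This is where the $-2$ per half, i.e.\ your $+4$ shift, comes from.
\item Upload the half-results to the real target with a $\mathsf{C}^{4}(\mathsf{X})$ and a layer of Pauli-$X$ gates in the compute phase, and again in the uncompute phase. A $\mathsf{C}^{3}(\mathsf{X})$ merge sits between these two phases.
\item Split each recursive sub-gate into its compute phase, placed before the merge, and its uncompute phase, placed after. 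These phases are themselves decomposed recursively.
\end{enumerate}
Because of the last step, each level pays for the sub-gate only once. This yields a single $\gamma(\lceil n/2\rceil-2)$ term plus the constant $2+2\gamma(4)+\gamma(3)$, whose accounting you otherwise have right. With this decomposition replacing your second step, your third step goes through unchanged. The genuine obstacle is the dirty-ancilla correctness of this compute/merge/uncompute arrangement, not the disjointness of gate supports.
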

\begin{proof}
The proof follows by analyzing the recursive circuit implementation of \cite[Fig.\ 3]{nie2024quantum}.
    \begin{enumerate}
        \item The recursive definition works for an even-valued $n$. If $n$ is an odd number, then we need to round it up to the nearest even number. That is, we will take $n$ to $N$ with $N = n+(n \mod2)$, which operationally corresponds to adding a dummy qubit.

        \item The generalized Toffoli gate is decomposed into two smaller generalized Toffoli gates in parallel over a subgroup of the input control qubits. Each smaller gate takes $(N/2-2)$ qubits as control qubits, borrows one control qubit as an ancillary qubit, and computes the $\mathsf{AND}$ values over another input qubit serving as the target in this smaller Toffoli gate. 

        \item To effectively reduce the circuit depth, the circuit is implemented in a recursive manner. Notice that the implementation of a Toffoli gate can be owed to two phases: with the help of an ancillary qubit, the first phase computes the $\mathsf{AND}$ of the values carried by all the qubits on the target qubit, and the second phase carries out an uncompute operation to restore the original state of the ancillary qubit. With the final merge operation sandwiched in the middle of the circuit, its prior circuit performs only the first phase; its posterior circuit performs only the second phase; and the two parts of the circuit are recursively decomposed into smaller parts. 
    \end{enumerate}
    The merge operation is implemented by a generalized Toffoli gate with three control qubits. In the compute phase (resp. the uncompute phase), a generalized Toffoli gate with four control qubits and a layer of Pauli-$\mathsf{X}$ gates are also employed to help upload the computation results in the subgroups to the final target qubit. 
    In the end, in this circuit implementation of the generalized Toffoli gate using one ancillary qubit, the recursive derivation of the circuit depth, $\gamma(n)$, is given by
    \begin{equation}
        \gamma(n) = \gamma\left(\frac{N}{2}-2\right)+2+2\gamma(4)+\gamma(3)= \gamma\left(\lceil\frac{n}{2}\rceil-2\right)+2+2\gamma(4)+\gamma(3),
    \end{equation}
    where $\gamma(3)$ is the depth to implement a generalized Toffoli gate with three control qubits, and $\gamma(4)$ is the depth to implement a generalized Toffoli gate with four control qubits. By solving this recursive function, for every integer $n\geq4$ (note that the recursive construction starts from the case of $n=4$), we obtain our stated upper bound. 
\end{proof}

For (generalized) Toffoli gates with two, three, and four control qubits, respectively, using the Clifford$+\mathsf{T}$ gate set and a clean ancilla, we can upper-bound their circuit depths by
\begin{equation}
    \gamma(2)\leq 6,\quad \gamma(3)\leq 22, \quad \gamma(4)\leq 36,
\end{equation}
and one can refer to the textbook of \cite{nielsen2010quantum} for an explicit circuit implementation of the standard Toffoli gate with two control qubits. Moreover, $\mathsf{C} ^n (\mathsf X)$ and $\mathsf C^n(\mathsf Z)$ differ by at most two Hadamard gates acting on the target qubit, which can be parallelized with the remaining layers of the circuit constructed in \cite{nie2024quantum}. Consequently, the same depth bound applies to multi-controlled phase gates.

\section{The 3-Qubit Boolean hypercube BCS}\label{app:example}
Consider the constraint system of \Cref{def:OVCS_Z2}, for $m=3$. In this case, the representation matrices of each of the operators $z_{\veci}$ will be of the form $\diag(\pm 1,\ldots,\pm 1)$. This structure is enforced by the constraints $\langle z_{\veci}^2 = I$, $z_{\veci} z_{\vecj} = z_{\vecj} z_{\veci}, \forall \veci,\vecj \in \{0,1\}^m\rangle$ and $\prod_{\veci: \veci\cdot \vecs = 1} z_{\veci} = Z(\vecs)\ \ \forall \vecs \in \{0,1\}^m$. So, let the representation matrix of $z_{000}$ equivalently $z_{\mathbf{0}}$ be $\diag(y_{000},y_{001},y_{010},\ldots, y_{111})$. This matrix commutes with all the possible $\mathsf{Swap}$s and $\mathsf{CNOT}$s, as
\begin{align}
    \mathsf{Swap}_{a,b} (z_{000}) \mathsf{Swap}_{a,b} = z_{\sigma_{a,b}(000)}=z_{000} \ \ \ a\leq b\leq m, \text{and}   \\ 
     \mathsf{CNOT}_{a,b}  (z_{000}) \mathsf{CNOT}_{a,b} = z_{000+\mathbf{0}[a]\mathbf{e}_b}=z_{000+0\cdot\mathbf{e}_b}=z_{000},\quad \forall a,b\text{ s.t. } a\neq b\leq m. 
\end{align}

But notice that conjugating by a swap, one can swap entries on the diagonal of $z_{000}$; for instance
\begin{align}
  \mathsf{Swap}_{1,2}:&~ y_{001} \leftrightarrow y_{010} \\ 
  \mathsf{Swap}_{2,3}:&~ y_{010} \leftrightarrow y_{100} \\
  \mathsf{Swap}_{1,3}:&~ y_{100} \leftrightarrow y_{001}\;.
\end{align}
Thus if $\mathsf{Swap}_{1,2}(z_{000})=\mathsf{Swap}_{2,3}(z_{000})=\mathsf{Swap}_{1,3}(z_{000})=z_{000}$, then $y_{001}=y_{010}=y_{100}$. Similarly, $y_{011}=y_{101}=y_{110}$. Therefore, we already have that $z_{000}=\diag(a,b,b,c,b,c,c,d)$ for $a,b,c,d\in \{\pm 1\}$.

Subsequently we will make use of the $\mathsf{CNOT}$, such that 
\begin{align}
  \mathsf{CNOT}_{1,2}:&~ y_{001} \leftrightarrow y_{011} \\ 
  \mathsf{CNOT}_{2,3}:&~ y_{011} \leftrightarrow y_{111}
\end{align}
such that $b=c=d$, 
so that we have $z_{000}=\diag(a,b,b,b,b,b,b,b)$ for $a,b \in \{\pm 1\}$.

Now we can use the restriction $X_a\, z_{\veci}\, X_a = z_{\veci\oplus \mathbf{e}_a}\ \  \forall \veci\in\{0,1\}^m,\; a\in\{1,\dots,m\}$ to show that all $z_{\veci}$ are equal to $z_{000}$, just by moving the position of the value $a$ along the diagonal. For that see that, $X_a\, z_{000}\, X_a = z_{000\oplus \mathbf{e}_a}=z_{\mathbf{e}_a}\ \ \forall \veci\in\{0,1\}^m,\; a\in\{1,\dots,m\}$. Additionally, we know that,
\begin{align}
  \mathsf{X}_{a} :&~ y_{000} \leftrightarrow y_{\mathbf{e}_a} \\ 
  \mathsf{X}_{a}:&~ y_{010} \leftrightarrow y_{(010\oplus \mathbf{e}_a)}\;,
\end{align}
so we obtain also, for instance, that $z_{010}=\diag(b,b,a,b,b,b,b,b)$.

Finally, we have that $\prod_{\veci \in \{0,1\}^m} z_{\veci} = -I$, which implies that 
$\prod_{\veci\in\{0,1\}^m} z_{\veci}[000]=-1$.
However, because $z_{000}[000]=y_{000}=a$, for any $\mathbf{p}\neq \mathbf{0}$ $z_\mathbf{p}[000]$ is equal to $z_{000}[\mathbf{k}]$ for some other string $\mathbf{k} \neq 000$. In other words, it is equal to another value on the diagonal of $z_{000}$, and so we obtain that $a\cdot b^{2^m-1}=-1$. Finally, because $2^m-1$ is odd,  $a=-b$.

\end{document}